\DeclareMathOperator{\rank}{rank}
\begin{document}

\theoremstyle{definition}
\newtheorem{theorem}{Theorem}[section]
\newtheorem*{theorem*}{Theorem}
\newtheorem{definition}[theorem]{Definition}
\newtheorem{problem}[theorem]{Problem}
\newtheorem{assumption}[theorem]{Assumption}
\newtheorem{corollary}[theorem]{Corollary}
\newtheorem{proposition}[theorem]{Proposition}
\newtheorem{example}[theorem]{Example}
\newtheorem{lemma}[theorem]{Lemma}
\newtheorem{observation}[theorem]{Observation}
\newtheorem{fact}[theorem]{Fact}
\newtheorem{question}[theorem]{Question}
\newtheorem{conjecture}[theorem]{Conjecture}
\newtheorem{addendum}[theorem]{Addendum}
\newtheorem{remark}[theorem]{Remark}

\newcommand{\dom}{\operatorname{dom}}
\newcommand{\fr}{\mbox{}^\smallfrown}
\newcommand{\om}{\omega}
\newcommand{\mto}{\rightrightarrows}

\newcommand{\N}{\mathbb{N}}
\newcommand{\Q}{\mathbb{Q}}
\newcommand{\R}{\mathbb{R}}
\newcommand{\PI}[2]{\ensuremath{\boldsymbol\Pi^{#1}_{#2}}}
\newcommand{\SI}[2]{\ensuremath{\boldsymbol\Sigma^{#1}_{#2}}}
\newcommand{\DE}[2]{\ensuremath{\boldsymbol\Delta^{#1}_{#2}}}

\newcommand{\Sierp}{Sierpi\'nski }
\newcommand{\leqW}{\leq_{\textrm{W}}}
\newcommand{\leqsW}{\leq_{\textrm{sW}}}
\newcommand{\nleqW}{\nleq_{\textrm{W}}}
\newcommand{\nleqsW}{\nleq_{\textrm{sW}}}
\newcommand{\leW}{<_{\textrm{W}}}
\newcommand{\equivW}{\equiv_{\textrm{W}}}
\newcommand{\lesW}{<_{\textrm{sW}}}
\newcommand{\equivsW}{\equiv_{\textrm{sW}}}
\newcommand{\equivT}{\equiv_{\textrm{T}}}
\newcommand{\geqW}{\geq_{\textrm{W}}}
\newcommand{\pipeW}{|_{\textrm{W}}}

\newcommand{\Can}{\ensuremath{{2^\N}}}
\newcommand{\Bai}{\ensuremath{{\N^\N}}}
\newcommand{\LL}{\ensuremath{\mathcal{L}_2}}
\newcommand{\Seq}{\ensuremath{\N^{<\N}}}
\newcommand{\WO}{\ensuremath{\mathbf{WO}}}
\newcommand{\LO}{\ensuremath{\mathbf{LO}}}
\newcommand{\Tr}{\ensuremath{\mathbf{Tr}}}

\newcommand{\Wif}[3]{[\textsf{if }#1\textsf{ then }#2\textsf{ else }#3]}
\newcommand{\idStB}{\left ( \id : \mathbb{S}_{\SI11} \to \mathbb{S}_\mathcal{B} \right )}
\newcommand{\id}{\textsf{id}}
\newcommand{\C}{\textsf{C}}
\newcommand{\UC}{\textsf{UC}}
\newcommand{\TC}{\textsf{TC}}
\newcommand{\TUC}{\textsf{TUC}}
\newcommand{\lpo}{\textsf{LPO}}
\newcommand{\llpo}{\textsf{LLPO}}
\newcommand{\CWO}{\ensuremath{\mathsf{CWO}}}
\newcommand{\WCWO}{\ensuremath{\mathsf{WCWO}}}
\newcommand{\set}[2]{\left\{\,{#1} \,:\, {#2}\,\right\}}
\newcommand{\SiSep}[2]{\ensuremath{\SI{#1}{#2}\text{-}\mathsf{Sep}}}
\newcommand{\PiSep}[2]{\ensuremath{\PI{#1}{#2}\text{-}\mathsf{Sep}}}
\newcommand{\DeCA}[2]{\ensuremath{\DE{#1}{#2}\text{-}\mathsf{CA}}}
\newcommand{\SiCA}[2]{\ensuremath{\SI{#1}{#2}\text{-}\mathsf{CA}}}
\newcommand{\wDeCA}[2]{\ensuremath{\DE{#1}{#2}\text{-}\mathsf{CA^-}}}
\newcommand{\wSiCA}[2]{\ensuremath{\SI{#1}{#2}\text{-}\mathsf{CA^-}}}
\newcommand{\SiUC}{\ensuremath{\SI11\text{-}\UC_\Bai}}
\newcommand{\SiWKL}{\ensuremath{\SI11\text{-}\mathsf{WKL}}}
\newcommand{\FindWSS}{\ensuremath{\mathsf{FindWS}_{\boldsymbol\Sigma}}}
\newcommand{\FindWSP}{\ensuremath{\mathsf{FindWS}_{\boldsymbol\Pi}}}
\newcommand{\FindWSD}{\ensuremath{\mathsf{FindWS}_{\boldsymbol\Delta}}}
\newcommand{\DetD}{\textsf{Det}_{\boldsymbol\Delta}}
\newcommand{\DetS}{\textsf{Det}_{\boldsymbol\Sigma}}
\newcommand{\List}{\ensuremath{\mathsf{List}}}
\newcommand{\wList}{\ensuremath{\mathsf{wList}}}
\newcommand{\PTT}[1]{\ensuremath{\mathsf{PTT}_{#1}}}
\newcommand{\wPTT}[1]{\ensuremath{\mathsf{wPTT}_{#1}}}
\newcommand{\chiP}{\ensuremath{\chi_{\Pi^1_1}}}

\title{Searching for an analogue of $\mathrm{ATR}_0$\\ in the Weihrauch lattice}

\author{
Takayuki Kihara
\institute{Department of Mathematical Informatics\\ Nagoya University, Nagoya, Japan}
\email{kihara@i.nagoya-u.ac.jp}
\and
Alberto Marcone
\institute{Dipartimento di Scienze Matematiche, Informatiche e Fisiche\\Universit\'a di Udine, Udine, Italy}
\email{alberto.marcone@uniud.it}
\and
Arno Pauly
\institute{Department of Computer Science\\Swansea University, Swansea, UK\\ \& \\ Department of Computer Science\\University of Birmingham, Birmingham, UK}
\email{Arno.M.Pauly@gmail.com}
}

\def\titlerunning{An analogue of $\mathrm{ATR}_0$ in the Weihrauch lattice?}
\def\authorrunning{T. Kihara, A. Marcone \& A. Pauly}
\maketitle

\begin{abstract}
There are close similarities between the Weihrauch lattice and the zoo of
axiom  systems in reverse mathematics. Following these similarities has
often allowed researchers to translate results from one setting to the
other. However, amongst the \emph{big five} axiom systems from reverse
mathematics, so far $\mathrm{ATR}_0$ has no identified counterpart in the
Weihrauch degrees. We explore and evaluate several candidates, and conclude
that the situation is complicated.
\end{abstract}

\section{Introduction}
Reverse mathematics \cite{SOSOA:Simpson} is a program to find the sufficient
and necessary axioms to prove theorems of mathematics (that can be formalized
in second-order arithmetic). For this, a base system ($\mathrm{RCA}_0$) is
fixed, and then equivalences between theorems and certain benchmark axioms
are proven. Sometimes, a careful reading of the original proof of the theorem
reveals which of the benchmark axioms are used, and the main challenge is to
show that the theorem indeed implies those axioms (hence the name
\emph{reverse} mathematics). A vast number of theorems turned out to be
equivalent to one of only five systems: $\mathrm{RCA}_0$, $\mathrm{WKL}_0$,
$\mathrm{ACA}_0$, $\mathrm{ATR}_0$ and $\PI11\mbox{-}\mathrm{CA}_0$. While
recently attention has shifted to theorems {\bf not} equivalent to one of the
\emph{big five}, the big five still occupy a central role in the endeavour.

Computational metamathematics in the Weihrauch lattice starts with the
observation that many theorems in analysis and other areas of mathematics
have $\Pi_2$-\emph{gestalt}, i.e.~are of the form $\forall x \in \mathbf{X}
(Q(x) \rightarrow \exists y \in \mathbf{Y} \ P(x,y))$, and can hence be seen
as computational tasks: Given some $x \in \mathbf{X}$ satisfying $Q(x)$, find
a suitable witness $y \in \mathbf{Y}$. This task can also be viewed as a
multivalued partial function $f: \subseteq \mathbf{X} \mto \mathbf{Y}$, and
thus the precise definition of Weihrauch reducibility (given in
\S\ref{Weihrauchred} below) deals with this kind of objects. Often, the task
cannot be solved algorithmically (equivalently, the multivalued function is
not computable). The research programme (as formulated by Gherardi and
Marcone \cite{gherardi}, Pauly \cite{paulyincomputabilitynashequilibria,
paulyphd} and in particular Brattka and Gherardi \cite{brattka2,brattka3}) is
to compare the degree of impossibility as follows: Assume we had a black box
to solve the task for Theorem B. Can we solve the task for Theorem A using
the black box exactly once? If so, then $A \leqW B$, $A$ is Weihrauch
reducible to $B$.

As provability in $\mathrm{RCA}_0$ is closely linked to computability, it is
maybe not that surprising that very often, classification in reverse math can
be translated easily into Weihrauch reductions\footnote{The reverse direction
would also be possible, but as reverse mathematics is the older field, occurs
seldom in practice.}. While there are a number of obstacles for precise
correspondence (see \cite{hirschfeldt} for a detailed discussion), the
resource-sensitivity of Weihrauch reductions might be the most obvious one: A
proof in reverse mathematics can use a principle multiple times, a Weihrauch
reduction uses its black box once. This obstacle does not apply to
$\mathrm{RCA}_0$ or $\mathrm{WKL}_0$ classifications.

The analogue of $\mathrm{RCA}_0$ are the computable principles, the analogue
of $\mathrm{WKL}_0$ is $\C_\Can$ (closed choice on Cantor space), and the
analogue of $\mathrm{ACA}_0$ is ${\sf lim}$ or finite iterations thereof.
Theorems equivalent to $\PI11\mbox{-}\mathrm{CA}_0$ have not yet been studied
in the Weihrauch lattice, but an obvious analogue of
$\PI11\mbox{-}\mathrm{CA}_0$ is readily defined as the function which maps a
countable sequence of trees to the characteristic function of the set of
indices corresponding to well-founded trees. This leaves $\mathrm{ATR}_0$ out
of the big five, leading Marcone to initiate the search for an analogue in
the Weihrauch lattice at a Dagstuhl meeting on Weihrauch reducibility
\cite{pauly-dagstuhl}.

Two candidates have been put forth as potential answers, $\UC_\Bai$ and
$\C_\Bai$ (unique choice and closed choice on Baire space). We will examine
some evidence for both of them, and show that the question is not as easily
answered as those for the other big five. Our main focus is on three
particular theorems equivalent to $\mathrm{ATR}_0$ in reverse mathematics:
Comparability of well orderings, open determinacy on Baire space\footnote{The
version for Cantor space has been studied in the Weihrauch degrees by Le Roux
and Pauly \cite{paulyleroux3-cie}.} and the perfect tree theorem.

\begin{theorem*}[Comparability of well orderings]
If $X$ and $Y$ are well orderings over $\N$, then $|X| \leq |Y|$ or $|Y| \leq
|X|$.
\end{theorem*}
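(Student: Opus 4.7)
The plan is to build, by transfinite recursion along $X$, a canonical partial map $f$ from $\dom(X)$ to $\dom(Y)$: to each $x$ assign the $<_Y$-least element of $\dom(Y) \setminus f[\{x' : x' <_X x\}]$, whenever that complement is nonempty; otherwise leave $f(x)$ undefined and halt the construction.

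First I would verify two invariants by $<_X$-induction on the domain $S$ of $f$: (i) $f$ is strictly order-preserving from $<_X$ into $<_Y$, and (ii) for each $x \in S$, the image $f[\{x' : x' <_X x\}]$ is a $<_Y$-initial segment of $\dom(Y)$. Both invariants are preserved at each step because we adjoin the $<_Y$-least point still missing from the image: this point lies strictly above every earlier value of $f$, and it extends an initial segment of $Y$ by precisely its next element.

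Next I would deduce the required dichotomy. If $S = \dom(X)$, then $f$ is a full order-embedding of $X$ onto an initial segment of $Y$, yielding $|X| \leq |Y|$. Otherwise let $x_0$ be the $<_X$-least point at which the recursion halts; by the stopping condition the image of $f$ on $\{x' : x' <_X x_0\}$ equals all of $\dom(Y)$, so this restriction is an order-isomorphism onto $Y$, and its inverse embeds $Y$ as an initial segment of $X$, yielding $|Y| \leq |X|$.

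The main obstacle is justifying the transfinite recursion itself: producing $f$ together with its domain $S$ cannot be obtained by finitely iterating a computable construction along $X$, but genuinely requires iteration through the entire well order. This is precisely the content that makes comparability of well orderings equivalent to $\mathrm{ATR}_0$ over $\mathrm{RCA}_0$ in reverse mathematics, and it is the analogue of this recursion-along-a-well-order content whose Weihrauch-degree counterpart the rest of the paper sets out to locate.
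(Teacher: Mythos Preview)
Your argument is correct and is the standard classical proof of comparability of well orderings via transfinite recursion along $X$. However, the paper does not actually prove this statement: it is listed in the introduction (alongside open determinacy and the perfect tree theorem) as a well-known classical theorem equivalent to $\mathrm{ATR}_0$ in reverse mathematics, serving purely as motivation. There is no ``paper's own proof'' to compare against. The paper's actual contribution regarding this theorem appears in Section~\ref{sec:wellorders}, where the associated multivalued function $\CWO$ is defined and shown to satisfy $\UC_\Bai \equivsW \CWO$ (Theorem~\ref{thm:CWOequiv}); the classical comparability statement itself is taken for granted. Your final paragraph shows you already understand this distinction.
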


\begin{theorem*}[Open determinacy]
Consider a two-player infinite sequential game with moves from $\N$. Let the
first player have an open winning set. Then one player has a winning
strategy.
\end{theorem*}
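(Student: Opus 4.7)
The plan is to give the classical Gale--Stewart argument. Identify positions in the game with finite sequences in $\Seq$, with Player I moving at positions of even length and Player II at positions of odd length. Since Player I's winning set $U \subseteq \Bai$ is open, setting $U_0 = \{p \in \Seq : [p] \subseteq U\}$ picks out the positions from which Player I's victory is already secured topologically: a play belongs to $U$ if and only if some finite prefix lies in $U_0$.

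First I would define, by transfinite recursion on countable ordinals, the hierarchy of positions from which Player I can force a win. Set $W_0 = U_0$, and at successor stages add $p$ to $W_{\alpha+1}$ when either $p$ is a Player I position with some child $p \fr n \in W_\alpha$, or $p$ is a Player II position with every child $p \fr n \in W_\alpha$. Take unions at limits and let $W = \bigcup_{\alpha < \omega_1} W_\alpha$, so that each $p \in W$ has a well-defined rank $\rank(p) = \min\{\alpha : p \in W_\alpha\}$.

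Next I would split on whether $\langle\rangle \in W$. If yes, Player I wins by the rank-decreasing strategy, always choosing a child of strictly smaller $W$-rank; after finitely many rounds the position enters $W_0$, the play then has a prefix forcing membership in $U$, and Player I has won. If $\langle\rangle \notin W$, Player II wins by maintaining the invariant that the current position lies outside $W$. This invariant is preserved: at Player II's own turns, some child of a Player II position $p \notin W$ must be outside $W$ (otherwise $p$ would enter $W$ via the Player II clause), and she selects such a child; at Player I's turns the invariant holds automatically, because every child of a Player I position $p \notin W$ is outside $W$ (else $p$ would enter $W$ via the Player I clause). The resulting play $x$ has all prefixes outside $W \supseteq W_0$, so no prefix of $x$ lies in $U_0$, and openness of $U$ forces $x \notin U$, a win for Player II.

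The argument is essentially elementary once the rank hierarchy is in place, and the only ingredients beyond bookkeeping are (i) the existence of the transfinite recursion defining $W$, and (ii) the openness of $U$ used in the very last step to pass from ``no prefix in $U_0$'' to ``$x \notin U$''. Without openness an infinite play could sneak into $U$ with no finite witness, and the Player II invariant would not suffice; without the transfinite recursion the set $W$ is not available. It is precisely these two features that make open determinacy on $\Bai$ equivalent to $\mathrm{ATR}_0$ over $\mathrm{RCA}_0$ in reverse mathematics, which is the reason this theorem is of interest for the paper's search for a Weihrauch analogue.
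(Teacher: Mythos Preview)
Your proof is the standard Gale--Stewart argument and is correct. Note, however, that the paper does not itself prove this theorem: it is stated in the introduction, alongside comparability of well orderings and the perfect tree theorem, as one of three classical results equivalent to $\mathrm{ATR}_0$ in reverse mathematics, serving purely as motivation. The paper's technical content concerns the Weihrauch degrees of the associated computational tasks (e.g.\ $\FindWSS$, $\FindWSP$, $\DetS$), not the determinacy theorem itself. So there is no proof in the paper to compare yours against; your write-up would function as supplying a proof the authors chose to omit as well known.
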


\begin{theorem*}[Perfect Tree Theorem]
If $T \subseteq \N^{<\N}$ is a tree, then either $[T]$ is countable or $T$
has a perfect subtree.
\end{theorem*}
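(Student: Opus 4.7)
My plan is to reformulate the statement as one about the closed set $[T] \subseteq \Bai$ and then apply the Cantor--Bendixson analysis. I would define the derivative $C'$ of a closed $C \subseteq \Bai$ to be the set of its non-isolated points, and iterate transfinitely, setting $C^{(0)} = [T]$, $C^{(\alpha+1)} = (C^{(\alpha)})'$, and $C^{(\lambda)} = \bigcap_{\alpha < \lambda} C^{(\alpha)}$ at limits. The standard second-countability argument shows that each strict drop $C^{(\alpha)} \supsetneq C^{(\alpha+1)}$ can be charged to a basic open cylinder that is used at most once along the chain; since $\Bai$ has only countably many such cylinders, the chain stabilises at some countable ordinal $\alpha^{\ast}$, yielding a closed set $C^{(\infty)}$ with no isolated points.

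The dichotomy then splits along whether $C^{(\infty)}$ is empty. If $C^{(\infty)} = \emptyset$, each $x \in [T]$ is removed at some successor stage $\beta + 1$ (limits contribute only intersections and so remove no points directly), and at that stage admits a basic open neighbourhood $U_x$ with $U_x \cap C^{(\beta)} = \{x\}$. The map $x \mapsto (U_x, \beta+1)$ is then injective into a countable set, so $[T]$ is countable. If instead $C^{(\infty)} \neq \emptyset$, I would set $S = \set{\sigma \in \Seq}{[\sigma] \cap C^{(\infty)} \neq \emptyset}$. Then $S \subseteq T$, $S$ is closed under initial segments, and for any $\sigma \in S$ a witness $x \in [\sigma] \cap C^{(\infty)}$ together with a distinct $y \in [\sigma] \cap C^{(\infty)}$ (which exists because $x$ is not isolated in $C^{(\infty)}$) yield two incomparable extensions of $\sigma$ in $S$ at the first level where $x$ and $y$ differ. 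Hence $S$ is a perfect subtree of $T$.

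The principal obstacle is really only the bookkeeping for the transfinite iteration: one needs to verify that points are only ever removed at successor stages (otherwise ranks would not be usable in the counting argument) and that stabilisation occurs below $\omega_1$. Apart from this, the proof is a straightforward assembly of two standard pieces, and both halves of the dichotomy fall out immediately.
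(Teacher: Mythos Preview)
Your Cantor--Bendixson argument is correct and is exactly the standard proof of this classical result. However, the paper does not actually prove this theorem: it is stated in the introduction only as a background result (one of three theorems known to be equivalent to $\mathrm{ATR}_0$ in reverse mathematics) whose associated computational tasks are then analysed in the Weihrauch lattice. So there is no ``paper's own proof'' to compare against.

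That said, it is worth noting that a refined version of precisely your argument does appear later in the paper, for a different purpose. In Section~\ref{sec:onesided} the authors introduce \emph{mCB-certificates}, which are effectivised records of the Cantor--Bendixson derivation you describe, tweaked so that the set of such certificates is uniformly $\SI11$ in the given closed set (Lemma~\ref{lemma:globalmcb}). They use this to show that $\List \equivW \UC_\Bai$ (Theorem~\ref{theo:list}), i.e.\ that uniformly enumerating the points of a countable closed set has exactly the strength of unique choice on Baire space. The classical content of your proof---stabilisation at a countable stage, points removed only at successors, and the perfect kernel yielding a perfect subtree when nonempty---is taken for granted there; the paper's contribution is the uniformity analysis layered on top of it.
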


\paragraph*{Structure of the paper}
In Section \ref{sec:background} we recall the prerequisite notions about
Weihrauch reducibility. While reverse mathematics serves as the motivation
for this paper, its results are not invoked in our proofs, hence we do not
expand on this area. In Section \ref{sec:choice} we recall two Weihrauch
degrees of central importance, unique choice $\UC_\Bai$ and closed choice
$\C_\Bai$ on Baire space. We then prove some equivalences to those for
variants of comprehension and separation principles. In Section
\ref{sec:Sigma-WKL}, we re-examine the strength of a separation principle,
which is shown to be equivalent to \SI11-{\sf WKL}, weak K\"onig's lemma for
$\Sigma^1_1$-trees (Theorem \ref{thm:Sigma-WKL-main}). The comparability of
well orderings is studied in Section \ref{sec:wellorders}. We see two
variants, one of which we prove to be equivalent to $\UC_\Bai$ (Theorem
\ref{thm:CWOequiv}) whereas the other resists full classification (Question
\ref{question:wcwo}).

Open determinacy and the perfect tree theorem are investigated in Sections
\ref{sec:onesided} and \ref{sec:twosided}. Both principles are formulated as
disjunctions, and the versions where we know in which case we are are proven
to be equivalent to $\UC_\Bai$ or $\C_\Bai$ in Section \ref{sec:onesided}.
The results about open determinacy can be seen as uniform versions of the
study of the complexity of winning strategies in \cite{Blass:ComWinStr}. If
no case is fixed, we arrive at Weihrauch degrees not previously studied. Some
of their properties are exhibited in Section \ref{sec:twosided}. Since the
degrees studied in Section \ref{sec:twosided} are not very well behaved, we
introduce the canonical principle $\TC_\Bai$, the total continuation of
closed choice in Section \ref{sec:tcbaire}. We prove that up to finite
parallelization, it is equivalent to the two-sided versions of open
determinacy and the perfect tree theorem, and show some additional properties
of the degree. Some concluding remarks and open questions are found in
Section \ref{sec:discussion}.

The following illustrates the strength of key benchmark principles in this article:
\[
\UC_\Bai \leW \SI11\mbox{-}{\sf WKL} \leW \C_\Bai \leW \TC_\Bai \leW \widehat{\TC_\Bai} \leW \PI11\mbox{-}{\sf CA}.
\]

\section{Background on represented spaces and Weihrauch degrees}
\label{sec:background} For background on the theory of represented spaces we
refer to \cite{pauly-synthetic}, for an introduction to and survey of
Weihrauch reducibility we point the reader to \cite{pauly-handbook}.

As usual in the area, we use angle brackets to denote a variety of pairing
and coding functions, such as those from $\N \times \N$ to $\N$, from
$\N^{<\N}$ to $\N$, and from $\N^\N \times \N^\N$, $(\N^\N)^{<\N}$ and
$(\N^\N)^\N$ to $\N^\N$. The context provides information about the one
actually employed in any given instance.

\subsection{Represented spaces}

\begin{definition}
A \emph{represented space} $\mathbf{X}$ is a set $X$ together with a partial
surjection $\delta_{\mathbf{X}} : \subseteq \Bai \to X$. If $x \in X$, any
element of $(\delta_{\mathbf{X}})^{-1} (x)$ is called a \emph{name} or a
\emph{code} for $x$.
\end{definition}

A partial function $F : \subseteq \Bai \to \Bai$ is called a \emph{realizer}
of a function $f : \subseteq \mathbf{X} \to \mathbf{Y}$ between represented
spaces, if $f(\delta_\mathbf{X}(p)) = \delta_\mathbf{Y}(F(p))$ holds for all
$p \in \dom(f \circ \delta_\mathbf{X})$. We denote $F$ being an realizer of
$f$ by $F \vdash f$. We then call $f : \subseteq \mathbf{X} \to \mathbf{Y}$
\emph{computable} (respectively \emph{continuous}), iff it has a computable
(respectively continuous) realizer.

Represented spaces can adequately model most spaces of interest in
\emph{everyday mathematics}. For our purposes, we only need a few specific
spaces that we discuss in the following, as well as some constructions of
hyperspaces.

The category of represented spaces and continuous functions is
cartesian-closed, by virtue of the UTM-theorem. Thus, for any two represented
spaces $\mathbf{X}$, $\mathbf{Y}$ we have a represented space
$\mathcal{C}(\mathbf{X},\mathbf{Y})$ of continuous functions from
$\mathbf{X}$ to $\mathbf{Y}$. The expected operations involving
$\mathcal{C}(\mathbf{X},\mathbf{Y})$ (evaluation, composition, (un)currying)
are all computable. Using the Sierpi\'nski space $\mathbb{S}$ with underlying
set $\{\top,\bot\}$ and representation $\delta_{\mathbf{S}} : \Bai \to
\{\top,\bot\}$ defined via $\delta_{\mathbf{S}}(\bot)^{-1} = \{0^\omega\}$,
we can then define the represented space $\mathcal{O}(\mathbf{X})$ of
\emph{open} subsets of $\mathbf{X}$ by identifying a subset of $\mathbf{X}$
with its (continuous) characteristic function into $\mathbb{S}$. Since
countable \emph{or} and binary \emph{and} on $\mathbb{S}$ are computable, so
are countable union and binary intersection of open sets. The space
$\mathcal{A}(\mathbf{X})$ of closed subsets is obtained by taking formal
complements, i.e.~the names for $A \in \mathcal{A}(\mathbf{X})$ are the same
as the names of $X \setminus A \in \mathcal{O}(\mathbf{X})$ (i.e.~we are
using the negative information representation).

We indicate with $\Tr$ the space of trees on $\N$ represented in an obvious
way via characteristic functions on the set of finite sequences. The
computable map $[ \ ] : \Tr \to \mathcal{A}(\Bai)$ maps a tree to its set of
infinite paths, and has a computable multivalued inverse. In other words, one
can compute a code of a tree $T$ from a code of a closed set $[T]$, and vice
versa.

Given a represented space $\mathbf{X}$ and $k\in\N$, using Borel codes, the
collections $\SI0k(\mathbf{X})$ (respectively $\PI0k(\mathbf{X})$) of $\SI0k$
(respectively $\PI0k$) subsets of $\mathbf{X}$ can be naturally viewed as a
represented space, cf.\
\cite{Bra:BorelMeas,pauly-gregoriades,pauly-ordinals}. Equivalently, we can
use the jumps of $\mathbb{S}$ to characterize these spaces. We find that
$\mathcal{A}$ and $\PI01$ (respectively $\mathcal{O}$ and $\SI01$) are
identical.

The collection $\SI11(\mathbf{X})$ of analytic subsets of $\mathbf{X}$ can
also be represented in a straightforward manner: $p$ is a name of a $\SI11$
set $S\subseteq\mathbf{X}$ iff $p$ is a name of a closed set
$P\subseteq\Bai\times\mathbf{X}$ such that $S=\{x\in\mathbf{X}:(\exists
g)\;(g,x)\in P\}$. Equivalently (\cite[Proposition
35]{pauly-descriptive-arxiv}), we can define the space $\mathbb{S}_{\SI11}$
by letting it have the underlying set $\{\top,\bot\}$, and letting $p \in
\Bai$ be a name for $\top$ iff the tree on $\N$ coded by $p$ is ill-founded;
and then identify $\SI11(\mathbf{X})$ with
$\mathcal{C}(\mathbf{X},\mathbb{S}_{\SI11})$ (here $f \in
\mathcal{C}(\mathbf{X},\mathbb{S}_{\SI11})$ represents the
$\SI11(\mathbf{X})$ set $f^{-1} (\top)$). Again, the collection
$\PI11(\mathbf{X})$ of coanalytic subsets of $\mathbf{X}$ is represented in
an obvious way by taking formal complements. We define the space
$\mathbb{S}_{\PI11}$ with underlying set $\{\top,\bot\}$, so that $p \in
\Bai$ is a name for $\top$ iff the tree on $\N$ coded by $p$ is well-founded.

We first check that basic operations on these represented spaces are
well-behaved.

\begin{lemma}
\label{lemma:sigma11truthvalues}
The following operations are computable:
\begin{enumerate}
\item $\bigvee, \bigwedge : \mathbb{S}_{\SI11}^\N \to \mathbb{S}_{\SI11}$
\item $\exists : \SI11(\mathbf{X}) \to \mathbb{S}_{\SI11}$, mapping
    non-empty sets to $\top$ and the empty set to $\bot$.
\item $\id, \neg : \mathbb{S} \to \mathbb{S}_{\SI11}$
\end{enumerate}
\begin{proof}
\begin{enumerate}
\item
For $\bigvee$, we need to show that given a sequence of trees we can compute a tree that is ill-founded iff one of the contributing trees is. This can be done by simply joining them at the root. For $\bigwedge$, we need a tree that is ill-founded iff all them are. For that, we can take the product of the trees (e.g.~as in \cite{pauly-nobrega-arxiv}).

\item From $f \in \mathcal{C}(\Bai,\mathbb{S}_{\SI11})$ we can compute by
    type-conversion some $g : \Bai \times \Bai \to \mathbb{S}$ such that
    $f(p) = \top$ iff $\exists q \in \Bai \ g(p,q) = \bot$. But then
    $\exists p \in \Bai \ f(p) = \top \Leftrightarrow \ \exists \langle p,
    q\rangle \in \Bai \ g(p,q) = \bot$, and we are done.

\item For $\neg: \mathbb{S} \to \mathbb{S}_{\SI11}$, given a name $p$ for a
    point in $\mathbb{S}$ let the tree $T$ be defined by $w \in T$ iff
    $\forall n \leq |w| \ p(n) = 0$. For $\id : \mathbb{S} \to
    \mathbb{S}_{\SI11}$, we let $T$ have only branches of the form
    $n0^\omega$, and such a branch is present iff $p(n) \neq 0$.\qedhere
\end{enumerate}
\end{proof}
\end{lemma}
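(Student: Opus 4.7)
The plan is to treat each clause as a direct manipulation of trees on $\N$, since a name for an element of $\mathbb{S}_{\SI11}$ is just a code for a tree on $\N$ whose ill-foundedness encodes $\top$. For part~(1), given a sequence $(T_n)_{n\in\N}$, I would compute $\bigvee$ by joining the $T_n$ at a common new root, i.e.\ put $w\in T$ iff $w=\langle\rangle$ or $w=\langle n\rangle\fr u$ with $u\in T_n$; then $T$ has an infinite branch iff some $T_n$ does. For $\bigwedge$ I would form a product-style tree whose infinite branches encode one branch through each $T_n$ simultaneously: interpret $f\in\Bai$ as a sequence of functions via $f_n(k):=f(\langle n,k\rangle)$, and put $w\in T$ exactly when, for every pair $\langle n,k\rangle<|w|$, the sequence $(w(\langle n,0\rangle),\dots,w(\langle n,k\rangle))$ lies in $T_n$. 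An infinite branch of $T$ then decodes to an infinite branch through each $T_n$, and conversely, so $T$ is ill-founded iff every $T_n$ is.

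For part~(2), I would reduce to the case $\mathbf{X}=\Bai$ by pulling back along the representation (non-emptiness of a set in $\mathbf{X}$ is equivalent to non-emptiness of its preimage of names in $\Bai$). Given $f\in\mathcal{C}(\Bai,\mathbb{S}_{\SI11})$, type-conversion together with the definition of $\mathbb{S}_{\SI11}$ yields a closed set $C\subseteq\Bai\times\Bai$, or equivalently a tree $T$ on $\N\times\N$, such that $f(p)=\top$ iff there is $q$ with $(p,q)\in C$. Hence $\{p:f(p)=\top\}\neq\emptyset$ iff $\exists p,q\,(p,q)\in C$, iff $T$ is ill-founded. Re-coding $T$ as a tree on $\N$ via pairing then gives the desired $\mathbb{S}_{\SI11}$-name; the point is simply that the outer existential can be absorbed into the one already implicit in $\mathbb{S}_{\SI11}$.

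For part~(3), both operations are handled by small direct tree constructions from $p$. For $\neg$ I would put $w\in T$ iff $p(n)=0$ for every $n\leq|w|$, so $T$ contains arbitrarily long sequences iff $p=0^\omega$, making $T$ ill-founded iff $p$ names $\bot$. For $\id$ I would take $T$ to contain the root, all $\langle n\rangle$, and all extensions $\langle n\rangle\fr 0^k$ for those $n$ with $p(n)\neq 0$; then $T$ has the infinite branch $\langle n\rangle\fr 0^\omega$ iff some $p(n)\neq 0$, hence iff $p$ names $\top$. The only nontrivial point I expect is bookkeeping: in part~(1) for $\bigwedge$ and in part~(2) for the pairing, one must verify that the constructed tree is genuinely closed under prefixes and that its membership relation is uniformly decidable from the given input codes.
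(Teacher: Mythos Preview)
Your proposal is correct and follows essentially the same approach as the paper: joining at the root for $\bigvee$, a product tree for $\bigwedge$, absorbing the outer existential in part~(2), and the same two small tree constructions in part~(3). You add a couple of details the paper leaves implicit (the reduction of $\mathbf{X}$ to $\Bai$ via the representation in part~(2), and an explicit description of the product tree), but there is no substantive difference.
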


\begin{proposition}\label{pro:closprop}
The following operations are computable for any represented space
$\mathbf{X}$ and $k>0$:
\begin{enumerate}
\item $\SI11(\mathbf{X})^\N \longrightarrow \SI11(\mathbf{X}),
    (A_n)_n\longmapsto\bigcup_{n\in\N}{A_n}$ (countable union);
\item $\SI11(\mathbf{X})^\N \longrightarrow \SI11(\mathbf{X}),
    (A_n)_n\longmapsto\bigcap_{n\in\N}{A_n}$ (countable intersection);
      \item $\SI11(\mathbf{X} \times \mathbf{Y}) \longrightarrow
          \SI11(\mathbf{Y}), A \longmapsto \{y \in \mathbf{Y} \mid
          \exists x \in \mathbf{X} \ (x,y) \in A\}$
\item $\SI0k(\mathbf{X}) \rightarrow \SI11(\mathbf{X})$,
    $\PI0k(\mathbf{X}) \rightarrow \SI11(\mathbf{X})$, $\SI0k(\mathbf{X})
    \rightarrow \PI11(\mathbf{X})$, $\PI0k(\mathbf{X}) \rightarrow
    \PI11(\mathbf{X})$ (inclusions);
\item $\SI0k(\Bai\times \mathbf{X})\rightarrow\SI11(\mathbf{X})$,
    $\PI0k(\Bai\times \mathbf{X})\rightarrow\SI11(\mathbf{X})$, such that
	\begin{equation*}B\mapsto A=\set{x\in X}{\exists g\in\Bai(g,x)\in B)};
	\end{equation*}
\item $\SI0k(\Bai\times \mathbf{X})\rightarrow\PI11(\mathbf{X})$,
    $\PI0k(\Bai\times \mathbf{X})\rightarrow\PI11(\mathbf{X})$, such that
	\begin{equation*}B\mapsto A=\set{x\in X}{\forall g\in\Bai(g,x)\in B};
	\end{equation*}
\item $\PI01(\Bai\times \mathbf{X})\rightarrow\PI11(\mathbf{X})$, such that
	\begin{equation*} C\mapsto A=\set{x\in X}{\exists!g\in\Bai(g,x)\in C}.
	\end{equation*}
\end{enumerate}
\begin{proof} (1-6) These all follow directly from Lemma \ref{lemma:sigma11truthvalues} together with function composition.

(7) It is well-known that $a \in \Bai$ is hyperarithmetical relative to
$\{a\} \in \PI01(\Bai)$ (cf.~Corollary \ref{corr:uc-hyppoints} and
accompanying remarks below). The section map $(x, C) \mapsto \{y \in \Bai
\mid (y,x) \in C\} : \mathbf{X} \times \PI01(\Bai \times \mathbf{X}) \to
\PI01(\Bai)$ is computable, see \cite[Proposition 4.2 (9)]{pauly-synthetic}.
Thus, we find that
$$A = \{x \in \mathbf{X} \mid \exists y \in \mathrm{HYP}(x) \ (y,x) \in C\} \cap \{x \in \mathbf{X} \mid \forall y,z ((y,x), (z,x) \in C \rightarrow \ y = z)\}.$$

The first set on the right-hand side is $\PI11$ by Kleene's
$\mathrm{HYP}$-quantification theorem \cite{kleene4,kleene5} (see also \cite[Lemma III.3.1]{sacks2}); that is, the formula $\exists y\in\mathrm{HYP}(x)\ P(x,y)$ means that there are natural numbers $a,e$ such that $a\in\mathcal{O}^x$ (which represents an ordinal $\alpha$) and the $e$-th real $\Phi_e(x^{(\alpha)})$ computable in the $\alpha$-th Turing jump of $x$ satisfies $P(x,\Phi_e(x^{(\alpha)}))$, where $\mathcal{O}^x$ is Kleene's system of ordinal notations relative to $x$ (which is a $\Pi^1_1(x)$ set), cf.~\cite{sacks2}.
This description is trivially $\Pi^1_1$, uniformly relative to $x$ and the complexity of $P$,
so that we can actually compute the $\PI11$ set from $C$. The second set
explicitly and uniformly defines a $\PI11$ set. The claim thus follows using
that intersection is a computable operation on $\PI11$ sets from (2).
\end{proof}
\end{proposition}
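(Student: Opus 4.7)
My plan is to treat items (1)--(6) as largely formal consequences of Lemma \ref{lemma:sigma11truthvalues} together with the identification $\SI11(\mathbf{X}) = \mathcal{C}(\mathbf{X},\mathbb{S}_{\SI11})$ established just before that lemma, reserving the real work for the unique-existential clause (7).

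For (1) and (2) I would type-convert a name of $(A_n)_{n\in\N}$ into a continuous map $\mathbf{X} \to \mathbb{S}_{\SI11}^\N$ and postcompose, fibrewise in $x$, with $\bigvee$ or $\bigwedge$ from Lemma \ref{lemma:sigma11truthvalues}(1); this is computable because composition in $\mathcal{C}(-,-)$ is. For (3) I would curry the characteristic map of $A \in \SI11(\mathbf{X}\times\mathbf{Y})$ to a map $\mathbf{Y} \to \SI11(\mathbf{X})$ and then apply the existential quantifier of Lemma \ref{lemma:sigma11truthvalues}(2). The inclusions (4) amount to walking up the Borel hierarchy: use $\id,\neg : \mathbb{S} \to \mathbb{S}_{\SI11}$ from Lemma \ref{lemma:sigma11truthvalues}(3) as the base case and then alternate countable joins and meets via (1); the $\PI11$ variants are obtained by taking formal complements. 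Items (5) and (6) are then nothing more than (4) composed with (3), respectively with the $\PI11$-dual projection, the latter again reducing through formal complements to an operation already available on $\mathbb{S}_{\SI11}$.

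The substantive part is (7). Unique existential quantification over Baire space is naively $\SI11$, and to force it down to $\PI11$ one has to exploit the classical fact that if $(y,x) \in C$ has a unique witness $y$, then $y$ must be hyperarithmetical in the pair of $x$ and (a code for) the section $C_x := \{y : (y,x) \in C\}$. Since the section map $(x,C) \mapsto C_x : \mathbf{X} \times \PI01(\Bai \times \mathbf{X}) \to \PI01(\Bai)$ is computable, I can rewrite $A$ as the intersection of two sets: the HYP-guarded existential ``$\exists y \in \mathrm{HYP}(x, C_x)\ (y,x) \in C$'' and the uniqueness clause ``$\forall y,z\ ((y,x),(z,x) \in C \to y = z)$''. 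The uniqueness clause is plainly $\PI11$; the first conjunct becomes $\PI11$ via Kleene's HYP-quantification theorem, unwound through $\mathcal{O}^x$ and the iterated Turing jumps so that the search for a witness collapses to a single $\PI11$-quantifier over notations relative to $x$. The main obstacle I expect is checking that this translation is genuinely uniform in both $x$ and the parameter coding $C$ (so that a $\PI11$-name of $A$ can actually be \emph{computed} from $C$, not merely shown to exist). Once uniformity is in hand, a final application of (2) lets me intersect the two $\PI11$ sets computably and conclude.
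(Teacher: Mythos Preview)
Your proposal is correct and follows essentially the same route as the paper: items (1)--(6) are derived from Lemma~\ref{lemma:sigma11truthvalues} via the identification $\SI11(\mathbf{X}) = \mathcal{C}(\mathbf{X},\mathbb{S}_{\SI11})$ and function composition, and for (7) you split $A$ into the HYP-guarded existential conjunct (handled by Kleene's HYP-quantification theorem) and the $\PI11$ uniqueness conjunct, then intersect via (2). Your explicit attention to uniformity in both $x$ and the parameter $C$ is exactly the point the paper flags when it says the description is ``$\Pi^1_1$, uniformly relative to $x$ and the complexity of $P$''.
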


\begin{lemma}\label{lem:funcF}
Let $\mathbf{X}$ be a represented space. Then the function
$F:\bigsqcup_{k}\PI0k(\Bai\times \mathbf{X})\rightarrow\SI11(\mathbf{X})$
defined by
\begin{equation*}
B\mapsto A=\set{x\in \mathbf{X}}{\exists g\in\Bai (g,x)\in B},
\end{equation*}
is computable.
\begin{proof}
Proposition \ref{pro:closprop}(5) is typically proved by induction on $k$,
and the inductive argument is uniform in $k$. Since (a name for) for $B \in
\bigsqcup_{k}\PI0k(\Bai\times \mathbf{X})$ includes the information about the
$k$ such that $B \in \PI0k(\Bai\times \mathbf{X})$, we can uniformly repeat
$k$ steps of the induction argument to obtain a name for $\set{x\in
\mathbf{X}}{\exists g\in\Bai (g,x)\in B}$ as a $\SI11(\mathbf{X})$ set.
\end{proof}
\end{lemma}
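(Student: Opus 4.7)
The plan is to unpack Proposition \ref{pro:closprop}(5) as a uniform-in-$k$ algorithm. Observe that the existential projection $\SI11(\Bai \times \mathbf{X}) \to \SI11(\mathbf{X})$ of Proposition \ref{pro:closprop}(3) is computable independently of any Borel level. Consequently it suffices to exhibit a computable inclusion $\bigsqcup_{k}\PI0k(\mathbf{Y}) \to \SI11(\mathbf{Y})$ for arbitrary represented $\mathbf{Y}$; instantiating $\mathbf{Y}:=\Bai\times\mathbf{X}$ and post-composing with the projection then yields $F$.

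To build this inclusion, I would proceed by induction on $k$, carried out by a single algorithm that loops $k$ times. The base case is supplied directly by Proposition \ref{pro:closprop}(4), which already asserts that both $\PI01(\mathbf{Y}) \to \SI11(\mathbf{Y})$ and $\SI01(\mathbf{Y}) \to \SI11(\mathbf{Y})$ are computable. For the inductive step I would exploit the fact that a standard Borel code for a $\PI0{k+1}$ set exposes a sequence of $\SI0k$ codes for sets whose intersection is the given set (and dually for $\SI0{k+1}$). Applying the already-obtained procedure to each component produces a sequence in $\SI11(\mathbf{Y})$, and then Proposition \ref{pro:closprop}(2) (respectively (1)) takes a computable countable intersection (respectively union) inside $\SI11(\mathbf{Y})$. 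Crucially, every iteration executes the same fixed subroutine, so the level $k$ enters only as a loop bound.

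Because a name for $B\in\bigsqcup_{k}\PI0k(\Bai\times\mathbf{X})$ comes equipped with the index $k$, the algorithm can read off $k$ from the input and run the loop $k$ times on the Borel code, arriving at a $\SI11$-name for $\{x:\exists g\,(g,x)\in B\}$. The delicate point — and the hardest step of the plan — is to confirm that the representation of $\SI0k(\mathbf{Y})$ and $\PI0k(\mathbf{Y})$ used in the paper really does expose the sequence-of-components structure required by the inductive step. Given the standard Borel-code representations from the references cited before Proposition \ref{pro:closprop}, this is essentially immediate, but it is worth being explicit about the form of codes being used, since without this exposure the ``uniform-in-$k$'' reading of the induction would fail.
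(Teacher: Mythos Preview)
Your proposal is correct and follows essentially the same approach as the paper: both argue that the induction behind Proposition \ref{pro:closprop}(5) is uniform in $k$, and that the index $k$ supplied with the disjoint-union name lets the algorithm iterate the appropriate number of times. Your factorization through a uniform inclusion $\bigsqcup_k\PI0k(\mathbf{Y})\to\SI11(\mathbf{Y})$ followed by the projection of Proposition \ref{pro:closprop}(3) is a mild reorganization of the paper's terser statement, but the underlying argument is the same.
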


We define the represented spaces $\LO$ and $\WO$ respectively of linear
orderings and countable well orderings with domain contained in $\N$ (thus
$\WO$ is a subspace of $\LO$) as follows: $p$ is a name for the linear order
$(X, {\preceq_X})$ with $X \subseteq \N$ if $p(\langle n,m \rangle) =1$ if
and only if $n \preceq_X m$. We often abuse notation by leaving $\preceq_X$
implicit and writing $X \in \LO$. We may assume without loss of generality
that, for all $X \in \LO$, $0 \notin X$ (this will be useful in Definition
\ref{def:scm} below). If $X \in \LO$ we use interchangeably $\WO(X)$ and $X
\in \WO$. If $X \in \WO$ we indicate its order type by $|X|$. Given some tree
$T \subseteq \N^{<\N}$, we define the Kleene-Brouwer ordering
$\preceq_{\rm{KB}}$ on $T$ as the transitive closure of $w \preceq_{\rm{KB}}
u$ if $w \sqsupseteq u$ and $un \preceq_{\rm{KB}} um$ if $n \leq m$. Using
the coding of finite strings we view $(T, {\preceq_{\rm{KB}}})$ as a member
of $\LO$.

\begin{observation}
\label{obs:kleenebrouwer} The map ${\sf KB} : \Tr \to \LO$ mapping a tree to
its Kleene-Brouwer ordering is computable. We have $\WO({\sf KB}(T))$ iff $T$
is well-founded.
\end{observation}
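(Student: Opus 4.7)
The plan is to verify the two claims separately.

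For computability of $\mathsf{KB}$: given $\chi_T$, membership in $T$ is decidable, and the basic operations on codes of finite sequences (extension check, longest common prefix, reading an entry at a given position) are all computable. So deciding $u \preceq_{\rm KB} v$ for $u, v \in T$ amounts to checking whether $u \sqsupseteq v$ and, failing that, numerically comparing $u(k)$ with $v(k)$ at $k = |u \wedge v|$. This yields a computable characteristic function of $\preceq_{\rm KB}$, which, after a trivial reindexing to satisfy the convention $0 \notin X$, is a name for $\mathsf{KB}(T) \in \LO$.

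For the well-foundedness equivalence, the easy direction is that an infinite branch $p$ of $T$ gives the KB-descending sequence $p{\upharpoonright}0 \succ_{\rm KB} p{\upharpoonright}1 \succ_{\rm KB} \cdots$, witnessing $\WO(\mathsf{KB}(T))$ fails. The reverse direction, given $u_0 \succ_{\rm KB} u_1 \succ_{\rm KB} \cdots$ in $\mathsf{KB}(T)$, constructs a path $(n_k)_k \in [T]$ by induction on $k$: assuming all but finitely many $u_i$ extend $\langle n_0,\dots,n_{k-1}\rangle$, the entries $u_i(k)$ among these terms are weakly decreasing and hence stabilize at some $n_k$, with $\langle n_0,\dots,n_k\rangle \in T$ since it is an initial segment of some $u_i$. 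The required monotonicity comes from the observation that whenever $u_{i+1} \prec_{\rm KB} u_i$ both have length exceeding $k$ and agree on the first $k$ coordinates, either $u_{i+1}$ extends $u_i$ or their leftmost disagreement is at some position $\ell \geq k$, forcing $u_{i+1}(k) \leq u_i(k)$ in both cases.

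The only nontrivial point is this stabilization argument in the reverse direction, which is entirely classical; everything else follows immediately from the representations of $\Tr$ and $\LO$ fixed earlier in the paper. Since the observation is standard, I would present the proof as a very brief remark built around those two paragraphs.
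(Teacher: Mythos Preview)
Your proof is correct and is precisely the standard argument; the paper itself supplies no proof for this observation, treating both claims as well-known. One minor nit: in the stabilization step you should note that at most one $u_i$ in the tail can equal $\langle n_0,\dots,n_{k-1}\rangle$ itself (so $u_i(k)$ might be undefined for that single term), but since it must then be the first term of the tail this does not affect the eventual monotonicity and stabilization.
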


We need a technical definition, which can be found in \cite[Definition
V.6.4]{SOSOA:Simpson}, for some of our proofs related to well orderings.

\begin{definition}[double descent tree]
If $X,Y \in \LO$ the \emph{double descent tree} ${\sf T}(X,Y)$ is the set of
all finite sequences of the form $\langle (m_0,n_0), (m_1,n_1), \dots,
(m_{k-1},n_{k-1}) \rangle \in \N^{<\N}$ such that
	\begin{itemize}
		\item $m_0,m_1,\dots,m_{k-1}\in X$ and $m_0>_X m_1>_X\dots>_X m_{k-1}$,
		\item $n_0,n_1,\dots,n_{k-1}\in Y$ and $n_0>_Y n_1>_Y\dots>_Y n_{k-1}$.
	\end{itemize} We define the linear ordering $X*Y={\sf KB}({\sf T}(X,Y))$.
\end{definition}

\begin{observation}
$(X,Y) \mapsto (X*Y) : \LO \times \LO \to \LO$ is computable.
\end{observation}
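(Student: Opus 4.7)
The plan is to factor the map $(X,Y) \mapsto X*Y$ through $\Tr$: first compute the double descent tree ${\sf T}(X,Y) \in \Tr$ from $(X,Y)$, and then apply the computable map ${\sf KB} : \Tr \to \LO$ from Observation \ref{obs:kleenebrouwer}. Since composition of computable maps between represented spaces is computable, it suffices to establish that $(X,Y) \mapsto {\sf T}(X,Y) : \LO \times \LO \to \Tr$ is computable.

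For this, recall that a name for a linear order $X$ is the characteristic function $p$ on $\N \times \N$ with $p(\langle n, m\rangle)=1$ iff $n \preceq_X m$, while a name for a tree is simply the characteristic function on $\N^{<\N}$. Thus, given names $p,q$ for $X, Y \in \LO$, we need only describe an algorithm that, on input a code of a finite sequence $w = \langle (m_0,n_0), \dots, (m_{k-1},n_{k-1})\rangle$, decides whether $w \in {\sf T}(X,Y)$. Membership of $m_i$ in the domain of $X$ is witnessed by reflexivity ($p(\langle m_i, m_i\rangle)=1$), and $m_i >_X m_{i+1}$ is decided by checking $p(\langle m_{i+1}, m_i\rangle)=1$ together with $m_i \neq m_{i+1}$; analogously for $Y$ via $q$. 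Each of these is a decision made by inspecting finitely many bits of $p$ and $q$, so the characteristic function of ${\sf T}(X,Y)$ is uniformly computable from $(p,q)$.

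There is no real obstacle: the only thing to notice is that the quantifier-free definition of ${\sf T}(X,Y)$ as a set of finite tuples refers to $X$ and $Y$ only through their order relations at finitely many pairs of points, which are precisely the information directly accessible from the representation of $\LO$. Composing the resulting computable map $\LO \times \LO \to \Tr$ with the computable map ${\sf KB} : \Tr \to \LO$ yields the claim.
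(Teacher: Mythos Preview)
Your argument is correct. The paper itself provides no proof for this observation, treating it as immediate from the definitions; your proposal simply makes explicit the routine decomposition through $\Tr$ and the pointwise decidability of membership in ${\sf T}(X,Y)$ that the paper leaves implicit.
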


With an abuse of notation, we use $\Q$ and $\N$ to denote respectively a
computable presentation of the standard linear ordering of rational numbers
and of the well ordering of natural numbers.

\begin{lemma}\label{lem:ddt} Let $X,Y \in \LO$.
	\begin{enumerate}
		\item If $\WO(X)$ then $X*Y$ and $Y*X$ are well orderings.
		\item If $\WO(X)$ and $\neg\WO(Y)$, then $|X|\leq |X*Y|$.
		\item If $\WO(Y)$, then $|X*Y|\leq|\Q*Y|$.
	\end{enumerate}
\begin{proof}
The proofs of $1$ and $2$ can be found in Lemma V.6.5 of \cite{SOSOA:Simpson}. In order to prove 3, consider a function $g:X\to\Q$  such that, for all $x,x'\in X$,
	\begin{enumerate}[label=(\alph*)]
		\item $x<_X x'\rightarrow g(x) <_\Q g(x')$,
		\item $x<_\N x'\rightarrow  g(x) <_\N g(x')$.
	\end{enumerate}
It is easy to see that such a function exists. Define then $\hat{g}:(X*Y)\to(\Q*Y)$ by putting  $\hat{g}(\langle(x_0,y_0),\dots,(x_{k-1},y_{k-1})\rangle):=\langle(g(x_0),y_0),\dots,(g(x_{k-1}),y_{k-1}))\rangle$. Property a.\ of $g$ guarantees that $\hat{g}$ is well-defined and property b.\ implies that $\hat{g}$ respects the Kleene-Brouwer orderings of the double descent trees $X*Y$ and $\Q*Y$.
\end{proof}
\end{lemma}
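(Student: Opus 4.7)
The plan is to handle parts 1 and 2 by reference: both are standard facts about the double descent tree construction and are established as Lemma V.6.5 of Simpson's monograph \cite{SOSOA:Simpson}. Part 3, where $\WO(Y)$ is assumed (so by part 1 both $X*Y$ and $\Q*Y$ are well orderings), is the only clause requiring a genuine argument.

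For part 3, the strategy is to exhibit an order-preserving injection from $X*Y$ into $\Q*Y$. The natural candidate is to pick an embedding $g : (X, <_X) \hookrightarrow (\Q, <_\Q)$ and lift it coordinate-wise to $\hat g : {\sf T}(X,Y) \to {\sf T}(\Q,Y)$ by $\hat g(\langle (x_0,y_0), \dots, (x_{k-1},y_{k-1})\rangle) = \langle (g(x_0),y_0), \dots, (g(x_{k-1}),y_{k-1})\rangle$. Preservation of $<_X$ by $g$ sends strictly $X$-decreasing sequences to strictly $\Q$-decreasing ones, so $\hat g$ really does land in ${\sf T}(\Q,Y)$, and coordinate-wise injectivity is immediate.

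The main step, and where I expect the real work to sit, is arranging that $\hat g$ also respects the Kleene--Brouwer orderings on these two trees. Since KB breaks prefix-incomparable strings by comparing the first differing coordinate under $<_\N$, one needs $g$ to additionally satisfy $x <_\N x' \Rightarrow g(x) <_\N g(x')$. Such a $g$ can be produced by a simple greedy construction: enumerate $X$ in increasing $<_\N$ order, and at stage $n$ choose a rational whose $\N$-code is larger than all previously chosen codes and which sits correctly among the previously chosen rationals with respect to $<_X$. Both conditions can be met simultaneously because $\Q$ is dense and each nonempty open interval of $\Q$ contains rationals of arbitrarily large $\N$-code. Once such a $g$ is in hand, a routine case analysis of KB comparisons---prefix case trivial, first-disagreement case handled by whichever of the two preservation properties of $g$ applies to the disagreeing coordinate---shows that $\hat g$ is an order embedding of $X*Y$ into $\Q*Y$, and therefore $|X*Y| \leq |\Q*Y|$.
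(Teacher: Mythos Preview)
Your proposal is correct and matches the paper's proof essentially line for line: parts 1 and 2 are deferred to Simpson's Lemma V.6.5, and part 3 is handled by choosing $g:X\to\Q$ that simultaneously preserves $<_X$ and $<_\N$, then lifting coordinate-wise to $\hat g$ on the double descent trees. The paper merely asserts that such a $g$ exists, whereas you sketch the greedy construction and the KB case analysis explicitly, but the underlying argument is the same.
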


\subsection{Weihrauch reducibility}\label{Weihrauchred}
Intuitively, $f$ being Weihrauch reducible to $g$ means that there is an
otherwise computable procedure to solve $f$ by invoking an oracle for $g$
exactly once. We thus obtain a very fine-grained picture of the relative
strength of partial multivalued functions. Consequently, a Weihrauch
equivalence is a very strong result compared to other approaches that allow
more generous access to the principle being reduced to. 

\begin{definition}[Weihrauch reducibility]
\label{def:weihrauch} Let $f,g$ be multivalued functions on represented
spaces. Then $f$ is said to be {\em Weihrauch reducible} to $g$, in symbols
$f\leqW g$, if there are computable functions $K,H:\subseteq\Bai\to\Bai$ such
that $\left(p \mapsto K\langle p, GH(p) \rangle \right )\vdash f$ for all $G
\vdash g$.

If there are computable functions $K,H:\subseteq\Bai\to\Bai$ such that $KGH
\vdash f$ for all $G \vdash g$, then $f$ is \emph{strongly Weihrauch
reducible} to $g$, in symbols $f \leqsW g$.
\end{definition}
The relations $\leqW$, $\leqsW$ are reflexive and transitive. We use $\equivW$ ($\equivsW$) to denote equivalence
and by $\leW$ we denote strict reducibility. Both Weihrauch degrees \cite{paulyreducibilitylattice} and strong Weihrauch degrees \cite{damir} form lattices, the former being distributive and the latter not (in general, Weihrauch degrees behave more naturally than strong Weihrauch degrees).

Rather than the lattice operations, we will use two kinds of products in this
work: The parallel product $f \times g$ is just the usual cartesian product
of (multivalued) functions, which is readily seen to induce an operation on
(strong) Weihrauch degrees. We call $f$ a \emph{cylinder}, if $f \equivsW
(\id_\Bai \times f)$, and note that for cylinders, Weihrauch reducibility and
strong Weihrauch reducibility coincide.

The compositional product $f \star g$ satisfies that $$f \star g \equivW \max_{\leqW} \{f_1 \circ g_1 \mid f_1 \leqW f \wedge g_1 \leqW g\}$$
and thus is the hardest problem that can be realized using first $g$, then something computable, and finally $f$. The existence of the maximum is shown in \cite{paulybrattka4}. Both products as well as the lattice-join can be interpreted as logical \emph{and}, albeit with very different properties. The sequential product $\star$ is not commutative, however, it is the only one that admits a matching implication \cite{paulybrattka4,paulykojiro}.

Two further (unary) operations on Weihrauch degrees are relevant for us,
finite parallelization $f^*$ and parallelization $\widehat{f}$. The former
has as input a finite tuple of instances to $f$ and needs to solve all of
them, the latter takes and solves a countable sequences of instances. Both
operations are closure operators in the Weihrauch lattice. They can be used
to relax the requirement of using the oracle only once, if so desired, by
looking at the relevant quotient lattices.

In passing, we will refer to the third operation, the jump from \cite{gherardi4} (studied further in \cite{brattka15}, denoted by $f'$. We use $f^{(n)}$ to denote the result of applying the jump $n$-times. The jump only preserves strong Weihrauch degrees. The input to $f'$ is a sequence converging (with unknown speed) to an input of $f$, the output is whatever $f$ would output on the limit.

The well-studied Weihrauch degrees most relevant for us are unique closed
choice and closed choice (on Baire space), to which we dedicate the following
Section \ref{sec:choice}. Two other degrees we will refer to are $\lpo : \Bai
\to \{0,1\}$ and ${\sf lim} : \subseteq (\Bai)^\omega \to \Bai$. These are
defined via $\lpo(p) = 1$ iff $p = 0^\omega$, and ${\sf lim}((p_i)_{i \in
\N}) = \lim_{i \to \infty} p_i$. They are related by $\widehat{\lpo} \equivW
{\sf lim}$. The importance of ${\sf lim}$ is found partially in the
observation from \cite{Bra:BorelMeas} that ${\sf lim}$ is complete for Baire
class $1$ functions, and more generally, that ${\sf lim}^{(n)}$ is complete
for Baire class $n + 1$ functions.

\section{$\UC_\Bai$ and $\C_\Bai$}

The two Weihrauch degrees of central importance for this paper are unique closed choice and closed choice (on Baire space). These are defined as follows:

\label{sec:choice}
\begin{definition}\label{def:UC-and-C}
Given a represented space $\mathbf{X}$, let $\C_\mathbf{X} : \subseteq \mathcal{A}(\mathbf{X}) \mto \mathbf{X}$ be defined via $x \in \C_\mathbf{X}(A)$ iff $x \in A$ (thus, $A \in \dom(\C_\mathbf{X})$ iff $A \neq \emptyset$). Let $\UC_\mathbf{X}$ be the restriction of $\C_\mathbf{X}$ to singletons.
\end{definition}

In particular, $\UC_\mathbf{X}$ is capable of finding an element of a given $\Pi^0_1$ singleton in $\mathbf{X}$.
In \cite{pauly-ordinals} Pauly introduced the notion of iterating a Weihrauch degree $f$ over a given countable ordinal, this is denoted by $f^\dagger$. It is then shown that:
\begin{theorem}[{\cite[Theorem 80]{pauly-ordinals}}]\label{thm:Pauly-dagger}
$\UC_\Bai \equivW {\sf lim}^\dagger$
\end{theorem}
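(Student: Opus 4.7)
The plan is to establish both directions of the equivalence by exploiting the classical Kleene-style correspondence between $\Pi^0_1$-singletons on Baire space and hyperarithmetical points. Recall that $x\in\Bai$ is hyperarithmetic relative to $p$ iff $\{x\}$ is a $\Pi^0_1(p)$-set, and the correspondence is uniform: from a tree $T\subseteq\N^{<\N}$ with $[T]=\{x\}$ one can compute an ordinal notation $a\in\mathcal O^p$ for some computable $\alpha$ together with an index $e$ such that $x=\Phi_e(p^{(\alpha)})$. In parallel, a single application of ${\sf lim}$ computes the Turing jump, and the operator ${\sf lim}^\dagger$ of \cite{pauly-ordinals} packages $\alpha$-fold iterations of ${\sf lim}$ uniformly along any computable ordinal.

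For $\UC_\Bai \leqW {\sf lim}^\dagger$: from a code of the $\Pi^0_1$-singleton $\{x\}$, I would first extract a pair $(a,e)$ as above, then invoke ${\sf lim}^\dagger$ once on the instruction to iterate ${\sf lim}$ along $\alpha$ to obtain $p^{(\alpha)}$, and finally compute $x=\Phi_e(p^{(\alpha)})$. Since the extraction of $(a,e)$ from the tree and the final application of $\Phi_e$ are both computable, the whole construction fits the template of a Weihrauch reduction with a single oracle call.

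For ${\sf lim}^\dagger \leqW \UC_\Bai$: an input to ${\sf lim}^\dagger$ prescribes a transfinite iteration of ${\sf lim}$ along some computable well-ordering $\beta$ whose intended output is a real $y$. I would argue that the full stage-tuple $(y_\gamma)_{\gamma\le\beta}$ is the \emph{unique} solution to a $\Pi^0_1$ condition relative to the input, expressing at each stage $\gamma$ that $y_\gamma$ is the appropriate limit of the preceding $y_{\gamma'}$. A single application of $\UC_\Bai$ then returns this tuple, from which $y=y_\beta$ can be read off computably.

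The main obstacle is the uniformity in the ordinal parameter. In the first direction, the extraction of the notation $a$ and the index $e$ from the tree $T$ relies on a uniform version of Kleene's HYP-quantification theorem, of exactly the kind invoked in Proposition \ref{pro:closprop}(7); handling the transfinite ramification of $\mathcal O^p$ carefully is where one must be most delicate. In the second direction, one must verify that the $\Pi^0_1$ description of the stage-tuple can be assembled uniformly from the input, regardless of the height of the prescribed iteration. Both sides hinge on the precise formalization of $f^\dagger$ given in \cite{pauly-ordinals}, which governs how the iteration schema is encoded and consumed.
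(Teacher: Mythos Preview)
The paper does not prove this theorem; it is quoted from \cite{pauly-ordinals} as Theorem~80 and used as a black box. So there is no ``paper's own proof'' to compare against directly, but the paper does contain closely related arguments (Lemmata~\ref{lemma:atrupper}, \ref{lemma:atrlower} and the proof of $\UC_\Bai\leqsW\wDeCA11$ inside Theorem~\ref{theo:bigucbaire}) that reveal what the correct mechanism should be.

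Your direction ${\sf lim}^\dagger \leqW \UC_\Bai$ is essentially right: the stage-tuple of iterated jumps along a given well-ordering is the unique object satisfying an arithmetical (in fact $\Pi^0_2$) condition, and since $\UC_\Bai\equivsW\PI02\mbox{-}\UC_\Bai$ this suffices. Calling the condition $\Pi^0_1$ is a minor slip but harmless.

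Your direction $\UC_\Bai \leqW {\sf lim}^\dagger$ has a genuine gap. You propose to ``extract a pair $(a,e)$'' with $x=\Phi_e(p^{(|a|)})$ computably from the tree $T$, and you cite Kleene's HYP-quantification theorem for this. That theorem says only that $\exists y\in\mathrm{HYP}(p)\,P(p,y)$ defines a $\Pi^1_1$ predicate; it gives no computable procedure for producing a witnessing notation. Indeed there is no way to read off from $T$ ``the ordinal at which $x$ appears'' --- the whole point is that one must instead \emph{manufacture} a well-ordering from $T$ that is provably long enough. The paper's proof of $\DeCA11\leqsW\mathsf{ATR}$ (Lemma~\ref{lemma:atrlower}) shows how: from $T$ one computes auxiliary trees $T^0_t,T^1_t$ (exactly one ill-founded for each $t$, as in the proof of $\UC_\Bai\leqsW\wDeCA11$), forms the well-ordering $\sum_t({\sf KB}(T^0_t)*{\sf KB}(T^1_t))+1$ via the double descent tree, and then runs the leaf-pruning recursion along it. The ordinal you feed to ${\sf lim}^\dagger$ is this explicitly constructed object, not an ordinal notation divined from the hyperarithmetic complexity of $x$. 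Without such a construction your pre-processing map $H$ is not computable, and the reduction fails.
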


One can read the above result as a very uniform version of the famous
classical result that the Turing downward closures of $\Pi^0_1$ singletons in
$\Bai$ exhausts the hyperarithmetical hierarchy (cf.\ \cite[Corollary
II.4.3]{sacks2}).

\paragraph*{Remark:} Seeing that $\mathrm{ATR}_0$ asserts the existence of Turing jumps iterated along some countable ordinal and since ${\sf lim}$ is equivalent to the Turing jump, it may seem as if this theorem already establishes that $\UC_\Bai$ is the Weihrauch degree corresponding to $\mathrm{ATR}_0$. There is a significant difference here though in what is meant by countable ordinal: In ${\sf lim}^\dagger$, the input includes a code for something which is an ordinal in the surrounding meta-theory. In particular, any computable ordinal can be used \emph{for free}. For $\mathrm{ATR}_0$ the notion of countable ordinal is that of the model used. For example, an ill-founded computable linear order without hyperarithmetical descending chains (Kleene, see \cite[Chapter 3, Lemma 2.1]{sacks2}) counts as an ordinal in the $\om$-model ${\sf HYP}$ consisting exactly of hyperarithmetical sets, and a similar phenomenon may happen in non-$\beta$-models of $\mathrm{ATR}_0$. Things get worse if non-$\omega$-models are considered: $\mathrm{ATR}_0$ (indeed, any sound c.e.\ theory, of course) fails to prove well-foundedness of some computable ordinals.

\medskip

Note that ${\sf lim}^\dagger$ roughly corresponds to a (uniform)
hyperarithmetical reduction, and therefore Theorem \ref{thm:Pauly-dagger},
for instance, implies the following:

\begin{corollary}
\label{corr:uc-hyppoints} Whenever $\{a\} \in \mathcal{A}(\Bai)$ is
computable, then $a \in \Bai$ is hyperarithmetical.
\end{corollary}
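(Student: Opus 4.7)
The plan is to read the corollary as a direct application of Theorem \ref{thm:Pauly-dagger}, exploiting the fact that on computable inputs, all Weihrauch reductions preserve being computable, and that ${\sf lim}^\dagger$ produces only hyperarithmetical outputs when fed a computable input.

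First, I would observe that if $\{a\} \in \mathcal{A}(\Bai)$ is computable, then we have a computable name for this closed set, and this name is a valid input to $\UC_\Bai$ (its domain consists of those closed sets containing exactly one point, which $\{a\}$ does). The unique output of $\UC_\Bai$ on this input is $a$ itself. So it suffices to show that $\UC_\Bai$ applied to a computable input yields a hyperarithmetical real.

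Next, by Theorem \ref{thm:Pauly-dagger} there are computable $K, H$ witnessing $\UC_\Bai \leqW {\sf lim}^\dagger$, so that $a$ can be obtained as $K\langle p, G H(p)\rangle$ where $p$ is our computable name of $\{a\}$ and $G$ is any realizer of ${\sf lim}^\dagger$. Since $p$ is computable and $H$ is computable, $H(p)$ is computable as well, and it encodes both a countable ordinal (in the meta-theory) and an initial datum for the iterated limit. The point emphasized in the remark following Theorem \ref{thm:Pauly-dagger} is that for ${\sf lim}^\dagger$ any \emph{computable} ordinal is available \emph{for free}, so when the input is computable the ordinal involved is also computable, and $G H(p)$ is (uniformly computable from) the iterate of the Turing jump of a computable real along a computable ordinal. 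This is a hyperarithmetical real by the classical characterisation of the hyperarithmetical hierarchy as the closure of the computable sets under transfinite iterates of the Turing jump along computable ordinals (cf.\ \cite[Chapter II]{sacks2}). Finally, $a = K\langle p, G H(p)\rangle$ is computable in $G H(p)$, hence still hyperarithmetical.

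The only non-trivial ingredient here is the step from ``${\sf lim}^\dagger$ on a computable input'' to ``hyperarithmetical output'', which is exactly the content the paper highlights in its commentary on Theorem \ref{thm:Pauly-dagger}; once one accepts that identification (Turing jump $\equivW {\sf lim}$, iteration via $\cdot^\dagger$ along a computable ordinal $\leadsto$ hyperarithmetical reduction), the corollary is immediate. I expect the only subtlety worth being explicit about is that the ordinal used inside $\cdot^\dagger$ is determined computably from the input $p$, so no non-computable ordinal sneaks in along the way.
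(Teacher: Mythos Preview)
Your proposal is correct and follows essentially the same approach as the paper: the paper does not give a separate proof but simply observes (in the sentence immediately preceding the corollary) that ${\sf lim}^\dagger$ corresponds to a uniform hyperarithmetical reduction, so the result follows at once from Theorem~\ref{thm:Pauly-dagger}. Your write-up just spells out this inference in more detail.
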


\begin{corollary}
If $f \leqW \UC_\Bai$ for $f : \subseteq \Bai \mto \mathbf{X}$, then for
every $x \in \dom(f)$, $f(x)$ contains some $y$ hyperarithmetical relative to
$x$.
\end{corollary}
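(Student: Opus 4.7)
The plan is to unfold the definition of $f \leqW \UC_\Bai$ and trace the oracle call, showing that both the query to $\UC_\Bai$ and the answer used to produce an output name are hyperarithmetical in $x$, so that the previous corollary (relativized) delivers the claim.

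First, I would fix computable $K, H : \subseteq \Bai \to \Bai$ witnessing $f \leqW \UC_\Bai$, so that for every realizer $G \vdash \UC_\Bai$ and every name $p$ for some $x \in \dom(f)$, the element named by $K\langle p, G H(p)\rangle$ lies in $f(x)$. Since $f : \subseteq \Bai \mto \mathbf{X}$, the point $x$ serves as its own name, so I may take $p = x$. Then $H(x)$ is a name for some $A \in \mathcal{A}(\Bai)$ in the domain of $\UC_\Bai$, i.e.\ for a singleton $\{a\}$; and because $H$ is computable, $\{a\}$ is a $\Pi^0_1(x)$ singleton.

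Next, I would invoke Corollary~\ref{corr:uc-hyppoints} relativized to $x$: the proof of that corollary (which ultimately rests on Theorem~\ref{thm:Pauly-dagger} identifying $\UC_\Bai$ with ${\sf lim}^\dagger$) relativizes verbatim, so any $x$-computable $\Pi^0_1$ singleton $\{a\}$ yields $a$ hyperarithmetical in $x$. Thus the point $a$ returned by the oracle call $G H(x)$ can, without loss of generality, be taken to be hyperarithmetical in $x$; any realizer $G$ for $\UC_\Bai$ must return \emph{this} $a$, since the only element of $A$ is $a$.

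Finally, I would feed the pair $\langle x, a\rangle$ into the computable map $K$: the resulting string $K\langle x, a\rangle$ is computable from $x$ and $a$, hence hyperarithmetical in $x$, and by choice of $K, H$ it is a name of some $y \in f(x)$. This $y$ is therefore hyperarithmetical relative to $x$, as required. The only mildly non-routine step is the appeal to a relativized version of Corollary~\ref{corr:uc-hyppoints}; but its proof goes through any oracle, so this is not a real obstacle, merely a remark about uniformity.
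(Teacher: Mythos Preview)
Your proposal is correct and follows exactly the intended route: the paper states this corollary without proof, as it is meant to be an immediate consequence of (the relativized form of) Corollary~\ref{corr:uc-hyppoints} by unfolding the Weihrauch reduction. Your write-up spells out precisely those details.
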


Corollary \ref{corr:uc-hyppoints} is a well-known classical fact saying that
every $\Pi^0_1$ singleton is hyperarithmetical. Indeed, Spector showed that
every $\Sigma^1_1$ singleton is hyperarithmetical (cf.\ \cite[Theorem
I.1.6]{sacks2}). Thus, it is natural to ask whether choice from $\Sigma^1_1$
singletons has exactly the same strength as $\UC_\Bai$.

One can generalize Definition \ref{def:UC-and-C} to any
$\mathbf{\Gamma}\in\{\SI{i}{k},\PI{i}{k},\DE{i}{k}\}$ in a straightforward
manner: Let $\mathbf{\Gamma}\mbox{-}\C_\mathbf{X} : \subseteq
\mathbf{\Gamma}(\mathbf{X}) \mto \mathbf{X}$ be defined via $x \in
\mathbf{\Gamma}\mbox{-}\C_\mathbf{X}(A)$ iff $x \in A$. In other words, any
realizer of $\mathbf{\Gamma}\mbox{-}\C_\mathbf{X}$ sends a code of a
$\mathbf{\Gamma}$-definition of $A$ to a name of an element of $A$. Let
$\mathbf{\Gamma}\mbox{-}\UC_\mathbf{X}$ be the restriction of
$\mathbf{\Gamma}\mbox{-}\C_\mathbf{X}$ to singletons. For instance, a
realizer for $\SI11$-unique choice $\SiUC:\subseteq\SI11(\Bai)\to\Bai$ is a
partial function which, given a $\SI11$-code of a singleton $\{x\} \subseteq
\Bai$, returns a name of its unique element $x$. We will see below (in
Theorem \ref{theo:bigucbaire}) that $\SiUC \equivW \UC_\Bai$.

We now explore the strength of $\C_\Bai$.

\begin{theorem}[Kleene \cite{kleene4}]
There exists computable non-empty $A \in \mathcal{A}(\Bai)$ containing no
hyperarithmetical point.
\end{theorem}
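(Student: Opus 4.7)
This is a classical result of Kleene; I sketch the high-level strategy. The goal is to exhibit a computable tree $T \subseteq \N^{<\N}$ with $[T] \neq \emptyset$ such that no path of $T$ is hyperarithmetical. The strategy is to force every $p \in [T]$ to uniformly code a non-hyperarithmetical object, invoking Spector's theorem already cited above.

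Fix a $\PI11$-complete set $W \subseteq \N$, for instance Kleene's $\mathcal{O}$, witnessed by a uniformly computable family of trees $(T_n)_{n\in\N}$ with $n \in W$ iff $T_n$ is well-founded. The characteristic function $\chi_W$ is then properly $\PI11$, and by Spector's theorem not hyperarithmetical. I would construct $T$ so that its paths have the form of sequences $(p_n)_{n \in \N}$ subject to purely local computable constraints relating each $p_n$ to $T_n$, with the effect that any path uniformly computes $\chi_W$ (or an equivalent non-hyperarithmetical object), thereby forbidding $p \in \mathrm{HYP}$ by Spector. Non-emptiness of $[T]$ is witnessed by the honest assignment: take genuine infinite branches in the ill-founded $T_n$ and designated fillers elsewhere, all of whose finite initial segments lie in $T$.

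The main obstacle is the evident tension between computability of $T$ and the requirement that paths encode a $\PI11$-complete object. The standard resolution, which I would follow, is that only $\PI01$ consistency conditions are imposed at the finite levels of $T$; the $\PI11$-hard commitments emerge in the limit by a K\"onig-style compactness argument on the infinite branches. Working out the specific local consistency conditions and the uniform procedure that decodes $\chi_W$ from an arbitrary path is the technical heart of Kleene's construction; the full details can be found in \cite[Corollary II.4.3]{sacks2}, from which I would simply import the result.
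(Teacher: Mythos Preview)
The paper does not actually prove this theorem; it simply cites it as a classical result of Kleene. So there is no ``paper's own proof'' to compare against. That said, your proposed strategy contains a fatal error.

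You propose to build a computable tree $T$ whose every path uniformly computes $\chi_W$ for a $\PI11$-complete set $W$, hence computes Kleene's $\mathcal{O}$. This is impossible. By Gandy's basis theorem (cited in the paper as \cite[Corollary III.1.5]{sacks2}), every non-empty $\Sigma^1_1$ class --- in particular every non-empty computable closed set --- contains a hyperlow member $p$, i.e.\ one with $\mathcal{O}^p \equiv_T \mathcal{O}$. But any $p \geq_T \mathcal{O}$ satisfies $\mathcal{O}^p \geq_T \mathcal{O}^{\mathcal{O}} >_T \mathcal{O}$, so is not hyperlow. Hence no non-empty $\Pi^0_1$ class can have every element compute $\mathcal{O}$, and your construction cannot exist as described. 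Your remark that ``only $\Pi^0_1$ consistency conditions are imposed at the finite levels'' correctly identifies the tension, but the resolution you gesture at does not exist: the obstruction is not technical but absolute.

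The standard route --- which the paper in fact alludes to elsewhere (see the remark after Theorem \ref{thm:Pauly-dagger}, citing \cite[Chapter 3, Lemma 2.1]{sacks2}) --- is via a Harrison order: a computable ill-founded linear order with no hyperarithmetical infinite descending sequence, whose existence follows from $\Sigma^1_1$-boundedness. The set of its infinite descending sequences is then the desired $A$. Note also that your citation \cite[Corollary II.4.3]{sacks2} concerns the converse direction (that $\Pi^0_1$ singletons are hyperarithmetical), not the present theorem.
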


That is, there is a nonempty $\Pi^0_1$ set $A\subseteq\Bai$ with no
hyperarithmetical element. This shows that $\C_\Bai$ has a computable
instance with no hyperarithmetical solution. Let $\mathsf{NHA} : \Bai \mto
\Bai$ be defined via $q \in \mathsf{NHA}(p)$ iff $q$ is not hyperarithmetical
relative to $p$.

\begin{corollary}
\label{corr:nhabasic} $\mathsf{NHA} \nleqW \UC_\Bai$ but $\mathsf{NHA} \leqW
\C_\Bai$.
\end{corollary}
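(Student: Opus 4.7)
The plan is to handle the two parts separately, using in each direction a result that is already on the table.

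For the nonreduction $\mathsf{NHA} \nleqW \UC_\Bai$, I would simply appeal to the corollary just proved: if $f \leqW \UC_\Bai$ and $f : \subseteq \Bai \mto \mathbf{X}$, then for every $x \in \dom(f)$ the set $f(x)$ must contain at least one $y$ hyperarithmetical relative to $x$. Applied to $f = \mathsf{NHA}$, this would produce, for any $p \in \Bai$, some $q \in \mathsf{NHA}(p)$ hyperarithmetical in $p$, contradicting the very definition of $\mathsf{NHA}(p)$ (which consists exactly of the reals not hyperarithmetical in $p$). So the nonreduction is essentially immediate.

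For the positive reduction $\mathsf{NHA} \leqW \C_\Bai$, the key point is that Kleene's construction of a nonempty $\Pi^0_1$ set $A \subseteq \Bai$ with no hyperarithmetical element relativizes uniformly: from any oracle $p$ one can compute a tree $T_p \subseteq \N^{<\N}$ such that $[T_p]$ is nonempty and contains no element hyperarithmetical in $p$. This yields a computable map $H : \Bai \to \mathcal{A}(\Bai)$ sending $p$ to the name of such a closed set $A_p := [T_p]$. Then the Weihrauch reduction is given by $H$ together with the projection $K\langle p, q\rangle := q$: any realizer $G$ of $\C_\Bai$ applied to $H(p)$ returns some point $q \in A_p$, and by construction $q$ is not hyperarithmetical in $p$, i.e. $q \in \mathsf{NHA}(p)$.

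The only step that deserves any real care is verifying that Kleene's theorem genuinely relativizes uniformly, so that $p \mapsto T_p$ is computable (rather than merely computable in the parameters of some fixed instance). I expect this to be the main, though still routine, obstacle: one inspects the standard construction (essentially a $\Pi^0_1$ set of codes for diagonalizations against all pairs witnessing hyperarithmeticity in the oracle) and checks that all steps go through uniformly in $p$. Granted this, the reduction sketched above is straightforward.
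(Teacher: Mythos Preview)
Your proposal is correct and matches the paper's intended argument: the corollary is stated without proof precisely because it follows immediately from the two preceding results (the corollary on hyperarithmetical solutions for $\UC_\Bai$, and Kleene's theorem, relativized). Your identification of the uniform relativization of Kleene's construction as the only point requiring care is exactly right, and this is indeed routine.
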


We now get the separation between $\UC_\Bai$ and $\C_\Bai$.

\begin{corollary}
$\UC_\Bai \leW \C_\Bai$.
\end{corollary}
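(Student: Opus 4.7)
The plan is to observe that this corollary is a one-line consequence of what has just been established, and to present it as such. The forward direction $\UC_\Bai \leqW \C_\Bai$ is immediate because $\UC_\Bai$ is literally defined as the restriction of $\C_\Bai$ to singleton inputs, so the identity reduction (taking $H$ and $K$ to be $\id_\Bai$) witnesses $\UC_\Bai \leqsW \C_\Bai$, and in particular $\UC_\Bai \leqW \C_\Bai$.

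For the strictness, the key is to combine the two parts of Corollary \ref{corr:nhabasic}. Suppose for contradiction that $\C_\Bai \leqW \UC_\Bai$. Then by transitivity of $\leqW$ together with $\mathsf{NHA} \leqW \C_\Bai$, we would obtain $\mathsf{NHA} \leqW \UC_\Bai$, contradicting the other half of Corollary \ref{corr:nhabasic}. Hence $\C_\Bai \nleqW \UC_\Bai$, which together with $\UC_\Bai \leqW \C_\Bai$ yields $\UC_\Bai \leW \C_\Bai$.

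There is no real obstacle here: all the work has been done in establishing Corollary \ref{corr:nhabasic}, which in turn rests on Kleene's theorem on the existence of a nonempty $\Pi^0_1$ set with no hyperarithmetical element (separating $\C_\Bai$ from the hyperarithmetical world) and Corollary \ref{corr:uc-hyppoints} (confining the outputs of $\UC_\Bai$-reductions to the hyperarithmetical closure of the input). The only thing to take care of in writing up is to make the appeal to transitivity explicit, so that the reader sees the strictness as a consequence of the discrepancy between these two facts about $\mathsf{NHA}$.
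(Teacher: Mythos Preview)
Your proposal is correct and matches the paper's intended argument: the corollary is stated immediately after Corollary~\ref{corr:nhabasic} with the remark ``We now get the separation between $\UC_\Bai$ and $\C_\Bai$,'' and no further proof is given. The reasoning you spell out---that $\UC_\Bai$ is a restriction of $\C_\Bai$, and that $\C_\Bai \leqW \UC_\Bai$ would force $\mathsf{NHA} \leqW \UC_\Bai$ by transitivity, contradicting Corollary~\ref{corr:nhabasic}---is exactly what the paper has in mind.
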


There are a number of variants of unique choice, comprehension and separation
that are all equivalent to $\UC_\Bai$ w.r.t.~Weihrauch reducibility. We
explore some of these next:

\begin{definition}[$\SI11$-Separation]
Let $\SiSep11:\subseteq (\Tr\times\Tr)^{\N}\mto\Can$ be the multivalued
function with $\dom(\SiSep11)=\set{(S_n,T_n)_{n\in\N}}{\forall
n([S_n]=\emptyset\vee [T_n]=\emptyset)}$ that maps any sequence
$(S_n,T_n)_{n\in\N}$ in the domain to the set
\[
\set{f\in\Can}{\forall n \left ( ([S_n]\neq\emptyset\rightarrow f(n)=0)\wedge ([T_n]\neq\emptyset\rightarrow f(n)=1)\right )}.
\]
\end{definition}

One can introduce a similar multivalued function by directly using the space
$\SI11(\N)\times\SI11(\N)$ instead of $(\Tr\times\Tr)^\N$ without affecting
the Weihrauch degree.

\begin{definition}[$\DE11$-Comprehension]
Let $\DeCA11:\subseteq (\Tr\times\Tr)^{\N}\to\Can$ be the restriction of
$\SiSep11$ to the set $\set{(S_n,T_n)_{n\in\N}}{\forall n
([S_n]=\emptyset\leftrightarrow [T_n]\neq\emptyset)}$. Let $\wDeCA11$ be the
restriction of $\DeCA11$ to the set $\{(S_n,T_n)_{n\in\N}:\forall n\
|[S_n]|+|[T_n]|=1\}$.
\end{definition}

\begin{definition}[Weak $\SI11$-Comprehension]
Let $\wSiCA11:\subseteq \Tr^\N\to\Can$ be the function with domain
$\dom(\wSiCA11)=\set{(T_n)_{n\in\N}}{\forall n |[T_n]|\leq 1}$ and that maps
$(T_n)_{n\in\N}$ to the unique $f\in\Can$ such that $f(n)=1\leftrightarrow
|[T_n]|=1$ for all $n\in\N$.
\end{definition}

\begin{theorem}
\label{theo:bigucbaire}
The following are strongly Weihrauch equivalent:
\begin{enumerate}
\item $\UC_\Bai$
\item $\SiUC$
\item $\SiSep11$
\item $\DeCA11$
\item $\wDeCA11$
\item $\wSiCA11$
\end{enumerate}
\begin{proof}
\begin{description}
\item[\textmd{($\SiUC \leqsW \UC_\Bai$)}] The proof of  \cite[Theorem
    80]{pauly-ordinals} implicitly contains a proof of\linebreak
    $\SI11\text{-}\UC_\mathbb{N} \leqsW \lim^\dagger$ (in the last
    paragraph). It is clear that $\SiUC \equivsW
    \widehat{\SI11\text{-}\UC_\mathbb{N}}$ and that $\widehat{\UC_\Bai}
    \equivsW \UC_\Bai$, so the claim follows with Theorem
    \ref{thm:Pauly-dagger}.

     An alternative proof can be obtained by noting that the proof of
    $\UC_\Bai\leqsW\wDeCA11$ given below is readily adapted to show that
    $\SiUC \leqsW \DeCA11$ instead, and use the reductions below.

\item[\textmd{($\UC_\Bai \leqsW \SiUC$)}] Trivial, as
    $\id : \PI01(\Bai) \to \SI11(\Bai)$ is computable by Proposition
    \ref{pro:closprop}(4).

\item[\textmd{($\SiSep11 \leqsW \UC_\Bai$)}] By \cite[Proposition 62 \&
    Lemma 79]{pauly-ordinals}. An alternative proof can be obtained by
    combining Lemmata \ref{lemma:cwoupper} and \ref{lemma:CWOlower} below.

\item[\textmd{($\DeCA11 \leqsW \SiSep11$)}] The former is a restriction of the latter.

\item[\textmd{($\wDeCA11 \leqsW \DeCA11$)}] The former is a restriction of the latter.

\item[\textmd{($\UC_\Bai\leqsW\wDeCA11$)}] Let $\{f\}$ be a singleton of
    $\Bai$ given via some tree $T$ such that $[T] =\{f\}$. From $T$ we
    compute the double-sequence of trees $(T^0_t,T^1_t)_{t\in\N^{<\N}}$
    such that: for all $t\in\N^{<\N}$,
	\begin{itemize}
		\item $T^0_t= \set{s\in T}{t\sqsubseteq s\vee s\sqsubseteq t}$,
		\item $T^1_t=\set{s\in T}{t\not\sqsubseteq s}$.
	\end{itemize}
Note that, for all $t\in\N^{<\N}$, exactly one between $T^0_t$ and $T^1_t$ is ill-founded. In fact, if $t\sqsubseteq f$ then $f\in[T^0_t]$ and, since $T$ has only one path, $T^1_t$ is well-founded. Otherwise, if $t\not\sqsubseteq f$ then $f\in[T^1_t]$ and $[T^0_t]=\emptyset$. Hence, we even have that for all $t\in\Seq$, $|[T^0_t]|+|[T^1_t]|= 1$.

Since we can identify $\N^{<\N}$ with $\N$ we can consider
$g=\wDeCA11((T^0_t,T^1_t)_{t\in\N^{<\N}})$. For all $t\in\N^{<\N}$, $g(t)=0
\iff [T^0_t]\neq\emptyset \iff t\sqsubseteq f$. Therefore, given $n\in\N$,
to compute $f(n)$ it suffices to wait for the first $t\in\N^{n+1}$ such
that $g(t)=0$ and then put $f(n)=t(n)$. This concludes the proof.

\item[\textmd{($\wDeCA11\leqsW\wSiCA11$)}] For every $(T^0_n,T^1_n)_{n\in\N}\in\dom(\wDeCA11)$ we have that $\wDeCA11((T^0_n,T^1_n)_{n\in\N}) = \wSiCA11((T^1_n)_{n \in \N})$.

\item[\textmd{($\wSiCA11 \leqsW \SiUC$)}] Let
    $(T_n)_{n\in\N}$ be a sequence of trees in $\dom(\wSiCA11)$. We claim
    that using $\SiUC$ we are able to compute
    $f\in\Can$ such that:
	\begin{equation}\label{eq:wSiCACh}
		\forall n(f(n)=1\leftrightarrow |[T_n]|=1).
	\end{equation}
In fact, (\ref{eq:wSiCACh}) is equivalent to
	\begin{equation*}\forall n[(f(n)=0 \vee\exists g(g\in[T_n]))\wedge(\neg\exists!g(g\in[T_n])\vee f(n)=1)],
	\end{equation*}
which in turn is equivalent to
\begin{equation}\label{eq:wSiCACh1} \forall n[\exists g(f(n)=0\vee g\in[T_n])\wedge\neg\exists!g(g\in[T_n]\wedge f(n)=0)].
\end{equation}
Now, for each $n$, we can uniformly compute from $(T_n)_{n\in\N}$ a name
for
\[
\set{(g,f)\in\Bai\times\Bai}{f(n)=0\vee g\in [T_n]}
\]
as a closed subset of $\Bai\times\Bai$, which entails that we can uniformly
compute from $(T_n)_{n\in\N}$ a name for
\[
\set{f\in\Bai}{\exists g(f(n)=0\vee g\in [T_n])}
\]
as a $\SI11(\Bai)$ set for each $n\in\N$. Furthermore, for each $n\in\N$,
we can uniformly compute from $(T_n)_{n\in\N}$ a name for
\[
\set{(g,f)\in\Bai\times\Bai}{g\in[T_n]\wedge f(n)=0}
\]
as a closed set and hence a name for
\[
\set{f\in\Bai}{\neg\exists!g(g\in[T_n]\wedge f(n)=0)}
\]
as a $\SI11(\Bai)$ set by Proposition \ref{pro:closprop}(7).

Finally, since the operations of finite and countable intersection of
$\SI11$ sets are computable, we are able to uniformly compute from
$(T_n)_{n\in\N}$ a name (by Proposition \ref{pro:closprop}(2)) for the
$\SI11(\Bai)$ singleton
\begin{equation*}
\set{f\in\Can}{\forall n[\exists g(f(n)=0\vee g\in[T_n])\wedge\neg\exists!g(g\in[T_n]\wedge f(n)=0)]}.
\end{equation*}
Clearly, applying $\SiUC$ to such set we obtain the unique $f$ satisfying
(\ref{eq:wSiCACh}), which is exactly $\wSiCA11((T_n)_n)$.\qedhere
\end{description}
\end{proof}
\end{theorem}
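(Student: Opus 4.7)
The plan is to prove a cycle of strong Weihrauch reductions among the six listed principles, grouped by difficulty. Several reductions are immediate. First, $\UC_\Bai \leqsW \SiUC$ holds because the canonical inclusion $\PI01(\Bai) \hookrightarrow \SI11(\Bai)$ is computable by Proposition \ref{pro:closprop}(4). Second, $\wDeCA11 \leqsW \DeCA11 \leqsW \SiSep11$ by inspection of domains (each is a restriction of the next). Third, $\wDeCA11 \leqsW \wSiCA11$ because on an input $(S_n, T_n)_n$ of $\wDeCA11$ the output bit at $n$ is determined by whether $[T_n]$ is nonempty, which is exactly what $\wSiCA11((T_n)_n)$ computes.

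For the bootstrap $\UC_\Bai \leqsW \wDeCA11$, I would take a $\PI01$-singleton $\{f\} = [T]$ and, for each finite string $t \in \Seq$, form two subtrees of $T$: one consisting of nodes comparable with $t$, the other of nodes $s$ with $t \not\sqsubseteq s$. Since $T$ has a unique path, exactly one of these trees is ill-founded, depending on whether $t \sqsubseteq f$. Indexing by $t \in \Seq$ gives an input to $\wDeCA11$, whose output is the characteristic function of $\{t : t \sqsubseteq f\}$, from which $f$ is read off string-by-string.

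For $\wSiCA11 \leqsW \SiUC$ I would convert the target $f \in \Can$ into a $\SI11$-singleton. The condition $f(n) = 1 \leftrightarrow |[T_n]| = 1$ can be recast as the conjunction of $f(n) = 0 \vee \exists g\, (g \in [T_n])$ and $\neg \exists! g\, (g \in [T_n] \wedge f(n) = 0)$. The first conjunct is $\SI11$ via Proposition \ref{pro:closprop}(5), and the second is $\SI11$ via Proposition \ref{pro:closprop}(7), which turns uniqueness of a $\PI01$-path into a $\PI11$ predicate whose negation is $\SI11$. Countable intersection of $\SI11$ sets is computable by Proposition \ref{pro:closprop}(2), producing a $\SI11$ singleton to which $\SiUC$ can be applied.

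The main obstacle is closing the cycle by showing $\SiUC \leqsW \UC_\Bai$, together with the parallel bound $\SiSep11 \leqsW \UC_\Bai$. Classically Spector's theorem tells us that a $\SI11$ singleton in $\Bai$ is hyperarithmetical in its defining data, but a Weihrauch reduction demands a \emph{uniform} version: from a $\SI11$-name of $\{x\}$ one must compute an ordinal notation $\alpha$ together with a Turing-functional index $e$ such that $x = \Phi_e(0^{(\alpha)})$. This is precisely the shape of data consumed by the ordinal iteration $\mathsf{lim}^\dagger$, so Theorem \ref{thm:Pauly-dagger}, identifying $\mathsf{lim}^\dagger$ with $\UC_\Bai$, finishes the job once the uniformity of Spector's argument is extracted (citing the relevant machinery of \cite{pauly-ordinals}). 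The separation case is handled in the same spirit, by encoding the separating bit as a point hyperarithmetic in the defining double-sequence.
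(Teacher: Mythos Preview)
Your proposal is correct and follows essentially the same route as the paper: the same trivial reductions, the identical ``comparable-nodes'' tree construction for $\UC_\Bai \leqsW \wDeCA11$, the same $\SI11$-singleton encoding (via Proposition~\ref{pro:closprop}(2,5,7)) for $\wSiCA11 \leqsW \SiUC$, and the same deferral to Theorem~\ref{thm:Pauly-dagger} and the machinery of \cite{pauly-ordinals} for the hard directions $\SiUC \leqsW \UC_\Bai$ and $\SiSep11 \leqsW \UC_\Bai$. The only presentational difference is that the paper is more specific about which pieces of \cite{pauly-ordinals} are invoked (reducing first to $\SI11\text{-}\UC_\N$ and parallelizing), and it also notes an alternative route for $\SiUC \leqsW \UC_\Bai$ by adapting the $\UC_\Bai \leqsW \wDeCA11$ argument to yield $\SiUC \leqsW \DeCA11$ directly.
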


\subsection*{Arithmetical transfinite recursion}

As mentioned above, the operation ${\sf lim}^\dagger$ from
\cite{pauly-ordinals} is the ordinal-iteration of the map ${\sf lim}$. Here,
we will explore a direct encoding of arithmetical transfinite recursion as a
Weihrauch degree, and give another proof of its equivalence with $\UC_\Bai$.
Let us fix  an effective enumeration $\langle \phi_n:n\in\N\rangle$ of all
the computable functions $\phi:\subseteq \Bai \to\Bai$. Note that
$\widehat{\lpo^{(k)}}$ is a complete $\Sigma^0_{k+2}$-computable function,
and thus one can think of $\theta^k_n=\widehat{\lpo^{(k)}}\circ\phi_n$ as the
$n^{th}$ $\Sigma^0_{k+2}$-computable function. Instead, we could have used
the $n^{th}$ $\Sigma^0_{k+2}$ formula to define an equivalent notion.

\begin{definition}[Arithmetical transfinite recursion]\label{def:ATR} Let $\mathsf{ATR}:\subseteq\Can\times\WO\times\N^2\to\Can$  be the function which maps each $(Z,X,(k,n))\in\Can\times\WO\times\N^2$ to the set $Y\in\Can$ such that, for all $( y,j)\in\N^2$,
	\begin{equation*}
		(y,j) \in Y\leftrightarrow j\in X\wedge y\in \theta^k_n(Y^j\oplus Z),
	\end{equation*}
where $Y^j=\set{\langle y,i\rangle\in Y}{ i<_X j}$.
\end{definition}

Compare Definition \ref{def:ATR} with ${\rm ATR}_0$ in reverse mathematics,
cf.\ \cite[Definition V.2.4]{SOSOA:Simpson}. Note that our $\mathsf{ATR}$ is
a {\em single-valued} function since, as mentioned in the first remark in
this section, our $X$ is truly well ordered, and therefore, we do not need to
consider pseudo-hierarchies.

\begin{theorem}
$\mathsf{ATR} \equivsW \UC_\Bai$.
\begin{proof}
By Lemmata \ref{lemma:atrupper}, \ref{lemma:atrlower} below and Theorem \ref{theo:bigucbaire}.
\end{proof}
\end{theorem}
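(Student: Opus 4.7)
The plan is to establish the strong Weihrauch equivalence by proving the two directions $\mathsf{ATR} \leqsW \UC_\Bai$ and $\UC_\Bai \leqsW \mathsf{ATR}$ separately, in both cases exploiting the characterizations in Theorem \ref{theo:bigucbaire} and the degree-theoretic identification of $\UC_\Bai$ with $\lim^\dagger$ from Theorem \ref{thm:Pauly-dagger}.

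For the upper bound $\mathsf{ATR} \leqsW \UC_\Bai$ (Lemma \ref{lemma:atrupper}), I would route through the equivalent principle $\SiUC$. Given an instance $(Z,X,(k,n))$ of $\mathsf{ATR}$, the defining clause $(y,j)\in Y \leftrightarrow j\in X \wedge y\in \theta^k_n(Y^j\oplus Z)$ is an arithmetical condition on $Y$; its logical complexity is controlled by $k$ because $\theta^k_n=\widehat{\lpo^{(k)}}\circ\phi_n$ is $\SI0{k+2}$-computable and therefore has an arithmetic graph. Since $X$ is a genuine well-ordering, transfinite induction along $X$ forces exactly one $Y\in\Can$ to satisfy the recursion. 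Using Proposition \ref{pro:closprop}(4), together with Lemma \ref{lem:funcF} to absorb the uniform dependence on $k$, one can then compute from $(Z,X,(k,n))$ a $\SI11$ code for the singleton $\{Y\}$, and $\SiUC$ returns $Y$. The reduction is strong since the output of $\SiUC$ is already the desired $Y$.

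For the lower bound $\UC_\Bai \leqsW \mathsf{ATR}$ (Lemma \ref{lemma:atrlower}), the idea is to realize $\lim^\dagger$ directly as an instance of $\mathsf{ATR}$. Fix an index $n_0$ with $\phi_{n_0}=\id$; then $\theta^0_{n_0}=\widehat{\lpo}$, which is equivalent to $\lim$ up to a computable change of representation. Hence, given an input $(Z,X)$ for $\lim^\dagger$, the call $\mathsf{ATR}(Z,X,(0,n_0))$ produces a presentation of the $|X|$-fold iteration of $\lim$ along $X$ starting from $Z$, i.e.\ a code for $\lim^\dagger(Z,X)$. This yields $\lim^\dagger \leqsW \mathsf{ATR}$, which combined with $\UC_\Bai \equivW \lim^\dagger$ from Theorem \ref{thm:Pauly-dagger} gives $\UC_\Bai \leqW \mathsf{ATR}$. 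To upgrade to $\leqsW$, I would either appeal to the fact that $\UC_\Bai$ is a cylinder (so $\leqW$ and $\leqsW$ coincide for reductions from it) or rework the construction in the proof of \cite[Theorem 80]{pauly-ordinals} directly against $\mathsf{ATR}$, which produces a strongly Weihrauch reduction by inspection.

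The main obstacle I expect is the faithful bookkeeping in the lower bound: $\mathsf{ATR}$ and $\lim^\dagger$ package the transfinite stages of their outputs differently, so one must check that the $\mathsf{ATR}$-output uniformly encodes the $\lim^\dagger$-output (and ultimately the $\Pi^0_1$-singleton) in a computable way that does not rely on access to the original input. A secondary technical point is the uniformity in $k$ for the upper bound, which is exactly the role played by Lemma \ref{lem:funcF}.
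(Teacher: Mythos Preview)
Your upper bound is correct and close in spirit to the paper's, though you route through $\SiUC$ while Lemma~\ref{lemma:atrupper} routes through $\SiSep11$: the paper builds, for each pair $(j,z)$, the disjoint $\SI11$ conditions $\exists Y[H(Y,j)\wedge z\in\theta^k_n(Y^j\oplus Z)]$ and $\exists Y[H(Y,j)\wedge z\notin\theta^k_n(Y^j\oplus Z)]$, quantifying existentially over \emph{partial} hierarchies, and then separates. Your observation that the full fixed-point condition already cuts out an arithmetic singleton is slightly more direct; both arguments need Lemma~\ref{lem:funcF} to handle the uniform dependence on $k$.

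The lower bound has a genuine gap. Your cylinder argument is in the wrong direction: if $f$ is a cylinder then $g\leqW f$ implies $g\leqsW f$, i.e.\ the property helps for reductions \emph{to} $f$, not from it. So $\UC_\Bai$ being a cylinder does nothing for $\UC_\Bai\leqsW\mathsf{ATR}$; you would need $\mathsf{ATR}$ to be a cylinder, which is true but requires its own argument. Moreover, Theorem~\ref{thm:Pauly-dagger} is stated only as $\equivW$, so even if you establish $\lim^\dagger\leqsW\mathsf{ATR}$ you still only get $\UC_\Bai\leqW\mathsf{ATR}$ through that theorem.

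The paper's Lemma~\ref{lemma:atrlower} avoids $\lim^\dagger$ altogether and gives a self-contained strong reduction $\DeCA11\leqsW\mathsf{ATR}$. From the input trees $(T^0_n,T^1_n)_n$ one computes the well ordering $X=\sum_n({\sf KB}(T^0_n)*{\sf KB}(T^1_n))+1$, which by Lemma~\ref{lem:ddt} dominates the rank of every well-founded tree in the sequence. One then runs $\mathsf{ATR}$ along $X$ with the arithmetical operator ``delete all current leaves of each tree''. After $|X|$ many iterations a tree retains its root iff it is ill-founded, and this is read off directly from the output $Y$. This sidesteps the stage-packaging bookkeeping you flagged and yields $\leqsW$ without any cylinder appeal.
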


The following is an analog of the classical reverse mathematical fact
\cite[Theorem V.5.1]{SOSOA:Simpson}.

\begin{lemma} $\mathsf{ATR}\leqsW\SiSep11$.
\label{lemma:atrupper}
\begin{proof}
It is easy to see that $\SiSep11$ is a cylinder and hence it suffices to show
$\mathsf{ATR}\leqW\SiSep11$. Given $(Z,X,\langle
k,n\rangle)\in\Can\times\WO\times\N^2$, we want to compute
$\mathsf{ATR}(Z,X,\langle k,n\rangle)$ as defined in Definition
\ref{def:ATR}. For each $j\in X$ and $Y\in\Can$, let us consider the
following formula:
	\begin{equation*} H(Y,j)\equiv\forall \langle y,i\rangle\in\N^2[\langle y,i\rangle\in Y\iff i<_X j\wedge y\in \theta^k_n(Y^i\oplus Z)],
	\end{equation*}
Essentially, $H(Y,j)$ says that $Y$ is the set $\set{\langle y,i\rangle\in\mathsf{ATR}(Z,X,\langle k,n\rangle)}{i<_X j}$.  Using now $H$, we define the following two formulas for each $j,z\in\N$:
	\begin{gather*}
		\varphi_0(j,z)\equiv j\in X\wedge \exists Y\in\Can[H(Y,j)\wedge z\in \theta_n^k(Y^j\oplus Z)],\\
		\varphi_1(j,z)\equiv j\in X\wedge\exists Y\in\Can[H(Y,j)\wedge z\notin \theta_n^k(Y^j\oplus Z)].
	\end{gather*}
Note that, for each $j\in X$ and $z\in\N$ we have $\varphi_0(j,z)\iff \langle z,j\rangle\in\mathsf{ATR}(Z,X,\langle k,n\rangle)$.

Using the function $F$ defined in Lemma \ref{lem:funcF} and the closure
properties of Proposition \ref{pro:closprop}, we are able to compute two
names for the $\SI11(\N^2)$-sets $A_0$ and $A_1$ corresponding to the
formulas $\varphi_0$ and $\varphi_1$. Note that in this case the use of $F$
is required and we cannot appeal to Proposition \ref{pro:closprop}(5) because
$k$ is not fixed but is given with the input. It is easy to see that $A_0$
and $A_1$ are disjoint; hence one can ask  $\SI11\mbox{-}{\sf Sep}$ to give
us $f$ separating $A_0$ from $A_1$, which is clearly a solution of
$\mathsf{ATR}(Z,X,\langle k,n\rangle)$. Here are the details:

Since the names for $A_0$ and $A_1$ are $\PI01(\Bai\times\N^2)$-names, it is not difficult to see that we can build a double sequence of trees $(T^0_{\langle j,z\rangle},T^1_{\langle j,z\rangle})_{j,z\in\N}$ such that, for each $j\in \N$ and $z\in\N$,
	\begin{itemize}
		\item $\langle j,z\rangle\in A_0\iff [T^0_{\langle j,z\rangle}]\neq\emptyset$,
		\item  $\langle j,z\rangle\in A_1\iff [T^1_{\langle j,z\rangle}]\neq\emptyset$.
	\end{itemize} Note that, if $j\notin X$ then for each $z\in\N$,
$\neg\varphi_0(j,z)$ and $\neg\varphi_1(j,z)$, which means that
$[T^0_{\langle j,z\rangle}]=[T^1_{\langle j,z\rangle}]=\emptyset$. If
instead $j\in X$ we have, for each $z\in\N$,
$\varphi_0(j,z)\iff\neg\varphi_1(j,z)$ which implies $[T^0_{\langle
j,z\rangle}]\neq\emptyset\iff [T^1_{\langle j,z\rangle}]=\emptyset$.
Therefore the double-sequence of trees $(T^0_{\langle
j,z\rangle},T^1_{\langle j,z\rangle})_{j,z\in\N}$ belongs to the domain of
$\SiSep11$. So let $f\in\SiSep11(T^0_{\langle j,z\rangle},T^1_{\langle
j,z\rangle})_{j,n\in\N}$. Now we have, for each $j\in X$ and $z\in\N$,
$f(j,z)=0\iff [T^0_{\langle j,z\rangle}]\neq\emptyset\iff\varphi_0(j,z)\iff
\langle z,j\rangle\in\mathsf{ATR}(Z,X,\langle k,n\rangle)$, i.e.~we are
able to compute $\mathsf{ATR}(Z,X,\langle k,n\rangle)\in\Can$ using $f$.

Note that we are using the original input to test whether $j \in X$.
\end{proof}
\end{lemma}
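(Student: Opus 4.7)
The plan is to express $\mathsf{ATR}$ as a separation problem for two disjoint $\SI11$ sets and then invoke $\SiSep11$. Since we are aiming for $\leqsW$ and $\SiSep11$ is easily seen to be a cylinder, it suffices to produce a Weihrauch reduction.

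Given an input $(Z,X,\langle k,n\rangle)$, the first step is to isolate an auxiliary predicate that captures the defining recursion of $\mathsf{ATR}$ up to some level $j\in X$. Concretely, I would introduce
\[
H(Y,j)\;\equiv\;\forall \langle y,i\rangle\bigl(\langle y,i\rangle\in Y \leftrightarrow i<_X j \wedge y\in \theta^k_n(Y^i\oplus Z)\bigr),
\]
so $H(Y,j)$ says exactly that $Y$ agrees with the true $\mathsf{ATR}$-hierarchy below $j$. Because $X$ is a genuine well-ordering in the meta-theory, for each $j\in X$ there is a \emph{unique} $Y$ with $H(Y,j)$, and for that $Y$ the condition $z\in\theta^k_n(Y^j\oplus Z)$ is equivalent to $\langle z,j\rangle$ being in $\mathsf{ATR}(Z,X,\langle k,n\rangle)$.

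The second step is to package this as two disjoint $\SI11$ sets on $\N^2$. Define
\[
\varphi_0(j,z)\equiv j\in X\wedge \exists Y\bigl(H(Y,j)\wedge z\in\theta^k_n(Y^j\oplus Z)\bigr),\qquad
\varphi_1(j,z)\equiv j\in X\wedge \exists Y\bigl(H(Y,j)\wedge z\notin\theta^k_n(Y^j\oplus Z)\bigr).
\]
By uniqueness of the witness $Y$, the sets $A_0=\{(j,z):\varphi_0(j,z)\}$ and $A_1=\{(j,z):\varphi_1(j,z)\}$ are disjoint, while $\varphi_0(j,z)\leftrightarrow\langle z,j\rangle\in\mathsf{ATR}(Z,X,\langle k,n\rangle)$. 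Moreover, when $j\in X$ exactly one of $\varphi_0(j,z),\varphi_1(j,z)$ holds, so any separator of $A_0,A_1$ on pairs with $j\in X$ computes $\mathsf{ATR}(Z,X,\langle k,n\rangle)$.

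The third step is to verify that names for $A_0,A_1$ as $\SI11(\N^2)$-sets can be computed uniformly from the input, so that the corresponding pairs of trees land in $\dom(\SiSep11)$. The matrix of the existential quantifier over $Y$ is arithmetical of complexity that depends on $k$; this is exactly the situation addressed by Lemma~\ref{lem:funcF}, which lets us project $\bigsqcup_k \PI0k$-sets into $\SI11$-sets uniformly in $k$. Combined with the closure properties of $\SI11$ under conjunction, negation of arithmetical predicates, and countable intersection from Proposition~\ref{pro:closprop}, this yields uniform $\SI11$-codes, which can then be converted to the required trees $(T^0_{\langle j,z\rangle},T^1_{\langle j,z\rangle})$ so that $[T^b_{\langle j,z\rangle}]\neq\emptyset\Leftrightarrow \varphi_b(j,z)$. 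Feeding this sequence to $\SiSep11$ returns $f\in\Can$ with $f(j,z)=0$ precisely when $\varphi_0(j,z)$ holds, whence the characteristic function of $\mathsf{ATR}(Z,X,\langle k,n\rangle)$ is immediate.

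The main obstacle I would expect is the uniformity in $k$: if $k$ were fixed one could just appeal directly to Proposition~\ref{pro:closprop}(5), but since $k$ is part of the input, one must explicitly exploit Lemma~\ref{lem:funcF} to collapse the $\PI0k$-matrix to a $\SI11$-code uniformly. A secondary subtlety is making sure the conjunct $j\in X$ is retained so that the produced tree pairs genuinely lie in $\dom(\SiSep11)$ (the ``at least one is empty'' condition can fail when $j\notin X$ unless one guards with $j\in X$); since $j\in X$ is arithmetic in the input, this can be enforced directly when building the trees.
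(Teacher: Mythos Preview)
Your proposal is correct and follows essentially the same route as the paper: define the partial-hierarchy predicate $H(Y,j)$, build the two $\SI11$ formulas $\varphi_0,\varphi_1$ by existentially quantifying over $Y$, invoke Lemma~\ref{lem:funcF} to handle the uniformity in $k$, and then apply $\SiSep11$ to the resulting tree pairs. One small clarification: your worry about the domain condition when $j\notin X$ is slightly misdirected---with the conjunct $j\in X$ present, both $\varphi_0(j,z)$ and $\varphi_1(j,z)$ are automatically false for $j\notin X$, so both trees are empty and the domain condition is trivially met; the paper then explicitly uses the original input (where membership in $X$ is decidable) in the post-processing to zero out those positions when reading off the characteristic function of $\mathsf{ATR}(Z,X,\langle k,n\rangle)$.
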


\begin{lemma} $\DeCA11\leqsW\mathsf{ATR}$.
\label{lemma:atrlower}
\begin{proof}
Let $(T^0_n,T^1_n)_{n\in\N}\in\dom(\DeCA11)$, we want to compute $f\in\Can$
such that, for all $n\in\N$, $f(n)=0\iff [T^0_n]\neq\emptyset$. In order to
apply $\mathsf{ATR}$ we have to specify a set parameter $Z$, a well ordering
$X$ and an arithmetical formula. The role of $Z$ in this case will be played
by $(T^0_n,T^1_n)_{n\in\N}$. The well ordering $X$ is obtained as $\sum_{n
\in \N} ({\sf KB}(T^0_n) * {\sf KB}(T^1_n)) + 1$ (which is a well ordering by
Lemma \ref{lem:ddt}(1)).

It remains to specify an arithmetical formula $\varphi(y,Y^j\oplus Z)$ which describes what to do at each step of the recursion. We read both $Y^j$ and $Z$ as coding a sequence of pairs of trees. The idea is to eliminate at each step the leaves of all the trees in the sequence. Thus, $\varphi(y,Y^j\oplus Z)$ holds if either $Y^j = \emptyset$ and $y$ codes a vertex with a child in $Z$, or $y$ codes a vertex with a child in each tree from $Y^j$. This is easily verified to be an arithmetical formula, and hence can be coded as some $\theta^k_n$.\footnote{Similar ideas are found in the investigation of the Weihrauch degree of the \emph{pruning derivative} of a tree in \cite{pauly-nobrega-arxiv}.}

Finally, consider $Y=\mathsf{ATR}((T^0_n,T^1_n)_n,X,\langle k,n\rangle)$,
which is the set we obtain after repeating, along the well ordering $X$, the
procedure of eliminating leaves from the trees $T^0_n$ and $T^1_n$.  Now, let
fix $n$ and consider $i\in\{0,1\}$ such that $T^i_n$ is well founded. Note
that, in order to eliminate all the tree $T^i_n$, the recursion should be
done at least over the ordinal $\rank(T^i_n)$. In our case, the recursion is
done over $X$ whose order type is greater than the order type of ${\sf
KB}(T^i_n)$ which in turn is greater than $\rank(T^i_n)$, cf.~Lemma
\ref{lem:ddt}(2). This means that $Y$ does not contain any element of the
tree $T^i_n$. This argument applies to each well founded tree in the sequence
$(T^0_n,T^1_n)_n$, so we can know whether a tree in the sequence has a path
or not simply by checking if its root is in $Y$. It is easy to see that this
allows us to compute $\DeCA11((T^0_n,T^1_n)_{n\in\N})$.
\end{proof}
\end{lemma}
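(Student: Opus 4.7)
The plan is to encode one ``prune the leaves'' step on the input trees as the arithmetical recursion step of $\mathsf{ATR}$ and run the recursion along a well ordering long enough to prune every well-founded tree down to its root. Well-foundedness of a given $T^i_n$ will then be equivalent to the root of $T^i_n$ appearing in the set $Y$ produced by $\mathsf{ATR}$, which is decidable from $Y$ by a computable lookup.

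Concretely, from $(T^0_n,T^1_n)_{n\in\N}\in\dom(\DeCA11)$ I would take the parameter $Z$ to be a name of the sequence, and
\[
X \;=\; \sum_{n\in\N}\bigl({\sf KB}(T^0_n)*{\sf KB}(T^1_n)\bigr)+1,
\]
which is computable from the input and is a well ordering by Lemma~\ref{lem:ddt}(1), since for each $n$ at least one of $T^0_n,T^1_n$ is well-founded. The arithmetical step $\varphi(y,Y^j\oplus Z)$ reads $y$ as a triple $(n,i,\sigma)$ and asserts that $\sigma\in T^i_n$ and that every immediate extension $\sigma\fr c$ lying in $T^i_n$ already appears as a first coordinate of some element of $Y^j$. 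This is a fixed arithmetical condition uniformly in $Y^j\oplus Z$, so it is realized by some $\theta^k_n$ chosen once and for all. I then set $Y:=\mathsf{ATR}(Z,X,(k,n))$ and output $f$ defined by $f(n)=1$ iff the root $\langle\rangle$ of $T^0_n$ appears as a first coordinate of $Y$ (equivalently, iff $T^0_n$ is well-founded, matching the $\DeCA11$ specification).

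Correctness then splits into a routine induction and a length-calibration step. By induction along $X$ one checks that a node $\sigma$ of $T^i_n$ enters $Y$ at some stage iff $\sigma$ is a well-founded node of $T^i_n$: leaves enter at the minimum of $X$; a vertex whose children are all well-founded enters just after its children do; a vertex with an ill-founded child never has all its children in $Y^j$. The main obstacle I expect is the length calibration: one must verify that the stage at which the root of a well-founded $T^i_n$ should enter $Y$ is actually present inside $X$. This is where Lemma~\ref{lem:ddt}(2) is essential, giving $|{\sf KB}(T^0_n)*{\sf KB}(T^1_n)|\geq|{\sf KB}(T^i_n)|$ whenever $T^i_n$ is the well-founded member of the pair, and together with the standard bound $|{\sf KB}(T)|\geq\rank(T)$ for well-founded $T$ and the trailing ``$+1$'' this secures $\rank(T^i_n)<|X|$ for every well-founded $T^i_n$ in the sequence, which completes the reduction.
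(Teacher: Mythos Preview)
Your proposal is correct and follows essentially the same approach as the paper: same parameter $Z$, same well ordering $X=\sum_n({\sf KB}(T^0_n)*{\sf KB}(T^1_n))+1$, same use of Lemma~\ref{lem:ddt} for the length calibration, and the same idea of iterating a one-step pruning along $X$. The only difference is cosmetic: the paper's recursion step \emph{keeps} the nodes that still have a child (so well-founded trees vanish from $Y$ and ``root in $Y$'' means ill-founded), whereas your step \emph{marks} the nodes all of whose children are already marked (so well-founded trees get fully marked and ``root in $Y$'' means well-founded); the two formulations are complements of one another within each tree, and your monotone marking makes the read-off at the maximal element of $X$ slightly more transparent.
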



\section{$\Sigma^1_1$-weak K\"onig's lemma}\label{sec:Sigma-WKL}

\subsection{$\Sigma^1_1$ versus $\Pi^1_1$}

In this section, we focus on the following contrast between reverse
mathematics and the Weihrauch lattice regarding $\Sigma^1_1$ and
$\Pi^1_1$-separation: On the one hand, in reverse mathematics, we have
\begin{align}\label{equ:sep-rev-math}
\PI11\mbox{-}{\rm SEP}_0<\SI11\mbox{-}{\rm SEP}_0
\end{align}
where ${\rm A}<{\rm B}$ indicates ${\rm RCA}_0\vdash {\rm B} \to {\rm A}$,
but ${\rm RCA}_0 \nvdash {\rm A} \to {\rm B}$. On the other hand, in the
Weihrauch lattice, we have
\begin{align}\label{equ:sep-weihrauch}
\SiSep11 \leW \PiSep11.
\end{align}
The former inequality (\ref{equ:sep-rev-math}) was proven by Montalb\'an
\cite{Mon08} using Steel's tagged tree forcing. The latter inequality
(\ref{equ:sep-weihrauch}) follows from the well-known fact in descriptive set
theory that $\SI11$ has the $\DE11$-separation property, while $\PI11$ does
not (see also Lemma \ref{lem:UC-p11Sep}). It is not hard to explain the cause
of the contrast between (\ref{equ:sep-rev-math}) and
(\ref{equ:sep-weihrauch}), namely the Spector-Gandy phenomenon.

Let $\mathcal{M}$ be an $\om$-model, and let $(\SI11)^\mathcal{M}$ be the
collection of all subsets of $\om$ which are $\SI11$-definable within
$\mathcal{M}$, that is,
$(\SI11)^\mathcal{M}=\{\{n\in\om:\mathcal{M}\models\varphi(n)\}:\varphi\in\SI11\}$.
We define $(\PI11)^\mathcal{M}$ analogously. Consider the $\om$-model ${\sf
HYP}$ consisting of all hyperarithmetical reals. The Spector-Gandy theorem
(cf.\ \cite[Theorem III.3.5 + Lemma III.3.1]{sacks2} or \cite[Theorems
VIII.3.20 + VIII.3.27]{SOSOA:Simpson}) implies that
\[
(\SI11)^{\sf HYP}=\PI11, \mbox{ and } (\PI11)^{\sf HYP}=\SI11.
\]
The roles of $\SI11$ and $\PI11$ are interchanged! We should always be
careful about this role-exchange phenomenon of $\SI11$ and $\PI11$ when
comparing reverse math and computability theory. Of course, the notion of a
$\beta$-model solves this role-exchange problem. To be precise, a {\em
$\beta$-model} (see \cite[Section VII]{SOSOA:Simpson}) is an $\om$-model
$\mathcal{M}$ satisfying the following condition:
\[
(\SI11)^\mathcal{M}=\SI11, \mbox{ and } (\PI11)^\mathcal{M}=\PI11.
\]

However, the notion of a $\beta$-model is obviously related to closed choice
$\C_\Bai$: An $\om$-model $\mathcal{M}$ is a $\beta$-model iff, for any
$Z\in\mathcal{M}$ and non-empty $\Pi^0_1(Z)$ set $P\subseteq\Bai$, some
$\alpha\in P$ belongs to $\mathcal{M}$. Therefore, when studying principles
weaker than $\C_\Bai$, we cannot work within the $\beta$-models.

Now, how should we interpret the reverse-mathematical $\SI11$-separation
principle in our real universe? {\em The} right answer may not exist. It may
be $\PiSep11$ or may be $\SiSep11$.

We have already examined the strength of the $\SI11$-separation principle
$\SiSep11$. In this section, we will investigate the $\PI11$-separation
principle, $\PiSep11$, in the Weihrauch lattice. In reverse mathematics,
Montalb\'an \cite{Mon08} showed that the strength of the $\Pi^1_1$-separation
principle is strictly between $\DE11\mbox{-}{\rm CA}_0$ and ${\rm ATR}_0$
(\footnote{Actually, Montalb\'an showed that $\PI11$-separation is strictly
weaker than $\SI11\mbox{-}{\sf AC}$.}):
\[
\DE11\mbox{-}{\rm CA}_0 < \PI11\mbox{-}{\rm SEP}_0 < {\rm ATR}_0 \equiv \SI11\mbox{-}{\rm SEP}_0.
\]

Moreover, $\DE11\mbox{-}{\rm CA}_0$ and $\PI11\mbox{-}{\rm SEP}_0$ are {\em
theories of hyperarithmetic analysis}, that is, for every $Z \subseteq \om$,
${\sf HYP}(Z)$ is the least $\omega$-model of that theory containing $Z$. On
the other hand, ${\sf HYP} \not\models {\rm ATR}_0$. In contrast, we will see
the following:
\[
\UC_\Bai \equivW \DeCA11 \equivW {\sf ATR} \equivW \SiSep11 \leW \PiSep11 \leW \C_\Bai.
\]

\subsection{The strength of $\SI11$-weak K\"onig's lemma}\label{ssec:SIWKL}
The principle of $\PI01$-separation was studied already in the precursor
works by Weihrauch \cite{Weihrauchb}, and Weak K\"onig's Lemma (aka closed
choice on Cantor space) was a focus in the earliest work on Weihrauch
reducibility in the modern understanding
\cite{gherardi,brattka2,paulybrattka}. Here, we explore their higher-level
analogues.

Let $\PiSep11$ be the following partial multivalued function: Given
$\PI11$-codes of sets $A,B\subseteq\N$, if $A$ and $B$ are disjoint, then
return a set $C\subseteq\N$ separating $A$ from $B$, that is, $A\subseteq C$
and $B\cap C=\emptyset$. To be more precise:

\begin{definition}
Let $\PiSep11 :\subseteq \PI11(\N)\times \PI11(\N) \rightrightarrows 2^\N$ be
such that $C\in \PiSep11 (A,B)$ iff $C$ separates $A$ from $B$, where
$(A,B)\in{\rm dom}(\PiSep11)$ iff $A\cap B=\emptyset$.
\end{definition}

We also consider $\SI11$-weak K\"onig's lemma $\SiWKL$: Given a $\SI11$-code
of a set $T\subseteq 2^{<\om}$, if $T$ is an infinite binary tree, then
return a path through $T$. Formally speaking:

\begin{definition}
Let $\SiWKL:\subseteq\SI11(2^{<\om})\rightrightarrows 2^\N$ be such that
$p\in\SiWKL(T)$ iff $p$ is an infinite path through $T$, where $T\in{\rm
dom}(\SiWKL)$ iff $T$ is an infinite binary tree.
\end{definition}

While $\SiWKL$ appears as a $\SI11$-version of closed choice on Cantor space,
it is not equivalent to $\SI11$-choice on $2^\N$ (nor, equivalently, closed
choice on $\Bai$). Instead, it is equivalent to the parallelization
$\widehat{\SI11\mbox{-}\C_{\mathbf{2}}}$ of $\SI11$ choice on the discrete
space ${\mathbf{2}}=\{0,1\}$. We will show the following.

\begin{theorem}\label{thm:Sigma-WKL-main}
$\UC_\Bai \leW \widehat{\SI11\mbox{-}\C_{\mathbf{2}}} \equivW \PiSep11
\equivW \SiWKL \leW \widehat{\SI11\mbox{-}\C_\N} \leqW \C_\Bai$.
\end{theorem}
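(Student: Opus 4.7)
The plan is to prove each of the five links of the displayed chain; the two equivalences in the middle are the substantive content, while the outer reductions and strict inequalities are more routine.

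For $\UC_\Bai \leqW \widehat{\SI11\text{-}\C_{\mathbf{2}}}$ I would use $\UC_\Bai \equivsW \SiUC$ from Theorem~\ref{theo:bigucbaire}: given a code of a $\SI11$-singleton $\{f\}\subseteq\Bai$, for each pair $(n,b)$ the set $\{i\in\{0,1\}:\text{the }b\text{-th bit of }f(n)\text{ equals }i\}$ is a singleton $\SI11$ subset of $\{0,1\}$, and parallel $\SI11\text{-}\C_{\mathbf{2}}$ reads off all the bits of $f$.

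The equivalence $\widehat{\SI11\text{-}\C_{\mathbf{2}}} \equivW \PiSep11$ exploits the duality between specifying a $\SI11$ subset of $\{0,1\}$ by a pair of $\SI11$ truth values and specifying a pair of $\PI11$ subsets of $\N$ by sequences of $\PI11$ truth values. In one direction, from disjoint $\PI11$ sets $A,B\subseteq\N$ I define $S_n\subseteq\{0,1\}$ by $0\in S_n\iff n\notin B$ and $1\in S_n\iff n\notin A$; this is a nonempty $\SI11$ subset of $\{0,1\}$ because $A\cap B=\emptyset$, and the parallel choice $(i_n)_n$ produces a separator as $C=\{n:i_n=0\}$. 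Conversely, from tree codes $(T^n_0,T^n_1)$ of nonempty $\SI11$ subsets $S_n\subseteq\{0,1\}$, the sets $A=\{n:[T^n_1]=\emptyset\}$ and $B=\{n:[T^n_0]=\emptyset\}$ are disjoint $\PI11$, and any separator $C$ yields the required choice function via $i_n=0$ iff $n\in C$.

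The equivalence $\PiSep11 \equivW \SiWKL$ mirrors the classical proof of $\PI01\text{-}\mathrm{SEP}\equivW\mathrm{WKL}$. For $\PiSep11 \leqW \SiWKL$ I would form the tree of partial separators $T=\{\sigma\in 2^{<\N}:\forall n<|\sigma|,\ (n\in A\to\sigma(n)=1)\wedge(n\in B\to\sigma(n)=0)\}$; each hypothesis $n\in A$ is $\PI11$, so each implication is $\SI11$, and $T$ is a $\SI11$ infinite binary tree whose paths are exactly the separators. The reverse $\SiWKL \leqW \PiSep11$ is the main technical step. My plan is first to thin $T$ to $T^*=\{\sigma\in T:[T_\sigma]\neq\emptyset\}$ (still $\SI11$, now with every node extendable and $[T^*]=[T]$), and then to craft two disjoint $\PI11$ sets whose separator encodes, say, the leftmost path of $T^*$. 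The naive attempt $A=\{\sigma:[T^*_{\sigma 0}]=\emptyset\}$, $B=\{\sigma:[T^*_{\sigma 1}]=\emptyset\}$ fails because $A\cap B$ contains every non-extendable $\sigma$; recasting the construction so that disjointness holds globally using only $\PI11$ bookkeeping is the main obstacle.

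For the outer reductions, $\SiWKL \leqW \widehat{\SI11\text{-}\C_\N}$ follows by noting that each $S_n=\{\sigma\in 2^n:[T_\sigma]\neq\emptyset\}$ is a nonempty $\SI11$ subset of $\N$; a parallel choice of extendable prefixes $(\sigma_n)_n$ generates a computable infinite finitely-branching subtree of $2^{<\N}$ whose path (selected by classical effective K\"onig's lemma) lies in $T$ because every $\sigma_n$ is $T$-extendable. For $\widehat{\SI11\text{-}\C_\N} \leqW \C_\Bai$, a $\SI11$ subset $A\subseteq\N$ is the $\N$-projection of some closed $P\subseteq\Bai\times\N$, so $\C_\Bai$ on $P$ returns a witness $(g,n)\in P$ and hence $n\in A$; parallelization is absorbed by $\widehat{\C_\Bai}\equivW\C_\Bai$. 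For the strict inequality $\UC_\Bai<\widehat{\SI11\text{-}\C_{\mathbf{2}}}$, the Spector--Gandy switch yields a computable instance of $\PiSep11$ with no hyperarithmetical solution, following Montalb\'an \cite{Mon08}, contradicting the hyperarithmetical output bound for $\UC_\Bai$ in Corollary~\ref{corr:uc-hyppoints}. For $\SiWKL<\widehat{\SI11\text{-}\C_\N}$ I would exhibit a computable instance of $\SI11\text{-}\C_\N$ whose solutions escape the compactness constraints of any binary-tree search principle like $\SiWKL$.
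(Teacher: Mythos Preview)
Your outline is sound for most links, but it has a genuine gap at the heart of the theorem: the direction $\SiWKL \leqW \PiSep11$. You correctly identify that the naive pair $A=\{\sigma:[T^*_{\sigma 0}]=\emptyset\}$, $B=\{\sigma:[T^*_{\sigma 1}]=\emptyset\}$ fails disjointness, but you do not supply the fix, and ``$\PI11$ bookkeeping'' is not a proof. The paper's solution is to use a $\Pi^1_1$-norm $\varphi$ on the $\Pi^1_1$ set $\neg\mathrm{Ext}_T$ of non-extendible nodes and to set $P_i=\{\sigma:\sigma\fr i<_\varphi\sigma\fr(1-i)\}$. Disjointness is automatic from the strictness of $<_\varphi$, and the key claim is that if $\sigma\in\mathrm{Ext}_T$ and $\sigma\notin P_j$ then $\sigma\fr j\in\mathrm{Ext}_T$; a separator then computes a path by always extending in the direction the separator prescribes. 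This stage-comparison idea is precisely the missing ingredient in your plan.

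Two smaller points. First, your reduction $\SiWKL\leqW\widehat{\SI11\text{-}\C_\N}$ via choosing one extendible $\sigma_n\in 2^n$ per level and then invoking ``effective K\"onig's lemma'' is shaky: the tree of prefixes of the $\sigma_n$ is only c.e.\ from the choice sequence, so extracting a path is not obviously computable. The paper sidesteps this by observing that the reduction is immediate once $\SiWKL\equivW\widehat{\SI11\text{-}\C_{\mathbf 2}}$ is in hand. Second, your plan for the strictness $\SiWKL\leW\widehat{\SI11\text{-}\C_\N}$ is only a slogan. The paper's argument is: any Turing functional total on the paths of a $\Sigma^1_1$ binary tree is majorized by a $\Delta^1_1$ function (by a $\Pi^1_1$ compactness argument), whereas one can build a computable instance $\prod_e S_e$ of $\widehat{\SI11\text{-}\C_\N}$ with each $S_e$ cofinite and every solution dominating all $\Delta^1_1$ functions. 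Your intuition about ``escaping compactness'' points in the right direction, but the actual mechanism is this hyperarithmetic domination bound.
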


We will use the following fundamental notion in HYP-theory.
A {\em $\Pi^1_1$-norm} on a $\Pi^1_1$ set $P\subseteq\N$ is a map $\varphi:\N\to\om_1^{CK}\cup\{\infty\}$ such that $P=\{n:\varphi(n)<\infty\}$ and that the following relations $\leq_\varphi$ and $<_\varphi$ are $\Pi^1_1$:
\begin{align*}
a\leq_\varphi b\iff \varphi(a)<\infty\mbox{ and }\varphi(a)\leq\varphi(b),\\
a<_\varphi b\iff\varphi(a)<\infty\mbox{ and }\varphi(a)<\varphi(b).
\end{align*}
It is well-known that every $\Pi^1_1$ set admits a $\Pi^1_1$-norm (in an
effective manner): Consider a many-one reduction from a $\Pi^1_1$ set $P$ to
the set ${\rm WO}$ of well orderings. We will explore the uniform complexity
of this kind of stage comparison principle in Section \ref{sec:wellorders}.

One can easily separate unique choice on $\Bai$ and the $\PI11$-separation
principle by considering the {\em diagonally non-hyperarithmetical}
functions, which is a HYP version of ${\sf DNC}_2$ (known as diagonally
noncomputable functions). A very basic fact in HYP-theory is the existence of
a computable enumeration $(\psi_e)_{e\in\N}$ of all partial $\Pi^1_1$
functions on $\N$. For instance, let $\psi_e$ be a standard
$\Pi^1_1$-uniformization of the $e^{th}$ $\Pi^1_1$ set
$P_e\subseteq\N\times\N$, that is, $\psi_e(n)$ is an element in the $n^{th}$
section of $P_e$ attaining the smallest $\varphi$-value if it exists, where
$\varphi$ is a $\Pi^1_1$-norm on $P_e$.

\begin{lemma}\label{lem:UC-p11Sep}
$\UC_\Bai\leW\PiSep11$.
\end{lemma}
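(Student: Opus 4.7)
The plan is to prove $\UC_\Bai \leqW \PiSep11$ by a direct construction, and to establish strictness using a diagonally non-hyperarithmetical instance of $\PiSep11$ together with the hyperarithmeticity preservation properties of $\UC_\Bai$.

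For the reduction $\UC_\Bai \leqW \PiSep11$: given an input $T \in \Tr$ with $[T] = \{f\}$ a singleton, I would uniformly build two $\PI11$ subsets of $\N$ (identifying $\N$ with $\N \times \N$ via pairing),
\[
A = \{\langle n,k \rangle : f(n) = k\}, \qquad B = \{\langle n,k \rangle : f(n) \neq k\}.
\]
Because $[T]$ is a singleton, $f(n) = k$ is equivalent to the arithmetical condition $\forall g \in \Bai\,(g \notin [T] \vee g(n) = k)$; this yields a $\PI11$-code uniformly in $T$ and $\langle n,k \rangle$ via Proposition \ref{pro:closprop}(6), and the same construction applied with $g(n) \neq k$ yields a code for $B$. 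Since $A$ and $B$ are complementary in $\N$, any separator $C \in \PiSep11(A,B)$ coincides with $A$ exactly, and $f(n)$ is recovered as the unique $k$ with $\langle n,k \rangle \in C$. This gives a (strong) Weihrauch reduction.

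For strictness, I exhibit a computable instance of $\PiSep11$ all of whose separators are non-hyperarithmetical. Using the computable enumeration $(\psi_e)_{e \in \N}$ of partial $\PI11$-functions on $\N$ introduced in the section, set
\[
A' = \{e : \psi_e(e) = 0\}, \qquad B' = \{e : \psi_e(e) = 1\}.
\]
These are disjoint and uniformly $\PI11$ with a computable code. Any separator $C$ then satisfies $\chi_C(e) = 1 - \psi_e(e)$ whenever $\psi_e(e) \in \{0,1\}$; a hyperarithmetical $\chi_C$ would, being a total function with $\PI11$ graph, equal some $\psi_{e_0}$, forcing $\chi_C(e_0) \neq \psi_{e_0}(e_0) = \chi_C(e_0)$, a contradiction.

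Combining these, if $\PiSep11 \leqW \UC_\Bai$ were to hold, then running the reduction on the computable instance $(A',B')$ would, by the corollary immediately following Corollary \ref{corr:uc-hyppoints}, produce a hyperarithmetical separator, contradicting the previous paragraph. The main point of care is the bookkeeping needed to verify that $A,B$ (and $A',B'$) admit computable $\PI11$-codes uniformly in their parameters; the substantive content lies entirely in the classical DNH-style diagonalization, so I do not anticipate genuine obstacles beyond being careful about uniformity.
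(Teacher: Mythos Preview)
Your proof is correct and essentially matches the paper's. The strictness argument is identical: the paper uses the same disjoint $\PI11$ pair $P_i=\{e:\psi_e(e)\downarrow=i\}$ and observes that no $\DE11$ set separates them. For the reduction $\UC_\Bai\leqW\PiSep11$, the paper takes a slightly more indirect route---it invokes Theorem~\ref{theo:bigucbaire} to get $\UC_\Bai\equivW\DeCA11$ and then notes $\DeCA11\leqW\PiSep11$ is immediate---whereas you write down the $\PI11$ sets $A,B$ directly; your construction is essentially what one obtains by unfolding that composite, so the difference is cosmetic.
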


\begin{proof}
To see that $\UC_\Bai\leqW\PiSep11$, note that $\UC_\Bai \equivW \DeCA11$ by
Theorem \ref{theo:bigucbaire}, and $\DeCA11 \leqW \PiSep11$ is
straightforward. For the separation, let $(\psi_e)_{e\in\N}$ be an
enumeration of all partial $\Pi^1_1$ functions on $\N$ as above. For $i<2$,
consider $P_i=\{e\in\N:\psi_e(e)\downarrow=i\}$. Clearly $P_i$ is $\Pi^1_1$,
and $P_0\cap P_1=\emptyset$. It is easy to see that there is no $\Delta^1_1$
set separating $P_0$ and $P_1$.
\end{proof}

The proof of Lemma \ref{lem:UC-p11Sep} motivates us to introduce the
following multivalued function $\Pi^1_1\mbox{-}{\sf DNC}_2: 2^\N
\rightrightarrows 2^\N$: Given an oracle $X$, return a two-valued
$X$-diagonally non-hyperarithmetical function $f$, that is, $f \in
\Pi^1_1\mbox{-}{\sf DNC}_2(X)$ iff, whenever $\psi_e^X(e) \downarrow$, $f(e)
\neq \psi_e^X(e)$, where $(\psi_e^X)_{e\in\N}$ is a canonical enumeration of
all partial $\Pi^1_1(X)$ functions on $\N$. The following is an analog of the
well-known fact that every ${\sf DNC}_2$-function has a {\sf PA}-degree.

\begin{proposition}\label{prop:sep-dnc}
$\PiSep11 \equivW \Pi^1_1\mbox{-}{\sf DNC}_2$.
\end{proposition}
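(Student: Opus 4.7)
The plan is to adapt the classical equivalence between diagonally non-computable functions and the separation of disjoint c.e.\ sets (equivalently, PA-degrees), replacing partial computable functions by partial $\Pi^1_1$ functions and c.e.\ sets by $\Pi^1_1$ sets.

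For $\Pi^1_1\mbox{-}{\sf DNC}_2 \leqW \PiSep11$, the argument is essentially the one implicit in the proof of Lemma \ref{lem:UC-p11Sep}. Given an oracle $X \in 2^\N$, I would uniformly compute $\Pi^1_1(X)$-codes of the disjoint sets
\[
P_0^X = \{e : \psi_e^X(e) \downarrow = 0\}, \qquad P_1^X = \{e : \psi_e^X(e) \downarrow = 1\},
\]
apply $\PiSep11$ to obtain a separating set $C \subseteq \N$, and return $f = \chi_C$. If $\psi_e^X(e) \downarrow = 0$ then $e \in P_0^X \subseteq C$, so $f(e) = 1 \neq 0$; the case $\psi_e^X(e) \downarrow = 1$ is symmetric, so $f \in \Pi^1_1\mbox{-}{\sf DNC}_2(X)$.

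For the more interesting direction $\PiSep11 \leqW \Pi^1_1\mbox{-}{\sf DNC}_2$, given $\Pi^1_1$-codes of disjoint $A, B \subseteq \N$ I would bundle them into a single $2^\N$-oracle $X$ and invoke $\Pi^1_1\mbox{-}{\sf DNC}_2(X)$ to obtain a function $f$. The key step is to apply an $s$-$m$-$n$ theorem to the enumeration $(\psi_e^X)_e$: the map $h$ given by $h(n)=0$ if $n\in A$ and $h(n)=1$ if $n\in B$ is partial $\Pi^1_1(X)$, so there is a computable map $n \mapsto e_n$ such that $\psi_{e_n}^X$ is the constant function with value $h(n)$; in particular, $\psi_{e_n}^X(e_n) = h(n)$ whenever $n \in A \cup B$. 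I would then return $C := \{n : f(e_n) = 1\}$, which separates $A$ from $B$: for $n \in A$ we get $\psi_{e_n}^X(e_n) = 0$, hence $f(e_n) = 1$ and $n \in C$; for $n \in B$ we get $\psi_{e_n}^X(e_n) = 1$, hence $f(e_n) = 0$ and $n \notin C$. The main technical point is to check that the canonical enumeration $(\psi_e^X)_e$ described before Lemma \ref{lem:UC-p11Sep} via $\Pi^1_1$-uniformization admits an $s$-$m$-$n$ theorem uniform in $X$; this is standard HYP-theory (cf.\ \cite{sacks2}), so no real obstacle arises beyond pinning down the correct enumeration and checking that the $s$-$m$-$n$ index function can indeed be chosen (ordinarily) computable in the oracle.
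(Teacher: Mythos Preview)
Your proposal is correct and follows essentially the same approach as the paper: both directions lift the classical equivalence between ${\sf DNC}_2$ and separation of disjoint c.e.\ sets to the $\Pi^1_1$ level. The only cosmetic difference is that for $\PiSep11 \leqW \Pi^1_1\mbox{-}{\sf DNC}_2$ the paper invokes the recursion theorem to obtain a computable $r$ with $\psi_{r(n)}(r(n)) \simeq \psi_e(n)$, whereas you obtain the same index map directly via the $s$-$m$-$n$ theorem; since the function being parametrized is constant in its second argument, $s$-$m$-$n$ already suffices, so your formulation is if anything slightly more direct.
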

\begin{proof}
Let $P_0$ and $P_1$ be disjoint $\Pi^1_1$ sets. Clearly there is $e$ such
that $n\in P_i$ iff $\psi_e(n)\downarrow=i$. By the recursion theorem, one
can uniformly find a computable function $r$ such that $\psi_{r(n)}(r(n))
\simeq \psi_e(n)$. Let $f$ be a diagonally non-hyperarithmetical function. If
$f(r(n))=i$ then $\psi_{r(n)}(r(n))\simeq \psi_e(n)\not=i$, which implies $n
\notin P_i$. Therefore, $S=\{n:f(r(n))=1\}$ separates $P_0$ from $P_1$. This
argument is easily relativizable uniformly. The converse direction is also
clear.
\end{proof}

Using a $\Pi^1_1$-norm, one can show $\SiWKL\equivW\PiSep11$ by modifying the
usual proof of the well-known equivalence between ${\sf WKL}$ and $\SiSep01$.

\begin{lemma}\label{lem:wkl-equal-sep}
$\SiWKL \equivW \PiSep11 \equivW \widehat{\SI11\mbox{-}\C_{\mathbf{2}}}$.
\end{lemma}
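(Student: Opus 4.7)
The plan is to establish the cycle $\widehat{\SI11\text{-}\C_{\mathbf{2}}} \leqW \SiWKL \leqW \PiSep11 \leqW \widehat{\SI11\text{-}\C_{\mathbf{2}}}$, which gives both equivalences in the statement. The first and third legs are direct tree/separator constructions; the middle step is the technical heart.

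For $\widehat{\SI11\text{-}\C_{\mathbf{2}}} \leqW \SiWKL$, given a sequence $(A_n)_{n\in\N}$ of nonempty $\SI11$ subsets of $\{0,1\}$, I form the tree $T = \{\sigma \in 2^{<\N} : (\forall n < |\sigma|)\ \sigma(n) \in A_n\}$. Since finite conjunctions of $\SI11$ predicates are $\SI11$, $T$ is a $\SI11$ binary tree; it is infinite because each $A_n$ is nonempty, and every path $p \in [T]$ realizes a legal sequence of choices. For $\PiSep11 \leqW \widehat{\SI11\text{-}\C_{\mathbf{2}}}$, given disjoint $\PI11$ sets $A,B \subseteq \N$, I set $D_n \subseteq \{0,1\}$ by declaring $0 \in D_n \iff n \notin A$ and $1 \in D_n \iff n \notin B$. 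Each $D_n$ is $\SI11$ uniformly in $n$ (complement of $\PI11$) and nonempty by disjointness of $A,B$; from $(c_n)_n \in \widehat{\SI11\text{-}\C_{\mathbf{2}}}((D_n)_n)$ the set $S = \{n : c_n = 1\}$ satisfies $A \subseteq S$ and $B \cap S = \emptyset$.

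The crux is $\SiWKL \leqW \PiSep11$. Given a $\SI11$ infinite binary tree $T$, I work with the extendable subtree $E(T) = \{\sigma \in 2^{<\N} : \exists p \in 2^\N\ (\sigma \sqsubseteq p \wedge (\forall n)\ p \restriction n \in T)\}$. The key observation is that $E(T)$ is itself $\SI11$ uniformly in the $\SI11$-code of $T$: if $\tau \in T \iff \exists g \in \Bai\ \phi(\tau,g)$ with $\phi$ arithmetic, then $\sigma \in E(T) \iff \exists (p,\langle g_n\rangle_n) \in 2^\N \times \Bai^\N\ (\sigma \sqsubseteq p \wedge (\forall n)\ \phi(p \restriction n, g_n))$, and coding the pair as a single element of Baire space exhibits $E(T)$ as $\SI11$. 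Then $\overline{E(T)}$ is $\PI11$ and admits, uniformly in $T$, a $\PI11$-norm $\rho : 2^{<\N} \to \om_1^{CK} \cup \{\infty\}$ with $\PI11$-computable $<_\rho$. I feed $\PiSep11$ the disjoint $\PI11$ sets $\mathcal{A} = \{\sigma : \sigma 0 <_\rho \sigma 1\}$ and $\mathcal{B} = \{\sigma : \sigma 1 <_\rho \sigma 0\}$ (disjoint by antisymmetry), and from a separator $S$ I define $p \in 2^\N$ by the recursion $p(n) = 1 \iff p \restriction n \in S$. A case analysis on the rank profile of the children of $\sigma = p \restriction n$ shows that $\sigma \in E(T)$ is preserved: if exactly one child of $\sigma$ has finite $\rho$-value, the stage comparison forces $S$ to route us into the extendable child; if both children lie in $E(T)$ either direction is fine; and both children being non-extendable contradicts $\sigma \in E(T)$.

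I expect the main obstacle to be this last reduction. The naive attempt to take as disjoint $\PI11$ inputs "$\sigma 0 \notin E(T)$" and "$\sigma 1 \notin E(T)$" fails, because they can hold simultaneously at nodes outside $E(T)$; the $\PI11$-norm is precisely what is needed to restore disjointness while still recording which direction continues an infinite branch.
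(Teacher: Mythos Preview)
Your proposal is correct and follows essentially the same approach as the paper. The crux reduction $\SiWKL \leqW \PiSep11$ is handled in the paper exactly as you describe: one takes a $\PI11$-norm $\varphi$ on the complement of the set of extendible nodes and feeds $\PiSep11$ the stage-comparison sets $\{\sigma : \sigma\fr 0 <_\varphi \sigma\fr 1\}$ and $\{\sigma : \sigma\fr 1 <_\varphi \sigma\fr 0\}$, then builds the path recursively from a separator; the only cosmetic difference is that the paper closes the cycle via the trivial leg $\PiSep11 \leqW \SiWKL$ rather than your direct $\widehat{\SI11\text{-}\C_{\mathbf{2}}} \leqW \SiWKL$.
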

\begin{proof}
By a straightforward modification of the usual proof of $\SiSep01 \equivW
\widehat{\C_{\mathbf{2}}}$, it is easy to see that $\PiSep11 \equivW
\widehat{\SI11\mbox{-}\C_{\mathbf{2}}}$ holds. It is also clear that
$\PiSep11 \leqW \SiWKL$. Thus, it suffices to show that $\SiWKL \leqW
\PiSep11$.

Given a $\Sigma^1_1$-tree $T\subseteq 2^{<\om}$, let ${\rm Ext}_T \subseteq
2^{<\om}$ be the set of all extendible nodes of $T$. Clearly, its complement
$\neg{\rm Ext}_T= 2^{<\om} \setminus {\rm Ext}_T$ is $\Pi^1_1$, and thus
admits a $\Pi^1_1$-norm $\varphi$ (we need to get $\varphi$ in a uniform way,
but it is straightforward). Consider the $\Pi^1_1$ set $P_i = \{\sigma:
\sigma \fr i <_\varphi \sigma \fr(1-i)\}$ for each $i<2$. Obviously, $P_0
\cap P_1= \emptyset$. We claim that
\[
\sigma\in{\rm Ext}_T\mbox{ and }\sigma\notin P_j\;\Longrightarrow\;\sigma\fr j\in{\rm Ext}_T.
\]
If $\sigma\notin P_j$ then $\sigma\fr j\not<_\varphi\sigma\fr(1-j)$, that is,
either $\varphi(\sigma\fr j)=\infty$ or $\varphi(\sigma\fr (1-j))\leq
\varphi(\sigma\fr j)$ holds. If the former holds then we must have $\sigma\fr
j\in{\rm Ext}_T$. If $\varphi(\sigma\fr j)<\infty$, then we must have
$\varphi(\sigma\fr (1-j))=\infty$ since $\sigma\in{\rm Ext}_T$ implies that
$\sigma\fr i\in{\rm Ext}_T$ for some $i<2$. By the latter condition,
$\infty=\varphi(\sigma\fr (1-j))\leq \varphi(\sigma\fr j)$; hence
$\varphi(\sigma\fr j)$ must be $\infty$. In any case, we have
$\varphi(\sigma\fr j)=\infty$, which means that $\sigma\fr j\in{\rm Ext}_T$.
This verifies the above claim.

Let $S$ be such that $P_0 \subseteq S$ and $S \cap P_1 = \emptyset$. Let
$\sigma_0$ be the empty string, and put $\sigma_{n+1}=\sigma_n\fr
S(\sigma_n)$. Then, by the above claim, we have $\sigma_n\in{\rm Ext}_T$ for
any $n$, and therefore $\bigcup_n\sigma_n\in[T]$. One can easily relativize
this argument uniformly.
\end{proof}

\begin{lemma}
$\SiWKL\leW \widehat{\SI11\mbox{-}\C_\N}$.
\end{lemma}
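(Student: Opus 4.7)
My plan is to prove the two halves of the strict reducibility $\SiWKL\leW \widehat{\SI11\mbox{-}\C_\N}$ separately. For the reduction $\SiWKL\leqW \widehat{\SI11\mbox{-}\C_\N}$, I would invoke Lemma~\ref{lem:wkl-equal-sep}, giving $\SiWKL\equivW \widehat{\SI11\mbox{-}\C_{\mathbf{2}}}$, and then observe that the inclusion $\{0,1\}\hookrightarrow\N$ induces a strong reduction $\SI11\mbox{-}\C_{\mathbf{2}}\leqsW\SI11\mbox{-}\C_\N$: any $\SI11$-code of a nonempty subset of $\{0,1\}$ is automatically a $\SI11$-code of a nonempty subset of $\N$, and the chosen element is unchanged. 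Since parallelization is monotone, this yields $\widehat{\SI11\mbox{-}\C_{\mathbf{2}}}\leqsW\widehat{\SI11\mbox{-}\C_\N}$.

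For the strict direction, I would first use that $\SiWKL$ is parallelizable (since $\widehat{\widehat{f}}\equivW\widehat{f}$ and $\SiWKL\equivW\widehat{\SI11\mbox{-}\C_{\mathbf{2}}}$) to reduce the claim to showing $\SI11\mbox{-}\C_\N\nleqW\SiWKL$. Suppose toward a contradiction that computable $H,K$ witness such a reduction. For each $\SI11$-code $c$ of a nonempty $A\subseteq\N$, $H(c)$ is a $\SI11$-code of an infinite binary tree $T_c$ with $K(c,p)\in A$ for every $p\in[T_c]$. The key observation is that $[T_c]\subseteq 2^\N$ is closed and hence compact, while $K(c,\cdot)$ is a computable (therefore continuous) map into the discrete space $\N$; consequently, the image $F_c:=K(c,[T_c])\subseteq A$ is finite.

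To conclude, I would apply the Kleene recursion theorem to construct a $\SI11$-code $\hat c$ whose associated set is the cofinite, hence nonempty, $\Pi^0_1$ set $A_{\hat c}=\{n\in\N: n > B_{\hat c}\}$, where $B_c$ is a uniformly computable upper bound on $\max F_c$. Then $F_{\hat c}\subseteq[0,B_{\hat c}]$ would be disjoint from $A_{\hat c}$, contradicting the (nonempty) inclusion $F_{\hat c}\subseteq A_{\hat c}$. The main obstacle is obtaining the bound $B_c$ computably from $c$: the set $F_c$ itself is only $\SiWKL$-computable, which is precisely the oracle under investigation. To bypass this, I would extract $B_c$ from the modulus of continuity of $K(c,\cdot)$ on the $\SI11$-class $[T_c]$, exploiting the effective compactness of $2^\N$ together with the computable approximations furnished by the $\SI11$-presentation of $T_c$; making this step fully rigorous is where the real work of the proof lies.
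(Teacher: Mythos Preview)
Your reduction $\SiWKL\leqW\widehat{\SI11\mbox{-}\C_\N}$ is correct and matches the paper. Your compactness observation for the strict direction---that $K(c,\cdot)$ has finite range on $[T_c]$ because $[T_c]\subseteq 2^\N$ is compact and $K$ is continuous into the discrete space $\N$---is also the core of the paper's argument.

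Where your plan breaks down is the step you flag yourself: you cannot obtain the bound $B_c$ \emph{computably} from $c$. The set of extendible nodes of a $\SI11$-tree is itself only $\SI11$, so ``effective compactness of $2^\N$ together with the computable approximations furnished by the $\SI11$-presentation of $T_c$'' does not yield a computable modulus of uniform continuity on $[T_c]$. What one does get is that the function $g(n)$ = ``least $\ell$ such that every $\sigma\in 2^\ell$ is either non-extendible in $T_c$ or makes $K^\sigma(n)$ converge'' is total and $\Pi^1_1$, hence $\Delta^1_1$; consequently the bound $h(n)=\max\{K^\sigma(n):|\sigma|=g(n),\ K^\sigma(n)\downarrow\}$ is $\Delta^1_1$ as well. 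Since $B_c$ is only $\Delta^1_1$ in $c$, the ordinary recursion theorem cannot be applied to produce your self-referential code $\hat c$, so the diagonalization as stated does not close.

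The paper avoids this by abandoning the single-instance recursion-theoretic fixed point and instead working directly with the parallelized principle. Having established that any $\SiWKL$-reduction outputs a function majorized by a $\Delta^1_1$ function, it exhibits one computable instance $(S_e)_{e\in\N}$ of $\widehat{\SI11\mbox{-}\C_\N}$---namely $S_e=\{k:(\forall n\le e)(\psi_n(e)\downarrow\ \Rightarrow\ \psi_n(e)<k)\}$, cofinite $\SI11$ sets built from an enumeration of all partial $\Pi^1_1$ functions---whose every solution dominates every $\Delta^1_1$ function. That contradiction replaces your recursion-theorem step. Note in particular that the paper's argument genuinely uses $\widehat{\SI11\mbox{-}\C_\N}$ rather than the unparallelized $\SI11\mbox{-}\C_\N$ that you reduce to; the domination requirement lives on the full product $\prod_e S_e$.
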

\begin{proof}
By Lemma \ref{lem:wkl-equal-sep}, we have $\SiWKL \leqW
\widehat{\SI11\mbox{-}\C_\N}$. It remains to show that
$\widehat{\SI11\mbox{-}\C_\N} \nleqW \SiWKL$. It is easy to see that $\SiWKL$
is a cylinder, and hence it suffices to show that
$\widehat{\SI11\mbox{-}\C_\N} \nleqsW \SiWKL$.

We first show the following claim: Let $T\subseteq 2^{<\om}$ be a
$\Sigma^1_1$ tree, and $\Phi$ a Turing functional such that for every $x \in
[T]$, $\Phi^x$ is total. Then there exists a $\Delta^1_1$ function $h : \N
\to \N$ majorizing $n \mapsto \Phi^x(n)$ for every $x \in [T]$.

Let $g:\N\to\N$ be a function such that for any $n$, if $|\sigma|=g(n)$ then
either $\sigma\notin{\rm Ext}_T$ or $\Phi^\sigma(n)\downarrow$. This
condition is clearly $\Pi^1_1$, and by compactness, $g$ is total. Hence, $g$
is a total $\Pi^1_1$ function, and thus actually $\Delta^1_1$. Then define
$h(n)=\max\{\Phi^\sigma(n):|\sigma|=g(n)\mbox{ and
}\Phi^\sigma(n)\downarrow\}$. Clearly $h$ is $\Delta^1_1$ and $\Phi^x(n)\leq
h(n)$ for any $x\in[T]$. This verifies the claim.

Let $(\psi_e)_{e\in\om}$ be a computable enumeration of partial $\Pi^1_1$ functions on $\N$.
Let $S_e$ be the set of all $k$ such that
\[(\forall n\leq e) (\psi_n(e)\downarrow\;\Longrightarrow\;\psi_n(e)<k).\]

Clearly $S_e$ is $\Sigma^1_1$ and cofinite. Then every element of
$S=\prod_eS_e$ dominates all $\Delta^1_1$ functions. If
$\widehat{\Sigma^1_1\mbox{-}\C_\N}\leqsW\SiWKL$ then we must have a
$\Sigma^1_1$-tree $T\subseteq 2^{<\om}$ whose paths compute uniformly an
element of $S$, which is impossible by the above claim.
\end{proof}

Recall that $A\star B$ denotes the sequential composition of $A$ and $B$,
cf.\ \cite{paulybrattka4}, that is, a function attaining the greatest
Weihrauch degree among $\{g\circ f:g\leqW A\mbox{ and }f\leqW B\}$.

\begin{proposition}\label{prop:composition-si11-wkl}
$\SiWKL\star\SiWKL\equivW \SiWKL$.
\end{proposition}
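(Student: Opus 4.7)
The direction $\SiWKL \leqW \SiWKL \star \SiWKL$ is immediate (take the identity on one side). For the nontrivial direction, the plan is to exploit the characterization of $\star$ as the hardest sequential task, so it suffices to reduce the canonical two-stage composition to a single call of $\SiWKL$. Canonically: given input $p$ one computes a $\SI11$-tree $T_1 \subseteq 2^{<\om}$, obtains a path $x \in [T_1]$, computes (from $p$ and $x$) a $\SI11$-tree $T_2 = T_2(x) \subseteq 2^{<\om}$, obtains a path $y \in [T_2]$, and outputs a computable function of $(p,x,y)$.

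The naive combined tree $(\sigma,\tau) \in T^*$ iff $\exists x \supseteq \sigma$ in $[T_1]$ and $\exists y \supseteq \tau$ in $[T_2(x)]$ is $\SI11$ and infinite, but I expect it to fail: the witnessing $x_n$ at each finite stage may differ with $n$, and ``$\tau \in T_2(x)$'' is a $\SI11$ (not closed) condition on $x$, so there is no limit argument forcing a path through $T^*$ to code a valid pair $(X,Y)$ with $Y \in [T_2(X)]$. This failure of a limit argument is the main obstacle.

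The fix I propose is to absorb the $\SI11$-witnesses into the path itself. Unfolding definitions,
\[
x \in [T_1] \iff \exists W = (W_m)_m\; \forall m, n\; R(x\restriction m, W_m, n),
\]
\[
y \in [T_2(x)] \iff \exists Z = (Z_m)_m\; \forall m, n\; S(x, y\restriction m, Z_m, n),
\]
for suitable recursive predicates $R, S$ (the continuous rule $x \mapsto T_2(x)$ gives $S$; each evaluation uses only finitely much of $x$). Thus the validity of a quadruple $(x,y,W,Z)$ is a $\PI01$ condition, and after encoding quadruples as elements of $\Can$ by a standard interleaving, the set $V \subseteq \Can$ of codes of valid quadruples is closed.

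Let $T^* \subseteq 2^{<\om}$ consist of all prefixes of elements of $V$. Then $T^*$ is $\exists^1 \PI01$ and hence $\SI11$-definable; it is downward closed; and it is infinite since the composed problem has a solution. Calling $\SiWKL$ on $T^*$ returns a path $p^* \in \Can$; each prefix $p^* \restriction n$ extends to some element of $V$, and by closedness of $V$ one has $p^* \in V$. Decoding yields valid $(X, Y, W, Z)$ with $X \in [T_1]$ and $Y \in [T_2(X)]$, and the desired output is computable from $(p, X, Y)$. The remaining verifications — the tree property, the $\SI11$-definability of $T^*$, and the interleaving encoding — are routine, so the conceptual crux is simply the observation that witnesses should be carried along the path to turn the $\SI11$ membership condition into a $\PI01$, hence closed, condition amenable to the compactness-style limit argument.
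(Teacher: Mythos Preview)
Your strategy mirrors the paper's one-line argument (form the pair set, call it a ``$\Sigma^1_1$ closed set'', apply $\SiWKL$), and you are right that the naive combined tree fails because $\tau\in T_2(x)$ is only $\Sigma^1_1$ in $x$. But your fix has the same defect, merely relocated. The claim that ``after encoding quadruples as elements of $\Can$ \ldots\ the set $V\subseteq\Can$ \ldots\ is closed'' is false: the witnesses $W_m,Z_m$ are infinite paths through trees on $\N$, hence live in $\Bai$, and no embedding $\Bai\hookrightarrow\Can$ has closed image (the unary coding has only $\Pi^0_2$ image; indeed, were $\Sigma^1_1$ membership witnessable by an element of $\Can$, compactness would collapse $\Sigma^1_1$ to $\Pi^0_1$). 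So a path $p^*\in[T^*]$ can land outside $V$, with some witness-block containing only finitely many $1$'s and hence encoding no element of $\Bai$ at all; you then have no certificate that $Y\restriction m\in T_2(X)$. The limit failure you diagnosed for the naive tree has been moved from the $x$-coordinate to the witness coordinates, not eliminated.

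Concretely: take $T_1=2^{<\om}$ and let $\tau\in T_2(x)$ hold iff $\tau=\lambda$ or $\tau(0)=1$ or the tree coded by $x$ is ill-founded (a valid, always-infinite $\Sigma^1_1$ tree in $x$). Choose $x_n\to x$ in $\Can$ with each $x_n$ coding an ill-founded tree but $x$ coding a well-founded one. Then $(x_n,0^\om)$ is a valid pair, witnessed by a path through the tree coded by $x_n$; these witness paths have first entry tending to $\infty$, so their unary codes tend to $0^\om$. The resulting path through your $T^*$ decodes to $(x,0^\om,\dots)$, and $0^\om\notin[T_2(x)]$.
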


\begin{proof}
This is a modification of the independent choice theorem from
\cite{paulybrattka}. We can assume that the inputs to $\SiWKL\star\SiWKL$ are
a computable function $f$, $z \in \Can$ as well as (relativizable)
$\Sigma^1_1$ trees $S$ and $T$. Then, $\{x\oplus y:x\in [S^z]\mbox{ and }y\in
[T^{f(z,x)}]\}$ is a $\Sigma^1_1$ closed set, and any of its elements is a
solution to $\SiWKL\star\SiWKL$.
\end{proof}

There is a natural principle between $\UC_\Bai$ and $\SiWKL$. Let us define
$\SI11$-weak weak K\"onig's lemma $\SI11\mbox{-}{\sf WWKL}$ as follows: Given
a $\SI11$ set $T\subseteq 2^{<\om}$, if $T$ is an infinite binary tree and if
$[T]$ has a positive measure, then return a path through $T$. This is in
analogy to the usual weak weak K\"onig's lemma, whose Weihrauch degree was
studied in \cite{paulybrattka2,hoelzl,pauly-vitali}.

Note that Hjorth and Nies (see \cite[Chapter 9.2]{Nie09}) showed that there
is a $\Sigma^1_1$-closed set consisting of $\Pi^1_1$-Martin-L\"of random
reals. Indeed, the proof shows that $\Pi^1_1\mbox{-}{\sf MLR}$ is Weihrauch
reducible to $\SI11\mbox{-}{\sf WWKL}$, where $\Pi^1_1\mbox{-}{\sf MLR}$ is a
multivalued functions representing $\Pi^1_1$-Martin-L\"of randomness, which
is introduced in a straightforward manner. We also have ${\sf WKL} \nleqW
\SI11\mbox{-}{\sf WWKL}$ since the Turing upward closure of any nontrivial
separating class has measure zero (cf.~\cite[Theorem 5.3]{JS72}). We show
that, even if we enhance $\SI11\mbox{-}{\sf WWKL}$ by adding a
hyperarithmetical power, its strength is strictly weaker than $\SiWKL$:

\begin{theorem}
$\UC_\Bai \leW \UC_\Bai \star \SI11\mbox{-}{\sf WWKL} \leW \SiWKL$.
\end{theorem}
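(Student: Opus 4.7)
The plan has three components: the two non-strict reductions (mostly structural), the lower strictness (easy via randomness), and the upper strictness (the main obstacle). For $\UC_\Bai \leqW \UC_\Bai \star \SI11\mbox{-}{\sf WWKL}$, use that $\SI11\mbox{-}{\sf WWKL}$ admits the trivial computable instance $T = 2^{<\om}$ with computable solution $0^\om$: we realize $\UC_\Bai$ by solving this fixed instance and then invoking $\UC_\Bai$ on the original input. For $\UC_\Bai \star \SI11\mbox{-}{\sf WWKL} \leqW \SiWKL$, combine $\UC_\Bai \leqW \PiSep11 \equivW \SiWKL$ (Lemma \ref{lem:UC-p11Sep} and Lemma \ref{lem:wkl-equal-sep}), $\SI11\mbox{-}{\sf WWKL} \leqW \SiWKL$ (trivial restriction), and the idempotence $\SiWKL \star \SiWKL \equivW \SiWKL$ (Proposition \ref{prop:composition-si11-wkl}); monotonicity of $\star$ then yields $\UC_\Bai \star \SI11\mbox{-}{\sf WWKL} \leqW \SiWKL \star \SiWKL \equivW \SiWKL$.

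For the lower strictness $\UC_\Bai \star \SI11\mbox{-}{\sf WWKL} \nleqW \UC_\Bai$, the Hjorth--Nies construction recalled just before the statement exhibits a computable $\SI11\mbox{-}{\sf WWKL}$-instance whose every solution is $\Pi^1_1$-Martin--L\"of random and hence not hyperarithmetic. Since every output of $\UC_\Bai$ on a computable input is hyperarithmetic (Theorem \ref{thm:Pauly-dagger} and Corollary \ref{corr:uc-hyppoints}), we get $\SI11\mbox{-}{\sf WWKL} \nleqW \UC_\Bai$; the strictness follows because $\SI11\mbox{-}{\sf WWKL} \leqW \UC_\Bai \star \SI11\mbox{-}{\sf WWKL}$ (realize $\SI11\mbox{-}{\sf WWKL}(T)$ by first solving $T$ and then applying $\UC_\Bai$ to the $\Pi^0_1$-singleton encoding the solution).

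The main work is the upper strictness $\SiWKL \nleqW \UC_\Bai \star \SI11\mbox{-}{\sf WWKL}$. I would fix the $\SiWKL$-instance obtained via Lemma \ref{lem:wkl-equal-sep} from the $\Pi^1_1$-pair $(P_0,P_1)$ constructed in Lemma \ref{lem:UC-p11Sep}, which admits no $\Delta^1_1$-separator. A putative reduction would produce from this (computable) input a $\Sigma^1_1$-tree $T\subseteq 2^{<\om}$ with $\mu([T])>0$ and a continuous procedure $\Phi$ such that, for every $y\in[T]$, $\Phi(y)$ is a separator of $(P_0,P_1)$ that is hyperarithmetic in $y$ (the hyperarithmeticity coming from the trailing $\UC_\Bai$ call on a $\Pi^0_1$-singleton in $y$). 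Hence $[T]$ is contained in $E = \bigcup_{e,\alpha} E_{e,\alpha}$, where $E_{e,\alpha} = \{y : \Phi_e(y^{(\alpha)})\text{ is a separator of }(P_0,P_1)\}$ and $\alpha$ ranges over notations for computable ordinals, so by countable subadditivity some $E_{e,\alpha}$ has positive measure. I would then run a hyperarithmetic majority-bit argument in the spirit of Jockusch--Soare: define $f(n) = 1$ iff $\mu(\{y \in E_{e,\alpha} : \Phi_e(y^{(\alpha)})(n)=1\}) \geq \mu(E_{e,\alpha})/2$. Since every $y \in E_{e,\alpha}$ satisfies $\Phi_e(y^{(\alpha)})(n) = 1$ for $n \in P_0$ and $\Phi_e(y^{(\alpha)})(n) = 0$ for $n \in P_1$, $f$ is automatically a separator of $(P_0,P_1)$. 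The main obstacle is verifying that $f$ is $\Delta^1_1$; this reduces to the effective descriptive-set-theoretic fact that the Lebesgue measure of the $\Sigma^1_1$-set $E_{e,\alpha}$ is hyperarithmetic in its code (and similarly for the numerators, which are measures of Borel intersections). Granting this, $f$ would be a $\Delta^1_1$-separator of $(P_0,P_1)$, contradicting our choice.
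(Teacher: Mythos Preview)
The structural parts (both non-strict reductions and the lower strictness) are fine and essentially match the paper's treatment.

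For the upper strictness, your approach is in the right spirit---it is an unpacking of the Jockusch--Soare measure-zero argument that the paper simply cites---but there are two genuine gaps.

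First, the containment $[T]\subseteq\bigcup_{e,\alpha}E_{e,\alpha}$ with $\alpha$ ranging over \emph{computable} ordinals is not justified: for $y\in[T]$ the trailing $\UC_\Bai$ call produces something hyperarithmetic in $y$, i.e.\ computable from $y^{(\alpha)}$ for some $\alpha<\om_1^{{\rm CK},y}$, and $\om_1^{{\rm CK},y}$ may exceed $\om_1^{\rm CK}$. The paper handles this by first restricting to the conull set $Q=\{y:\om_1^{{\rm CK},y}=\om_1^{\rm CK}\}$ (using that conull many reals are $\Pi^1_1$-random, and $\Pi^1_1$-randoms satisfy this equality); then $[T]\cap Q$ still has positive measure and the union over computable $\alpha$ suffices.

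Second, your complexity claim is off: $E_{e,\alpha}$ is $\PI11$, not $\SI11$, since ``is a separator of $(P_0,P_1)$'' universally quantifies over $n$ against the $\PI11$ predicates $n\in P_i$. More seriously, even granting that each such measure is hyperarithmetic, you need the map $n\mapsto\mu(\{y\in E_{e,\alpha}:\Phi_e(y^{(\alpha)})(n)=1\})$ to be hyperarithmetic \emph{uniformly in $n$}, and measures of uniformly $\PI11$ sets need not be uniformly hyperarithmetic. The standard fix (and what Jockusch--Soare actually do) is to avoid $E_{e,\alpha}$ in the majority vote: by Lebesgue density, pick a basic clopen $[\sigma]$ on which $E_{e,\alpha}$ has relative measure $>1/2$, and take the majority over $[\sigma]$. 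Then $\{y\in[\sigma]:\Phi_e(y^{(\alpha)})(n)=1\}$ is uniformly $\Sigma^0_1(\emptyset^{(\alpha)})$, its measure is uniformly computable in $\emptyset^{(\alpha+1)}$, and the resulting $f$ is $\DE11$---the desired contradiction.

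The paper's proof packages both points: it restricts to $Q$ and then invokes the (relativized) Jockusch--Soare theorem that the $\emptyset^{(\alpha)}$-upward closure of any nontrivial separating class is null, rather than reproving it via majority vote. Once repaired as above, your argument amounts to reproving that theorem in situ.
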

\begin{proof}
The inequality $\UC_\Bai \leW \UC_\Bai \star \Sigma^1_1\mbox{-}{\sf WWKL}$ is
obvious since no $\Pi^1_1$-Martin-L\"of random real is hyperarithmetic.
Moreover, by Proposition \ref{prop:composition-si11-wkl}, we have $\UC_\Bai
\star \Sigma^1_1\mbox{-}{\sf WWKL} \leqW \SiWKL$. Suppose for the sake of
contradiction that $\SiWKL\leqW\UC_\Bai\star\Sigma^1_1\mbox{-}{\sf WWKL}$.
Then, for any $\Sigma^1_1$ closed set $S$, there are a $\Sigma^1_1$ closed
set $P$ of positive measure and a $\Pi^1_1$ function $f:P\to S$, so that
$f(x)\leq_hx$ for any $x\in P$.

In particular, assume that $S$ is the set of all $\Pi^1_1\mbox{-}{\sf DNC}_2$
functions, and let $P$ and $f$ be as above. It is known that $x$ is
$\Pi^1_1$-random iff $x$ is $\Delta^1_1$-random and $\om_1^{{\rm
CK},x}=\om_1^{\rm CK}$ (see \cite[Theorem 9.3.9]{Nie09}). Since there are
conull many $\Pi^1_1$-random reals, $Q=\{x\in P:\om_1^{{\rm CK},x}=\om_1^{\rm
CK}\}$ also has positive measure. Given $x\in Q$, there is an ordinal
$\alpha<\om_1^{{\rm CK},x}=\om_1^{\rm CK}$ such that $f(x)\leq
_Tx\oplus\emptyset^{(\alpha)}$ (cf.\ \cite[Lemma 4.2]{CY15} and
\cite[Section 2.3.2]{BGM17}). As in \cite[Theorem 5.3]{JS72}, it is easy to
see that the $\emptyset^{(\alpha)}$-Turing upward closure,
${S}_\alpha=\{z:h\leq_Tz\oplus\emptyset^{(\alpha)}\mbox{ for some $h\in
S$}\}$,  of $S$ has measure zero for any computable ordinal $\alpha$. Hence,
$\hat{S}=\bigcup\{S_\alpha:\alpha<\om_1^{\rm CK}\}$ is also null. Our
previous argument shows that $Q\subseteq \hat{S}$, however $\mu(\hat{S})=0$
contradicts $\mu(Q)>0$.
\end{proof}

\begin{question}[\cite{paulybrattka5}]
$\widehat{\Sigma^1_1\mbox{-}\C_\N} \leW \C_\Bai$?
\end{question}


\section{Comparability of well orderings}\label{sec:wellorders}
Two statements which are equivalent to ${\sf ATR}_0$ in the context of
reverse mathematics are comparability of well orderings and weak
comparability of well orderings (\cite[Theorem V.6.8]{SOSOA:Simpson} and
\cite{FriHir:WeakComp}). These involve two kinds of effective witnesses that
one well ordering is shorter than another: strong comparison maps and order
preserving maps.

\begin{definition}\label{def:scm}
If $X,Y\in\WO$ then we say that $f:\N\to \N$ is a \emph{strong comparison
map} between $X$ and $Y$, in symbols $f:X\leq_s Y$, if the following
conditions hold:
	\begin{itemize}
		\item $\forall n (n\notin X\rightarrow f(n)=0)$,
		\item $\forall n,m\in X(n\leq_X m\leftrightarrow f(n)\leq_Y f(m))$,
		\item $\forall n\in X\forall k\in Y(k\leq_Y f(n)\rightarrow  \exists m\in X f(m)=k).$
	\end{itemize}
\end{definition}

In other words, $f$ is an order embedding of $X$ into $Y$ whose image is an initial segment of $Y$.

\begin{definition}[Comparability of well orderings]
Let $\CWO:{\WO\times \WO}\rightarrow \Bai$ be the function that maps any pair
$(X,Y)$ of countable well orderings to the unique $f\in\Bai$ such that
$f:X\leq_s Y$ or $f:{Y+1}\leq_s X$.
\end{definition}

The use of $Y+1$ in the previous definition makes sure that $f$ is unique
even when $X$ and $Y$ are isomorphic.

\begin{definition} If $X, Y\in\LO$ we say that $f:\N\to\N$ is an \emph{order preserving map} between $X$ and $Y$, in symbols $f:X\leq Y$, if the following conditions hold:
	\begin{itemize}
		\item $\forall n(n\notin X\to f(n)=0)$,
		\item $\forall n,m\in X(n\leq_X m\leftrightarrow f(n)\leq_Y f(m))$,
	\end{itemize}
\end{definition}

\begin{definition}[Weak comparability of well orderings]
Let $\WCWO:\WO\times\WO\rightrightarrows\Bai$ be the multivalued function
that maps any pair $(X,Y)$ of countable well orderings to the set
$\set{f\in\Bai}{(f:X\leq Y)\vee (f:Y\leq X)}$.
\end{definition}

The following classifies the Weihrauch degree of comparability of well
orderings:

\begin{theorem}\label{thm:CWOequiv}
$\UC_\Bai\equivsW\CWO$.
\end{theorem}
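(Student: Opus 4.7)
I would prove the equivalence by splitting it into two strong Weihrauch reductions: an upper bound $\CWO \leqsW \UC_\Bai$ and a lower bound $\SiSep11 \leqsW \CWO$. Together with Theorem \ref{theo:bigucbaire}, which gives $\SiSep11 \equivsW \UC_\Bai$, these yield the claimed equivalence. These two reductions would naturally appear as Lemmata \ref{lemma:cwoupper} and \ref{lemma:CWOlower}, matching the forward references already present in the proof of Theorem \ref{theo:bigucbaire}.

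For the upper bound, the key observation is that when $X, Y \in \WO$, the condition ``$f : X \leq_s Y$ or $f : Y+1 \leq_s X$'' from Definition \ref{def:scm} is a $\PI02$ formula in $(X,Y,f)$ (each of the three clauses of Definition \ref{def:scm} is at worst $\PI02$, and $\PI02$ is closed under finite unions), and the $+1$ summand forces uniqueness of $f$ in all three cases $|X|<|Y|$, $|X|=|Y|$, $|X|>|Y|$. Hence from $(X,Y)$ I can uniformly compute a $\PI02$-name, and thus via Proposition \ref{pro:closprop}(4) a $\SI11$-name, for the singleton $\{f\}\subseteq\Bai$. Applying $\SiUC$ (strongly equivalent to $\UC_\Bai$ by Theorem \ref{theo:bigucbaire}) returns exactly $f$, and the post-processor may be taken to be the identity.

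For the lower bound, given $(S_n,T_n)_n\in\dom(\SiSep11)$ I reduce to a single call to $\CWO$ via the double descent tree construction. For each $n$, both $A_n={\sf KB}(S_n)*{\sf KB}(T_n)$ and $B_n={\sf KB}(T_n)*{\sf KB}(S_n)$ are well orderings by Lemma \ref{lem:ddt}(1) (since by the $\SiSep11$ promise at least one of $S_n,T_n$ is well-founded), and parts (2) and (3) of Lemma \ref{lem:ddt} yield one-sided order-type estimates on $|A_n|$ and $|B_n|$ whose direction depends on which of $[S_n]$ and $[T_n]$ is non-empty. I then aggregate these into a single pair $(X,Y)$ of well orderings by taking ordinal sums $X=\sum_n(A_n+M_n)$ and $Y=\sum_n(B_n+M_n)$ with suitably chosen computable marker well orderings $M_n$ dominating all relevant bounds on $|A_n|,|B_n|$; the strong comparison map returned by $\CWO(X,Y)$ respects the block structure induced by the sum, and a computable post-processor locates the image of each marker $M_n$ to read off the sign of $|A_n|-|B_n|$, hence the value $f(n)\in\{0,1\}$ of the separator.

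The main obstacle will be the lower bound, and specifically turning the one-sided estimates of Lemma \ref{lem:ddt}(2,3) into a definite strict inequality between $|A_n|$ and $|B_n|$ per coordinate, and choosing the markers $M_n$ so that their images are uniformly computably locatable under the strong comparison map. Once the per-coordinate coding is correct, the aggregation into a single well-ordering pair is essentially routine, since any strong comparison map respects initial segments and thus the block structure of the ordinal sum. The upper bound, by contrast, is essentially a formal packaging of the representation-theoretic results already established in Proposition \ref{pro:closprop} and Theorem \ref{theo:bigucbaire}.
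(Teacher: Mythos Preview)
Your two-lemma decomposition and your upper bound argument are exactly what the paper does; Lemma \ref{lemma:cwoupper} is proved precisely by observing that the defining condition for the unique $f$ is $\PI02$ and invoking $\SiUC\equivsW\UC_\Bai$.

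The gap is in the lower bound. You propose to compare $A_n=X_n*Y_n$ with $B_n=Y_n*X_n$ per coordinate and read off the separating bit from the sign of $|A_n|-|B_n|$, but Lemma \ref{lem:ddt} does not yield any usable inequality between these two quantities. If, say, $\neg\WO(X_n)$ and $\WO(Y_n)$, parts (2) and (3) give only $|Y_n|\leq|B_n|$ and $|A_n|\leq|\Q*Y_n|$, which says nothing about $|A_n|$ versus $|B_n|$; indeed, since ${\sf T}(X,Y)$ and ${\sf T}(Y,X)$ are isomorphic as trees, there is no reason for the KB orders to differ systematically according to which factor is ill-founded. You flag this as ``the main obstacle'', but it is not an obstacle to be overcome so much as an indication that the per-coordinate symmetric comparison is the wrong construction. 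The paper (following \cite[Theorem V.6.8]{SOSOA:Simpson}) instead compares each coordinate against a \emph{single fixed reference} $U=\sum_n(\Q*Y_n)*X_n$: with $Z_n=(U+X_n)*Y_n$ one gets $\neg\WO(X_n)\Rightarrow|Z_n|<|U|$ and $\neg\WO(Y_n)\Rightarrow|U|<|Z_n|$, and the bit is read off from whether the image of $Z_n$ lies inside $U$.

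Your padding idea also needs adjustment. For the strong comparison map to respect the block structure you need the blocks on both sides to have \emph{equal} order type; merely choosing $M_n$ to dominate $|A_n|,|B_n|$ does not give $|A_n+M_n|=|B_n+M_n|$ when $|A_n|\neq|B_n|$. The paper pads with $V\cdot\N$ where $V=U+\sum_nZ_n$: since $|Z_n|\leq|V|$, ordinal absorption gives $|Z_n+V\cdot\N|=|V\cdot\N|=|V+V\cdot\N|$, so $Z=\sum_n(Z_n+V\cdot\N)$ and $W=\sum_n(V+V\cdot\N)$ have the same order type and $\CWO(Z,W)$ is the block-respecting isomorphism.
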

\begin{proof}
By Lemmata \ref{lemma:cwoupper} and \ref{lemma:CWOlower} below.
\end{proof}

\begin{lemma}\label{lemma:cwoupper}
$\CWO\leqsW\UC_\Bai$.
\end{lemma}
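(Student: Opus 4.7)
The plan is to reduce $\CWO$ to $\SiUC$, which is strongly Weihrauch equivalent to $\UC_\Bai$ by Theorem \ref{theo:bigucbaire}. Given $(X, Y) \in \WO^2$, I will uniformly compute a $\SI11$-name for the set
\[
A(X, Y) = \{f \in \Bai : (f : X \leq_s Y) \vee (f : Y+1 \leq_s X)\},
\]
and show that $A(X, Y)$ is a singleton whose unique element is $\CWO(X, Y)$. Applying $\SiUC$ to that name then produces the answer with no further post-processing, which is what is needed for a strong reduction.

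For the complexity: each of the three clauses defining ``$f : X \leq_s Y$'' is first-order arithmetical in the parameters $X, Y, f$; the first two clauses are $\PI01$ and the initial-segment clause is $\PI02$. The same holds for the second disjunct (noting that $Y+1$ is computable from $Y$ by adjoining a fresh top element, without disturbing the convention $0\notin Y+1$). Using Lemma \ref{lem:funcF} to turn the inner existential in the initial-segment clause into a $\SI11$-projection, and then the closure properties of Proposition \ref{pro:closprop}, I obtain a $\SI11$-name for $A(X, Y)$ uniformly in $X, Y$.

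For the singleton property, existence of an element is the classical comparability of well orderings. Two candidates satisfying the same disjunct must agree by transfinite induction along the well ordering $X$ (respectively $Y+1$), and they agree outside the domain because both values are $0$ there. The two disjuncts are mutually exclusive: if $f : X \leq_s Y$ and $f' : Y+1 \leq_s X$ both held, then we would have $|X| \leq |Y| < |Y+1| \leq |X|$, a contradiction; the ``$+1$'' in the definition of $\CWO$ is exactly what rules out the otherwise ambiguous case where $X$ and $Y$ are isomorphic.

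The main obstacle I expect is not a conceptual one but a bookkeeping one: threading the arithmetical definition of $A(X, Y)$ through the $\SI11$-closure operations in a uniformly computable way. The mathematical content---uniqueness from well-foundedness and the $+1$-trick to rule out the isomorphic case---is then routine.
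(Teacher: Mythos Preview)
Your proposal is correct and follows essentially the same approach as the paper: compute a name for the singleton $\{f\}=\CWO(X,Y)$ from $(X,Y)$ and apply unique choice. The paper's argument is slightly more direct: since all quantifiers in Definition~\ref{def:scm} range over $\N$, the defining condition is already $\PI02$, so one can apply $\PI02$-$\UC_\Bai$ (equivalent to $\UC_\Bai$ via Proposition~\ref{pro:closprop}(4) and Theorem~\ref{theo:bigucbaire}) without the detour through Lemma~\ref{lem:funcF} and $\SiUC$; in particular, your invocation of Lemma~\ref{lem:funcF} for the inner $\exists m$ is unnecessary, as that is a number quantifier, not a function quantifier.
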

\begin{proof}
If $X,Y\in \WO$, the conjunction of the three conditions in Definition
\ref{def:scm} is a $\PI02$ formula with $X,Y$ and $f$ as free variables. In
particular, a name for the $\PI02$ set $\{f\}=\CWO(X,Y)$ is computable from
$X$ and $Y$. Then, since $\UC_\Bai \equivsW \PI02\mbox{-}\UC_\Bai$ by Theorem
\ref{theo:bigucbaire} and Proposition \ref{pro:closprop}, we can use the
second one to obtain $f$.
\end{proof}

\begin{lemma}\label{lemma:CWOlower}
$\SiSep11\leqsW\CWO$.
\end{lemma}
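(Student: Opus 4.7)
The plan is to uniformly construct, from an input sequence $(S_n,T_n)_{n\in\N}\in\dom(\SiSep11)$, a single pair $(X,Y)\in\WO\times\WO$, and extract a valid separating function $f\in\Can$ from the strong comparison map returned by $\CWO(X,Y)$. The design will give $X$ and $Y$ a block structure indexed by $n$, so that the behaviour of the strong comparison map on the $n$-th block tells us which of $S_n,T_n$ is well-founded, and hence determines $f(n)$.

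The key tool is the double descent tree construction of Lemma \ref{lem:ddt}. For each $n$, the hypothesis that at least one of $S_n,T_n$ is well-founded implies that at least one of the Kleene--Brouwer orderings $\mathsf{KB}(S_n)$, $\mathsf{KB}(T_n)$ is a well ordering, so both $\mathsf{KB}(S_n)*\mathsf{KB}(T_n)$ and $\mathsf{KB}(T_n)*\mathsf{KB}(S_n)$ are well-orderings by part (1). Using these, I would build well-ordered blocks $X_n$ and $Y_n$ whose order-type relationship encodes which tree is well-founded; Lemma \ref{lem:ddt}(2) supplies the crucial asymmetry, since when $\WO(\mathsf{KB}(S_n))$ and $\neg\WO(\mathsf{KB}(T_n))$ the order type of $\mathsf{KB}(S_n)*\mathsf{KB}(T_n)$ is at least $|\mathsf{KB}(S_n)|$, and symmetrically in the opposite case. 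The blocks are then strung together into $X$ and $Y$ as ordered sums, interleaved with markers of rapidly growing rank; the markers ensure that the global strong comparison map is forced to be block-local, so that from $\CWO(X,Y)$ one can read off, for each $n$, whether $|X_n|\leq|Y_n|$ and set $f(n)$ accordingly.

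The main obstacle is calibrating the blocks $X_n,Y_n$ so that their order-type comparison genuinely decodes to the correct bit in every case, including the subcase in which both $S_n$ and $T_n$ are well-founded (where we have freedom in $f(n)$). This hinges on a careful choice of how to combine $\mathsf{KB}(S_n)$ and $\mathsf{KB}(T_n)$ via the double descent construction---possibly with auxiliary additive constants to break ties---so that the estimates of Lemma \ref{lem:ddt}(2) and (3) can be combined to give a clean dichotomy between $|X_n|\leq|Y_n|$ and $|X_n|>|Y_n|$ depending on which side is well-founded. Once this local asymmetry is set up, the marker construction and the extraction of $f$ from the strong comparison map are essentially routine, since block boundaries are computable from the input and the unique strong comparison map produced by $\CWO$ contains all the information needed.
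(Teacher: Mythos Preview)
Your high-level plan---build block-structured well orderings from the input via double descent trees, then read off separator bits from how $\CWO$ handles each block---matches the paper's, but the two places you flag as ``main obstacle'' and as ``essentially routine'' are precisely where the real work happens, and your sketch does not yet cross either hurdle.

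On the local calibration: the paper does \emph{not} compare a block $X_n$ against a companion block $Y_n$. Instead it first constructs a single global threshold $U=\sum_{n}(\Q*{\sf KB}(T_n))*{\sf KB}(S_n)$ and then sets $Z_n=(U+{\sf KB}(S_n))*{\sf KB}(T_n)$; Lemma~\ref{lem:ddt} then yields the clean dichotomy $|Z_n|<|U|$ when $[S_n]\neq\emptyset$ and $|U|<|Z_n|$ when $[T_n]\neq\emptyset$. A purely local comparison built only from ${\sf KB}(S_n)$ and ${\sf KB}(T_n)$ is not obviously achievable: parts (2) and (3) of Lemma~\ref{lem:ddt} give only one-sided inequalities, and ``auxiliary additive constants'' cannot tame unbounded order types. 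The appearance of $\Q$ in $U$ (via part (3)) is what makes the upper bound uniform over the unknown ill-founded side.

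On block-locality: ``markers of rapidly growing rank'' cannot be fixed in advance, because the block ranks are unbounded functions of the input; and you do not say how to compute adequate markers from the input. The paper solves this not by inserting separators but by \emph{equalizing} all composite blocks: with $V=U+\sum_n Z_n$ it forms $Z=\sum_n(Z_n+V\cdot\N)$ and $W=\sum_n(V+V\cdot\N)$. Since $|Z_n|\leq|V|$, every summand has order type exactly $|V|\cdot\omega$, so $|Z|=|W|$ and $\CWO(Z,W)$ is the full isomorphism, which is automatically block-aligned. The bit $f(n)$ is then read off by checking whether the first element of the $V$-copy following $Z_n$ is sent into the $U$-part of the $n$-th copy of $V$ in $W$. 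This equal-padding trick replaces your marker idea entirely and is the step your outline is missing.
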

\begin{proof}
We follow essentially the proof of Theorem V.6.8 in \cite{SOSOA:Simpson}. The
only modification concerns the definition of the well orderings $U$ and $V$,
for which the original proof uses the $\SI11$ bounding principle. 	

So, let $(S_n,T_n)_{n\in \om}$ be a double-sequence of trees in
$\dom(\SiSep11)$. Without loss of generality we assume that for all $n\in\N$,
$S_n$ and $T_n$ are non-empty. We can build the corresponding double-sequence
of linear orderings $(X_n,Y_n)_{n}$ such that, for all $n$, $X_n={\sf
KB}(S_n)$ and $Y_n={\sf KB}(T_n)$. Note that, since
$(S_n,T_n)_{n}\in\dom(\SiSep11)$, we have
\begin{equation}\label{eq:SepCWO1}
\forall n (\WO(X_n)\vee\WO(Y_n)).
\end{equation}
Consider $U=\sum_{n\in\N}(\Q*Y_n)*X_n$, which by (\ref{eq:SepCWO1}) and by
Lemma \ref{lem:ddt}.1 is a well ordering. We claim that the following holds:
\begin{equation}\label{eq:SepCWO2}
\forall X \in \LO\, \forall n (\neg\WO(X_n) \to |X*Y_n|<|U|).
\end{equation}
In fact, let $X \in \LO$ and $n$ be such that $\neg\WO(X_n)$. Then by
(\ref{eq:SepCWO1}) we have $\WO(Y_n)$, which means that $X*Y_n$ is also a
well ordering. Furthermore, by $3$ and $2$ of Lemma \ref{lem:ddt}, we have
$|X*Y_n|\leq |\Q*Y_n|\leq |(\Q*Y_n)*X_n|< |U|$.

For all $n\in\N$, define $Z_n=(U+X_n)*Y_n$. By (\ref{eq:SepCWO2}) and by 1
and 2 of Lemma \ref{lem:ddt} we have, for all $n\in\N$,
\begin{gather}
\label{eq:SepCWO3}\neg\WO(X_n)\to |Z_n|<|U|,\\
\label{eq:SepCWO4}\neg\WO(Y_n)\to |U|< |Z_n|.
\end{gather}
Finally, consider $V=U+\sum_{n\in\N}Z_n$ and define the well orderings
\begin{itemize}
	\item $Z=\sum_{n\in\N}(Z_n+V\cdot\N)$,
	\item $W=\sum_{n\in\N}(V+V\cdot\N)$.
\end{itemize}
Note that all the well orderings we defined so far, in particular $Z$ and
$W$, are computable from the double-sequence $(X_n,Y_n)_n$. In the
construction of $V$ we can also use a special mark for its least element.
Furthermore, we can code $Z$ in such a way that, if $x\in Z_n+V\cdot\N$, for
some $n\in\N$, then we are able to compute whether $x$ belongs to $Z_n$ or to
the first copy of $V$, and in the second case, whether $x$ belongs to the
copy of $U$ contained in $V$. Similar assumptions can be made for the
construction of $W$.

Let now  $f=\CWO(Z,W)$ be the comparing map between $Z$ and $W$. Since
$|Z_n+V\cdot\N|=|V+V\cdot\N|$ for all $n$, we have $|Z|=|W|$ and $f$ is the
isomorphism of $Z$ onto $W$. In particular, for each $n\in\N$, $f$ induces an
isomorphism $f_n$ of $Z_n+V\cdot\N$ onto $V+V\cdot\N$. Define $g \in \Can$ by
$g(n)=0$ if and only if the image of $Z_n$ under $f_n$ is a strict initial
segment of $U$, i.e.\ $|Z_n|<|U|$. This can be done computably by checking
whether $f_n$ maps the first element of the first copy of $V$ in
$Z_n+V\cdot\N$ to $U$ or not. Then, recalling the definition of
$(X_n,Y_n)_n$, if $[S_n]\neq\emptyset$ then $\neg\WO(X_n)$ and, by
(\ref{eq:SepCWO3}), $|Z_n|<|U|$ so that $g(n)=0$. Similarly, if
$[T_n]\neq\emptyset$ then, by (\ref{eq:SepCWO4}), $|U|\leq|Z_n|$ so that
$g(n)=1$.
\end{proof}

The Weihrauch degree of weak comparability of well orderings, however, has
eluded our classification attempts:

\begin{question}\label{question:wcwo}
Does $\WCWO \equivW \UC_\Bai$?
\end{question}

Recently, Jun Le Goh \cite{Goh} obtained a positive answer to our question.


\section{The one-sided versions of $\mathsf{PTT}$ and open determinacy}\label{sec:onesided}
Both the perfect tree theorem and open determinacy have at its core a
disjunction $A \vee B$ which is {\bf not} to be read constructively. A
typical approach to formulate these as computational tasks is to view these
as implications $\neg A \Rightarrow B$ or $\neg B \Rightarrow A$. In this
section, we explore these variants.

Recall that a tree is perfect if every node has at least two incomparable
extensions. In particular, every perfect tree is pruned. The perfect tree
theorem states that every tree with uncountably many paths has a perfect
subtree and leads to the following two problems: The first problem is given a
closed set $A$ which has no perfect subset (that simply means that $A$ is
countable), and has to show its countability, that is, to enumerate all
elements of $A$. We consider two variants of this task, depending on what
exactly is meant by \emph{listing}. The weak version contains no information
about the cardinality, the strong version does. The second problem is more
direct: it asks to find a perfect subset of a given tree with uncountably
many paths.

\begin{definition}
$\wList : \subseteq \mathcal{A}(\Bai) \mto (\Bai)^\omega$ maps a countable
set $A$ to some $\langle b_0p_0,b_1p_1,\ldots \rangle$ such that $A = \{p_i
\mid b_i = 1\}$. $\List : \subseteq \mathcal{A}(\Bai) \mto (\Bai)^\omega$
maps a countable set $A$ to some $n \langle p_0,p_1,\ldots \rangle$ such that
either $n = 0$, $p_i \neq p_j$ for $i \neq j$ and $A = \{p_i \mid i \in
\N\}$; or $n
> 0$, $|A| = n - 1$ and $A = \{p_i \mid i < n - 1\}$.
\end{definition}

\begin{definition}
$\PTT1 : \subseteq \Tr \mto \Tr$ maps $T$ such that $[T]$ is uncountable to
some perfect $T' \subseteq T$.
\end{definition}

We start by reporting a result originating from discussion during the
Dagstuhl seminar on Weihrauch reducibility \cite{pauly-dagstuhl}, in
particular including a contribution by Brattka:

\begin{proposition}
\label{prop:ptt1} $\PTT1 \equivW \C_\Bai$.
\begin{proof}
For $\C_\Bai \leqW \PTT1$, note that from $A \in \mathcal{A}(\Bai)$ we can
compute a tree $T$ such that $[T] =  A \times \Bai $. If $A$ is non-empty,
then $[T]$ is uncountable. Given some perfect subtree $T'$ of $T$, we can
compute a path through $T'$ and hence through $T$. By projecting, we obtain a
point in $A$.

For $\PTT1 \leqW \C_\Bai$, call a function $\lambda : \N^{<\N} \to \N$ a
modulus of perfectness for $T$, if $v \in T$ implies that there are
incomparable $u, w \in [0,\lambda(v)]^{\lambda(v)}$ with $vu, vw\in T$. A
non-empty tree has a modulus of perfectness iff it is perfect, and given $T$
the set \[\{( T',\lambda ) \in \Tr \times \N^{(\N^{<\N})} \mid \emptyset
\not= T' \subseteq T \wedge \lambda\  \textnormal{ is a modulus of
perfectness for } T'\}\] is closed, and non-empty for $[T]$ uncountable by
the perfect tree theorem. Taking into account that $\Tr \times
\N^{(\N^{<\N})}$ is computably isomorphic to $\Bai$, we can thus apply
$\C_\Bai$ and project to obtain a perfect subtree of $T$.
\end{proof}
\end{proposition}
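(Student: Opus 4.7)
The plan is to prove the equivalence in two directions. For $\C_\Bai \leqW \PTT1$, I would start from a non-empty closed set $A \subseteq \Bai$ and turn it into a tree whose path space is uncountable precisely when $A$ is non-empty. The natural device is to take $T \in \Tr$ with $[T] = A \times \Bai$ (using a computable identification of $\Bai \times \Bai$ with $\Bai$); since the $\Bai$-factor is already uncountable, non-emptiness of $A$ is enough to guarantee uncountability of $[T]$. Given a perfect subtree $T' \subseteq T$ produced by $\PTT1$, any computable selector (say leftmost-at-each-branching) picks a path in $[T']$, whose projection to the first coordinate lies in $A$. Everything involved is uniform and computable on codes, which gives the reduction.

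For the reverse direction $\PTT1 \leqW \C_\Bai$, the idea is to re-encode ``find a perfect subtree'' as a closed choice instance. The na\"ive set of perfect subtrees is only $\PI02$ (perfectness asks that for \emph{every} node we can \emph{eventually} find a splitting pair), so I would adjoin witnessing data to drop the complexity to $\PI01$. Concretely, I would introduce a \emph{modulus of perfectness} $\lambda : \N^{<\N} \to \N$ requiring that for every node $v$ of the candidate subtree, two incomparable extensions $vu, vw$ exist inside the subtree with $|u|, |w| \le \lambda(v)$ and entries below $\lambda(v)$. With $\lambda$ given, both the subtree condition and the splitting condition become $\PI01$, so the set
\[
\{(T', \lambda) \in \Tr \times \N^{(\N^{<\N})} \mid \emptyset \ne T' \subseteq T \text{ and } \lambda \text{ is a modulus of perfectness for } T'\}
\]
is closed in a space computably isomorphic to $\Bai$. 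By the classical perfect tree theorem this set is non-empty whenever $[T]$ is uncountable, so one application of $\C_\Bai$ followed by a projection yields a perfect subtree of $T$.

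The main obstacle I expect is in the second direction, specifically in making the closed set computable from $T$ alone. The subtlety is that ``perfect'' is genuinely $\PI02$, so one must invent the right piece of auxiliary data: too little (e.g.\ just $T'$) leaves the set non-closed, while too much (e.g.\ a fixed choice of splitting pair at every node) could rule out solutions provided by the perfect tree theorem and break non-emptiness. The modulus of perfectness is designed to thread this needle: it encodes only the \emph{rate} at which splittings appear, which is enough information to turn the $\PI02$-condition into a $\PI01$ one, while still being attainable for any perfect subtree provided by the classical theorem. After isolating this idea, the remaining work is routine bookkeeping with the identifications $\Tr \cong \mathcal{A}(\Bai)$ and $\Tr \times \N^{(\N^{<\N})} \cong \Bai$.
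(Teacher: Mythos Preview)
Your proposal is correct and follows essentially the same approach as the paper's proof: the same $A \times \Bai$ construction for $\C_\Bai \leqW \PTT1$, and the same device of a modulus of perfectness $\lambda : \N^{<\N} \to \N$ to turn the set of perfect subtrees into a closed set for $\PTT1 \leqW \C_\Bai$. Even the terminology and the displayed set coincide, so there is nothing to add.
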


\subsection{Listing the points in a countable set}
We now examine the strength of the contrapositive of the perfect tree theorem
$\PTT1$, which is $\List$ in our setting as explained above.

\begin{theorem}
\label{theo:list} $\wList \equivW \List \equivW \UC_\Bai$.
\end{theorem}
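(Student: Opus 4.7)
We aim to establish the cycle $\UC_\Bai \leqW \wList \leqW \List \leqW \UC_\Bai$. For $\UC_\Bai \leqW \wList$, feed the $\PI01$ singleton $\{x\} \subseteq \Bai$ (in particular countable) to $\wList$, await the first index $i$ with $b_i = 1$, and output the associated $p_i$, which must equal $x$. The reduction $\wList \leqW \List$ is immediate: from a $\List$-output $(n, (p_i))$, produce a $\wList$-output by setting $b_i := 1$ iff $n = 0$ or $i < n-1$.

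For the main direction $\List \leqW \UC_\Bai$, I plan to use the equivalence $\UC_\Bai \equivsW \mathsf{ATR}$ from the preceding subsection and uniformize the classical Cantor-Bendixson analysis of countable closed sets in Baire space. \emph{Step one:} construct, uniformly in $T$, a well-founded tree $W_T$ witnessing the countability of $[T]$---one natural candidate is the tree of finite partial maps $2^{\leq k} \to T$ that would extend to a perfect-subtree embedding---and let $X_T := \mathsf{KB}(W_T) + 1$ be its Kleene-Brouwer well-ordering. \emph{Step two:} apply $\mathsf{ATR}$ along $X_T$ with the arithmetical formula specifying the one-step combinatorial derivative $T^{(\alpha+1)} := \{\sigma \in T^{(\alpha)} : \sigma$ has two incomparable extensions in $T^{(\alpha)}\}$ (taking intersections at limits), producing the full derivation $(T^{(\alpha)})_{\alpha \in X_T}$; by the choice of $X_T$, the derivation eventually reaches $\emptyset$. \emph{Step three:} for each $\sigma$ of rank $\alpha+1$ (meaning $\sigma \in T^{(\alpha)} \setminus T^{(\alpha+1)}$), the subtree $T^{(\alpha)}_\sigma$ is thin, hence contains at most one infinite path, which is a $\PI02$-singleton computable from the derivation. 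Enumerating these paths in the canonical order induced by $X_T$ and by the position of $\sigma$, then removing duplicates and determining cardinality, produces the $\List$-output.

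The principal obstacle is step one: verifying that the rank of $W_T$ provably dominates the combinatorial Cantor-Bendixson rank of $T$, so that the $\mathsf{ATR}$-derivation along $X_T$ indeed terminates. This is essentially a Lusin-Sierpi\'nski-style rank argument on the $\PI11$ set $\{T : [T]$ is countable$\}$, which must be made uniform in $T$. Once this is secured, step two is a direct application of $\mathsf{ATR}$, and in step three the duplicate-removal and finite-versus-infinite decision can be absorbed into the main $\mathsf{ATR}$-call using that $\UC_\Bai$ is closed under the compositional product.
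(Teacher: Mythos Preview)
Your approach via $\mathsf{ATR}$ is viable and genuinely different from the paper's, but you have left the crucial step one as an unproved claim. The paper instead reduces $\wList$ directly to $\SiUC$: for each closed $A$ it defines a \emph{modified Cantor--Bendixson certificate}, essentially the data of the CB-decomposition with ``$A\cap w\Bai$ is a singleton/empty'' weakened to ``$A\cap w\Bai$ contains at most one/no point hyperarithmetical in $A$''. Kleene's HYP-quantification theorem makes the set of such certificates uniformly $\SI11$ in $A$, and for countable $A$ it is a singleton; one call to $\SiUC$ yields the listing, and a further $\UC_\Bai$ (via $\UC_\Bai\star\UC_\Bai\equivW\UC_\Bai$) upgrades $\wList$ to $\List$. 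This sidesteps entirely the need to bound the CB-rank by an ordinal computable from $A$.

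Your route needs exactly such a bound. The statement required---that the rank of $W_T$ dominates the length of your combinatorial derivation---does hold, but not via a Lusin--Sierpi\'nski boundedness argument (two $\PI11$-norms on the same set need not compare pointwise). What works is a direct induction: if $\phi:2^{\leq k}\to T$ lies in $W_T$ with every leaf-image $\phi(s)$ in $T^{(\beta)}$, then each $\phi(s)$ has two incomparable extensions in $T^{(\beta-1)}$, so $\phi$ extends in $W_T$ to some $\psi$ with all leaf-images in $T^{(\beta-1)}$; hence $\rank_{W_T}(\phi)\geq\beta$. Applied to $\phi$ mapping the root of $2^{<\N}$ to the root of $T$, this gives $\rank(W_T)$ at least the combinatorial rank of $T$ (up to a constant absorbed by your ``$+1$''). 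Note also that your combinatorial derivative on $T$ is not the topological CB-derivative on $[T]$ and can have strictly larger rank; the induction above handles precisely the combinatorial version you iterate. You should write this lemma out rather than gesture at it.
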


The main ingredient of our proof is a variant of the Cantor-Bendixson
decomposition, designed in such a way that it can be carried out in a Borel
way. This modified version works as the usual one for countable sets, but can
differ for uncountable ones\footnote{Kreisel has shown that computable $A \in
\mathcal{A}(\Bai)$ may have uncomputable Cantor-Bendixson rank
\cite{kreisel}. As any total function from $\Bai$ into the countable ordinals
that is effectively Borel is dominated by a computable function (the Spector
$\Sigma^1_1$-boundedness principle, cf.\ \cite{pauly-ordinals}), this implies
that the Cantor-Bendixson decomposition cannot be done in a Borel way.}. If
$u$ and $w$ are finite words on $\N$, $u \sqsubseteq w$ means that $u$ is a
prefix of $w$.

\begin{definition}
A \emph{one-step mCB-certificate} of $A \in \mathcal{A}(\Bai)$ consists of
\begin{enumerate}[label=(\alph*)]
\item A prefix-independent\footnote{Meaning that $w_i \sqsubset w_j$ never
    holds.} sequence $(w_i)_{i \in \N}$ of finite words ordered in a
    canonical way,
\item A sequence of bits $(b_i)_{i \in \N}$ which are not all $0$,
\item A sequence of points $(p_i)_{i \in \N}$
\end{enumerate}
subject to the following constraints:
\begin{enumerate}
\item If $b_i = 1$, then $p_i \in A \cap w_i\Bai$.
\item If $b_i = 0$, then $\forall p \in \mathrm{HYP}(A) \ p \notin A \cap
    w_i\Bai$ and $p_i = 0^\omega$.
\item $\forall p, q \in \mathrm{HYP}(A) \ \left ( p \in A \cap w_i\Bai
    \wedge q \in A \cap w_i \Bai  \Rightarrow p = q \right )$.
\item If $w_i \not\sqsubseteq w$ for all $i \in \N$, then $\exists p, q \in
    A \cap w\Bai \ p \neq q$.
\end{enumerate}
For a one-step mCB-certificate for $A$, its \emph{residue} is $A \setminus
\bigcup_{i \in \N} w_i\Bai$.
\end{definition}

\begin{definition}
A \emph{global mCB-certificate} for $A \in \mathcal{A}(\Bai)$ is indexed by
some initial $I \subseteq \N$ (which may be empty). It consists of a sequence
$(c_i)_{i \in I}$ of one-step mCB-certificates such that there exists a
linear ordering $\mathalpha{\sqsubset} \subseteq I \times I$ with minimum $0$
(if non-empty), such that $c_0$ is a one-step mCB-certificate for $A$, for
each $n \in I \setminus \{0\}$, $c_n$ is an mCB-certificate for $\bigcap_{i
\sqsubset n} A_i$, where $A_i$ is the residue of $c_i$; and $\forall p \in
\mathrm{HYP}(A) \ p \notin A \cap \bigcap_{i \in I} A_i$.
\end{definition}

\begin{lemma}\label{lemma:globalmcb}
The set of global mCB-certificates of $A$ is uniformly $\Sigma^1_1$ in $A$.
\begin{proof}
This is almost immediate from the definition, besides the quantification over $\mathrm{HYP}$. That this is unproblematic follows from Kleene's $\mathrm{HYP}$-quantification theorem \cite{kleene4,kleene5} (the converse of the Spector-Gandy theorem).
\end{proof}
\end{lemma}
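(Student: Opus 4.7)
The plan is to inspect each clause in the definitions of one-step and global mCB-certificates, verify that each clause is either arithmetical in $A$ or is of the form $\forall y \in \mathrm{HYP}(A)\ Q(A,y)$ with $Q$ of bounded arithmetical complexity, and then conclude by the closure properties from Proposition \ref{pro:closprop} that the whole predicate is $\Sigma^1_1$ uniformly in $A$.

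First, I would go through the four clauses of a one-step mCB-certificate. Clause (1), $b_i=1 \to p_i \in A \cap w_i\Bai$, is $\Pi^0_1$ in $A$, since membership in a closed subset of $\Bai$ is $\Pi^0_1$. The $p_i = 0^\omega$ part of clause (2) is arithmetical, and the $\forall p \in \mathrm{HYP}(A)$ part of (2) and all of (3) have the shape $\neg \exists p \in \mathrm{HYP}(A)\ R(A,p)$ (respectively $\neg \exists (p,q) \in \mathrm{HYP}(A)\ R'(A,p,q)$), where $R,R'$ are arithmetical (in fact $\Pi^0_1$ or combinations thereof, using that $p\neq q$ is $\Sigma^0_1$) in the parameters. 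By the very same uniform form of Kleene's $\mathrm{HYP}$-quantification theorem used in the proof of Proposition \ref{pro:closprop}(7), each such existential-over-$\mathrm{HYP}$ predicate is $\Pi^1_1$ uniformly in $A$, so its negation is $\Sigma^1_1$ uniformly in $A$. Clause (4) says that whenever $w$ is not extended by any $w_i$, the closed set $A \cap w\Bai$ contains two distinct points; the inner existence of two distinct paths in a closed set is $\Sigma^1_1$ in $A$ and $w$, and the outer bounded quantifier $\forall w$ ranges over $\N^{<\N}$ and preserves $\Sigma^1_1$ by Proposition \ref{pro:closprop}(2).

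For the global mCB-certificate I would observe that a code for the object naturally encodes $I$, the order $\sqsubset$, and the sequence $(c_i)_{i\in I}$. The condition that $\sqsubset$ is a linear order on $I$ with minimum $0$, and the condition that $c_n$ is a one-step mCB-certificate for $\bigcap_{i \sqsubset n} A_i$, are conjunctions of the one-step clauses analyzed above applied with parameter the computed closed set $A \cap \bigcap_{i \sqsubset n} A_i$; this parameter can be computed uniformly from $A$ and the $c_i$, so we stay in $\Sigma^1_1$ uniformly. The final extra clause $\forall p \in \mathrm{HYP}(A)\ p \notin A \cap \bigcap_{i \in I} A_i$ is again the negation of an $\exists$-over-$\mathrm{HYP}$ of a $\Pi^0_1$ predicate in $A$ (with the closed set $A \cap \bigcap_{i \in I} A_i$ having a name computable from $A$ and the sequence $(c_i)_{i\in I}$), so it is $\Sigma^1_1$ uniformly in $A$ via Kleene's theorem.

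Finally, I would assemble: the predicate ``$(c_i)_{i\in I}$ is a global mCB-certificate for $A$'' is a (uniformly computable) conjunction of clauses each of which has been shown to be $\Sigma^1_1$ in $A$ (together with the certificate as a parameter), and by the closure of $\Sigma^1_1$ under countable conjunction and arithmetical quantification (Proposition \ref{pro:closprop}) the whole set is uniformly $\Sigma^1_1$ in $A$. The only substantive step is the appeal to Kleene's $\mathrm{HYP}$-quantification theorem, which is precisely the technical tool the paper has already set up for use in Proposition \ref{pro:closprop}(7); I do not expect any other obstacle.
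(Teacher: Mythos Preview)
Your proposal is correct and follows essentially the same approach as the paper's proof, which simply observes that all clauses are immediate from the definition except those involving quantification over $\mathrm{HYP}(A)$, and those are handled by Kleene's $\mathrm{HYP}$-quantification theorem. One minor point: in the definition of a global mCB-certificate the linear order $\sqsubset$ is existentially quantified (``there exists a linear ordering'') rather than being part of the code as you assume, but since an outer second-order existential preserves $\Sigma^1_1$ this does not affect your argument.
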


\begin{lemma}
For non-empty non-perfect $A \in \mathcal{A}(\Bai)$, $A$ has a one-step
mCB-certificate such that its residue is equal to its Cantor-Bendixson
derivative. If all points in $A$ are hyperarithmetical relative to $A$, then
$A$ has a unique one-step mCB-certificate.
\begin{proof}
Let $(q_j)$ be the finite or infinite list of isolated points in $A$, and let
$(u_j)$ be the shortest prefix such that $A \cap u_j\Bai = \{q_j\}$. It
follows from Corollary \ref{corr:uc-hyppoints} applied to $A \cap u_j\Bai$
that each $q_j$ is hyperarithmetical relative to $A$. Let $(v_k)$ be the list
of shortest prefixes such that $A \cap v_k\Bai = \emptyset$, excluding those
extending some $u_j$. Now the sequence $(w_i)$ is obtained such that $\{w_i\}
= \{u_j\} \cup \{v_k\}$, subject to the canonical ordering condition. If $w_i
= v_k$, then $b_i = 0$ and $p_i = 0^\omega$, if $w_i = u_j$ then $b_i = 1$
and $p_i = q_j$.

It is immediate that the construction satisfies Conditions (1,2,3,4) and that
the residue sees exactly the isolated points removed, i.e.~is the
Cantor-Bendixson derivative of $A$. It remains to argue that the
mCB-certificate constructed as such is unique if all points in $A$ are
hyperarithmetical relative to $A$ (this is a classic result, of course). As
the choice of $b_i$ and $p_i$ was uniquely determined by the sequence
$(w_i)$, we only need to prove that there is no alternative sequence
$(w'_i)$. As no $w_i$ can satisfy the conclusion of Condition (4), we know
that for each $w_i$ there exists some $w'_{i'}$ with $w'_{i'} \sqsubseteq
w_i$.

Assume that $w'_{i'} \sqsubset w_i$ for some $i$. If $b_i = 1$, then $w_i$
was chosen minimal under the constraint that $A \cap w_i\Bai$ is a singleton,
$A \cap w'_{i'}$ contains at least two points, which are both
hyperarithmetical. Hence, $w'_{i'}$ fails Condition (3). If $b_i = 0$, then
$w'_{i'}\Bai \cap A = \emptyset$ contradicts the choice of $v_k$ as shortest
prefix, $|w'_{i'}\Bai \cap A| = 1$ contradicts the choice of $u_j$ as
shortest prefix of an isolated point in $A$, and $|w'_{i'}\Bai \cap A| \geq
2$ again violates Condition (3). Hence we know that all $(w_i)$ must appear
as some $(w'_{i'})$.

Assume that there is some $w$ occurring as a $w'_{i'}$ but not as a $w_i$. As
the $(w'_{i'})$ are prefix-free, $w$ is not an extension of some $w_i$.
Hence, Condition (4) for the $(w_i)$ implies that $|A \cap w\Bai| \geq 2$.
But as all points in $A$ are hyperarithmetical, this shows that neither the
conclusion of Condition (2) nor that of Condition (3) can be satisfied for
$w'_{i'} = w$, and we have obtained the desired contradiction.
\end{proof}
\end{lemma}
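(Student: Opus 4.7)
The plan splits into existence and uniqueness. For existence, I would construct the certificate directly from the geometry of $A$. Since $A$ is non-empty and non-perfect, it has at least one isolated point; let $(q_j)$ enumerate these. For each $j$ I would pick the shortest prefix $u_j$ with $A\cap u_j\Bai=\{q_j\}$, and separately list the shortest prefixes $(v_k)$ (not extending any $u_j$) with $A\cap v_k\Bai=\emptyset$. Merging $(u_j)$ and $(v_k)$ into a canonically ordered prefix-independent sequence $(w_i)$, and setting $(b_i,p_i)=(1,q_j)$ on the $u_j$-part and $(b_i,p_i)=(0,0^\omega)$ on the $v_k$-part, should give the certificate. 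Conditions (1) and (2) are immediate from the construction; Condition (3) holds because $A\cap u_j\Bai=\{q_j\}$ is a singleton and $A\cap v_k\Bai=\emptyset$; Condition (4) follows from minimality of the chosen prefixes. By construction the residue is $A$ minus its isolated points, i.e.\ the Cantor-Bendixson derivative. Along the way, Corollary \ref{corr:uc-hyppoints} applied to the closed singleton $A\cap u_j\Bai$ yields the side fact that each $q_j$ is hyperarithmetical in $A$.

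For uniqueness under the hypothesis that every point of $A$ is hyperarithmetical in $A$, I would first note that the sequence $(w_i)$ determines $(b_i)$ and $(p_i)$: Condition (2) forces $p_i=0^\omega$ when $b_i=0$, while Conditions (1) and (3) force $b_i=1$ and $p_i$ to be the unique element of $A\cap w_i\Bai$ otherwise. So it suffices to show the sequence $(w_i)$ itself is forced. Given any other candidate $(w'_{i'})$, observe that no $w_i$ from the constructed certificate can itself witness Condition (4), since $A\cap w_i\Bai$ is a singleton or empty; hence for each $w_i$ there must be some $w'_{i'}\sqsubseteq w_i$. Symmetrically, every $w'_{i'}$ must be comparable with some $w_i$ by Condition (4) applied to the constructed certificate.

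The main obstacle I anticipate is ruling out strict inequality between the two prefix sequences. If $w'_{i'}\sqsubsetneq w_i$ with $b_i=1$, then minimality of $u_j$ forces $A\cap w'_{i'}\Bai$ to be either $\{q_j\}$ (contradicting minimality of $u_j$) or to contain at least two points, which by the hyperarithmeticity hypothesis violates Condition (3) at $w'_{i'}$; the case $b_i=0$ is analogous, splitting on whether $A\cap w'_{i'}\Bai$ is empty, a singleton, or larger. Conversely, if $w'_{i'}$ strictly extends some $w_i$ or is incomparable with every $w_i$, Condition (4) for the constructed certificate forces $|A\cap w'_{i'}\Bai|\geq 2$, and once all points lie in $\mathrm{HYP}(A)$ the pair (2)–(3) cannot both hold at $w'_{i'}$. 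The hyperarithmeticity assumption is essential precisely here: Conditions (2) and (3) are phrased in terms of $\mathrm{HYP}(A)$-points, so without it unseen non-HYP points could hide inside $A\cap w'_{i'}\Bai$ and permit spurious alternative certificates.
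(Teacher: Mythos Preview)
Your approach matches the paper's proof essentially line for line: the same construction of $(u_j)$ and $(v_k)$, the same merge into $(w_i)$, and the same case split for uniqueness.

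One small slip in the converse direction of uniqueness: you claim that Condition~(4) for the constructed certificate forces $|A\cap w'_{i'}\Bai|\geq 2$ when $w'_{i'}$ \emph{strictly extends} some $w_i$. But Condition~(4) has the hypothesis that \emph{no} $w_j$ is a prefix of $w$, so it says nothing in the case $w_i\sqsubseteq w'_{i'}$. The paper handles this differently: having already shown that every $w_i$ appears among the $w'_{j'}$, prefix-independence of the sequence $(w'_{j'})$ immediately rules out any $w'_{i'}$ that strictly extends a $w_i$. With that correction your argument is complete and coincides with the paper's.
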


\begin{corollary}
\label{corr:mcb} If $A \in \mathcal{A}(\Bai)$ is countable, then $A$ has a
unique global mCB-certificate, the $p_i$ for $b_i = 1$ occurring in some
one-step mCB-certificate list all points in $A$, and the order type of the
implied linear ordering is the Cantor-Bendixson rank of $A$ plus $1$.
\end{corollary}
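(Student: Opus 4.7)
The plan is to argue by transfinite induction on the Cantor-Bendixson derivation of $A$, using the preceding lemma on one-step mCB-certificates as the inductive step.

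First I would verify the classical fact that every point of a countable $A \in \mathcal{A}(\Bai)$ is hyperarithmetical relative to $A$: the Cantor-Bendixson derivation of $A$ is computable relative to $A$ and runs through a countable ordinal bounded by $\omega_1^{\mathrm{CK},A}$, each $x \in A$ is isolated at exactly one stage $\beta$ of that derivation, and so $x$ is hyperarithmetical in $A$. This is what will activate the uniqueness clause of the preceding lemma at every stage of the construction.

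Next I would define a sequence of residues by setting $A_0 = A$; letting $A_{\beta+1}$ be the residue of the unique one-step mCB-certificate $c_\beta$ for $A_\beta$ produced by the preceding lemma, whenever $A_\beta$ is non-empty (and hence, by countability, non-perfect); and $A_\lambda = \bigcap_{\beta < \lambda} A_\beta$ at limits. Since each $A_{\beta+1}$ coincides with the classical Cantor-Bendixson derivative of $A_\beta$, the sequence $(A_\beta)$ matches the iterated CB-derivatives of $A$ and terminates with the empty set after a countable number of steps, whose order type is one more than the Cantor-Bendixson rank of $A$. Bundling $(c_\beta)$ along an $\N$-coding of the resulting ordinal then yields a global mCB-certificate, with the terminal clause $\forall p \in \mathrm{HYP}(A) \ p \notin A \cap \bigcap_i A_i$ being automatic from $\bigcap_i A_i = \emptyset$.

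For uniqueness, I would propagate the uniqueness step by step: the preceding lemma pins down each $c_\beta$ uniquely from $A_\beta$, and $A_\beta$ is itself determined inductively from $A$, so any global mCB-certificate must agree with the canonical one. The claim that the $p_i$ with $b_i = 1$ enumerate all points of $A$ follows because every $x \in A$ is isolated at a unique successor stage $\beta + 1$ and so appears as a listed point in $c_\beta$. The main obstacle will be the bookkeeping at the limit stages: verifying that the intersection of the previously constructed residues really equals the next CB-derivative (rather than some strictly larger closed set), and reconciling the ordinal-indexed construction with the $\N$-indexed format prescribed by the definition of a global mCB-certificate.
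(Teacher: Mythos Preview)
Your proposal is correct and matches the paper's approach: the corollary is stated there without proof, as an immediate consequence of the preceding lemma via transfinite induction along the Cantor--Bendixson derivation, which is precisely the argument you outline. The bookkeeping concerns you flag at the end (limit-stage intersections coinciding with the iterated derivative, and reconciling the ordinal-indexed construction with the $\N$-indexed format) are exactly the right points to check and are routine once identified.
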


\begin{proof}[Proof of Theorem \ref{theo:list}]
That $\UC_\Bai \leqW \wList$ is simple: Any instance of the former is an
instance of the latter, and from a list repeating a single element, we can
recover that element. For the other direction, we show $\wList \leqW \SiUC$
instead and invoke Theorem \ref{theo:bigucbaire}. By Lemma
\ref{lemma:globalmcb} the set of global mCB-certificates of $A \in
\mathcal{A}(\Bai)$ is computable as a $\Sigma^1_1$-set from $A$, and by
Corollary \ref{corr:mcb} this is a singleton for countable $A$. We can
distinguish whether the global mCB-certificate uses an empty or non-empty
linear order. In the former case, the set is empty, and in the latter case,
we can compute a list of all points in $A$.

Again, $\wList \leqW \List$ is trivial. For the reverse direction, we observe
that $\List \leqW \UC_\Bai \star \wList$, since $\UC_\Bai$ more than suffices
to extract the required additional information from an unstructured list. We
then use the preceding result and $\UC_\Bai \equivW \UC_\Bai \star \UC_\Bai$
from \cite{paulybrattka}.
\end{proof}

Regarding the non-uniform aspect, it is known that every countable $\Pi^0_1$
(indeed $\Sigma^1_1$) set $A\subseteq\Bai$ consists only of hyperarithmetical
elements (\cite[Theorem III.6.2]{sacks2}). Theorem \ref{theo:list} concludes
that every countable $\Pi^0_1$ set $A\subseteq\Bai$ admits a
hyperarithmetical enumeration. Combining Proposition \ref{prop:ptt1} (and
Gandy's basis theorem \cite[Corollary III.1.5]{sacks2}) and Theorem
\ref{theo:list}, we indeed get the following:

\begin{corollary}\label{cor:PTT-nonuniform}
For any computable tree $T\subseteq\om^{<\om}$, either $T$ has a hyperlow
perfect subtree or there is a hyperarithmetical enumeration of all infinite
paths through $T$.
\end{corollary}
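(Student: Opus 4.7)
The plan is to dichotomize on whether $[T]$ is countable or uncountable, and to handle the two cases using the uniform results already established (Proposition \ref{prop:ptt1} and Theorem \ref{theo:list}) together with two classical non-uniform basis theorems: Gandy's basis theorem (every non-empty $\Pi^0_1$ subset of $\Bai$ contains an element $x$ with $\om_1^{CK,x} = \om_1^{CK}$, hence in particular hyperlow) and Spector's theorem (every $\Sigma^1_1$-singleton in $\Bai$ is hyperarithmetic).

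In the uncountable case, the task is to exhibit a hyperlow perfect subtree. I would reuse the construction from the proof of Proposition \ref{prop:ptt1}: the set
\[
P = \{(T',\lambda) \in \Tr \times \N^{(\N^{<\N})} : \emptyset \neq T' \subseteq T \ \wedge\ \lambda \text{ is a modulus of perfectness for } T'\}
\]
is a computable $\Pi^0_1$ subset of (the space computably isomorphic to) $\Bai$, and it is non-empty by the classical perfect tree theorem applied to $[T]$. Gandy's basis theorem then yields a hyperlow $(T',\lambda) \in P$, and its first coordinate is the desired hyperlow perfect subtree of $T$ (any component of a hyperlow tuple is hyperlow).

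In the countable case, the instance $[T]$ lies in $\dom(\wList)$ and is computable as a closed set. By Theorem \ref{theo:list}, $\wList \leqW \UC_\Bai$, and by Theorem \ref{theo:bigucbaire}, $\UC_\Bai \equivW \SiUC$. Chasing the reductions, from $[T]$ one uniformly computes a $\Sigma^1_1$-singleton $\{x\} \subseteq \Bai$ whose unique element $x$ computes a $\wList$-solution for $[T]$, i.e.\ an enumeration of all paths through $T$. Since $[T]$ is computable, $\{x\}$ is an actual $\Sigma^1_1$-singleton (not only relative to an oracle), so by Spector's theorem $x$ is hyperarithmetic, and hence so is the enumeration computed from it.

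The main obstacle I anticipate is not the case analysis itself but the care needed in the uncountable case: one must verify that the closed set $P$ from the proof of Proposition \ref{prop:ptt1} is genuinely $\Pi^0_1$ (not merely $\Pi^0_1$ relative to some oracle) so that Gandy's basis theorem applies directly, and one must check that hyperlowness of the coded tuple $(T',\lambda)$ transfers to the component $T'$. Both points are routine once spelled out, but they are the only places where the argument uses anything beyond bookkeeping from the uniform theorems already proved.
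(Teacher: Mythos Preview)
Your proposal is correct and follows essentially the same approach as the paper: the paper explicitly derives the corollary by combining Proposition~\ref{prop:ptt1} with Gandy's basis theorem for the uncountable case and Theorem~\ref{theo:list} for the countable case, exactly as you do. Your routing of the countable case through $\SiUC$ and Spector's theorem is in fact faithful to how the proof of Theorem~\ref{theo:list} actually proceeds (it shows $\wList \leqW \SiUC$), though one could equally well invoke Corollary~\ref{corr:uc-hyppoints} directly once the reduction to $\UC_\Bai$ is in hand.
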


\subsubsection*{Listing on Cantor space}
We have seen that for subsets of Baire space, it makes no difference whether
we intend to list all points of a countable set or all points of a finite
set. We briefly explore the corresponding versions for Cantor space. Let
$\List_{\Can,<\omega} : \subseteq \mathcal{A}(\Can) \mto (\Can)^*$ denote the
problem to produce a tuple of all elements of a finite closed subset of
$\Can$ (i.e.~$(p_0,\ldots,p_{n-1}) \in \List_{\Can,<\omega}(A)$ iff $A =
\{p_i \mid i < n\}$). Let $\wList_{\Can,\leq \omega} : \subseteq
\mathcal{A}(\Can) \mto (\Can)^\omega$ denote the problem to list all elements
of a non-empty countable closed subset of $\Can$ (i.e.~$(p_i)_{i \in
\mathbb{N}} \in \wList_{\Can,\leq \omega}(A)$ iff $\{p_i \mid i \in
\mathbb{N}\} = A$). Note that $\List_{\Can,<\omega}$ is not a restriction of
$\wList_{\Can,\leq \omega}$, since finite tuples and lists with finite range
have distinct properties. We will in fact show in Corollary \ref{List-wList}
that these two multivalued functions are incomparable with respect to
Weihrauch reducibility.

\begin{proposition}
\label{prop:finitelist} $\List_{\Can,<\omega} \equivW \PI02\mbox{-}\C_\N$.
\begin{proof}
To see that $\List_{\Can,<\omega} \leqW \PI02\mbox{-}\C_\N$, note that we can
guess a finite partition of $\Can$ into clopens $A_0,\ldots,A_n$ such that
$|A \cap A_i| = 1$ for input $A$ and any $i$. Verifying a correct partition
is $\PI02$ (because $A \cap A_i \neq \emptyset$ and $|A \cap A_i| \leq 1$ are
respectively a $\PI01$ and a $\PI02$ condition), and given a correct
partition, we can compute the listing since $\UC_\Can$ is computable.

For the other direction, note that we can view $\PI02\mbox{-}\C_\N$ as the
following task: Given $(p_0,p_1,\ldots ) \in (\Can)^\omega$ with the promise
that if $|\{j \mid p_i(j) = 1\}| = \infty$ then $|\{j \mid p_{i+1}(j) = 1\}|
= \infty$, and that there exists some $i$ with $|\{j \mid p_i(j) = 1\}| =
\infty$, find such an $i$ (for details, see \cite{paulybrattka5}). We now
construct $A \in \Can$ as follows: For each $i$, keep track of an auxiliary
variable $k_i$, which is initially $0$. Start enumerating all $0^{\langle
i,k\rangle}1$ into the complement of $A$ except the $0^{\langle i,
k_i\rangle}1$. Also enumerate all $0^l1^s0$. Whenever we read another $1$ in
$p_i$, we do enumerate $0^{\langle i, k_i\rangle}1$, and set the new $k_i$ to
be the least $k$ such that $0^{\langle i, k\rangle}1$ has not been enumerated
yet.

Whenever $|\{j \mid p_i(j) = 1\}| < \infty$ for some $i$, then $k_i$ will eventually remain constant. The resulting set $A$ will be of the form $\{0^\omega\} \cup \{0^{\langle i, k_i\rangle}1^\omega \mid i \in I\}$ where $I$ is the finite set of non-solutions. Having a finite listing of $A$ lets us easily pick some solution.
\end{proof}
\end{proposition}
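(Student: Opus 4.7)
The plan is to prove the two directions of $\List_{\Can,<\omega} \equivW \PI02\mbox{-}\C_\N$ separately, regarding an input $A \in \mathcal{A}(\Can)$ as presented by a subtree $T \subseteq 2^{<\N}$ with $[T] = A$.

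For the upper bound $\List_{\Can,<\omega} \leqW \PI02\mbox{-}\C_\N$, I will exploit the decidability of $T$'s characteristic function. The predicate ``$v \in T$ is extendible'' has the $\PI01$ form $\forall \ell\, \exists w \in T \cap 2^\ell\, (v \sqsubseteq w)$, so its negation is $\SI01$. Hence ``exactly $n$ nodes at level $k$ are extendible'' is $\DE02$, and
$$B = \{\langle n,k\rangle : \forall \ell \geq k,\ \text{exactly $n$ extendible nodes at level $\ell$ in $T$}\}$$
is a $\PI02$ subset of $\N$, uniformly in $T$, and nonempty precisely when $[T]$ is finite. Given $\langle n,k\rangle \in B$ returned by $\PI02\mbox{-}\C_\N$, we read off $|A| = n$, enumerate the $n$ extendible nodes at level $k$ (by detecting the $2^k - n$ non-extendible ones via c.e.\ non-extendibility), and extend each such node $v$ to its unique path in $A$ by descending level by level: exactly one child of $v$ is extendible, and dovetailing the non-extendibility test on both children identifies it in finite time.

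For the lower bound $\PI02\mbox{-}\C_\N \leqW \List_{\Can,<\omega}$, I will use the characterization from \cite{paulybrattka5}: given $(p_i)_{i\in\N} \in (\Can)^\N$ satisfying $|\{j : p_i(j) = 1\}| = \infty \Rightarrow |\{j : p_{i+1}(j) = 1\}| = \infty$, with the promise that at least one $p_i$ has infinitely many $1$'s, output such an index. I will construct $A \in \mathcal{A}(\Can)$ whose points, besides $0^\omega$, encode exactly the non-solutions. For each $i$ reserve candidate points $0^{\langle i,k\rangle}1^\omega$ indexed by a pointer $k_i$ starting at $0$; enumerate into $\Can \setminus A$ all $0^\ell 1^s 0$ (forcing any surviving extension of $0^m 1$ to equal $0^m 1^\omega$) and all $0^{\langle i,k\rangle}1$ other than the current $0^{\langle i,k_i\rangle}1$. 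Whenever a new $1$ appears in $p_i$, enumerate the current $0^{\langle i,k_i\rangle}1$ and advance $k_i$ to a fresh value. If $i$ is a non-solution then $k_i$ stabilizes at some $k_i^*$ and $0^{\langle i,k_i^*\rangle}1^\omega \in A$; if $i$ is a solution, every candidate of $i$ is eventually killed. By the promise, $A = \{0^\omega\} \cup \{0^{\langle i,k_i^*\rangle}1^\omega : i \text{ a non-solution}\}$ is finite, and a listing of $A$ exposes the non-solutions explicitly, so returning the least index not appearing yields a solution.

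The main obstacle will be in the lower bound: verifying that the enumeration truly produces the claimed finite closed set. One has to confirm that $0^\omega$ remains in $A$ (serving as the limit point guaranteeing closedness as the pointers $k_i$ are advanced), that each surviving candidate $0^{\langle i,k\rangle}1^\omega$ unambiguously identifies its owner index $i$ from the listing, and that the upward-closedness promise really guarantees only finitely many non-solutions so that $\List_{\Can,<\omega}$ is applicable.
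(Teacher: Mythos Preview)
Your proof is correct and follows essentially the same approach as the paper. The lower bound is identical to the paper's construction; for the upper bound the paper guesses an arbitrary finite clopen partition isolating the points of $A$, whereas you guess the specific pair $\langle n,k\rangle$ encoding the cardinality and a separating level, but this is just a concrete encoding of the same idea (and your verification that this datum is $\PI02$ and suffices to compute the listing via the computability of $\UC_\Can$ is fine).
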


As a corollary one can see that every finite $\Pi^0_1$ subset of $2^\N$
admits a computable listing uniformly in $\mathbf{0}''$, and the complexity
$\mathbf{0}''$ is optimal: If a function $f$ sends an index (i.e.\ a G\"{o}del
number) of a $\Pi^0_1$ set $P\subseteq 2^\N$ to an index of a computable
listing of elements of $P$ whenever $P$ is finite, then $f$ must compute
$\mathbf{0}''$.

\begin{proposition}
\label{prop:cantorlist} $\wList_{\Can,\leq\omega} \equivW
\widehat{\wList_{\Can,\leq\omega}} \leqW \UC_\Bai \equivW \PI02\mbox{-}\C_\N
\star \wList_{\Can,\leq\omega}$.
\begin{proof}
To note that $\wList_{\Can,\leq\omega}$ is parallelizable, observe that we
can effectively join countably many trees along a comb, and the set of paths
of the result is essentially the disjoint union of the original paths. The
second reduction follows from the obvious embedding of $\Can$ into $\Bai$ as
a closed set and Theorem \ref{theo:list}. For the third reduction, note that
we can embed $\Bai$ as a $\Pi^0_2$-subspace $B$ into $\Can$ such that $\Can
\setminus B$ is countable. Given some singleton $A \in \mathcal{A}(\Bai)$, we
can compute some countable $\bar{A} \in \mathcal{A}(\Can)$ such that $\bar{A}
\cap B$ is the image of $A$ under that embedding. If we have a list of all
points in $\bar{A}$, we can then use $\Pi^0_2\mbox{-}\C_\N$ to pick the one
in $B$. That the third reduction is an equivalence follows from the second,
the observation that $\PI02\mbox{-}\C_\N \leqW \UC_\Bai$ and $\UC_\Bai \star
\UC_\Bai \equivW \UC_\Bai$ (cf.~\cite{paulybrattka}).
\end{proof}
\end{proposition}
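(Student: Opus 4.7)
The plan is to split the proposition into three claims. For the first, $\wList_{\Can,\leq\omega} \equivW \widehat{\wList_{\Can,\leq\omega}}$, I would show that $\wList_{\Can,\leq\omega}$ absorbs parallelization: given a sequence $(A_i)_{i \in \N}$ of non-empty countable closed subsets of $\Can$, form the single closed countable set $A := \{0^\omega\} \cup \bigcup_{i\in\N} 0^i 1 \cdot A_i \in \mathcal{A}(\Can)$, which is computable from the sequence; any listing of $A$ decomposes canonically into listings of the $A_i$ by reading the initial run of $0$'s in each listed point (the point $0^\omega$, if it appears, is discarded). The second claim, $\widehat{\wList_{\Can,\leq\omega}} \leqW \UC_\Bai$, then follows from Theorem \ref{theo:list} together with the computable embedding $\Can \hookrightarrow \Bai$, using that $\UC_\Bai$ is its own parallelization.

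The third claim, $\UC_\Bai \equivW \PI02\mbox{-}\C_\N \star \wList_{\Can,\leq\omega}$, has an easy direction and a harder one. The easy direction $\PI02\mbox{-}\C_\N \star \wList_{\Can,\leq\omega} \leqW \UC_\Bai$ follows by combining the second claim, the standard reduction $\PI02\mbox{-}\C_\N \leqW \UC_\Bai$, and $\UC_\Bai \star \UC_\Bai \equivW \UC_\Bai$ from \cite{paulybrattka}. For the harder direction $\UC_\Bai \leqW \PI02\mbox{-}\C_\N \star \wList_{\Can,\leq\omega}$ I would fix the homeomorphism $\iota : \Bai \to B$ onto $B := \{q \in \Can : q(n) = 1 \text{ for infinitely many } n\} \in \PI02(\Can)$ defined by $\iota(p) := 0^{p(0)} 1 0^{p(1)} 1 \cdots$, whose complement $\Can \setminus B$ is countable. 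Given $\{f\} \in \mathcal{A}(\Bai)$ presented by a tree $T \subseteq \N^{<\N}$ with $[T] = \{f\}$, I compute the binary tree $S := \{u \in 2^{<\N} : u \sqsubseteq \iota(v) \text{ for some } v \in T\}$ (extending $\iota$ to finite strings in the natural way), apply $\wList_{\Can,\leq\omega}$ to $[S]$, use $\PI02\mbox{-}\C_\N$ to select the unique listed point lying in $B$, and decode via $\iota^{-1}$.

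The main technical point is verifying that $[S]$ is countable and that $[S] \cap B = \{\iota(f)\}$. For the intersection: any $q \in [S] \cap B$ decodes uniquely as $q = \iota(p)$, and the requirement that every prefix of $q$ lies under some $\iota(v)$ with $v \in T$ forces every prefix of $p$ to appear in $T$, hence $p \in [T] = \{f\}$. Countability of $[S]$ then follows since $[S] \setminus B \subseteq \Can \setminus B$ is countable. The role of $\PI02\mbox{-}\C_\N$ in the reduction is precisely to filter out the spurious eventually-zero paths $u 0^\omega$ that nodes of $T$ not lying on the branch to $f$ contribute to $[S]$; uniqueness of $f$ is exactly what keeps these spurious contributions inside $\Can \setminus B$, and is what distinguishes this argument from one that would work for arbitrary $\Pi^0_1$ subsets of $\Bai$.
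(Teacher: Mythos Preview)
Your proof is correct and follows essentially the same strategy as the paper: the comb construction for parallelization, the embedding of $\Can$ into $\Bai$ together with Theorem~\ref{theo:list} for the second reduction, and the $\Pi^0_2$-embedding $\iota:\Bai\to\Can$ (with countable complement) combined with $\PI02\mbox{-}\C_\N$ to recover the unique genuine path for the third. You have simply made the paper's sketch explicit by naming the embedding $\iota(p)=0^{p(0)}10^{p(1)}1\cdots$ and spelling out why $[S]$ is countable with $[S]\cap B=\{\iota(f)\}$.
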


\begin{proposition}
\label{prop:lim-list-strength-aux} ${\sf lim} \leqW
\wList_{\Can,\leq\omega}$.
\begin{proof}
Consider the map $\id : \mathcal{A}(\N) \to \mathcal{O}(\N)$ translating an
enumeration of a complement of a set to an enumeration of the set. Studied
under the name $\mathrm{EC}$ in \cite{stein}, it is known to be equivalent to
${\sf lim}$. Now from $A \in \mathcal{A}(\N)$ we can compute $\{0^\omega\}
\cup \{0^n1^\omega \mid n \in A\} \in \mathcal{A}(\Can)$. From any list of
the elements of the latter set, we can then compute $A \in \mathcal{O}(\N)$.
\end{proof}
\end{proposition}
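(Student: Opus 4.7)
The plan is to factor the reduction through the well-known equivalence ${\sf lim} \equivW \mathrm{EC}$, where $\mathrm{EC} : \mathcal{A}(\N) \to \mathcal{O}(\N)$ is the identity on subsets of $\N$ that converts negative information into positive information. Concretely, I would take an instance $A \in \mathcal{A}(\N)$ of $\mathrm{EC}$, encode it as a countable non-empty closed subset of $\Can$ that carries a visible flag for each member of $A$, invoke $\wList_{\Can,\leq\omega}$ on this set, and recover the flags by inspection of the returned listing.

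The encoding I have in mind is $K_A := \{0^\omega\} \cup \{0^n 1^\omega : n \in A\}$. The bulk of the work is verifying that $A \mapsto K_A$ is computable as a map $\mathcal{A}(\N) \to \mathcal{A}(\Can)$. For a finite binary string $\sigma$, the cylinder $[\sigma] \subseteq \Can$ is disjoint from $K_A$ exactly when either $\sigma$ already contains a $1$ followed later by a $0$ (ruling out $0^\omega$ and every $0^m 1^\omega$), or $\sigma$ has a prefix $0^n 1$ with $n \notin A$; the latter condition is semi-decidable because negative information on $A$ is precisely what the $\mathcal{A}(\N)$ representation supplies. Conversely, no cylinder containing a point of $K_A$ can ever be ruled out by either check. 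Hence $K_A$ is uniformly computable as an element of $\mathcal{A}(\Can)$, and it is visibly countable and contains $0^\omega$, so it is a valid input to $\wList_{\Can,\leq\omega}$.

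Given a listing $(q_i)_{i\in\N} \in \wList_{\Can,\leq\omega}(K_A)$, I would enumerate $A \in \mathcal{O}(\N)$ by placing $n$ into the output as soon as some $q_i$ is observed to have a prefix of the form $0^n 1$. This is sound because the only element of $K_A$ starting with $0^n 1$ is $0^n 1^\omega$, which lies in $K_A$ only if $n \in A$; and it is complete because each $n \in A$ produces the point $0^n 1^\omega \in K_A$, which must occur somewhere in the listing. The only real obstacle is the uniform closed-set computation of $K_A$ from the $\mathcal{A}(\N)$-name of $A$; once that is in hand, the decoding step is essentially a one-liner, and composing the three pieces yields ${\sf lim} \equivW \mathrm{EC} \leqW \wList_{\Can,\leq\omega}$.
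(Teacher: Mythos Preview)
Your proof is correct and follows exactly the same approach as the paper: reduce $\mathrm{EC}\equivW{\sf lim}$ by encoding $A\in\mathcal{A}(\N)$ as the closed set $\{0^\omega\}\cup\{0^n1^\omega : n\in A\}$ and reading off an enumeration of $A$ from any listing. You simply spell out in more detail the computability of the encoding and the decoding step that the paper leaves implicit.
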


\begin{proposition}\label{prop:lim-list-strength}
The following are equivalent for single-valued $f: \subseteq \mathbf{X} \to
\Bai$ where $\mathbf{X}$ is a represented space:
\begin{enumerate}
\item $f \leqW {\sf lim}$;
\item $f \leqW \wList_{\Can,\leq\omega}$.
\end{enumerate}
\begin{proof}
Proposition \ref{prop:lim-list-strength-aux} entails that $1.$ implies $2.$

To see that $2.$ implies $1.$, consider some single-valued $f : \subseteq
\Bai \to \Bai$ with $f \leqW \wList_{\Can,\leq\omega}$. So from any $p \in
\dom(f)$, we can compute some countable $A_p \in \mathcal{A}(\Can)$, and from
any enumeration of the points in $A_p$ together with $p$ we can compute
$f(p)$ via some computable $K$. We will argue that having access to a pruned
tree $T$ with $[T] = A_p$ suffices to compute $f(p)$, and note that pruning a
binary tree is equivalent to ${\sf lim}$ (see
e.g.~\cite{pauly-nobrega-arxiv}). Let us assume that there are prefixes $w_0,
\ldots, w_n$ in the pruned tree such that $K$ upon reading $p$ and
$w_0,\ldots,w_n$ outputs some prefix $w$. Then there is some enumeration
$q_0,q_1,\ldots$ of points in $A_p$ such that $w_0,\ldots,w_n$ are prefixes
of $q_0,\ldots,q_n$, hence $w$ is a prefix of $f(p)$. Conversely, for any
fixed enumeration $q_0,q_1,\ldots$ of points in $A_p$ and desired prefix
length $m$ of $f(p)$ there is some $k \in \N$ such that $K$ outputs
$f(p)_{\leq m}$ after having read no more than the $k$-length prefixes of
$q_i$ for $i \leq k$. Moreover, each $(q_i)_{\leq k}$ occurs in the pruned
tree $T$. Thus, having access to $T$ lets us compute longer and longer
prefixes of $f(p)$, and since $f$ is single-valued, this suffices to compute
$f(p)$.
\end{proof}
\end{proposition}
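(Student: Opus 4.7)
The plan is that the forward direction is essentially free. From Proposition \ref{prop:lim-list-strength-aux} we already have ${\sf lim} \leqW \wList_{\Can,\leq\omega}$, so if $f \leqW {\sf lim}$ then by transitivity of $\leqW$ we immediately obtain $f \leqW \wList_{\Can,\leq\omega}$.

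For the harder direction, assume $f \leqW \wList_{\Can,\leq\omega}$, witnessed by computable $H$ and $K$: from $p \in \dom(f)$ we compute a countable $A_p = H(p) \in \mathcal{A}(\Can)$, and from $p$ paired with any enumeration $(q_i)_{i \in \N}$ of the points in $A_p$ we compute $f(p) = K(p, (q_i)_i)$. My plan is to replace the listing oracle with a single application of ${\sf lim}$ by first computing a pruned tree for $A_p$ and then showing that traversing that tree suffices. Concretely, from $A_p \in \mathcal{A}(\Can)$ we can compute a binary tree $T_p$ with $[T_p] = A_p$; the operation of passing to the pruned subtree $T_p^*$ (consisting of the extendible nodes) is well known to be equivalent to ${\sf lim}$, see \cite{pauly-nobrega-arxiv}. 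Thus a single call to ${\sf lim}$ yields $T_p^*$.

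The core step is showing that $p$ together with $T_p^*$ computes $f(p)$. For this I would argue as follows. Given $m$, search through all finite tuples $(w_0, \ldots, w_n)$ of nodes of $T_p^*$ of matching lengths until we find one such that $K$, when fed $p$ together with a partial enumeration beginning with these prefixes, commits to the first $m$ symbols of its output. Every node of $T_p^*$ is extendible, so any such tuple of prefixes can be completed to an actual enumeration of points of $A_p$; this enumeration is a valid oracle for $\wList_{\Can,\leq\omega}$, and single-valuedness of $f$ forces the output of $K$ on that enumeration to equal $f(p)$. Hence the committed prefix is really an initial segment of $f(p)$. Conversely, fixing any genuine enumeration of $A_p$, continuity of $K$ guarantees that $K$ commits to the length-$m$ prefix of $f(p)$ after reading only finitely much information, and that information appears verbatim as prefixes in $T_p^*$. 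So the search terminates, and running it for $m = 0, 1, 2, \ldots$ produces $f(p)$.

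The main subtlety I expect is the interplay between continuity of $K$ and single-valuedness of $f$: without single-valuedness, different enumerations reachable through $T_p^*$ could send $K$ to incompatible outputs and the committed prefixes would not be coherent. Thus the step where I pass from ``some enumeration extending these prefixes is valid'' to ``the committed output agrees with $f(p)$'' is exactly where the hypothesis that $f$ is single-valued is used, and I would be careful to state this explicitly in the final write-up.
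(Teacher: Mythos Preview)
Your proposal is correct and follows essentially the same route as the paper: both directions match, and for the nontrivial implication you use exactly the paper's idea of replacing the listing by the pruned tree (obtained via one call to ${\sf lim}$), then exploiting extendibility of nodes together with single-valuedness of $f$ to certify that any prefix $K$ commits to from nodes of the pruned tree is a genuine prefix of $f(p)$, and continuity of $K$ on a true enumeration to guarantee termination of the search. Your explicit identification of where single-valuedness is used is precisely the point the paper's argument relies on.
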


In particular, $A\subseteq\N$ is computable from all listings of some countable $\Pi^0_1$ set $P\subseteq 2^\N$ iff $A$ is $\mathbf{0}'$-computable. On the other hand, there is no computable ordinal $\alpha$ such that $\mathbf{0}^{(\alpha)}$ computes a listing of any countable $\Pi^0_1$ subset of $2^\N$.

\begin{corollary}\label{List-wList}
$\List_{\Can,<\omega} \nleqW \wList_{\Can,\leq\omega}$ and
$\wList_{\Can,\leq\omega} \nleqW \List_{\Can,<\omega}$.
\end{corollary}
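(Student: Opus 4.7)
The plan is to handle each direction by reducing to a question about the single-valued Baire-space reducts of the two problems, using Proposition~\ref{prop:lim-list-strength}, which characterizes these for $\wList_{\Can,\leq\omega}$ as exactly the functions $\leqW \lim$.

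For the non-reducibility $\List_{\Can,<\omega} \not\leqW \wList_{\Can,\leq\omega}$, I would first observe that the cardinality function $|\cdot| : \mathcal{A}_{\text{finite}}(\Can) \to \N$ is single-valued into $\N \subseteq \Bai$ and lies below $\List_{\Can,<\omega}$ (simply extract the length of the output tuple), so the hypothetical reduction would force $|\cdot| \leqW \lim$ by Proposition~\ref{prop:lim-list-strength}. To refute this I would exhibit a computable sequence $(A_e)_{e\in\N}$ of finite $\Pi^0_1$ classes in $\Can$ for which $e \mapsto |A_e|$ is not $\emptyset'$-computable. Here one can either rerun the construction from the proof of Proposition~\ref{prop:finitelist} against a $\Pi^0_2$-complete set (so that $|A_e|-1$ encodes the $\PI02\text{-}\C_\N$-witness), or directly build a tree $T_e$ with a main branch $0^\omega$ and a $1$-subtree whose infinite extension is governed by a $\Pi^0_2 \setminus \Delta^0_2$ predicate, forcing $|A_e| \in \{1,2\}$ to track a $\lim$-uncomputable characteristic function.

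For the non-reducibility $\wList_{\Can,\leq\omega} \not\leqW \List_{\Can,<\omega}$, I would argue via a ``computable input'' principle. By Proposition~\ref{prop:lim-list-strength-aux} we have $\lim \leqW \wList_{\Can,\leq\omega}$, so a putative reduction would give $\lim \leqW \List_{\Can,<\omega}$ via computable $K,H$. Choose a computable $p \in \dom(\lim)$ with $\lim(p)$ non-computable (e.g.\ $p$ the stage-by-stage approximation of $\emptyset'$, so $\lim(p) = \chi_{\emptyset'}$). Then $A := H(p)$ is a computable finite $\Pi^0_1$ class in $\Can$, and its elements are all computable: each is isolated in $A$, hence a $\Pi^0_1$-singleton in some clopen neighborhood, hence computable by the standard argument. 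Every valid listing $T = (p_1,\ldots,p_k) \in \List_{\Can,<\omega}(A)$---whatever realizer of $\List_{\Can,<\omega}$ is chosen---is thus a tuple of computable reals, so $T$ is itself a computable point of $\Bai$. Consequently $K\langle p, T\rangle$ is computable, contradicting $K\langle p, T\rangle = \lim(p)$.

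The main obstacle is in direction (1), where I must verify that cardinality on finite $\Pi^0_1$ classes in $\Can$ really sits strictly above $\lim$. The subtlety is that ``$|A|\le k$'' is genuinely $\Pi^0_2$ (one must certify both a covering of $A$ by $k$ basic clopens and that each clopen contains at most one point of $A$), not $\Sigma^0_1$; once this is in hand, a uniform $\Pi^0_2 \setminus \Delta^0_2$-encoding into the cardinality of the $A_e$'s gives the required separation, but one has to manage the encoding carefully to keep the tree $T_e$ with $[T_e]=A_e$ computable uniformly in $e$.
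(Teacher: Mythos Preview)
Your proposal is correct and follows essentially the same strategy as the paper: for the first non-reduction you pass to a single-valued function below $\List_{\Can,<\omega}$ and invoke Proposition~\ref{prop:lim-list-strength}, and for the second you use that computable instances of $\List_{\Can,<\omega}$ have only computable solutions while ${\sf lim}$ does not.

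The one difference worth noting is that the paper avoids your hand-built cardinality encoding in the first direction by recalling (from Proposition~\ref{prop:finitelist}) that $\List_{\Can,<\omega}\equivW\PI02\mbox{-}\C_\N$ and then observing $\PI02\mbox{-}\C_\N\equivW\PI02\mbox{-}\UC_\N$; the latter is already single-valued and manifestly not $\Sigma^0_2$-computable, so Proposition~\ref{prop:lim-list-strength} applies immediately without any ad hoc construction. Your cardinality argument is fine, but the paper's route sidesteps the ``main obstacle'' you flagged.
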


\begin{proof}
For the first claim, it is known that $\PI02$-$\C_\N \equivW \PI02$-$\UC_\N$
\cite{paulybrattka5}. (Sketch: Take $(p_i)_{i\in\N}$ as in Proposition
\ref{prop:finitelist}, and then put $\hat{p}_{i,s}(n)=1$ iff $p_{i}(n)=1$ and
$p_j(t)=0$ for all $j<i$ and $s\leq t< n$. It is easy to see that there is a
unique $i,s$ such that $|\{n\mid \hat{p}_{i,s}(n)=1\}|=\infty$, and then
$|\{n\mid p_i(n)=1\}|=\infty$.) Then observe that $\PI02$-$\UC_\N$ is
single-valued, and that ${\sf lim}$ is $\Sigma^0_2$-computable while
$\PI02$-$\C_\N$ is not. The claim then follows by Proposition \ref{prop:lim-list-strength}.

The second claim follows from the observation that any solution of a
(computable) instance of $\PI02$-$\C_\N$ must be computable, while ${\sf
lim}$ has computable instances without computable solutions.
\end{proof}

\begin{corollary}
$\wList_{\Can,\leq\omega} \leW \wList_{\Can,\leq\omega} \star
\wList_{\Can,\leq\omega} \star \wList_{\Can,\leq\omega} \equivW \UC_\Bai$.
\end{corollary}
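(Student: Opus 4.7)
The plan is to split the statement into the equivalence $\wList_{\Can,\leq\omega} \star \wList_{\Can,\leq\omega} \star \wList_{\Can,\leq\omega} \equivW \UC_\Bai$ and the strict inequality $\wList_{\Can,\leq\omega} \leW \wList_{\Can,\leq\omega} \star \wList_{\Can,\leq\omega} \star \wList_{\Can,\leq\omega}$, and deduce both from previously established results.

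For the upper bound of the equivalence, I would combine $\wList_{\Can,\leq\omega} \leqW \UC_\Bai$ from Proposition \ref{prop:cantorlist} with the idempotence $\UC_\Bai \star \UC_\Bai \equivW \UC_\Bai$ (from \cite{paulybrattka}) to conclude that any finite compositional product of $\wList_{\Can,\leq\omega}$ is below $\UC_\Bai$. For the lower bound, I would use the other half of Proposition \ref{prop:cantorlist}, namely $\UC_\Bai \equivW \PI02\mbox{-}\C_\N \star \wList_{\Can,\leq\omega}$, so that it suffices to establish $\PI02\mbox{-}\C_\N \leqW \wList_{\Can,\leq\omega} \star \wList_{\Can,\leq\omega}$. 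Since ${\sf lim} \leqW \wList_{\Can,\leq\omega}$ by Proposition \ref{prop:lim-list-strength-aux}, it is enough to show $\PI02\mbox{-}\C_\N \leqW {\sf lim} \star {\sf lim}$.

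For that reduction, I would present an instance of $\PI02\mbox{-}\C_\N$ in the form $(p_i)_{i\in\N} \in \Can^\N$ with at least one $i$ such that $p_i$ has infinitely many $1$s (as in the proof of Proposition \ref{prop:finitelist}), and proceed in two stages. The first invocation of ${\sf lim}$ computes the $\Delta^0_2$ set $S = \{(i,k) : |\{j : p_i(j)=1\}| \geq k\}$, coded as an element of $\Bai$; this is possible because each membership is monotone $\Sigma^0_1$, so $S$ is computable from a single application of ${\sf lim}$. The computable intermediate step turns $S$ into a sequence $(s_n)_{n\in\N}$ where $s_n$ lists the least index not yet rejected as a candidate (an index $i$ is rejected at stage $n$ if some pair $(i,k)$ with $k \leq n$ is seen to be outside $S$). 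This sequence converges, and its limit (extracted by the second ${\sf lim}$) is a valid answer. Combining with Proposition \ref{prop:cantorlist} then gives $\UC_\Bai \leqW \wList_{\Can,\leq\omega} \star \wList_{\Can,\leq\omega} \star \wList_{\Can,\leq\omega}$.

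For the strict inequality, since $\wList_{\Can,\leq\omega} \leqW \UC_\Bai$, it suffices to show $\UC_\Bai \nleqW \wList_{\Can,\leq\omega}$. Because $\UC_\Bai$ is single-valued, Proposition \ref{prop:lim-list-strength} reduces this to $\UC_\Bai \nleqW {\sf lim}$, which follows from Theorem \ref{thm:Pauly-dagger} since $\UC_\Bai \equivW {\sf lim}^\dagger$ computes arbitrary finite iterates ${\sf lim}^{(n)}$ while ${\sf lim}$ sits only at the first level of the Baire hierarchy. The main obstacle will be the $\PI02\mbox{-}\C_\N \leqW {\sf lim} \star {\sf lim}$ step; all other pieces are immediate consequences of results already in the paper.
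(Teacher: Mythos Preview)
Your proposal is correct and follows essentially the same approach as the paper: both derive the equivalence by combining Proposition~\ref{prop:cantorlist} with $\PI02\mbox{-}\C_\N \leqW {\sf lim} \star {\sf lim}$ and Proposition~\ref{prop:lim-list-strength-aux}, and both obtain strictness from Proposition~\ref{prop:lim-list-strength} together with $\UC_\Bai \nleqW {\sf lim}$. The only difference is that you spell out explicit arguments for $\PI02\mbox{-}\C_\N \leqW {\sf lim} \star {\sf lim}$ and for $\UC_\Bai \nleqW {\sf lim}$, whereas the paper simply invokes these as known facts.
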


\begin{proof}
In Proposition \ref{prop:lim-list-strength-aux} we have shown that ${\sf
lim}\leqW\List_{\Can,\leq\omega}$, which implies
$\Pi^0_2\mbox{-}\C_\N\leqW{\sf lim} \star {\sf lim}
\leqW\wList_{\Can,\leq\omega}\star \wList_{\Can,\leq\omega}$; hence the
assertion follows from Proposition \ref{prop:cantorlist} and $\UC_\Bai \star
\UC_\Bai \equivW \UC_\Bai$. The strictness follows from Proposition
\ref{prop:lim-list-strength} since $\UC_\Bai$ is single-valued and $\UC_\Bai
\nleqW {\sf lim}$.
\end{proof}

\begin{question}
Does $\wList_{\Can,\leq\omega} \star \wList_{\Can,\leq\omega} \equivW
\UC_\Bai$ hold?
\end{question}

The feature that $\wList_{\Can,\leq\omega}$ is not closed under composition
itself, but that the hierarchy of more and more compositions stabilizes at a
finite level, seems surprising for a \emph{natural} degree. A similar
observation was made before regarding the degree of finding Nash equilibria
in bimatrix games \cite{pauly-kihara2-mfcs}.

\subsection{Finding winning strategies}

We now move on to the complexity of finding winning strategies in open
Gale-Stewart games. In formulating the corresponding multivalued functions,
we implicitly code strategies in sequential games into Baire space elements.

\begin{definition}
$\FindWSS :\subseteq \mathcal{O}(\Bai) \mto \Bai$ ($\FindWSP : \subseteq
\mathcal{O}(\Bai) \mto \Bai$) maps an open game where Player 2 (Player 1) has
no winning strategy to a winning strategy for Player 1 (Player 2). Likewise,
$\FindWSD$ maps a clopen game where Player 2 has no winning strategy to a
winning strategy for Player 1. Here a name for a clopen set consists of two
names for open sets which are one the complement of the other.
\end{definition}

On the one hand, the difficulty of finding a winning strategy for a closed
player is the same as the closed choice on Baire space.

\begin{proposition}
\label{prop:findwspi} $\FindWSP \equivW \C_\Bai$.
\begin{proof}
For $\C_\Bai \leqW \FindWSP$, note that we can turn any $A \in
\mathcal{A}(\Bai)$ into a \SI01 game where Player $1$'s moves do not matter,
and Player $2$ wins iff his moves form a point $p \in A$.

For $\FindWSP \leqW \C_\Bai$, note that given a Player $2$ strategy $\tau$
and the \SI01 winning condition $W \subseteq \Bai$ we can compute a tree
$T_{W,\tau}$ describing the options available to Player $1$: Essentially, the
strategies $\sigma$ winning against $\tau$ correspond to finite paths in
$T_{W,\tau}$ ending in a leaf, whereas strategies $\sigma'$ losing against
$\tau$ correspond to infinite paths through $T_{W,\tau}$. Thus, $\tau$ is a
winning strategy for Player $2$ iff $T_{W,\tau}$ is a pruned tree, i.e.~a
tree without any leaves. Let $\lambda : \N^* \to \N$ be a witness of
prunedness of $T$ iff $\forall v \in T \ v\lambda(v) \in T$. If Player $2$
has a winning strategy for the game $W$, then the set
\[\{(\tau, \lambda) \mid \lambda\ \textnormal{ is a witness of prunedness for } T_{W,\tau}\}\]
is a non-empty closed set computable from $W$, and projecting a member of it yields a winning strategy for Player $2$.
\end{proof}
\end{proposition}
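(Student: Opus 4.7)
The plan is to prove $\FindWSP \equivW \C_\Bai$ by establishing the two reductions separately, following the template already used for Proposition \ref{prop:ptt1}.

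For the easy direction $\C_\Bai \leqW \FindWSP$, given non-empty $A \in \mathcal{A}(\Bai)$ I would encode $A$ as a trivial open game in which Player 1's moves are ignored and Player 2 needs to play a point of $A$. Concretely, taking $\pi_2$ to be the projection extracting Player 2's moves from an interleaved play, let the open winning set for Player 1 be $W = \pi_2^{-1}(\Bai \setminus A)$. Non-emptiness of $A$ yields a trivial Player 2 winning strategy (play any element of $A$), so Player 1 has none and the game lies in the domain of $\FindWSP$; conversely, simulating any Player 1 response against any Player 2 winning strategy recovers a point of $A$.

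For $\FindWSP \leqW \C_\Bai$, the key is to characterise Player 2 winning strategies by a closed condition after an augmentation. Given the open winning set $W$ for Player 1 and a Player 2 strategy $\tau$, form the tree $T_{W,\tau} \subseteq \N^{<\N}$ of finite Player 1 move sequences whose corresponding partial play against $\tau$ has not yet entered $W$. Then $\tau$ is winning iff $T_{W,\tau}$ has no leaves, a $\PI02$ condition witnessable by a pruning function $\lambda : \N^{<\N} \to \N$ with $v \lambda(v) \in T_{W,\tau}$ for all $v \in T_{W,\tau}$. The pair-set $\{(\tau, \lambda) \mid \lambda \text{ is a pruning witness for } T_{W,\tau}\}$ is a closed subset of the product space (which is computably isomorphic to $\Bai$) and computable from $W$; by classical open determinacy, combined with the domain hypothesis that Player 1 has no winning strategy, it is non-empty. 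Passing it through $\C_\Bai$ and projecting out $\tau$ completes the reduction.

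The main obstacle is confirming genuine closedness of the augmented condition. Since $W$ is presented as an $\SI01$ enumeration of basic cylinders, membership of $v$ in $T_{W,\tau}$ is a $\PI01$ condition in the input (no enumerated basic cylinder of $W$ is a prefix of the corresponding partial play). Consequently the implication ``$v \in T_{W,\tau} \Rightarrow v \lambda(v) \in T_{W,\tau}$'' is equivalent to the $\PI01$ statement ``any basic cylinder of $W$ that gets enumerated as a prefix of $v \lambda(v)$ is already a prefix of $v$'', and a countable conjunction of $\PI01$ conditions remains $\PI01$. Beyond this bookkeeping the reduction is routine and structurally mirrors Proposition \ref{prop:ptt1}.
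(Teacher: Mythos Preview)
Your approach is essentially identical to the paper's in both directions: the same trivial game encodes $A$ for $\C_\Bai \leqW \FindWSP$, and the same pruning-witness augmentation $\{(\tau,\lambda)\mid \lambda \text{ witnesses prunedness of } T_{W,\tau}\}$ gives the closed set for $\FindWSP \leqW \C_\Bai$.

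One small slip in your closedness argument: the implication $v \in T_{W,\tau} \Rightarrow v\lambda(v) \in T_{W,\tau}$ is not \emph{equivalent} to your $\PI01$ reformulation but only implied by it (when $v \notin T_{W,\tau}$ the implication is vacuous, yet a longer enumerated $W$-cylinder may prefix $p_\tau(v\lambda(v))$ without prefixing $p_\tau(v)$). This is harmless, since your stronger condition still defines a non-empty closed set: for any winning $\tau$ the tree $T_{W,\tau}$ is all of $\N^{<\N}$ and no $W$-prefixes arise at all, so both your condition and the original implication hold for every $\lambda$. A cleaner fix is to dovetail the enumeration of $W$ with $|v|$ (i.e.\ put $v\in T_{W,\tau}$ iff no cylinder enumerated in the first $|v|$ steps prefixes $p_\tau(v)$), making membership decidable and the implication directly $\PI01$; this is what the paper tacitly does when it says one can ``compute'' $T_{W,\tau}$.
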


On the other hand, the difficulty of finding a winning strategy for a open/clopen player is the same as the unique choice on Baire space. In the case of clopen games, we even get full determinacy defined as follows:

\begin{definition}
$\DetD : \DE01(\Bai) \mto \Bai \times \Bai$ maps a clopen game $W$ to a pair
of strategies $\sigma$, $\tau$ such that either $\sigma$ is winning for
Player $1$ or $\tau$ is winning for Player $2$ (i.e.~a Nash equilibrium).
\end{definition}

\begin{theorem}\label{theo:findWSSigma}
$\FindWSD \equivW \DetD \equivW \FindWSS \equivW \UC_\Bai$.
\end{theorem}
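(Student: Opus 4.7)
The plan is to establish the cycle of reductions $\UC_\Bai \leqW \FindWSD \leqW \DetD \leqW \FindWSS \leqW \UC_\Bai$. Three of the four links are fairly direct, and the fourth is the main obstacle.

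For $\FindWSD \leqW \DetD$: if the input clopen game $W$ has no winning strategy for Player 2, then by determinacy of clopen games Player 1 must have one, so in the pair $(\sigma,\tau)$ returned by $\DetD(W)$ the component $\sigma$ is automatically a Player 1 winning strategy (if not, $\tau$ would witness a Player 2 winning strategy). For $\DetD \leqW \FindWSS$ (which also yields $\DetD \leqW \FindWSD$, since clopen games are open), I would use a product-game construction. From a clopen $W$, build an auxiliary clopen game $W^*$ in which Player 1 simultaneously commits to two strategies $\sigma,\tau$ to be used respectively as a Player 1 strategy and a Player 2 strategy in two parallel copies of $W$, while Player 2 in $W^*$ supplies the opposing moves. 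Player 1 in $W^*$ wins iff the first sub-play lands in $W$ or the second sub-play avoids $W$. By clopen determinacy applied to $W$, Player 1 in $W^*$ always wins (take $\sigma$ or $\tau$ to be the winning strategy of whichever player wins $W$), so Player 2 in $W^*$ has no winning strategy, $W^*$ lies in the domain of $\FindWSS$, and a Player 1 winning strategy in $W^*$ decodes into a pair witnessing $\DetD(W)$.

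For $\FindWSS \leqW \UC_\Bai$, I would use the canonical rank-minimizing strategy. Given an open game with $\Sigma^0_1$-winning set $W$ for Player 1 and no winning strategy for Player 2, define the rank $\rho(v)$ on positions inductively: $\rho(v) = 0$ if some prefix of $v$ already belongs to the generator of $W$; $\rho(v) = \inf_m (\rho(v\fr m) + 1)$ at Player 1 positions; $\rho(v) = \sup_m (\rho(v\fr m) + 1)$ at Player 2 positions. The hypothesis that Player 2 has no winning strategy is equivalent to $\rho(\varepsilon)$ being an actual ordinal, and the unique strategy $\sigma^*$ that at each Player 1 position $v$ plays the least $m$ minimizing $\rho(v\fr m)$ is a winning strategy. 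Its computation amounts to transfinite recursion along a well ordering canonically attached to the game tree, which reduces to $\mathsf{ATR} \equivsW \UC_\Bai$ as established earlier in the excerpt.

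The main obstacle is $\UC_\Bai \leqW \FindWSD$. Using $\mathsf{ATR} \equivsW \UC_\Bai$ I would encode an $\mathsf{ATR}$ instance $(Z,X,\langle k,n\rangle)$ into a clopen game on $\Bai$, exploiting the fact that clopen games on Baire space can have well-founded game trees of arbitrary countable rank. The idea is that Player 2 issues descending-in-$X$ demands to reveal specified bits of $Y = \mathsf{ATR}(Z,X,\langle k,n\rangle)$; since $X$ is a genuine well ordering, Player 2 is forced to stop after finitely many such demands. Verification of a single step of the recursion --- testing whether a candidate number belongs to $\theta^k_n(Y^j \oplus Z)$ --- has complexity $\Sigma^0_{k+2}$ and can be embedded as extra finite rounds of a Cantor-style $\Sigma^0_{k+2}$ subgame. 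The delicate points are (a) keeping every play finite (clopenness) while allowing Player 2 to challenge at arbitrary positions of $X$, (b) ensuring that Player 1's winning strategies are constrained enough to encode the whole hierarchy $Y$ rather than just locally consistent bits, and (c) extracting $Y$ robustly from a not-necessarily-canonical Player 1 winning strategy (for instance by imposing a minimality convention on Player 1's answers). Arranging all three simultaneously is the technical heart of the argument.
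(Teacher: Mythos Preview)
Your cycle is the right shape, but two of the four links have real problems.

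\textbf{The product game for $\DetD \leqW \FindWSS$ does not work.} A winning strategy for Player~1 in your $W^*$ may couple its behaviour across the two sub-games, so that no projection yields a winning $\sigma$ or $\tau$. Concretely: let $W$ be the clopen game ``Player~1 wins iff his first move equals Player~2's first move'' (so Player~2 wins $W$). In $W^*$ the following is a winning Player~1 strategy: play $m_0=0$; after seeing $(a_0,b_0)$, if $a_0=0$ play $n_0=b_0$, otherwise play $n_0\neq b_0$. Projecting with $b\equiv 0$ gives $\sigma$ with $m_0=0$ (not winning: Player~2 answers $a_0=1$); projecting with $a\equiv 0$ gives $\tau(b_0)=b_0$ (not winning: Player~1 in $W$ is matched). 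So the extracted pair is not a valid $\DetD$-output. The paper instead proves $\DetD\leqW\FindWSD$ by having Player~1's \emph{first} move in the derived game choose which side of $W$ to play; the remainder of any winning strategy is then literally a strategy for that side, and extraction is trivial.

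\textbf{For $\UC_\Bai\leqW\FindWSD$ the paper's route is entirely different} from your ATR-encoding sketch, and avoids your delicate points (a)--(c) altogether. The argument is: (i) $\FindWSD$ is parallelizable (combine games $G_0,G_1,\ldots$ into one where Player~2's first move picks an index); (ii) $\DetD\leqW\FindWSD$ via the side-choosing game above; (iii) evaluating a Borel truth value reduces to $\DetD$, by reading a Borel code as a well-founded $\exists/\forall$ game tree with leaves labelled $\top/\bot$ and recovering the root value from the leaf reached by a Nash equilibrium. Chaining with the known $\UC_\Bai\leqW\widehat{(\id:\mathbb{S}_\mathcal{B}\to\mathbf{2})}$ from \cite{pauly-ordinals} gives $\UC_\Bai\leqW\widehat{\DetD}\leqW\widehat{\FindWSD}\leqW\FindWSD$. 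Your direct encoding might be made to work, but you have not shown that every winning Player~1 strategy is forced to answer each challenge with the true bit of $Y$, nor how to read off all of $Y$ from a single such strategy.

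Your $\FindWSS\leqW\UC_\Bai$ idea is essentially the paper's: the paper packages the rank-minimizing strategy together with \emph{certificates of optimality} (order-preserving maps witnessing $\rank(T^v_{\sigma(v)})\leq\rank(T^v_i)$ for all $i$) as a $\SI11$ singleton and applies $\SiUC\equivW\UC_\Bai$. This avoids having to name a well ordering for $\mathsf{ATR}$ in advance, which your sketch leaves unspecified.
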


We will prove Theorem \ref{theo:findWSSigma} using the following lemmata.

\begin{lemma}
\label{lemma:findWSSigma} $\FindWSS \leqW \SiUC$.
\begin{proof}
Let $T$ be a tree describing the complement of some open set, the payoff for
Player $1$. Fix some strategy $\sigma$ of Player $1$. We understand this to
prescribe the action even at positions made impossible by $\sigma$ itself.
For any $v \in \N^*$ where Player $1$ moves, consider the trees $T_i^v$
describing the options available to Player $2$ if the game starts at $v$,
Player $1$ plays $i$ and otherwise follows $\sigma$. $\sigma$ is a winning
strategy iff for any $v$ compatible with $\sigma$ we find that
$T_{\sigma(v)}^v$ is well-founded. Only $\SiUC$ is available here while a lot
of strategies may exist. We overcome this difficulty by considering the
optimal strategy, that is, the one that minimizes the rank of
$T_{\sigma(v)}^v$.

Let $v$ be a position where Player $1$ moves. A certificate of optimality for
$\sigma$ at $v$ describes maps preserving $\sqsubset$ from $T^v_{\sigma(v)}$
to $T^v_i \setminus \{\lambda\}$ (here $\lambda$ denotes the empty sequence)
for every $i < \sigma(v)$, and maps preserving $\sqsubset$ from
$T^v_{\sigma(v)}$ to $T^v_j$ for every $j > \sigma(v)$. The set of strategies
$\sigma$ and corresponding certificates of optimality for all positions is a
closed set computable from the game.

If we fix partial strategies of all proper extensions of $v$ such that Player
$1$ can win from $v$, then there is a unique action of Player $1$ at $v$ such
that extending the strategy to $v$ admits a certificate of optimality. It
follows that if Player $1$ has a winning strategy, then there is a unique
strategy admitting a certificate of optimality at all compatible positions;
and this strategy is winning. We can compute this using
$\SiUC$.
\end{proof}
\end{lemma}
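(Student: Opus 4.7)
My plan is to apply the Gale-Stewart theorem and then isolate, among the many winning strategies Player $1$ may have, a unique canonical one whose description fits into a $\PI01$ singleton, so that $\UC_\Bai$ (and hence $\SiUC$ by Theorem \ref{theo:bigucbaire}) suffices. Given an open game with payoff $U \in \mathcal{O}(\Bai)$ for Player $1$ and the promise that Player $2$ has no winning strategy, classical Gale-Stewart guarantees that Player $1$ has one; the work is to pick canonically.

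The natural invariant is the rank. For any Player $1$ strategy $\sigma$ and any Player $1$ position $v$, and for each candidate move $i$, form the tree $T^v_i$ whose nodes describe Player $2$'s remaining options when Player $1$ plays $i$ at $v$ and follows $\sigma$ elsewhere; writing the complement of $U$ as a tree, $T^v_i$ is well-founded exactly when the modified strategy beats every Player $2$ continuation, and $\rank(T^v_i)$ measures how long Player $2$ can resist. I would declare $\sigma$ \emph{optimal at $v$} when $\sigma(v)$ is the lexicographically least move that minimizes $\rank(T^v_i)$. Crucially, such comparisons can be certified without computing ranks: supply $\sqsubset$-preserving maps $T^v_{\sigma(v)} \to T^v_i \setminus \{\lambda\}$ for every $i < \sigma(v)$ (strict inequality) and $T^v_{\sigma(v)} \to T^v_j$ for every $j > \sigma(v)$ (non-strict).

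The set of pairs $(\sigma, \text{certificates at all positions compatible with } \sigma)$ is closed and uniformly computable from the game. Under the hypothesis that Player $2$ has no winning strategy, an induction on rank---restricted to the positions from which Player $1$ can still win---shows that $\sigma$ and its family of certificates are forced: once the optimal continuations past $v$ are fixed, each $T^v_i$ is determined, rank-minimization then forces $\sigma(v)$, and the witnessing embeddings are themselves unique since rank inequalities with prescribed strict/non-strict pattern are rigid in a tree. So the closed set is a $\PI01$ singleton, and applying $\UC_\Bai \leqW \SiUC$ and projecting out the certificates yields $\sigma$. I expect the main obstacle to be the uniqueness argument: one has to handle positions lying outside the ``winning region'' (where $\sigma$ is unconstrained and cannot be made canonical) by only demanding certificates at positions compatible with $\sigma$, and then verify that this restriction still suffices to pin down a single $\sigma$ on the reachable part.
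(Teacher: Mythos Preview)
Your approach is essentially the paper's: define the trees $T^v_i$, single out the rank-minimizing strategy with lexicographic tie-break, witness optimality by $\sqsubset$-preserving maps into $T^v_i\setminus\{\lambda\}$ for $i<\sigma(v)$ and into $T^v_j$ for $j>\sigma(v)$, and observe that the set of pairs $(\sigma,\text{certificates})$ is closed and computable from the game.

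The gap is your claim that ``the witnessing embeddings are themselves unique,'' which you use to conclude that the closed set of pairs is a $\PI01$ singleton and hence amenable to $\UC_\Bai$. This is false: $\sqsubset$-preserving maps between well-founded trees are not rigid. For a trivial example, if $T^v_{\sigma(v)}=\{\lambda\}$ then any node of $T^v_j$ serves as the image of $\lambda$, since the $\sqsubset$-preservation condition is vacuous. More generally, whenever $\rank(T^v_{\sigma(v)})$ is strictly less than $\rank(T^v_j)$ there will typically be many such maps. So the closed set of $(\sigma,\text{certificates})$ is not a singleton, and $\UC_\Bai$ does not apply to it directly.

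The fix, and the reason the paper states the reduction to $\SiUC$ rather than $\UC_\Bai$, is to project first: the set of strategies $\sigma$ that admit some family of certificates at all compatible positions is the projection of your closed set, hence a $\SI11$ set computable from the game. Your uniqueness argument for $\sigma$ (which is correct on the reachable part; one should also normalize $\sigma$ at incompatible positions, say by playing $0$, to make the full function unique) then shows this $\SI11$ set is a singleton, and $\SiUC$ returns $\sigma$. Since $\SiUC\equivW\UC_\Bai$ by Theorem~\ref{theo:bigucbaire} the target is unaffected, but the order of operations---project, then choose---is what makes the argument go through.
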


\begin{corollary}
\label{corr:findwssigma} $\FindWSS \leqW \UC_\Bai$.
\begin{proof}
By Lemma \ref{lemma:findWSSigma} and Theorem \ref{theo:bigucbaire}.
\end{proof}
\end{corollary}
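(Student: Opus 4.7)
My plan is to chain the two preceding results and invoke transitivity of Weihrauch reducibility. First, I would appeal to Lemma \ref{lemma:findWSSigma}, which already delivers $\FindWSS \leqW \SiUC$ by singling out the \emph{optimal} Player 1 strategy (the one minimising, at each compatible position, the rank of the payoff tree remaining for Player 2) and observing that the collection of strategies together with rank-comparison certificates of optimality forms a $\SI11$-singleton computable from the open game.

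Second, I would invoke Theorem \ref{theo:bigucbaire}, which establishes $\SiUC \equivsW \UC_\Bai$ (via the intermediate equivalences through $\SI11$-separation and $\DE11$-comprehension). In particular this yields a computable translation of $\SI11$-names of singletons in $\Bai$ into $\PI01$-names of the same singletons, so that $\SiUC \leqW \UC_\Bai$, as required for the chain.

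Composing these two reductions -- both are witnessed by computable maps, and $\leqW$ is transitive by definition -- gives the desired $\FindWSS \leqW \UC_\Bai$. I do not anticipate any real obstacle here: all the substantive content has already been absorbed into the optimal-strategy construction of Lemma \ref{lemma:findWSSigma} and the $\SI11$-to-$\PI01$ singleton translation implicit in Theorem \ref{theo:bigucbaire}, so the corollary reduces to a one-line transitivity chase.
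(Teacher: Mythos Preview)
Your proposal is correct and matches the paper's proof exactly: chain Lemma~\ref{lemma:findWSSigma} ($\FindWSS \leqW \SiUC$) with Theorem~\ref{theo:bigucbaire} ($\SiUC \equivsW \UC_\Bai$) and apply transitivity of $\leqW$. The only minor imprecision is your parenthetical that the reduction $\SiUC \leqW \UC_\Bai$ translates a $\SI11$-name of a singleton into a $\PI01$-name of the \emph{same} singleton---a (strong) Weihrauch reduction need not preserve the underlying singleton, only recover the element afterwards---but this does not affect the argument.
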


\begin{lemma}
\label{lemma:fixwinner} $\DetD \leqW \FindWSD$.
\begin{proof}
Given a $\DE01$-game $G$, we can compute the derived $\DE01$-game $G'$ where
the first player can decide whether to play $G$ as Player $1$, or as Player
$2$, and then proceed a play of a chosen side. Thus, Player $1$ can
definitely win $G'$, and a winning strategy of Player $1$ in $G'$ tells us
who wins $G$ and how.
\end{proof}
\end{lemma}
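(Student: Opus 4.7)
The plan is to realize the author's sketched construction in detail. Given a clopen game $G$ with payoff set $W \in \DE01(\Bai)$ for Player~1, I would build a derived clopen game $G'$ in which Player~1's opening move is a pair $\langle c, a\rangle \in \N$ with $c \in \{0,1\}$ encoding the intended side. If $c=0$, Player~1 commits to playing as Player~1 in $G$, with $a$ as their first $G$-move, and the subsequent alternation of $G'$ mirrors the alternation of $G$ (Player~$i$ in $G'$ = Player~$i$ in $G$). If $c=1$, the slot $a$ is discarded and the subsequent moves are read the other way: Player~2 in $G'$ supplies Player~1's moves in $G$, and Player~1 in $G'$ supplies Player~2's moves in $G$. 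Bundling $c$ with the possibly-dummy $a$ into a single opening move is what makes the construction parity-correct, since after the opening move it is Player~2 in $G'$ who moves next, matching Player~1 in $G$ moving first in the $c=1$ case.

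The winning condition for $G'$ reads: Player~1 in $G'$ wins a run $\langle c,a\rangle p_2 p_3 \dots$ iff either $c=0$ and $a p_3 p_4 \dots \in W$, or $c=1$ and $p_2 p_3 \dots \notin W$. Both clauses are clopen uniformly in $W$, so $G'$ is a clopen game computable from $G$. By Gale--Stewart applied to $G$ (a $\DE01$ game is in particular open for each player, so it is determined), one of the two players in $G$ has a winning strategy, and Player~1 in $G'$ can win by choosing the corresponding $c$ and following that strategy. Hence $G' \in \dom(\FindWSD)$.

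Now apply $\FindWSD$ to obtain a winning strategy $\sigma'$ for Player~1 in $G'$, and inspect $\sigma'(\langle\rangle)$ to recover $c$ and $a$. If $c=0$, unwind $\sigma'$ into a winning strategy $\sigma$ for Player~1 in $G$ by setting $\sigma(\langle\rangle)=a$ and using the obvious translation on longer positions; pair $\sigma$ with any computable strategy $\tau_0$ for Player~2 to output $(\sigma,\tau_0)\in\DetD(G)$. If $c=1$, instead extract from $\sigma'$ a winning strategy $\tau$ for Player~2 in $G$ and output $(\sigma_0,\tau)$ for any computable $\sigma_0$. In either case one of the two strategies in the returned pair is winning, which is exactly what $\DetD(G)$ demands. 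The only point requiring care is the bookkeeping just described so that the role-swap in the $c=1$ branch does not introduce a parity mismatch, and this is handled by the single-move encoding of $\langle c,a\rangle$; beyond that the construction and the extraction are entirely elementary and evidently uniformly computable, so the reduction is strong Weihrauch as well.
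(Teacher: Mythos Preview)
Your proof is correct and takes essentially the same approach as the paper's proof, filling in the bookkeeping (in particular the parity fix via the bundled opening move $\langle c,a\rangle$) that the paper leaves implicit. One minor indexing slip: in the $c=0$ clause the induced $G$-play should be $a\,p_2\,p_3\,\dots$, not $a\,p_3\,p_4\,\dots$, since after the opening move Player~2's reply $p_2$ in $G'$ is Player~2's first move in $G$.
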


\begin{lemma}
\label{lemma:findWSparallelizable} $\widehat{\FindWSD} \leqW \FindWSD$.
\begin{proof}
Given a sequence $G_0, G_1, \ldots$ of $\DE01$-games all won by Player $1$,
we combine them into a single \DE01 game where Player $2$ first chooses $n$,
and then the players play $G_n$. Player $1$ wins the combined game, and any
winning strategy in that game yields in the obvious way winning strategies
for every $G_i$.
\end{proof}
\end{lemma}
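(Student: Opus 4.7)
The plan is to formalize the construction sketched in the statement: given a sequence $(G_i)_{i \in \N}$ of clopen games each won by Player $1$, combine them into a single clopen game $G$ in which Player $2$'s opening move is interpreted as an index $n \in \N$ and the rest of the play is a play of $G_n$. The main tasks are then (a) to exhibit $G$ as a uniformly computable clopen game won by Player $1$, and (b) to extract winning strategies in each $G_i$ from a winning strategy in $G$.

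Writing $W_n, L_n \subseteq \Bai$ for the open winning and losing sets of $G_n$ (obtained from the clopen name of $G_n$), the winning and losing sets of $G$ are
\[ W \;=\; \{ n \fr p : n \in \N,\; p \in W_n \}, \qquad L \;=\; \{ n \fr p : n \in \N,\; p \in L_n \}. \]
Both are uniformly computable as open subsets of $\Bai$ from the sequence $(G_i)_i$, since computable countable unions of open sets are computable; this yields a clopen name for $G$. Moreover, $G$ is won by Player $1$: after Player $2$'s opening move $n$, Player $1$ simply follows a winning strategy for $G_n$. Equivalently, any putative Player $2$ winning strategy $\tau$ in $G$ with opening $n = \tau(\langle\rangle)$ would induce a Player $2$ winning strategy in $G_n$, contradicting the hypothesis. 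Hence $G$ is a valid input to $\FindWSD$.

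Applying $\FindWSD$ to $G$ returns a winning strategy $\sigma$ for Player $1$ in $G$. For each $n$, define $\sigma_n$ as the Player $1$ strategy in $G_n$ obtained by hard-wiring Player $2$'s opening move in $G$ to $n$ and thereafter following $\sigma$. Any Player $2$ counterstrategy to $\sigma_n$ in $G_n$ would lift, by prepending the opening move $n$, to a counterstrategy to $\sigma$ in $G$, contradicting that $\sigma$ wins $G$. Thus each $\sigma_n$ is a winning strategy in $G_n$, and the extraction is uniformly computable in $\sigma$ and $n$.

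There is no genuine obstacle; the construction is the standard \emph{game sum}. The only minor point of care concerns the turn convention, since the paper's setup has Player $1$ moving first: this is handled either by prepending a dummy Player $1$ opening move in $G$ or by shifting the turn parity in $G$ relative to each $G_n$. Either way the whole construction is uniform in the input sequence, and so the resulting reduction is in fact strong Weihrauch.
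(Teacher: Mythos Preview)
Your proposal is correct and follows exactly the same approach as the paper: form the game sum where Player~$2$'s first move selects an index $n$ and the remaining play is $G_n$, then read off Player~$1$ winning strategies in each $G_n$ from a winning strategy in the combined game. You have simply supplied more detail (the explicit computability of the clopen name, the extraction map, and the turn-parity fix) than the paper's two-sentence proof.
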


Let $\mathbb{S}_\mathcal{B}$ denote the space of Borel-truth values
(cf.~\cite{pauly-gregoriades,pauly-ordinals}). Roughly speaking, if $p$ is a
Borel code of a Borel subset $A$ of the singleton space $\{\bullet\}$, then
we think of $p$ as a name of $\top$ ($\bot$, resp.)\ iff $A\not=\emptyset$
($A=\emptyset$, resp.); if $p$ is not a Borel code, $p$ is not in the domain
of the representation.

\begin{lemma}
\label{lemma:boreltodelta} $\left ( \id : \mathbb{S}_\mathcal{B} \to
\mathbf{2} \right ) \leqW \DetD$.
\begin{proof}
A Borel code can be viewed as a well-founded tree whose even-levels
(odd-levels, resp.)\ consist of $\exists$-vertices ($\forall$-vertices,
resp.)\ and leaves are labeled by either $\top$ or $\bot$ (corresponding to
either $\{\bullet\}$ or $\emptyset$) \cite{pauly-gregoriades,pauly-ordinals}.
We can turn a $\mathbb{S}_\mathcal{B}$-name into a $\DE01$-game by letting
Player 1 control the $\exists$-vertices, Player 2 the $\forall$-vertices,
make the $\top$-leaves winning for Player $1$ and the $\bot$-leaves losing.
Then Player $1$ has a winning strategy iff the value of the root is $\top$.
Given a Nash equilibrium $(\sigma, \tau)$ we can compute the leaf reached by
the induced play, and find it to be equal to the truth value of the root.
\end{proof}
\end{lemma}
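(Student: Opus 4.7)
The plan is to convert a Borel code for a truth value into a clopen game whose determinacy value matches that truth value. Recall from \cite{pauly-gregoriades,pauly-ordinals} that a name for $v \in \mathbb{S}_\mathcal{B}$ may be presented as a well-founded $\mathbb{N}$-branching tree $T$ whose even levels are tagged as $\exists$-vertices, whose odd levels are tagged as $\forall$-vertices, and whose leaves carry a label in $\{\top,\bot\}$. The truth value $v$ is computed bottom-up: at a leaf it is the label; at an $\exists$-vertex it is the disjunction of the children; at an $\forall$-vertex it is the conjunction of the children. The whole tree evaluates to $v$ at the root.

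First, I would turn such a tree $T$ into a $\DE01$-game on $\Bai$: Player 1 picks the children at $\exists$-vertices, Player 2 picks the children at $\forall$-vertices, and whoever is to move after the game has entered a leaf just plays arbitrary moves that do not affect the outcome. Player 1 wins a play $\pi$ iff the finite initial segment of $\pi$ determined by the tree reaches a $\top$-labeled leaf. Since $T$ is well-founded, every play must reach a leaf after finitely many moves, so the winning condition depends only on a finite prefix of $\pi$; hence it is simultaneously open and closed. Moreover, a Borel code readily yields open and closed names for this winning set, so the translation $T \mapsto G_T \in \DE01(\Bai)$ is computable.

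A straightforward induction on $T$ shows that Player 1 has a winning strategy in $G_T$ iff $v = \top$, and otherwise Player 2 does. Thus any Nash equilibrium $(\sigma,\tau) \in \DetD(G_T)$ has the property that simulating the induced play reaches a leaf in finitely many steps, and the label of that leaf is exactly $v$. Reading off this label is computable, which gives the desired reduction $\left(\id : \mathbb{S}_\mathcal{B} \to \mathbf{2}\right) \leqW \DetD$.

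The main thing to be careful about is the claim that the winning condition is clopen as a subset of $\Bai$, since the underlying tree $T$ may be infinitely branching and of transfinite rank. The point is that clopenness here refers to the winning set $W \subseteq \Bai$, not to rank of the game tree: well-foundedness of $T$ guarantees that the function assigning to a play its (finite) point of entry into a leaf is continuous with finite modulus, so the preimage of a leaf label is a clopen subset of $\Bai$. With that pinned down, the rest of the construction is routine tracking of the induced play.
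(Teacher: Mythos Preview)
Your proof is correct and follows essentially the same approach as the paper: both interpret the Borel code as a well-founded $\exists/\forall$-tree, turn it into the natural alternating game, and read off the root's truth value from the leaf reached by the Nash equilibrium play. You have merely been more explicit than the paper about why the resulting winning set is $\DE01$ and why the translation is computable.
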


\begin{proof}[Proof of Theorem \ref{theo:findWSSigma}]
As shown in \cite[Theorem 80]{pauly-ordinals}, $\UC_\Bai \leqW \widehat{\left ( \id :
\mathbb{S}_\mathcal{B} \to \mathbf{2} \right )}$. By Lemma
\ref{lemma:boreltodelta}, the latter is reducible to $\widehat{\DetD}$. This
is reducible to $\widehat{\FindWSD}$ by Lemma \ref{lemma:fixwinner}, which in
turn reduces to $\FindWSD$ by Lemma \ref{lemma:findWSparallelizable}.
$\FindWSD \leqW \DetD$ is trivial, and so is $\FindWSD \leqW \FindWSS$.
$\FindWSS \leqW \UC_\Bai$ follows by Corollary \ref{corr:findwssigma}.
\end{proof}

As in the case of the perfect tree theorem (Corollary \ref{cor:PTT-nonuniform}), the results in this section can be viewed as a refinement of the following known result \cite{Blass:ComWinStr}:

\begin{corollary}
For any open game, either the open player has a hyperarithmetical winning strategy or the closed player has a hyperlow winning strategy.
\end{corollary}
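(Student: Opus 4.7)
The plan is to combine classical open determinacy (true in $\mathrm{ZFC}$) with the non-uniform content of the two uniform reductions already established in this section, namely $\FindWSS \leqW \UC_\Bai$ (Corollary \ref{corr:findwssigma}) and $\FindWSP \leqW \C_\Bai$ (Proposition \ref{prop:findwspi}). First I would invoke open determinacy to conclude that either the open player or the closed player has some winning strategy, and then treat the two cases independently using the respective reduction.

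For the open-player case, I would reuse the construction in the proof of Lemma \ref{lemma:findWSSigma}. Given a code for the open payoff set (equivalently, a tree $T$ describing its complement), that proof attaches to each Player $1$ strategy $\sigma$ a family of certificates of optimality (certain $\sqsubset$-preserving maps between the trees $T^v_i$), and shows that the set of pairs $(\sigma,\text{certificates})$ is a closed subset of Baire space, computable in the game, which is a singleton exactly when Player $1$ wins. Corollary \ref{corr:uc-hyppoints} (or, equivalently, Spector's theorem that every $\Sigma^1_1$-singleton is hyperarithmetical) then gives that the canonical optimal strategy is hyperarithmetical in a code of the game.

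For the closed-player case, I would reuse the construction in the proof of Proposition \ref{prop:findwspi}. Given the open winning condition $W$ and a Player $2$ strategy $\tau$, the tree $T_{W,\tau}$ describes the options left to Player $1$; the strategy $\tau$ is winning iff $T_{W,\tau}$ is pruned, i.e.~admits a witness $\lambda : \N^{*} \to \N$ with $\forall v\in T_{W,\tau}\ v\lambda(v)\in T_{W,\tau}$. The set of pairs $(\tau,\lambda)$ with $\lambda$ a pruning witness for $T_{W,\tau}$ is a $\Pi^0_1$ subset of Baire space, computable from the game, and non-empty exactly when Player $2$ wins. Applying Gandy's basis theorem to this $\Pi^0_1$ set yields a hyperlow pair, whose projection is the desired hyperlow winning strategy.

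The only real point to watch is that everything relativizes uniformly in the presentation of the game, so that the terms \emph{hyperarithmetical} and \emph{hyperlow} are to be read relative to a code of the game; but both constructions are arithmetical in that code, hence relativization is automatic. No substantive obstacle is expected — the corollary is essentially the non-uniform shadow of Theorem \ref{theo:findWSSigma} and Proposition \ref{prop:findwspi} combined with classical open determinacy.
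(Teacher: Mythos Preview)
Your proposal is correct and follows essentially the same approach the paper intends: the corollary is presented there without a detailed proof, merely as the non-uniform content of Proposition \ref{prop:findwspi} (via Gandy's basis theorem) and Theorem \ref{theo:findWSSigma}, exactly as you outline. One small imprecision: in the open-player case the closed set of pairs $(\sigma,\text{certificates})$ need not itself be a singleton (the certificates are not unique), but its projection to the $\sigma$-coordinate is, making $\{\sigma\}$ a $\Sigma^1_1$ singleton --- which is why you correctly invoke Spector's theorem rather than just Corollary \ref{corr:uc-hyppoints}.
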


\section{The two-sided versions of $\mathsf{PTT}$ and open determinacy}
\label{sec:twosided}
Rather than demanding a promise about the case of the theorem we are in, we could alternatively consider the task completely uniformly. As distinguishing the two cases is a $\Pi^1_1$-complete question (cf.\ the well-known equation $\Game\Sigma^0_1=\Pi^1_1$), the fully uniform task should {\bf not} include the information in which case we are. A priori, since we considered two versions of listing, we also have the two corresponding version of the two-sided perfect tree theorem. We are left with the following formulations:

\begin{definition}
$\wPTT2 : \Tr \mto \Tr \times \Bai$ has $(T',\langle
b_0p_0,b_1p_1,b_2p_2,\ldots \rangle) \in \wPTT2(T)$ iff one of the following
holds:
\begin{itemize}
  \item $T'$ is a perfect subtree of $T$;
  \item $[T] = \{p_i \mid b_i \neq 0\}$
\end{itemize}
\end{definition}

\begin{definition}
$\PTT2 : \Tr \mto \Tr \times \Bai$ has $(T',n\langle p_0,p_1,p_2,\ldots
\rangle) \in \PTT2(T)$ iff one of the following holds:
\begin{itemize}
  \item $T'$ is a perfect subtree of $T$;
  \item $n = 0$, $p_i \neq p_j$ for $i \neq j$ and $[T] = \{p_i \mid i \in
      \N\}$;
  \item $n > 0$, $|[T]| = n - 1$ and $[T] = \{p_i \mid i < n - 1\}$.
\end{itemize}
\end{definition}

\begin{definition}
$\DetS : \mathcal{O}(\Bai) \mto \Bai \times \Bai$ maps an open game $W$ to a
pair of strategies $\sigma$, $\tau$ such that either $\sigma$ is winning for
Player $1$ or $\tau$ is winning for Player $2$.
\end{definition}

These variants are strictly harder than the non-uniform ones (which are
Weihrauch reducible to $\C_\Bai$ by the results of Section
\ref{sec:onesided}). To see that, let $\chiP: \Bai \to \mathbf{2}$ be the
characteristic function of a $\Pi^1_1$-complete set. Since the single-valued
functions between computable Polish spaces which are Weihrauch reducible to
$\C_\Bai$ are exactly those that are effectively Borel measurable
(\cite[Theorem 7.7]{paulybrattka}), and $\chiP$ is not such, we have $\chiP
\nleqW \C_\Bai$.

\begin{observation}
$\chiP \leqW \lpo' \star \wPTT2$ and $\chiP \leqW \lpo \star \DetS$.
\begin{proof}
Deciding whether $[T]$ is uncountable and who wins a $\SI01$-game are
$\PI11$/$\SI11$-complete decision problems. Given trees $T'$ and $T$, we can
use $\lpo'$ to decide whether or not $T'$ is a perfect subtree of $T$. Given
a Nash equilibrium $(\sigma,\tau)$ of a $\SI01$-game, we can compute the
induced play and then use $\lpo$ to decide who wins that play -- and this is
the same player that has a winning strategy in the game.
\end{proof}
\end{observation}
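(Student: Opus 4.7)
The plan is to exploit the classical fact that the set of trees $T\subseteq\N^{<\N}$ with uncountable $[T]$ is a $\Sigma^1_1$-complete set, and symmetrically that the set of open games won by Player~$1$ is also $\Sigma^1_1$-complete (equivalently, $\boldsymbol\Pi^1_1$ for the closed side). So for each of the two reductions it suffices to \emph{compute} a suitable instance from a given input $x$ and then use $\wPTT2$ (respectively $\DetS$) to produce a disjunctive witness from which the single bit $\chiP(x)$ can be read off with only a small amount of additional work --- small enough that $\lpo'$ or $\lpo$ suffices.

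For the first reduction, from an input $x$ I would first compute a tree $T_x\subseteq\N^{<\N}$ so that ``$x$ lies in the fixed $\Pi^1_1$-complete set'' is equivalent to ``$[T_x]$ is countable''; such a map exists by the completeness of uncountability of $[T]$ as a $\Sigma^1_1$ problem (this is where the equation $\Game\boldsymbol\Sigma^0_1=\boldsymbol\Pi^1_1$ is not even needed, though it is used for the other reduction). Applying $\wPTT2$ to $T_x$ returns a pair $(T',\ell)$ in which either $T'$ is a perfect subtree of $T_x$ or $\ell$ is a listing of $[T_x]$. To recover $\chiP(x)$ it then suffices to decide the conjunction ``$T'\subseteq T_x$ and every node of $T'$ admits two incomparable extensions in $T'$'', which is a $\Pi^0_2$ condition on the output and the original input. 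Since $\lpo'$ is exactly the jump of $\lpo$ and therefore decides $\Pi^0_2$ questions, this can be carried out by a single call to $\lpo'$ composed with the computable data, giving $\chiP\leqW\lpo'\star\wPTT2$.

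For the second reduction, from $x$ I would compute an open set $W_x\subseteq\Bai$ coding a sequential game so that Player~$1$ has a winning strategy in $W_x$ iff $\chiP(x)=1$; this uses the $\Sigma^1_1$-completeness of ``Player~$1$ wins the open game''. Applying $\DetS$ to $W_x$ returns a Nash equilibrium $(\sigma,\tau)$, from which the induced play $\sigma\ast\tau\in\Bai$ is computable. The winner of the game equals the winner of that induced play, so to read off $\chiP(x)$ one only needs to decide whether $\sigma\ast\tau\in W_x$, an open condition and hence decidable by a single call to $\lpo$; this yields $\chiP\leqW\lpo\star\DetS$.

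The main subtlety, and the only step I would want to double-check, is the computability of the two encodings $x\mapsto T_x$ and $x\mapsto W_x$ realising the respective completeness statements: they must be \emph{computable} reductions (not merely continuous, and not going through $\lpo'$ or $\lpo$ themselves), so that the $\lpo'$ or $\lpo$ call lands outside the computable reduction diagram of the Weihrauch product. Both are standard constructions --- a Kleene--Brouwer style coding for the first, and the usual representation of $\Sigma^1_1$ sets as Player~$1$'s winning positions in an open game for the second --- and once this is in place the rest of the argument is just observing the complexity of the verification predicates above.
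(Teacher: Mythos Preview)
Your proposal is correct and follows essentially the same approach as the paper: reduce to an instance using the $\SI11$/$\PI11$-completeness of the relevant dichotomy, then observe that ``$T'$ is a perfect subtree of $T$'' is $\PI02$ (hence decidable by $\lpo'$) and that the winner of the induced play of a Nash equilibrium is the winner of the game, decidable by $\lpo$ since the winning condition is open. The only minor slip is a sign: since $\chiP$ is the characteristic function of a $\PI11$-complete set while ``Player~1 wins the open game'' is $\SI11$-complete, you want $\chiP(x)=1$ iff Player~2 wins $W_x$ (or equivalently negate at the end), but this is immaterial to the argument.
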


\begin{corollary}\label{cor:two-sided-det-ptt}
$\C_\Bai \leW \wPTT2 \leqW \PTT2$ and $\C_\Bai \leW \DetS$.
\begin{proof}
Using the fact that $\C_\Bai$ is closed under composition \cite[Corollary
7.6]{paulybrattka} we have $\chiP \nleqW \C_\Bai \equivW \lpo \star \C_\Bai
\equivW \lpo' \star \C_\Bai$.
\end{proof}
\end{corollary}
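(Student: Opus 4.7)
The claim breaks into three reducibilities and two strictness assertions, and I would handle them in that order, with the strictness assertions following from the preceding observation together with closure of $\C_\Bai$ under the compositional product.

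\textbf{Reducibilities.} For $\C_\Bai \leqW \wPTT2$, I would lift the idea from Proposition \ref{prop:ptt1}: from any $A \in \mathcal{A}(\Bai)$ compute a tree $T \subseteq \N^{<\N}$ with $[T] = A \times \Bai$ (using the pairing identification $\Bai \times \Bai \cong \Bai$). When $A$ is non-empty, $[T]$ is uncountable, so no valid $\wPTT2$-output can be of the listing form, since that would require $[T]$ to equal a countable set. Thus any realizer is forced to return a perfect subtree $T'$, from which we compute some path and project on the first coordinate to obtain a point of $A$. The reduction $\wPTT2 \leqW \PTT2$ is purely syntactic: given a $\PTT2$-output either output the perfect subtree verbatim, or else reformat the listing $n\langle p_0,p_1,\dots\rangle$ into $\langle b_0p_0, b_1p_1,\dots\rangle$ by setting $b_i = 1$ when the point $p_i$ is listed and $b_i = 0$ otherwise (choosing arbitrary placeholder reals in the zero positions). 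For $\C_\Bai \leqW \DetS$, I would invoke $\C_\Bai \equivW \FindWSP$ from Proposition \ref{prop:findwspi}: given an open game where Player~$1$ has no winning strategy, apply $\DetS$ to obtain a pair $(\sigma,\tau)$; by the Nash-equilibrium property one of the two is a winning strategy, and by the promise it must be $\tau$.

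\textbf{Strictness.} By the observation just preceding the statement, $\chiP \leqW \lpo' \star \wPTT2$ and $\chiP \leqW \lpo \star \DetS$. Since $\lpo \leqW \lim \leqW \C_\Bai$ and $\lpo' \leqW \widehat{\lpo} \star \lpo \equivW \lim \leqW \C_\Bai$ (or more bluntly, both $\lpo$ and $\lpo'$ are effectively Borel measurable and hence reducible to $\C_\Bai$ by \cite[Theorem 7.7]{paulybrattka}), closure of $\C_\Bai$ under compositional product (\cite[Corollary 7.6]{paulybrattka}) yields
\[
\lpo \star \C_\Bai \equivW \lpo' \star \C_\Bai \equivW \C_\Bai.
\]
If either $\wPTT2 \leqW \C_\Bai$ or $\DetS \leqW \C_\Bai$ held, monotonicity of $\star$ would give $\chiP \leqW \C_\Bai$. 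But $\chiP$ is not effectively Borel measurable whereas every single-valued map Weihrauch-reducible to $\C_\Bai$ is, by \cite[Theorem 7.7]{paulybrattka}. This contradiction establishes both strict inequalities.

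\textbf{Main obstacle.} There is no genuinely hard step here: all three reducibilities are direct once one notices that the uncountability of $A \times \Bai$ forces the $\wPTT2$-realizer's hand, and the strictness argument is packaged by the preceding observation. The only point requiring mild care is verifying $\lpo' \leqW \C_\Bai$ so that the composition identities absorb the $\lpo'$ factor; this is where one relies on $\lim \leqW \C_\Bai$ (equivalently, that $\C_\Bai$ exceeds $\ACA$).
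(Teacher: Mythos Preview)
Your proposal is correct and follows essentially the same approach as the paper: the strictness argument is exactly the paper's---use the preceding observation that $\chiP \leqW \lpo' \star \wPTT2$ and $\chiP \leqW \lpo \star \DetS$, absorb $\lpo$ and $\lpo'$ into $\C_\Bai$ via closure under $\star$, and conclude from $\chiP \nleqW \C_\Bai$. The only difference is that you spell out the three non-strict reductions explicitly (via Propositions~\ref{prop:ptt1} and~\ref{prop:findwspi} and the syntactic reformatting of listings), whereas the paper leaves these implicit.
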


In particular, we find that $\FindWSS \leW \DetS$ and $\FindWSP \leW \DetS$.
Thus, knowing who wins a $\SI01$-game makes it strictly easier to find a Nash
equilibrium. This is in contrast to $\DE01$-games (as seen in Theorem
\ref{theo:findWSSigma}), as well as to games on Cantor space with winning
sets in the difference hierarchy over $\SI01$ (cf.~\cite{paulyleroux3-cie}).
Knowing who wins the game allows for constructions such as the one used in
Lemma \ref{lemma:findWSparallelizable} to conclude that finding a winning
strategy is parallelizable (i.e.\ $\widehat{\FindWSS} \equivW \FindWSS$ and
$\widehat{\FindWSP} \equivW \FindWSP$). We will see in Corollary
\ref{corr:detsigmanparallel} below that this is not just an obstacle for the
proof strategy, but that the result differs for $\DetS$.

\subsubsection*{If then else}

As we have seen, many theorems equivalent to ${\rm ATR}_0$ are described as
{\em dichotomy}-type theorems: Exactly one of $A$ or $B$ holds. Thus, it is
natural to consider the following if-then-else problem for a given dichotomy
$A$ xor $B$: Provide two descriptions $(\alpha,\beta)$ trying to verify $A$
and $B$ simultaneously. If $A$ is true, then $\alpha$ is a correct proof
validating $A$; or else $\beta$ is a correct proof of $B$, where we do not
need to know which one is correct. We formalize this idea as follows.

A space of truth values is just a represented space $\mathbb{B}$ with underlying set $\{\top,\bot\}$.

\begin{definition}
Let $\mathbb{B}$ be a space of truth values. For $f : \subseteq \mathbf{X} \mto \mathbf{Y}$ and $g : \subseteq \mathbf{A} \mto \mathbf{B}$, we define
\[ \Wif{\mathbb{B}}{f}{g} : \subseteq \mathbb{B} \times \mathbf{X} \times \mathbf{A} \mto \mathbf{Y} \times \mathbf{B}\]
via $(b,x_0,x_1) \in \dom(\Wif{\mathbb{B}}{f}{g})$ iff $b = \top$ and $x_0 \in \dom(f)$ or $b = \bot$ and $x_1 \in \dom(g)$, and $(y_0,y_1) \in \Wif{\mathbb{B}}{f}{g}(b,x_0,x_1)$ iff $b = \top$ and $y_0 \in f(x_0)$ or $b = \bot$ and $y_1 \in g(x_1)$.
\end{definition}

Note that the degree of $\Wif{\mathbb{B}}{f}{g}$ depends on the precise
choice of spaces for domain and codomains involved, beyond what matters for
where $f$ and $g$ are actually defined and are taking their range. In
particular, $\Wif{\mathbb{B}}{f}{g}$ is not an operation on Weihrauch
degrees\footnote{Let $\mathbf{X}$ be the represented space of the
non-computable elements of \Bai, and $f: \subseteq \Bai \to \Bai$ the
restriction of $\id_\Bai$ to the non-computable elements ($\id_\mathbf{X}$
and $f$ are the same function, but defined on different spaces); then
$\id_\mathbf{X} \equivW f$, yet $\Wif{\mathbb{S}}{f}{\id_\Bai} \nleqW
\Wif{\mathbb{S}}{\id_\mathbf{X}}{\id_\Bai}$ because the former has computable
inputs while the latter does not.}.

\subsubsection*{The upper bound}
Let $\mathbb{S}_{\SI11}$ be the space of truth values where $p$ is a name for
$\top$ iff $p$ codes an ill-founded tree, and a name for $\bot$ iff it codes
a well-founded tree.

In the proofs of Propositions \ref{prop:ptt1} and \ref{prop:findwspi}, we
constructed closed sets containing information over the perfect subtrees or
the winning strategies of Player 2 respectively. In particular, by testing
whether these are empty or not, we can decide in which case we are, and
obtain the answer in $\mathbb{S}_{\SI11}$. Thus, by combining Proposition
\ref{prop:ptt1} and Theorem \ref{theo:list}, respectively Proposition
\ref{prop:findwspi} and Theorem \ref{theo:findWSSigma}, we obtain the
following:

\begin{corollary}\label{cor:ptt-ifthenelse}
$\PTT2 \leqW \Wif{\mathbb{S}_{\SI11}}{\C_\Bai}{\UC_\Bai}$.
\end{corollary}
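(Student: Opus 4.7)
The plan is to combine the reductions $\PTT1 \leqW \C_\Bai$ from Proposition \ref{prop:ptt1} and $\List \leqW \UC_\Bai$ from Theorem \ref{theo:list}, using an $\SI11$-truth value to select which branch of the if-then-else is active. Given input $T \in \Tr$, I first compute an $\mathbb{S}_{\SI11}$-value $b$ that is $\top$ iff $[T]$ is uncountable. To do this I encode uncountability as the $\SI11$-statement ``$T$ has a non-empty perfect subtree'', the two being equivalent by the perfect tree theorem (viewed here as a classical fact of the meta-theory, not something being reduced). Since the defining matrix ``$T' \subseteq T$ and $T'$ is non-empty and perfect'' is $\PI02$ in $\Tr \times \Tr$, Proposition \ref{pro:closprop}(5) together with the identification $\SI11(\Tr) \cong \mathcal{C}(\Tr, \mathbb{S}_{\SI11})$ yields $b$ computably from $T$.

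Next I prepare the two branch-inputs. For the $\C_\Bai$-branch I take $x_0 \in \mathcal{A}(\Bai)$ to be the closed set of pairs $(T', \lambda)$ constructed in the proof of Proposition \ref{prop:ptt1}, which is non-empty whenever $[T]$ is uncountable and each of whose points projects to a perfect subtree of $T$. For the $\UC_\Bai$-branch, Theorem \ref{theo:list} establishes $\List \leqW \UC_\Bai$ in a single call via the absorption $\UC_\Bai \star \UC_\Bai \equivW \UC_\Bai$, from which I extract computable maps $H$ and $K$ such that, whenever $[T]$ is countable, $H(T) \in \mathcal{A}(\Bai)$ is a $\PI01$-singleton and $K(T, p) \in \List(T)$ for $p$ the unique point of $H(T)$. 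I set $x_1 := H(T)$.

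Feeding $(b, x_0, x_1)$ into $\Wif{\mathbb{S}_{\SI11}}{\C_\Bai}{\UC_\Bai}$ produces some $(y_0, y_1)$, from which I project a tree $T'$ out of $y_0$ and compute $L := K(T, y_1)$. The pair $(T', L)$ is then a valid $\PTT2$-answer: when $b = \top$ the first disjunct in the definition of $\PTT2$ is witnessed by $T'$, while when $b = \bot$ one of the remaining disjuncts is witnessed by $L$; in either case the semantics of the if-then-else guarantees that the relevant component is correct, and the other may be arbitrary garbage.

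The one subtlety to watch out for is that the $\UC_\Bai$-branch must deliver the structured version $\List$ rather than only $\wList$, since the if-then-else idiom permits just a single query to $\UC_\Bai$ and the $\Pi^0_1$-singleton must be fixed before that query. This is precisely what Theorem \ref{theo:list} provides, by folding the two calls in $\List \leqW \UC_\Bai \star \wList$ into one via $\UC_\Bai \star \UC_\Bai \equivW \UC_\Bai$.
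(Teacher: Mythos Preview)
Your proposal is correct and follows essentially the same approach as the paper: compute an $\mathbb{S}_{\SI11}$-value recording whether $[T]$ is uncountable, feed the closed set of (perfect subtree, modulus)-pairs from Proposition~\ref{prop:ptt1} to the $\C_\Bai$-branch, and use the reduction $\List \leqW \UC_\Bai$ from Theorem~\ref{theo:list} for the other branch. The only cosmetic difference is that the paper obtains $b$ directly by applying $\exists$ (Lemma~\ref{lemma:sigma11truthvalues}(2)) to that same closed set rather than via a separate $\PI02$ description, but this yields the same truth value.
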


\begin{corollary}\label{cor:det-ifthenelse}
$\DetS \leqW \Wif{\mathbb{S}_{\SI11}}{\C_\Bai}{\UC_\Bai}$.
\end{corollary}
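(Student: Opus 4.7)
The plan is to closely follow the derivation of Corollary \ref{cor:ptt-ifthenelse}, substituting the game-theoretic constructions of Proposition \ref{prop:findwspi} and Theorem \ref{theo:findWSSigma} for the perfect-tree constructions. The two ingredients to be combined are: the closed set of (certified) winning strategies for Player $2$ (which supplies the $\C_\Bai$ branch), and the reduction $\FindWSS \leqW \UC_\Bai$ from Corollary \ref{corr:findwssigma} (which supplies the $\UC_\Bai$ branch). The $\mathbb{S}_{\SI11}$-valued bit telling us which branch to use comes from asking whether the closed set is non-empty.

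Concretely, given the input $W \in \mathcal{O}(\Bai)$ to $\DetS$, I would first recall from the proof of Proposition \ref{prop:findwspi} the closed set
\[
D_W = \{(\tau,\lambda) \in \Bai \times \Bai \mid \lambda \text{ is a witness of prunedness of } T_{W,\tau}\},
\]
which is uniformly computable from $W$ as an element of $\mathcal{A}(\Bai)$. By construction, $D_W$ is non-empty iff Player $2$ has a winning strategy in $W$, and in that case projecting any member of $D_W$ onto its first coordinate yields such a strategy. Viewing $D_W$ as a $\SI11(\Bai)$-set via Proposition \ref{pro:closprop}(4) and applying the existential operator of Lemma \ref{lemma:sigma11truthvalues}(2), I would then compute the truth value $b \in \mathbb{S}_{\SI11}$ of ``$D_W \neq \emptyset$''; by open determinacy we have $b = \top$ iff Player $2$ wins and $b = \bot$ iff Player $1$ wins.

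Letting $H$ denote the forward reduction witnessing $\FindWSS \leqW \UC_\Bai$ (Corollary \ref{corr:findwssigma}), I would now feed the triple $(b, D_W, H(W))$ to $\Wif{\mathbb{S}_{\SI11}}{\C_\Bai}{\UC_\Bai}$. When $b = \top$, the invocation of $\C_\Bai$ returns some $(\tau,\lambda) \in D_W$, and I extract $\tau$ and pair it with any fixed computable strategy $\sigma_0$. When $b = \bot$, open determinacy guarantees that $H(W)$ lies in the domain of $\UC_\Bai$, and composing the $\UC_\Bai$-output with the backward map of the reduction $\FindWSS \leqW \UC_\Bai$ produces a winning strategy $\sigma$ for Player $1$, which I pair with any fixed $\tau_0$. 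In both cases the resulting pair is a Nash equilibrium, so the composite is a Weihrauch reduction from $\DetS$ to $\Wif{\mathbb{S}_{\SI11}}{\C_\Bai}{\UC_\Bai}$.

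The only real subtlety is the need for $D_W$ to be constructible uniformly from $W$ as a \emph{closed} set regardless of whether Player $2$ actually wins, so that its non-emptiness coincides exactly with the $\SI11$-predicate ``Player $2$ wins''; this is already addressed in the proof of Proposition \ref{prop:findwspi}. The apparent issue that $H(W)$ need not be a $\UC_\Bai$-instance when Player $2$ wins is harmless, because the domain condition on $\Wif{\mathbb{S}_{\SI11}}{\C_\Bai}{\UC_\Bai}$ only requires $H(W) \in \dom(\UC_\Bai)$ in the branch $b = \bot$, which is precisely the branch where Player $1$ wins. Beyond these checks, the argument is bookkeeping.
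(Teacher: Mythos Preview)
Your approach is exactly the paper's: its entire proof is the phrase ``by combining Proposition~\ref{prop:findwspi} and Theorem~\ref{theo:findWSSigma}'', and you have correctly supplied $D_W$ as the $\C_\Bai$ instance, $H(W)$ as the $\UC_\Bai$ instance, and non-emptiness of $D_W$ as the $\mathbb{S}_{\SI11}$-bit.

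There is, however, a slip in your post-processing. The oracle $\Wif{\mathbb{S}_{\SI11}}{\C_\Bai}{\UC_\Bai}$ returns a pair $(y_0,y_1)\in\Bai\times\Bai$ but does \emph{not} reveal $b$, so the outer reduction cannot branch on $b$ the way your phrasing (``when $b=\top$ \ldots\ pair with $\sigma_0$; when $b=\bot$ \ldots\ pair with $\tau_0$'') suggests: you describe two different outputs for the two cases, yet the reduction must produce one output uniformly. The fix is to always set $\tau=\mathrm{proj}_1(y_0)$ and always compute $\sigma$ from $(W,y_1)$ via the backward witness $K$ of $\FindWSS\leqW\UC_\Bai$, output the single pair $(\sigma,\tau)$, and only then verify correctness by case analysis on $b$. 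This in turn exposes a second technicality you (and the paper) skip: $K(W,y_1)$ must be a total element of $\Bai$ even when $b=\top$ and $y_1$ is arbitrary, which is not automatic for a generic Weihrauch reduction; one has to choose the reduction so that the backward map is, say, a projection.
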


As $\UC_\Bai \leqW \C_\Bai$, it follows that
$\Wif{\mathbb{S}_{\SI11}}{\C_\Bai}{\UC_\Bai} \leqW \C_\Bai \star \chiP$. In
particular, the difference between
$\Wif{\mathbb{S}_{\SI11}}{\C_\Bai}{\UC_\Bai}$ and $\C_\Bai$ disappears if we
move from Weihrauch reducibility to computable reducibility. It follows
immediately that Gandy's basis theorem applies to $\DetS$: Every $\SI01$-game
has a Nash equilibrium that is hyperlow relative to the game.

\subsubsection*{Idempotency}
We can show a kind of absorption result for the if-then-else construction.
Recall that $\mathsf{NHA}$ asks for an output that is not hyperarithmetic
relative to the input.
\begin{proposition}
\label{prop:nahabsorptions} Let $g$ have a hyperarithmetical point $\rho$ in
its codomain. If we have $f \times \mathsf{NHA} \leqW
\Wif{\mathbb{B}}{g}{\UC_\Bai}$, then $f \leqW g$.
\begin{proof}
Any $x \in \dom(f)$ is provided in the form of some name $p_x$, which is a
valid input to $\mathsf{NHA}$. If some $(x,p_x) \in \dom(f \times
\mathsf{NHA})$ were mapped to some $(\bot,a,A)$ via the reduction, then $A =
\{q\}$ where $q$ is hyperarithmetical in $p_x$. Then $(\rho,q)$ is a valid
output of $\Wif{\mathbb{B}}{g}{\UC_\Bai}$, but we cannot compute a solution
to $\mathsf{NHA}(p_x)$ from $(\rho,q)$.

Thus, every $(x,p_x)$ gets mapped to $(\top,a_x,A)$ such that from $b \in
g(a_x)$ we can compute $y \in f(x)$ (since $(b,z)$ for any $z$, say
$(b,\emptyset)$, is a solution to the instance $(\top,a_x,A)$). This provides
the claimed reduction $f \leqW g$.
\end{proof}
\end{proposition}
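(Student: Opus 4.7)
The plan is to chase the reduction $f \times \mathsf{NHA} \leqW \Wif{\mathbb{B}}{g}{\UC_\Bai}$ on the inputs of the form $(x, p_x)$, where $x \in \dom(f)$ and $p_x$ is a name of $x$ (which simultaneously serves as an element of $\Bai = \dom(\mathsf{NHA})$). Fix computable $H, K$ witnessing this reduction in the sense of Definition~\ref{def:weihrauch}. Applied to (a name of) $(x, p_x)$, the forward functional $H$ produces a name of some instance $(b, a, A)$ of $\Wif{\mathbb{B}}{g}{\UC_\Bai}$, so that either $b = \top$ with $a \in \dom(g)$, or $b = \bot$ with $A$ a $\Pi^0_1(p_x)$-singleton.

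The key step, and the main obstacle, is to rule out the branch $b = \bot$. Suppose towards contradiction that on some input $(x, p_x)$ the output of $H$ is of the form $(\bot, a, A)$ with $A = \{q\}$. By the relativized version of Corollary~\ref{corr:uc-hyppoints}, the unique point $q$ is hyperarithmetical relative to $p_x$. Since $\rho$ is a hyperarithmetical point of the codomain of $g$, the pair $(\rho, q)$ is a legitimate solution of the $\Wif{\mathbb{B}}{g}{\UC_\Bai}$-instance (the first component is unconstrained in the $\bot$-branch). Feeding this back through $K$ yields a pair $(y, r)$ that by hypothesis solves $(f \times \mathsf{NHA})(x, p_x)$; but $K$ is computable and its inputs $p_x$, $\rho$, and $q$ are all hyperarithmetical in $p_x$, so $r$ is hyperarithmetical in $p_x$, contradicting $r \in \mathsf{NHA}(p_x)$.

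Consequently, for every $x \in \dom(f)$ and every name $p_x$ of $x$, $H$ necessarily produces an instance of the form $(\top, a_x, A)$ with $a_x \in \dom(g)$. From this we read off the desired reduction $f \leqW g$ as follows: given $p_x$, compute $H(\langle p_x, p_x\rangle)$ and project to obtain $a_x$; given any $c \in g(a_x)$ from the oracle, pair it with a default second component (say $0^\omega$) to form a valid solution $(c, 0^\omega)$ of $\Wif{\mathbb{B}}{g}{\UC_\Bai}(\top, a_x, A)$; then apply $K$ and project the first component to obtain an element of $f(x)$. The resulting composition is clearly a Weihrauch reduction in the sense of Definition~\ref{def:weihrauch}, completing the proof.
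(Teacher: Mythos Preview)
Your proof is correct and follows essentially the same approach as the paper's: rule out the $\bot$-branch by observing that the unique point $q$ of the $\Pi^0_1(p_x)$-singleton is hyperarithmetical in $p_x$, so feeding $(\rho,q)$ back through $K$ produces an output hyperarithmetical in $p_x$ and hence not a valid $\mathsf{NHA}(p_x)$-solution; then harvest the reduction $f \leqW g$ from the forced $\top$-branch. Your write-up is slightly more explicit than the paper's (citing Corollary~\ref{corr:uc-hyppoints} and spelling out the forward and backward functionals), but the argument is the same.
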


By Corollaries \ref{cor:two-sided-det-ptt}, \ref{cor:det-ifthenelse} and
\ref{cor:ptt-ifthenelse}, and Proposition \ref{prop:nahabsorptions} we get
the following:

\begin{corollary}
$\wPTT2 \times \mathsf{NHA} \nleqW
\Wif{\mathbb{S}_{\SI11}}{\C_\Bai}{\UC_\Bai}$.
\end{corollary}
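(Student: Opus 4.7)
The plan is to argue by contradiction and apply Proposition \ref{prop:nahabsorptions} directly, recognizing that this corollary is designed as an immediate consequence of the absorption principle combined with the strictness established in Corollary \ref{cor:two-sided-det-ptt}. Suppose toward contradiction that $\wPTT2 \times \mathsf{NHA} \leqW \Wif{\mathbb{S}_{\SI11}}{\C_\Bai}{\UC_\Bai}$. To invoke Proposition \ref{prop:nahabsorptions} with $f = \wPTT2$, $g = \C_\Bai$ and $\mathbb{B} = \mathbb{S}_{\SI11}$, I would verify its hypothesis: namely, that the codomain of $\C_\Bai$, which is $\Bai$, contains a hyperarithmetical point. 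This is trivial, since $0^\omega \in \Bai$ is computable.

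The proposition then yields $\wPTT2 \leqW \C_\Bai$. But Corollary \ref{cor:two-sided-det-ptt} asserts the strict reduction $\C_\Bai \leW \wPTT2$, i.e.\ in particular $\wPTT2 \nleqW \C_\Bai$, which is the desired contradiction. There is no substantive obstacle here; the entire content of the corollary lies in packaging Proposition \ref{prop:nahabsorptions} with the already-established fact that $\wPTT2$ is genuinely stronger than $\C_\Bai$, so the proof reduces to observing that the hypothesis on the codomain of $g$ is satisfied and citing the two preceding results.
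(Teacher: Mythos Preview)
Your proof is correct and follows exactly the paper's approach: the paper derives this corollary directly from Proposition \ref{prop:nahabsorptions} (with $g=\C_\Bai$, whose codomain $\Bai$ trivially contains a hyperarithmetical point) together with Corollary \ref{cor:two-sided-det-ptt}. There is nothing to add.
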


\begin{corollary}
$\DetS \times \mathsf{NHA} \nleqW
\Wif{\mathbb{S}_{\SI11}}{\C_\Bai}{\UC_\Bai}$.
\end{corollary}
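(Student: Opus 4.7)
The plan is to derive this directly from Proposition \ref{prop:nahabsorptions}, exactly in parallel with the $\wPTT2$ corollary immediately preceding it. I would instantiate the proposition with $f = \DetS$, $g = \C_\Bai$, and $\mathbb{B} = \mathbb{S}_{\SI11}$. The codomain of $\C_\Bai$ is $\Bai$, which obviously has hyperarithmetical points (say, $0^\omega$), so the hypothesis of Proposition \ref{prop:nahabsorptions} on $g$ is satisfied.

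Suppose, toward a contradiction, that $\DetS \times \mathsf{NHA} \leqW \Wif{\mathbb{S}_{\SI11}}{\C_\Bai}{\UC_\Bai}$. Then Proposition \ref{prop:nahabsorptions} yields $\DetS \leqW \C_\Bai$. However, Corollary \ref{cor:two-sided-det-ptt} establishes $\C_\Bai \leW \DetS$, i.e., strict Weihrauch reducibility in the other direction, which in particular gives $\DetS \nleqW \C_\Bai$. This is the desired contradiction.

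No new technical work is required: the whole argument is an application of the absorption lemma together with the separation $\C_\Bai \leW \DetS$ already in hand (which itself traced back to the fact that $\chi_{\PI11}$ is not effectively Borel measurable and hence $\chi_{\PI11} \nleqW \C_\Bai$, while $\chi_{\PI11} \leqW \lpo \star \DetS$). The only point worth double-checking is that the representation set-up matches the hypothesis of Proposition \ref{prop:nahabsorptions} — i.e., that $\C_\Bai$ is being regarded as a multivalued map into the represented space $\Bai$ rather than into some subspace lacking hyperarithmetical points — but this is the obvious and intended reading throughout the paper, so there is no genuine obstacle.
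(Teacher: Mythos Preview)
Your proposal is correct and follows exactly the same approach as the paper: apply Proposition \ref{prop:nahabsorptions} with $f=\DetS$, $g=\C_\Bai$, $\mathbb{B}=\mathbb{S}_{\SI11}$, and derive a contradiction with $\C_\Bai \leW \DetS$ from Corollary \ref{cor:two-sided-det-ptt}. The paper's brief justification cites Corollaries \ref{cor:two-sided-det-ptt}, \ref{cor:det-ifthenelse}, \ref{cor:ptt-ifthenelse} and Proposition \ref{prop:nahabsorptions} jointly for both this corollary and the preceding $\wPTT2$ one, but for the $\DetS$ statement specifically only \ref{cor:two-sided-det-ptt} and \ref{prop:nahabsorptions} are actually needed, precisely as you have it.
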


Using the corollaries above in conjunction with Corollary
\ref{corr:nhabasic}, we obtain:

\begin{corollary}
$\wPTT2 \times \C_\Bai \nleqW \PTT2$ and hence $\wPTT2 \times \wPTT2 \nleqW
\PTT2$.
\end{corollary}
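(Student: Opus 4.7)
The plan is to derive both non-reductions by chaining the previously established facts, using the preceding corollary $\wPTT2 \times \mathsf{NHA} \nleqW \Wif{\mathbb{S}_{\SI11}}{\C_\Bai}{\UC_\Bai}$ as the source of strictness. The strategy is entirely by contradiction; no new construction is needed.

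For the first claim, I would suppose for contradiction that $\wPTT2 \times \C_\Bai \leqW \PTT2$. By Corollary \ref{corr:nhabasic}, $\mathsf{NHA} \leqW \C_\Bai$, so that $\wPTT2 \times \mathsf{NHA} \leqW \wPTT2 \times \C_\Bai$ (parallel product is monotone in each component). Combining with our hypothesis and with Corollary \ref{cor:ptt-ifthenelse} ($\PTT2 \leqW \Wif{\mathbb{S}_{\SI11}}{\C_\Bai}{\UC_\Bai}$), transitivity of $\leqW$ gives
\[
\wPTT2 \times \mathsf{NHA} \leqW \Wif{\mathbb{S}_{\SI11}}{\C_\Bai}{\UC_\Bai},
\]
contradicting the corollary immediately preceding the statement. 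Hence $\wPTT2 \times \C_\Bai \nleqW \PTT2$.

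For the second claim, I would use Corollary \ref{cor:two-sided-det-ptt}, which yields $\C_\Bai \leqW \wPTT2$; monotonicity of the parallel product then gives $\wPTT2 \times \C_\Bai \leqW \wPTT2 \times \wPTT2$. If $\wPTT2 \times \wPTT2 \leqW \PTT2$ held, transitivity would yield $\wPTT2 \times \C_\Bai \leqW \PTT2$, contradicting the first claim.

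There is no real obstacle here: the content is entirely packaged in the preceding results (the absorption principle of Proposition \ref{prop:nahabsorptions}, and the upper bound for $\PTT2$). The only thing to be careful about is ensuring the monotonicity of $\times$ under $\leqW$ is invoked correctly and that we consistently cite the right input-component of the parallel product, so that no subtle typing issue with the represented spaces involved gets in the way.
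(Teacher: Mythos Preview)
Your proof is correct and follows essentially the same approach as the paper, which simply states that the result follows from the preceding corollaries together with Corollary~\ref{corr:nhabasic}; you have spelled out the chain of reductions that the paper leaves implicit.
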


\begin{corollary}\label{corr:detsigmanparallel}
$\DetS \times \C_\Bai \nleqW \DetS$ and hence $\DetS \times \DetS \nleqW
\DetS$.
\end{corollary}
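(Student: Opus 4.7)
The plan is to obtain both claims by short chains of reductions from results already established in this section, without any new technical construction. Specifically, I would combine Corollary \ref{cor:det-ifthenelse}, the preceding corollary stating $\DetS \times \mathsf{NHA} \nleqW \Wif{\mathbb{S}_{\SI11}}{\C_\Bai}{\UC_\Bai}$, Corollary \ref{corr:nhabasic} ($\mathsf{NHA} \leqW \C_\Bai$), and Corollary \ref{cor:two-sided-det-ptt} ($\C_\Bai \leqW \DetS$). The argument is purely formal, so the main obstacle is essentially bookkeeping: making sure the parallel product interacts correctly with the reductions involved.

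For the first claim, I would argue by contradiction. Suppose $\DetS \times \C_\Bai \leqW \DetS$. Since the parallel product $\times$ is monotone with respect to $\leqW$ on each factor, and $\mathsf{NHA} \leqW \C_\Bai$ by Corollary \ref{corr:nhabasic}, we have
\[
\DetS \times \mathsf{NHA} \leqW \DetS \times \C_\Bai \leqW \DetS.
\]
Chaining with $\DetS \leqW \Wif{\mathbb{S}_{\SI11}}{\C_\Bai}{\UC_\Bai}$ from Corollary \ref{cor:det-ifthenelse} would give
\[
\DetS \times \mathsf{NHA} \leqW \Wif{\mathbb{S}_{\SI11}}{\C_\Bai}{\UC_\Bai},
\]
directly contradicting the corollary immediately preceding the statement.

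For the ``and hence'' part, the key observation is that $\C_\Bai \leqW \DetS$ (indeed strictly, by Corollary \ref{cor:two-sided-det-ptt}), so again by monotonicity of $\times$,
\[
\DetS \times \C_\Bai \leqW \DetS \times \DetS.
\]
Therefore any reduction $\DetS \times \DetS \leqW \DetS$ would yield $\DetS \times \C_\Bai \leqW \DetS$, which has just been ruled out. Hence $\DetS \times \DetS \nleqW \DetS$, completing the proof. The only point to double-check is that the parallel product of multivalued functions is genuinely monotone in each argument with respect to $\leqW$, which is standard.
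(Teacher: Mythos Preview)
Your proof is correct and follows exactly the approach indicated in the paper, which derives the corollary from the preceding non-reduction $\DetS \times \mathsf{NHA} \nleqW \Wif{\mathbb{S}_{\SI11}}{\C_\Bai}{\UC_\Bai}$ together with Corollary~\ref{corr:nhabasic}, Corollary~\ref{cor:det-ifthenelse}, and Corollary~\ref{cor:two-sided-det-ptt}.
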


\subsubsection*{Products with $\UC_\Bai$}
While we just saw that $\DetS$, $\PTT2$ and
$\Wif{\mathbb{S}_{\SI11}}{\C_\Bai}{\UC_\Bai}$ are not closed under products
with $\C_\Bai$, the situation for products with $\UC_\Bai$ is different:

\begin{proposition}
\label{prop:ucbaireproduct1} $\UC_\Bai \times
\Wif{\mathbb{B}}{\C_\Bai}{\UC_\Bai} \equivW
\Wif{\mathbb{B}}{\C_\Bai}{\UC_\Bai}$ for any space of truth values
$\mathbb{B}$.
\begin{proof}
Let $\{a\}$, $b \in \mathbb{B}$, $A$, $B$ be the input to $\UC_\Bai \times
\Wif{\mathbb{B}}{\C_\Bai}{\UC_\Bai}$. We can use
$\Wif{\mathbb{B}}{\C_\Bai}{\UC_\Bai}$ on $b$, $\{a\} \times A$ and $\{a\}
\times B$, as $\{a\} \times A$ is non-empty iff $A$ is, and $\{a\} \times B$
is a singleton iff $B$ is. We will receive as output $(\langle p, x\rangle,
\langle q, y\rangle)$ such that $\langle x, y\rangle$ is a valid output to
$\Wif{\mathbb{B}}{\C_\Bai}{\UC_\Bai}(b,A,B)$, and at least one of $p$ and $q$
is $a$. Let us write $p_{\leq n}$ for the prefix of $p$ of length $n+1$. We
have that, if $p_{\leq n} = q_{\leq n}$, then $p_{\leq n} = a_{\leq n}$, and
if $p_{\leq n} \neq q_{\leq n}$, then either $p \notin \{a\}$ or $q \notin
\{a\}$, hence we can compute $a$ from $p$, $q$ and $\{a\}$.
\end{proof}
\end{proposition}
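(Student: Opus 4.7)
The $\geqW$ direction is immediate by projecting away the $\UC_\Bai$ factor, so the content is in the $\leqW$ direction. My plan is to feed a single combined instance to $\Wif{\mathbb{B}}{\C_\Bai}{\UC_\Bai}$ that simultaneously encodes the if-then-else input and the $\UC_\Bai$ input $\{a\}$. Given $(\{a\}, b, A, B)$, I form the closed sets $\{a\} \times A$ and $\{a\} \times B$ in $\Bai$ via a standard pairing, and observe that $\{a\} \times A \neq \emptyset$ iff $A \neq \emptyset$, while $\{a\} \times B$ is a singleton iff $B$ is. Hence $(b, \{a\} \times A, \{a\} \times B)$ lies in the domain of $\Wif{\mathbb{B}}{\C_\Bai}{\UC_\Bai}$, and I invoke the oracle on it.

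The oracle returns some $(\langle p, x\rangle, \langle q, y\rangle)$ such that the component corresponding to the actual value of $b$ is a correct solution: if $b = \top$ then $\langle p, x\rangle \in \{a\} \times A$, forcing $p = a$ and $x \in A$; if $b = \bot$ then $\langle q, y\rangle$ is the unique element of $\{a\} \times B$, forcing $q = a$ and $y$ to be the unique element of $B$. In either case $\langle x, y\rangle$ is already a valid output for $\Wif{\mathbb{B}}{\C_\Bai}{\UC_\Bai}$ on $(b, A, B)$, so all that remains is to recover $a$ from $p$, $q$ and the closed description of $\{a\}$, knowing only that at least one of $p, q$ equals $a$.

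For this, I compute $a_n$ by searching in parallel for one of three events: either the prefixes $p_{\leq n}$ and $q_{\leq n}$ are read off and coincide, in which case the common value must equal $a_{\leq n}$ since one of $p, q$ is $a$; or the negative-information description of $\{a\}$ rules out some prefix of $p$, in which case $q = a$ and I output $q_n$; or it rules out some prefix of $q$, in which case $p = a$ and I output $p_n$. If $p_{\leq n} \neq q_{\leq n}$ then $p$ and $q$ are distinct, and since only one of them can equal $a$, the other will eventually be certified outside the $\Pi^0_1$ set $\{a\}$, so the search terminates.

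There is no genuine obstacle here: the whole argument hinges on the observation that tensoring with a $\Pi^0_1$-singleton simultaneously preserves both the non-emptiness structure required by $\C_\Bai$ and the singleton structure required by $\UC_\Bai$, so that a single call to the if-then-else can serve double duty. Once that is spotted, the recovery of $a$ from the pair of candidate outputs is a routine effective argument using the closed representation of $\{a\}$.
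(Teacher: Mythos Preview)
Your proof is correct and follows essentially the same approach as the paper: tensor both closed-set inputs with the singleton $\{a\}$, call the if-then-else oracle once, and recover $a$ from the two candidate first coordinates using the negative information for $\{a\}$. Your recovery procedure is spelled out in slightly more operational detail than the paper's, but the underlying idea is identical.
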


\begin{proposition}
\label{prop:pttproductuc} $\UC_\Bai \times \PTT2 \equivW \PTT2$.
\begin{proof}
Let $(\{a\}, T)$ be the input to $\UC_\Bai \times \PTT2$. From this input we
can build a tree $T_0$ such that $[T_0] = \{a\} \times (\{0^\omega\} \cup
1[T])$ (notice that $|[T_0]| =|[T]|+1$). $\PTT2 (T_0)$ yields a tree $T'$ and
a sequence $n\langle (q_0, t_0p_0), (q_1, t_1p_1), \ldots \rangle$.

We first explain how to compute the sequence part of $\PTT2 (T)$. If $n = 1$,
or $n = 0$ and more than one $t_i$ is $0$, or $n>1$ and more than one $t_i$
for $i < n - 1$ is $0$, then the sequence is not listing $[T_0]$ (because
$[T_0] \neq \emptyset$ and $(a,0^\omega)$ is the only member of $[T_0]$ whose
second component starts with $0$), which implies that $[T_0]$, and hence
$[T]$, was uncountable. In this case, we can just output some arbitrary
sequence. Otherwise let $p'_i$ be the sequence consisting of the odd digits
of $p_i$. If $n = 0$, we output $0 \langle p'_{i_0}, p'_{i_1}, \ldots
\rangle$ where the $i_k$ are the (all but one) indices such that $t_i \neq 0$
(in this way, if $\langle (q_0, t_0p_0), (q_1, t_1p_1), \ldots \rangle$ lists
injectively $[T_0]$, our output lists injectively $[T]$). To achieve the same
result when $n > 1$ we output $(n-1) \langle p'_{i_0}, p'_{i_1},
\ldots\rangle$ where we are omitting the (at most one) $i<n-1$ such that $t_i
= 0$.

To compute the tree part of $\PTT2 (T)$, starting from $T'$ we obtain a tree
$T''$ as follows: On the first three levels (corresponding to the first two
digits of $a$ and the control bit), go down some arbitrary edge in $T'$. Then
alternate adding all children of the present vertices into $T''$, and passing
down some arbitrary edge. If $T'$ is perfect, then so is $T''$, and moreover,
$T'' \subseteq T$ in that case.

We need also to compute $a$. To produce a possible candidate, we attempt to
compute the left-most branch $q$ of $T'$. If we ever reach a leaf (which
never happens if $T'$ is perfect), then we continue $q$ by constant $0$. In
any case, let $q'$ be the even digits of $q$: if $T'$ is a perfect subtree of
$T_0$ then $a=q'$. On the other hand, if $(q_0, t_0p_0), (q_1, t_1p_1),
\ldots \rangle$ lists $[T_0]$ then $a=q_0$. Thus $a = q_0$ or $a = q'$. As in
the proof of Proposition \ref{prop:ucbaireproduct1} it follows that we can
compute $a$ from $q_0$, $q'$ and $\{a\}$.
\end{proof}
\end{proposition}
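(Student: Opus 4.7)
The nontrivial direction is $\UC_\Bai \times \PTT2 \leqW \PTT2$; the converse reduction is immediate since $\UC_\Bai \leqW \C_\Bai \leqW \wPTT2 \leqW \PTT2$ by Corollary \ref{cor:two-sided-det-ptt}. Given an instance $(\{a\}, T)$ of the left-hand side, the plan is to build a single tree $T_0$ such that $[T_0] = \{a\} \times (\{0^\omega\} \cup 1\cdot[T])$, where pairs are encoded by interleaving and $1\cdot[T] := \{1p : p \in [T]\}$. Such a $T_0$ is computable from $\{a\}$ and $T$. The key properties are that $[T_0]$ is always nonempty (it contains the ``dummy'' point $(a,0^\omega)$), has cardinality $|[T]|+1$ when $[T]$ is countable, and is uncountable precisely when $[T]$ is.

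I then apply $\PTT2$ to $T_0$ to obtain a pair $(T', L)$ with $L = n\langle (q_i, u_i)\rangle_i$. The task is to extract from $(T',L)$ the two outputs needed for $\UC_\Bai \times \PTT2(\{a\},T)$: an answer to $\PTT2(T)$, and the point $a$ itself. For the $\PTT2(T)$ component: the tree part is obtained from $T'$ by stripping the $\{a\}$-coordinate and the leading $1$-bit, while alternating ``pass down some arbitrary edge'' and ``include all children'' moves so as to preserve perfectness while descending through the forced levels of the pairing encoding; the list part is obtained from $L$ by restricting to the indices where $u_i$ begins with $1$, deleting the single entry that codes $(a,0^\omega)$, stripping that leading $1$, and correcting the cardinality counter accordingly. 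Since the specification of $\PTT2$ forces exactly the ``correct'' branch depending on whether $[T]$ is uncountable or countable, both constructions are well-defined on whichever branch fires.

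To recover $a$ itself, I would compute two candidates: the first candidate $q'$ is read off from the first coordinate of the leftmost branch through $T'$ (completed by $0$s if a leaf is ever encountered), and the second candidate $q_0$ is the first coordinate of the first listed element in $L$. If $T'$ is genuinely a perfect subtree of $T_0$ then $q' = a$, since every path of $T_0$ has first coordinate $a$; if $L$ actually enumerates $[T_0]$ then $q_0 = a$ for the same reason. Hence at least one of $q', q_0$ equals $a$. Applying verbatim the trick used in the proof of Proposition \ref{prop:ucbaireproduct1}, I compare $q'$ and $q_0$ digit-by-digit using the name of $\{a\}$ as a closed subset of $\Bai$: as long as they agree, the common initial segment is certified to equal the corresponding initial segment of $a$; at the first position where they disagree, the $\mathcal{A}(\Bai)$-name of $\{a\}$ eventually rules out the wrong candidate, so $a$ is computed uniformly.

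The step I expect to be the main obstacle is the extraction of a perfect subtree of $T$ from $T'$: because all branches of $T'$ are forced to agree on the even-indexed positions of the pairing code (they must all have first coordinate $a$), the branching of $T'$ lives only in the odd-indexed positions, and a naive projection to the second coordinate need not produce a perfect tree. The alternation between ``arbitrary edge'' and ``all children'' moves while traversing the code is what ensures that the resulting tree $T''$ is both a subtree of $T$ and perfect whenever $T'$ is a perfect subtree of $T_0$.
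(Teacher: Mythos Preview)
Your proposal is correct and follows essentially the same approach as the paper's proof: the same auxiliary tree $T_0$ with $[T_0] = \{a\} \times (\{0^\omega\} \cup 1[T])$, the same alternation trick to extract a perfect subtree of $T$ from $T'$, the same list manipulation (drop the entry with second coordinate starting with $0$ and decrement the counter), and the same two-candidate recovery of $a$ via Proposition~\ref{prop:ucbaireproduct1}. One minor quibble: your justification for the ``converse'' direction is a non-sequitur---the chain $\UC_\Bai \leqW \PTT2$ is not what makes $\PTT2 \leqW \UC_\Bai \times \PTT2$ hold; that reduction is trivial simply because one can supply a dummy computable singleton to the $\UC_\Bai$ factor.
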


\begin{proposition}
\label{prop:detproductuc} $\UC_\Bai \times \DetS \equivW \DetS$.
\begin{proof}
By Theorem \ref{theo:findWSSigma}, we have $\UC_\Bai \leqW \FindWSD$, i.e.~we
can compute a $\DE01$-game $G'_1$ from $\{a\}$ such that Player $1$ wins
$G'_1$, and from a winning strategy of Player $1$ in $G'_1$ we can compute
$a$. Let $G'_2$ be the game with the roles of Player $1$ and Player $2$
exchanged, which is still \DE01. Now we construct a \SI01 game $G''$ from a
$\SI01$-game $G$, and from $G'_1$ and $G'_2$.

The players start playing $G$ and $G'_2$ in parallel. If Player $2$ wins both
of these, he wins in $G''$. Else, if he loses one of them (which would happen
at some finite time), the players proceed to play $G'_1$, and whoever wins
$G'_1$ wins $G''$. W.l.o.g.~we assume that Player $2$ can choose to lose $G$
right at the start of $G''$.

Since by assumption Player $2$ has a winning strategy in $G'_2$, and Player
$1$  has a winning strategy in $G'_1$, the winning strategies of Player $2$
are exactly those that consists of playing winning strategies in $G$ and
$G'_2$ simultaneously. On the other hand, Player $1$ can win the game for
sure only by first playing a winning strategy in $G$ (and arbitrarily in
$G_2'$), followed by a winning strategy in $G'_1$.

From a Nash equilibrium of the whole game we thus obtain a Nash equilibrium
in $G$ by considering how the players play in $G$. Furthermore, we consider
how Player $1$ plays in the copy of $G'_1$ played when Player $2$ loses in
$G$ right at the start of $G''$, and how Player $2$ plays in $G'_2$, and
compute two candidates $q_0$, $q_1$ for $a$ from that. As in the proof of
Proposition \ref{prop:ucbaireproduct1}, we can then compute $a$ from $\{a\}$,
$q_0$ and $q_1$.
\end{proof}
\end{proposition}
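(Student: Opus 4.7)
The easy direction $\DetS \leqW \UC_\Bai \times \DetS$ is immediate, so the task is to exhibit a Weihrauch reduction $\UC_\Bai \times \DetS \leqW \DetS$. The plan is to embed the $\UC_\Bai$ instance into the $\SI01$-game supplied to $\DetS$, in such a way that any Nash equilibrium of the embedded game both projects to a Nash equilibrium of the original game and carries enough information to identify the element of the singleton. The key tool is Theorem \ref{theo:findWSSigma}: from $\{a\}\in\mathcal{A}(\Bai)$ we can effectively produce a $\DE01$-game $G'$ that Player $1$ wins, with the property that from any winning strategy of Player $1$ we can recover $a$. The proof trick from Proposition \ref{prop:ucbaireproduct1}, namely that two candidates for $a$ of which at least one is correct suffice to compute $a$ given $\{a\}$, will then do the rest.

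Concretely, given $(\{a\},G)\in\dom(\UC_\Bai\times\DetS)$, let $G'_1=G'$ and let $G'_2$ be $G'$ with the roles of the two players swapped (so Player $2$ wins $G'_2$ and a winning strategy of Player $2$ encodes $a$). I would then construct a $\SI01$-game $G''$ as follows. Player $1$ and Player $2$ play $G$ and $G'_2$ in parallel; if Player $2$ simultaneously wins both, he wins $G''$; otherwise, as soon as the $\SI01$-open payoff of $G$ is triggered or $G'_2$ is decided against Player $2$ (which occurs at some finite stage in any losing run), the two proceed to play $G'_1$ and its winner is declared the winner of $G''$. I would also allow Player $2$ to concede $G$ at his first move, forcing an immediate switch to $G'_1$.

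With this setup, Player $2$ wins $G''$ iff he plays winning strategies in $G$ and $G'_2$ simultaneously, while Player $1$ can guarantee a win in $G''$ only by first playing a winning strategy in $G$ and then winning $G'_1$. From any Nash equilibrium $(\sigma,\tau)$ of $G''$, I would read off the pair of induced strategies in the $G$-coordinates to obtain a Nash equilibrium of $G$. For the singleton, I would extract one candidate $q_0$ for $a$ from how $\tau$ plays the $G'_2$-component, and a second candidate $q_1$ from how $\sigma$ plays the $G'_1$-component along the branch where Player $2$ concedes $G$ at the first move; the game-theoretic analysis of $G''$ forces at least one of $q_0,q_1$ to equal $a$. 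Then I recover $a$ from $(\{a\},q_0,q_1)$ exactly as in Proposition \ref{prop:ucbaireproduct1}.

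The main obstacle will be verifying the game-theoretic dichotomy accurately: I need Player $1$'s moves in the $G'_1$-phase along the forced-concession branch of any Nash equilibrium to constitute a full winning strategy for $G'_1$ (so that $q_1$ decodes to $a$ whenever the $G$-component of the equilibrium does not already give a Player $1$ winning strategy in $G$), and dually Player $2$'s moves in $G'_2$ to constitute a winning strategy for $G'_2$ whenever he is actually winning the $G$-component. Achieving this cleanly requires being careful about what it means to ``keep playing $G'_2$'' after the players transition into $G'_1$; the simplest fix, which I would adopt, is to have the $\SI01$-payoff of $G''$ trigger only after the whole infinite play of $G'_2$ is committed to (interleaving, say, so that $G'_2$-moves continue to be recorded throughout), so that Player $2$'s full strategy in $G'_2$ is visible in any equilibrium play.
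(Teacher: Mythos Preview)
Your proposal is correct and follows essentially the same approach as the paper: the construction of $G''$ from $G$, $G'_1$, $G'_2$, the game-theoretic analysis of winning strategies, the extraction of two candidates $q_0,q_1$ for $a$ (one from Player $2$'s play in $G'_2$, one from Player $1$'s play in $G'_1$ along the forced-concession branch), and the appeal to the two-candidate trick of Proposition \ref{prop:ucbaireproduct1} all match the paper's argument. The extra care you flag in the final paragraph about keeping Player $2$'s full $G'_2$-strategy visible is a reasonable implementation detail that the paper leaves implicit.
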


Here the difference between $\wPTT2$ and $\PTT2$ is revealed, as the former
is more sensitive to products. We recall that a Weihrauch degree is called
\emph{fractal}, if it has a representative $f : \subseteq \Bai \mto \Bai$
such that for any $w \in \N^{<\N}$ such that $w\Bai \cap \dom(f) \neq
\emptyset$ it holds that $f|_{w\Bai} \equivW f$. Most of the degrees
considered in this articles are fractals, including $\wPTT2$.

\begin{proposition}
If $f$ is a fractal and $\lpo \times f \leqW \wPTT2$, then $f \leqW \C_\Bai$.
\begin{proof}
W.l.o.g.~assume that $f : \subseteq \Bai \mto \Bai$ witnesses its own
fractality.

Fix a reduction of $\lpo \times f$ to $\wPTT2$ and let $K_1$ be the
computable function that transforms the output of $\wPTT2$ and the original
input of $\lpo \times f$ into the answer to the $\lpo$-instance. We
distinguish the following cases:
\begin{enumerate}
\item There exists $0^n$, $w \in \N^{<\N}$, a finite tree $T$, and a finite
    prefix of a list $\langle 0q_0,0q_1,0q_2,\ldots\rangle$ such that $K_1$
    provides its answer upon reading those (as input for $\lpo$, input for
    $f$, first and second component of the output of $\wPTT2$, in that
    order).

Then by fixing the input to $\lpo$ to something consistent with $0^n$ and
incompatible with the answer provided, we can make sure that the reduction
needs to avoid the prefix to be valid for any input to $f$ extending $w$.
But this can only be achieved by making the input to $\wPTT2$ having
uncountable body and not having $T$ as prefix of any perfect subtree. This
means in particular that we are dealing with an input to $\PTT1$. As $f$ is
a fractal, restricting to those of its inputs extending $w$ does not
decrease its Weihrauch degree, and we conclude $f \leqW \C_\Bai$.

\item For no $0^n$, $w \in \N^{<\N}$, finite tree $T$, and finite prefix of
    a list $\langle 0q_0,0q_1,0q_2,\ldots\rangle$, $K_1$ provides its
    answer upon reading those.

If we fix the $\lpo$-input to be $0^\omega$, we see that to ensure that
$K_1$ behaves correctly, the list-component of the output of $\wPTT2$ must
actually list some elements. This can only be guaranteed if the input to
$\wPTT2$ is a tree with countable non-empty body, i.e.~is already in the
domain of $\List$. We thus conclude $f \leqW \List \equivW \UC_\Bai$ (by
Theorem \ref{theo:list}) and, a fortiori, $f \leqW \C_\Bai$.\qedhere
\end{enumerate}
\end{proof}
\end{proposition}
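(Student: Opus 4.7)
The plan is to exploit the continuity of the post-processing function of any Weihrauch reduction $\lpo \times f \leqW \wPTT2$, and to split on whether that function can commit to an answer for $\lpo$ while having observed only all-zero control bits from the list component of the $\wPTT2$-output. Fix such a reduction, given by $H$ and $K=(K_1,K_2)$: from $(p,q) \in \Bai \times \dom(f)$, $H$ computes a tree $S_{p,q}$, and from any valid $\wPTT2$-output $(T',\langle b_i p_i\rangle_i)$ on $S_{p,q}$ together with $(p,q)$, the map $K$ returns $(\lpo(p),b)$ with $b \in f(q)$. Assume without loss of generality that $f$ witnesses its own fractality. Since $K_1$ is continuous, on every valid input it must commit to its answer $\lpo(p) \in \{0,1\}$ after reading some finite prefix, and the dichotomy in the two cases concerns the shape of this prefix.

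In \textbf{Case 1}, some prefix $(0^n, w, T, \langle 0 q_0, \ldots, 0 q_{m-1}\rangle)$ (with all observed list-bits equal to $0$) already suffices for $K_1$ to commit to some $a \in \{0,1\}$. Choose $\tilde p$ extending $0^n$ with $\lpo(\tilde p) = 1-a$. For any $q \in w\Bai \cap \dom(f)$, correctness of the reduction forces that no valid $\wPTT2$-output on $S_{\tilde p, q}$ extends this prefix: the tree disjunct is blocked only when no perfect subtree of $S_{\tilde p, q}$ extends $T$, and the list disjunct is blocked only when no valid listing extends $\langle 0 q_i\rangle_{i<m}$. The latter is the flexible condition, since all-zero prefixes can in principle be continued with bit-$1$ entries at any later position; the point to observe is that if $[S_{\tilde p, q}]$ were countable one could list it past position $m$ while pushing each $q_i$ ($i < m$) to some extension outside the countable closed set $[S_{\tilde p, q}]$, producing a valid continuation. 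Hence $[S_{\tilde p, q}]$ must be uncountable and $S_{\tilde p, q} \in \dom(\PTT1)$. Feeding any perfect subtree of $S_{\tilde p, q}$ (guaranteed to exist by the perfect tree theorem) paired with an all-zero list back into $K$ on input $(\tilde p, q)$ extracts an element of $f(q)$. This exhibits $f|_{w\Bai} \leqW \PTT1 \equivW \C_\Bai$ by Proposition \ref{prop:ptt1}, and fractality yields $f \leqW \C_\Bai$.

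In \textbf{Case 2}, no all-zero-bit prefix ever forces $K_1$ to commit. Fix $\tilde p = 0^\omega$, so $\lpo(\tilde p) = 1$, and let $q \in \dom(f)$. If $S_{0^\omega, q}$ had a perfect subtree, the oracle could output it together with an all-zero list, producing a valid $\wPTT2$-output on which $K_1$ would never commit, contradicting correctness; thus $S_{0^\omega, q}$ has no perfect subtree, and by the perfect tree theorem its body is countable. Likewise, if $[S_{0^\omega, q}]$ were empty the all-zero list would already be a valid listing and again prevent $K_1$ from committing; hence $[S_{0^\omega, q}]$ is nonempty. Therefore $S_{0^\omega, q}$ is an instance of $\wList$, and from any listing $L$ produced by $\wList$ the map $K$ extracts an element of $f(q)$. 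This yields $f \leqW \wList \equivW \UC_\Bai \leqW \C_\Bai$ by Theorem \ref{theo:list}.

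The main obstacle will be the combinatorial verification in Case 1 that blocking every extension of an all-zero list prefix truly forces $[S_{\tilde p, q}]$ to be uncountable. The oracle has wide freedom in how to continue the list, and one must carefully argue that the only obstruction capable of ruling out every such continuation is a cardinality one; once this point is established the rest of the argument is bookkeeping together with the already-established classifications $\PTT1 \equivW \C_\Bai$ and $\wList \equivW \UC_\Bai$.
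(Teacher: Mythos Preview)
Your proposal is correct and follows essentially the same route as the paper's proof: the identical dichotomy on whether $K_1$ commits after seeing only all-zero control bits, with Case~1 forcing the $\wPTT2$-input to have uncountable body (hence reducing via $\PTT1 \equivW \C_\Bai$) and Case~2 forcing it to have countable non-empty body (hence reducing via $\wList \equivW \UC_\Bai$). Your remark about ``pushing each $q_i$ to some extension outside the countable closed set'' is harmless but unnecessary, since the bits $b_i=0$ already make the values $p_i$ irrelevant to validity of the listing.
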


\begin{corollary}
\label{corr:lpowptt2} $\lpo \times \wPTT2 \nleqW \wPTT2$.
\end{corollary}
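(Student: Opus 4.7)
The plan is to derive Corollary \ref{corr:lpowptt2} as a direct consequence of the preceding proposition, combined with Corollary \ref{cor:two-sided-det-ptt}. The key observation is that $\wPTT2$ is a fractal: indeed, given any finite prefix $w$ compatible with the input representation of some tree $T$, we can uniformly translate Weihrauch instances of $\wPTT2$ into instances extending $w$, for instance by padding a given tree into a subtree of $T$ above the branch coded by $w$. Once fractality is in hand, $\wPTT2$ itself becomes a legitimate candidate for the role of $f$ in the preceding proposition.

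Assume for contradiction that $\lpo \times \wPTT2 \leqW \wPTT2$. Since $\wPTT2$ is a fractal, the preceding proposition applied with $f = \wPTT2$ yields $\wPTT2 \leqW \C_\Bai$. However, Corollary \ref{cor:two-sided-det-ptt} states $\C_\Bai \leW \wPTT2$, which is the strict reducibility; in particular $\wPTT2 \nleqW \C_\Bai$. This contradicts the conclusion of the proposition, so the assumption $\lpo \times \wPTT2 \leqW \wPTT2$ must fail.

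The only step requiring any care is the verification of fractality for $\wPTT2$, but this is routine: given a finite sequence $w$ in the domain of a natural representative, one embeds the instance space of $\wPTT2$ into the cone above $w$ in a uniform, invertible way (adding an appropriate prefix so that perfect subtrees and listings transport back and forth computably). Everything else is formal, chaining the previous proposition with the strict separation $\C_\Bai \leW \wPTT2$ already established.
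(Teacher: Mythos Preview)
Your proof is correct and follows exactly the intended argument: the paper leaves this corollary without explicit proof because it is immediate from the preceding proposition (with $f = \wPTT2$, which the paper already noted is a fractal) together with Corollary \ref{cor:two-sided-det-ptt}. Your write-up simply makes these implicit steps explicit.
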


\begin{corollary}
$\wPTT2 \leW \PTT2$.
\begin{proof}
By contrasting Corollary \ref{corr:lpowptt2} and Proposition \ref{prop:pttproductuc}.
\end{proof}
\end{corollary}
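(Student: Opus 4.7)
The plan is to handle the two parts of the strict inequality separately. For the easy direction $\wPTT2 \leqW \PTT2$, note that both principles have $\Tr$ as their domain, so the reduction amounts to rewriting a $\PTT2$-output as a valid $\wPTT2$-output. A perfect subtree answer works verbatim. For a list answer $n\langle p_0, p_1, \dots\rangle$, we convert to the tagged list format: if $n = 0$ we output $\langle 1p_0, 1p_1, \dots\rangle$; if $n > 0$ we output $\langle 1p_0, \dots, 1p_{n-2}, 0\,0^\omega, 0\,0^\omega, \dots\rangle$. In either case we obtain a valid $\wPTT2$-output on the same input tree.

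For the strict part, I would derive a contradiction from assuming $\wPTT2 \equivW \PTT2$. The key ingredient is $\lpo \leqW \UC_\Bai$, which is a standard observation (and in any case follows from $\lpo \leqW {\sf lim} \leqW {\sf lim}^\dagger \equivW \UC_\Bai$ using Theorem \ref{thm:Pauly-dagger}). Given this, and using that $\times$ is monotone on Weihrauch degrees, the assumption would yield the chain
\[
\lpo \times \wPTT2 \leqW \UC_\Bai \times \wPTT2 \equivW \UC_\Bai \times \PTT2 \equivW \PTT2 \equivW \wPTT2,
\]
where the middle equivalence invokes Proposition \ref{prop:pttproductuc} (and the outer equivalences use the assumed $\wPTT2 \equivW \PTT2$). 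But this contradicts Corollary \ref{corr:lpowptt2}, which says $\lpo \times \wPTT2 \nleqW \wPTT2$. Hence $\wPTT2 \neq \PTT2$ in the Weihrauch lattice, establishing the strict inequality.

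There is no real obstacle here: both nontrivial facts (Corollary \ref{corr:lpowptt2} and Proposition \ref{prop:pttproductuc}) have already been established above, and the auxiliary reduction $\lpo \leqW \UC_\Bai$ is routine. The only conceptual content is the observation that $\PTT2$ is stable under products with $\UC_\Bai$ while $\wPTT2$ is not even stable under products with $\lpo$, which is precisely what separates the two principles.
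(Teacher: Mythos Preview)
Your proof is correct and is exactly the argument the paper intends: the one-line ``by contrasting'' is precisely the observation that $\lpo \times \wPTT2 \leqW \UC_\Bai \times \PTT2 \equivW \PTT2$ (via $\lpo \leqW \UC_\Bai$, $\wPTT2 \leqW \PTT2$, and Proposition \ref{prop:pttproductuc}), so $\PTT2 \leqW \wPTT2$ would contradict Corollary \ref{corr:lpowptt2}. You have simply unpacked this, together with the routine output-format conversion for the forward reduction.
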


We shall see that $\wPTT2$ is still closed under some non-trivial products.
For that, let $\mathsf{NON} : \Can \mto \Can$ be defined via $q \in
\mathsf{NON}(p)$ iff $q \nleq_{\mathrm{T}} p$; i.e.~$\mathsf{NON}$ is the
function corresponding to the theorem asserting the existence of sets
non-computable in any given set.

\begin{proposition}
\label{prop:nonwptt2} $\mathsf{NON} \times \wPTT2 \leqW \wPTT2$.
\begin{proof}
Fix a Turing functional $\Phi$ such that for every $p \in \Can$, $\Phi^p$ is
an injective enumeration of $p'$, the Turing jump of $p$. Let $\hat{p} \in
\Bai$ be such that for every $n$ we have that $\hat{p}(n)=0$ implies $n
\notin p'$ and $\hat{p}(n)>0$ implies $\Phi^p(p(n)-1) = n$. Then $\hat{p}$ is
Turing equivalent to $p'$ and hence $\hat{p} \nleq_{\mathrm{T}} p$.

Notice that the function from \Can\ to $\mathcal{A}(\Bai)$ which sends $p$ to
$\{\hat{p}\}$ is computable. Therefore, from $(p,A) \in \Can \times
\mathcal{A}(\Bai)$ we can compute $\{\hat{p}\} \times (\{0^\omega\} \cup 1A)
\in \mathcal{A}(\Bai)$. From any solution to $\wPTT2(\{\hat{p}\} \times
(\{0^\omega\} \cup 1A))$ we can compute a solution to $\wPTT2(A)$ with the
argument of the first part of the proof of Proposition
\ref{prop:pttproductuc}. Moreover, any solution to $\wPTT2(\{\hat{p}\} \times
(\{0^\omega\} \cup 1A))$ is $\geq_{\mathrm{T}} \hat{p}$, and hence solves
$\mathsf{NON}(p)$.
\end{proof}
\end{proposition}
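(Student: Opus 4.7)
The plan is to mimic the construction used in Proposition \ref{prop:pttproductuc}, but instead of encoding a singleton into the first coordinate, encode a fixed point $\hat{p}$ that is Turing-above the jump $p'$ of the input $p$ to $\mathsf{NON}$, so that any element of the resulting closed set computes $\hat{p}$ and hence provides a valid output for $\mathsf{NON}(p)$.

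First I would fix a Turing functional $\Phi$ such that $\Phi^{p}$ is an injective enumeration of $p'$ for every $p \in \Can$. Given $(p,A) \in \Can \times \mathcal{A}(\Bai)$, I would define $\hat{p} \in \Bai$ by letting $\hat{p}(n) = 0$ code "$n \notin p'$" and $\hat{p}(n) = k+1$ code "$\Phi^p(k) = n$". The map $p \mapsto \hat{p}$ is computable, $\hat{p} \equiv_{\mathrm{T}} p'$, and in particular $\hat{p} \nleq_{\mathrm{T}} p$. From $(p,A)$ I would then compute the closed set $B := \{\hat{p}\} \times (\{0^\omega\} \cup 1A) \in \mathcal{A}(\Bai)$, noting $|B| = |A| + 1$ and $B$ is countable iff $A$ is.

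Next I would apply $\wPTT2$ to (a tree for) $B$, obtaining $(T', \langle b_0 q_0, b_1 q_1, \dots\rangle)$. The second coordinate of any element of $B$ starts with $1$ except for the single point $(\hat{p}, 0^\omega)$. I would extract a $\wPTT2$-solution for $A$ exactly as in Proposition \ref{prop:pttproductuc}: from the list, read off the second coordinates of enumerated points, strip the leading $1$, and drop the one entry (if any) coming from $(\hat{p},0^\omega)$; from $T'$, build a tree $T''$ by alternating between following an arbitrary edge and copying all children, starting after the first few levels used to unpack the pairing, which yields a perfect subtree of the "$1A$" component whenever $T'$ is perfect.

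Finally, I would read off $\hat{p}$ from the output: either via the first coordinates of the listed points (all of which equal $\hat{p}$), or via the leftmost path through $T'$ (whose even digits converge to $\hat{p}$ whenever $T'$ is perfect in $B$); the if-then-else dichotomy is handled exactly as in the proof of Proposition \ref{prop:ucbaireproduct1}, producing two candidates of which at least one is $\hat{p}$ and from which $\hat{p}$ itself can be recovered because we can compare against $\hat{p}$ that we have already computed from $p$. Since $\hat{p} \nleq_{\mathrm{T}} p$, this discharges the $\mathsf{NON}(p)$ requirement. The main obstacle, which is essentially bookkeeping rather than conceptual, is verifying that the extraction procedure for the list is sound in the corner case where more than one $b_i$ indexes the extraneous element $(\hat{p},0^\omega)$ — but in that case the list cannot be valid for $B$, so $B$ (and hence $A$) is uncountable and any default output for $A$ is permitted.
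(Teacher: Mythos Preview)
Your construction of $B=\{\hat p\}\times(\{0^\omega\}\cup 1A)$ and the extraction of a $\wPTT2(A)$-solution from a $\wPTT2(B)$-solution are exactly as in the paper. There is, however, a slip in the $\mathsf{NON}$ part. You write ``the map $p\mapsto\hat p$ is computable'' and later ``compare against $\hat p$ that we have already computed from $p$'', but this is impossible: if $\hat p$ were computable from $p$ then $\hat p\leq_{\mathrm T}p$, contradicting your own (correct) assertion that $\hat p\nleq_{\mathrm T}p$. What \emph{is} computable from $p$ is only the closed singleton $\{\hat p\}\in\mathcal{A}(\Bai)$.

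The paper sidesteps this entirely by noting that one need not recover $\hat p$ at all. Any $\wPTT2$-solution $(T',\langle b_0q_0,b_1q_1,\ldots\rangle)$ for $B$ already Turing-computes $\hat p$: if $T'$ is perfect then its leftmost path lies in $B$ and has first coordinate $\hat p$; if instead the list is valid then, since $B\neq\emptyset$, some $b_i=1$ and the first coordinate of the corresponding $q_i$ is $\hat p$. Either way the solution is $\geq_{\mathrm T}\hat p\nleq_{\mathrm T}p$, so the solution itself, coded into $\Can$, is a legitimate output for $\mathsf{NON}(p)$. Your two-candidate extraction can be repaired by comparing against the closed set $\{\hat p\}$ rather than the point $\hat p$ (this is precisely the trick in Proposition~\ref{prop:ucbaireproduct1}), but that extra work is unnecessary here.
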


In \cite{pauly-patey}, products with $\lpo$ and $\mathsf{NON}$ are used to
separate Weihrauch degrees in a similar fashion.

\section{$\TC_\Bai$ -- a candidate for $\mathrm{ATR}_0$?}
\label{sec:tcbaire} Our separation proofs of principles like $\DetS$ and
$\PTT2$ from $\C_\Bai$ relied on being able to transform an arbitrary closed
subset into an input for the former, with specified behaviour occurring only
for non-empty closed sets. We can capture this using the notion of
\emph{total continuation} of closed choice on $\Bai$:

\begin{definition}
Let $\TC_\Bai : \mathcal{A}(\Bai) \mto \Bai$ be defined via $p \in
\TC_\Bai(A)$ iff $A \neq \emptyset \Rightarrow p \in A$.
\end{definition}

In the same vein, we can define the total continuation of other choice
principles. The computable compactness of $\Can$ yields $\TC_\Can \equivW
\C_\Can$. The principle $\TC_\N$ was studied in \cite{paulyneumann}.

\begin{proposition}\label{prop:TC}
\begin{enumerate}
\item $\C_\Bai \leW \TC_\Bai$;
\item $\TC_\Bai \leW \lpo \times \TC_\Bai$.
\item If $\mathsf{NON} \times f \leqW \TC_\Bai$, then $f \leqW \C_\Bai$;
\item $\TC_\Bai \leW \wPTT2$;
\item $\TC_\Bai \leW \DetS$;
\item $\Wif{\mathbb{S}_{\SI11}}{\C_\Bai}{\UC_\Bai} \leW \TC_\Bai \times
    \C_\Bai$.
\end{enumerate}
\begin{proof}
\begin{enumerate}
\item The reduction is trivial. Separation follows from $\lpo \star \C_\Bai
    \equivW \C_\Bai$ and $\chiP \leqW \lpo \star \TC_\Bai$ (the latter is
    straightforward because $\lpo$ can check whether the output of
    $\TC_\Bai(A)$ belongs to $A$).
\item Again, the reduction is trivial. For the separation, assume that
    $\lpo \times \TC_\Bai \leqW \TC_\Bai$ via computable $H$, $K_1$, $K_2$.
    Recall that $\lpo(r)=1$ iff $r=0^\omega$. Consider the input $0^\omega$
    for $\lpo$ and $\Bai \in \mathcal{A}(\Bai)$ (coded as some name $t$)
    for $\TC_\Bai$ on the left. There has to be some $p \in \Bai$ such that
    $K_1(0^\omega,t,p) = 1$. By continuity, we find that $K_1(0^kq,t_{\leq
    k}t',p) = 1$ for sufficiently large $k$ and arbitrary $q$, $t'$.

    For any $A \in \mathcal{A}(\Bai)$ we can compute some name of the form
$t_{\leq k}t'$. Now consider what happens if the inputs on the left are
$0^k1^\omega$ and some $t_{\leq k}t'$: If $H(0^k1^\omega,t_{\leq k}t')$
ever returns a name for the empty set, then $p$ is a valid solution to
$\TC_\Bai$ on the right. But then $K_1$ will answer incorrectly $1$. Thus,
$H(0^k1^\omega,t_{\leq k}t')$ never returns a name for the empty set. But
then we obtain a reduction $\TC_\Bai \leqW \C_\Bai$, contradicting $(1)$.
\item As $\TC_\Bai(\emptyset)$ has computable solutions, the reduction
    $\mathsf{NON} \times f \leqW \TC_\Bai$ already has to be a reduction to
    $\C_\Bai$.
\item The reduction given in Proposition \ref{prop:ptt1} works for this, by
    using the following observation: given $A \in \mathcal{A}(\Bai)$, $T
    \in \Tr$ such that $[T] =  A \times \Bai$ and $(T',\langle
    b_0p_0,b_1p_1,\ldots \rangle) \in \PTT2(T)$, if we realize that $T'$ is
    not pruned (which can happen only if $A = \emptyset$) we can continue
    our output with $0^\omega$.

    Strictness follows by $(3)$, Proposition \ref{prop:nonwptt2} and
    Corollary \ref{cor:two-sided-det-ptt}.
\item The reduction given in Proposition \ref{prop:findwspi} works for
    this, by using the following observation: if $A = \emptyset$ then
    Player $1$ has a winning strategy in the \SI01 game we constructed (in
    fact, any strategy for $1$ is winning), however following the strategy
    for $2$ provided by $\DetS$ we find an element of $\TC_\Bai(A)$.

    Strictness follows by $(2)$, Proposition \ref{prop:detproductuc} and
    Corollary \ref{cor:two-sided-det-ptt}.
\item The arguments used to establish Lemma \ref{lemma:globalmcb} or
    \ref{lemma:findWSSigma} show that the total continuation $\TUC_\Bai$ of
    $\UC_\Bai$ (i.e.~the total multivalued function defined on
    $\mathcal{A}(\Bai)$ which extends $\UC_\Bai$ and is defined as $\Bai$
    on non-singletons) is reducible to $\C_\Bai$. For example, given an
    arbitrary closed $A \subseteq \Bai$ we can compute the nonempty
    $\Sigma^1_1$ set of the mCB-certificates of $A$ and, choosing an
    element in it, compute the list of the elements of $A$ whenever $A$ is
    a countable, and in particular a singleton.

    Thus, we can consider $\TC_\Bai \times \TUC_\Bai$ in place of $\TC_\Bai
    \times \C_\Bai$. Given some input $b, A, B$ to
    $\Wif{\mathbb{S}_{\SI11}}{\C_\Bai}{\UC_\Bai}$ we ignore $b$, we feed
    $A$ to $\TC_\Bai$, and $B$ to $\TUC_\Bai$. Any resulting output pair is
    a valid output to $\Wif{\mathbb{S}_{\SI11}}{\C_\Bai}{\UC_\Bai}$.

    To see that $\TC_\Bai \times \C_\Bai \not\equivW
    \Wif{\mathbb{S}_{\SI11}}{\C_\Bai}{\UC_\Bai}$ first notice that
    $\TC_\Bai \times \C_\Bai \times \C_\Bai \equivW \TC_\Bai \times
    \C_\Bai$. On the other hand, we have
    \[
\Wif{\mathbb{S}_{\SI11}}{\C_\Bai}{\UC_\Bai} \times \C_\Bai \nleqW \Wif{\mathbb{S}_{\SI11}}{\C_\Bai}{\UC_\Bai}:
    \]
    otherwise, since by Corollaries \ref{cor:ptt-ifthenelse} and
    \ref{corr:nhabasic} we have $\PTT2 \leqW
    \Wif{\mathbb{S}_{\SI11}}{\C_\Bai}{\UC_\Bai}$ and $\mathsf{NHA} \leqW
    \C_\Bai$, we would have $\PTT2 \times \mathsf{NHA} \leqW
    \Wif{\mathbb{S}_{\SI11}}{\C_\Bai}{\UC_\Bai}$ and Proposition
    \ref{prop:nahabsorptions} would imply $\PTT2 \leqW \C_\Bai$, against
    Corollary \ref{cor:two-sided-det-ptt}.\qedhere
\end{enumerate}
\end{proof}
\end{proposition}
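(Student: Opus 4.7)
The plan is to handle the six items largely independently, with the reductions mostly reusing constructions from Propositions \ref{prop:ptt1} and \ref{prop:findwspi}, and with the separations all flowing from continuity arguments together with the basic fact that $\TC_\Bai(\emptyset)$ admits computable solutions. Throughout, the key conceptual point is that $\TC_\Bai$ differs from $\C_\Bai$ precisely in its behaviour on the empty set, and one can exploit this with one additional $\lpo$-call (or one $\chiP$-call) to verify membership of the returned point.

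For (1) the reduction is immediate since $\TC_\Bai$ extends $\C_\Bai$; for strictness I would reduce $\chiP$ to $\lpo \star \TC_\Bai$ by running $\TC_\Bai$ on the closed set whose emptiness encodes the $\Pi^1_1$ question, then asking $\lpo$ whether the returned point really lies in it, contradicting $\lpo \star \C_\Bai \equivW \C_\Bai$. For (2) the reduction is again trivial; strictness is a continuity argument: any reduction $\lpo \times \TC_\Bai \leqW \TC_\Bai$ must, on the $\lpo$-input $0^\omega$ and a name for the trivial closed set $\Bai$, commit to the answer $1$ after reading finitely much, after which switching the $\lpo$-input to $0^k1^\omega$ forces the forward map to never produce a name for $\emptyset$; but then the same maps constitute a reduction $\TC_\Bai \leqW \C_\Bai$, contradicting (1). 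For (3), if $\mathsf{NON} \times f \leqW \TC_\Bai$ via forward map $H$ and postprocessor $K$, then $H$ can never output a name for the empty set on valid input (else $K$ would be unable to compute a non-$x$-computable answer from some computable witness), so the same maps actually realise $\mathsf{NON} \times f \leqW \C_\Bai$, whence $f \leqW \C_\Bai$ by closure of $\C_\Bai$ under products.

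Items (4) and (5) are handled by revisiting the reductions in Propositions \ref{prop:ptt1} and \ref{prop:findwspi}: in each case the constructed instance of $\wPTT2$ or $\DetS$ has the property that non-triviality (an uncountable path-space, respectively a non-winning situation for the open player) is equivalent to $A$ being non-empty; if we detect along the way that we must be in the empty case we simply extend the output with $0^\omega$. Strict inequality in (4) follows from (3) combined with Proposition \ref{prop:nonwptt2}, which shows $\mathsf{NON} \times \wPTT2 \leqW \wPTT2$ but $\mathsf{NON}$-involving products are not available below $\C_\Bai$; strict inequality in (5) follows from (2) combined with Proposition \ref{prop:detproductuc}.

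The main obstacle is (6). I would introduce the total continuation $\TUC_\Bai$ of $\UC_\Bai$ and first argue $\TUC_\Bai \leqW \C_\Bai$ using the mCB-certificate machinery of Lemma \ref{lemma:globalmcb}: from an arbitrary closed $A$ one computes the $\Sigma^1_1$ set of global mCB-certificates, which is nonempty on singletons, and any such certificate yields the unique element. The reduction $\Wif{\mathbb{S}_{\SI11}}{\C_\Bai}{\UC_\Bai} \leqW \TC_\Bai \times \TUC_\Bai \leqW \TC_\Bai \times \C_\Bai$ is then transparent: ignore the truth-value input, send the closed instance to $\TC_\Bai$ and the would-be singleton to $\TUC_\Bai$, and either component of the resulting pair is a valid output. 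For strictness I would suppose $\TC_\Bai \times \C_\Bai \leqW \Wif{\mathbb{S}_{\SI11}}{\C_\Bai}{\UC_\Bai}$; using $\TC_\Bai \times \C_\Bai \times \C_\Bai \equivW \TC_\Bai \times \C_\Bai$ together with Corollary \ref{cor:ptt-ifthenelse} ($\PTT2$ falls under the if-then-else principle) and Corollary \ref{corr:nhabasic} ($\mathsf{NHA} \leqW \C_\Bai$), I would obtain $\PTT2 \times \mathsf{NHA} \leqW \Wif{\mathbb{S}_{\SI11}}{\C_\Bai}{\UC_\Bai}$, and then apply Proposition \ref{prop:nahabsorptions} to conclude $\PTT2 \leqW \C_\Bai$, contradicting Corollary \ref{cor:two-sided-det-ptt}.
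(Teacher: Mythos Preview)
Your proposal is correct and follows essentially the same route as the paper's proof for all six items, including the use of $\TUC_\Bai$ and the mCB-certificate machinery in (6) and the $\mathsf{NHA}$-absorption argument for strictness there. One small point: in your strictness argument for (4), the phrase ``$\mathsf{NON}$-involving products are not available below $\C_\Bai$'' should be sharpened to the explicit contradiction $\wPTT2 \leqW \C_\Bai$ (which is exactly what (3) yields once $\mathsf{NON}\times\wPTT2\leqW\TC_\Bai$), and this contradicts Corollary~\ref{cor:two-sided-det-ptt}; the paper cites that corollary for precisely this step.
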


\begin{corollary}
$\PTT2^* \equivW \DetS^* \equivW \TC_\Bai^*$.
\end{corollary}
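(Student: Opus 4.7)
The plan is to chain together reductions already established in Proposition~\ref{prop:TC}, together with the fact that the finite-parallelization operator $(-)^*$ is a closure operator on the Weihrauch lattice: it is monotone ($f \leqW g \Rightarrow f^* \leqW g^*$), idempotent ($(f^*)^* \equivW f^*$), and satisfies $f \times g \leqW (f \sqcup g)^* \leqW f^* \sqcap$-style bounds; in particular $f \times f \leqW f^*$ and $f \times g \leqW (f^*)$ whenever $g \leqW f$.

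First I would handle the easy directions $\TC_\Bai^* \leqW \PTT2^*$ and $\TC_\Bai^* \leqW \DetS^*$: by monotonicity of $(-)^*$ it suffices to apply the unstarred reductions $\TC_\Bai \leqW \wPTT2 \leqW \PTT2$ (from Proposition~\ref{prop:TC}(4), composed with the trivial $\wPTT2 \leqW \PTT2$) and $\TC_\Bai \leqW \DetS$ (from Proposition~\ref{prop:TC}(5)).

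For the reverse direction $\PTT2^* \leqW \TC_\Bai^*$ I would combine Corollary~\ref{cor:ptt-ifthenelse} with Proposition~\ref{prop:TC}(6) to obtain
\[
\PTT2 \;\leqW\; \Wif{\mathbb{S}_{\SI11}}{\C_\Bai}{\UC_\Bai} \;\leqW\; \TC_\Bai \times \C_\Bai.
\]
Since $\C_\Bai \leqW \TC_\Bai$ by Proposition~\ref{prop:TC}(1), we get $\TC_\Bai \times \C_\Bai \leqW \TC_\Bai \times \TC_\Bai \leqW \TC_\Bai^*$, hence $\PTT2 \leqW \TC_\Bai^*$. Applying $(-)^*$ and using idempotency, $\PTT2^* \leqW (\TC_\Bai^*)^* \equivW \TC_\Bai^*$. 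The same argument with Corollary~\ref{cor:det-ifthenelse} in place of Corollary~\ref{cor:ptt-ifthenelse} gives $\DetS^* \leqW \TC_\Bai^*$.

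There is no real obstacle here: the content of the corollary is that finite parallelization erases both the extra strength contributed by the $\C_\Bai$-factor on the right-hand side of Proposition~\ref{prop:TC}(6) (because $\C_\Bai \leqW \TC_\Bai$) and the difference between $\wPTT2$ and $\PTT2$, collapsing the three principles onto a single degree. The only thing to double-check when writing the proof is that the standard closure-operator identities for $(-)^*$ are explicitly available in the references cited in Section~\ref{sec:background}; if not, a one-line verification using the definition of $f^*$ (finitely many parallel instances, solved in parallel by a single oracle call to $g^*$) should be included.
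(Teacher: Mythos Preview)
Your proposal is correct and follows essentially the same route as the paper: use Proposition~\ref{prop:TC}(4),(5) with monotonicity of $(-)^*$ for the easy directions, and combine Corollaries~\ref{cor:ptt-ifthenelse}/\ref{cor:det-ifthenelse} with Proposition~\ref{prop:TC}(6),(1) plus idempotency of $(-)^*$ for the reverse. The only cosmetic difference is that the paper phrases the reverse direction as the chain $\PTT2^* \leqW \Wif{\mathbb{S}_{\SI11}}{\C_\Bai}{\UC_\Bai}^* \leqW (\TC_\Bai \times \C_\Bai)^* \leqW \TC_\Bai^*$ directly at the starred level, whereas you first reduce the unstarred $\PTT2$ to $\TC_\Bai^*$ and then apply $(-)^*$.
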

\begin{proof}
$\TC_\Bai^* \leqW \PTT2^*$ is immediate from Proposition \ref{prop:TC}(4). On
the other hand we have
\[
\PTT2^* \leqW \Wif{\mathbb{S}_{\SI11}}{\C_\Bai}{\UC_\Bai}^* \leqW (\TC_\Bai \times \C_\Bai)^* \leqW \TC_\Bai^*,
\]
using Corollary \ref{cor:ptt-ifthenelse} and Proposition \ref{prop:TC}(6).

The argument for $\DetS^*$ is similar.
\end{proof}

It is reasonable to expect a Weihrauch degree corresponding to an axiom
system from reverse mathematics to be closed under finite parallelization.
For candidates for $\mathrm{WKL}_0$ or $\mathrm{ACA}_0$ this happens
inherently. Here, we might need to demand it explicitly, and thus consider
the degree $\TC_\Bai^*$ rather than any directly defined one to be one of the
most promising candidates.

A potentially convenient way to think about the separation between $\C_\Bai$
and $\TC_\Bai$ is in terms of translations between truth values. $\TC_\Bai$
allows us to treat a single $\PI11$-set as an open set, whereas $\C_\Bai$
cannot even bridge the gap from $\SI11$ to Borel.

\begin{proposition}
$\left (\id : \mathbb{S}_{\PI11} \to \mathbb{S} \right ) \leqW \TC_\Bai$, but
$\idStB \nleqW \C_\Bai$.
\begin{proof}
For the reduction, we observe that $A = \emptyset$ iff $p \notin A$ for some
$p \in \TC_\Bai(A)$.

For the non-reduction, we recall that $\id : \mathbb{S}_\mathcal{B} \to
\mathbf{2} \leqW \UC_\Bai$ was shown in \cite[Lemma 79]{pauly-ordinals}, and that
$\UC_\Bai \star \C_\Bai \equivW \C_\Bai$ as shown in \cite[Theorem 7.3]{paulybrattka}.
Thus, assuming the reduction would hold, we would even have that $\left (\id
: \mathbb{S}_{\PI11} \to \mathbf{2} \right ) \leqW \C_\Bai$, which
contradicts \cite[Theorem 7.7]{paulybrattka} because the unique realizer of $\id :
\mathbb{S}_{\PI11} \to \mathbf{2}$ is not effectively Borel measurable.
\end{proof}
\end{proposition}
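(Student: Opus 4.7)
For the reduction $\left(\id : \mathbb{S}_{\PI11} \to \mathbb{S}\right) \leqW \TC_\Bai$, I plan a direct name-level construction. Given a name $p$ for a point in $\mathbb{S}_{\PI11}$, $p$ codes a tree $T_p \subseteq \N^{<\N}$, and the point is $\top$ precisely when $T_p$ is well-founded. I compute $[T_p] \in \mathcal{A}(\Bai)$ via the computable map $[\,]: \Tr \to \mathcal{A}(\Bai)$ and feed it to $\TC_\Bai$ to obtain some $q \in \Bai$. The output $r \in \Bai$ is then defined by $r(n)=1$ if $q_{\leq n} \notin T_p$ and $r(n)=0$ otherwise; since $T_p$ is given by its characteristic function this is uniformly computable. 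If $T_p$ is well-founded, then $[T_p]=\emptyset$ and $q$ may be arbitrary, but because $T_p$ has no infinite branch some prefix of $q$ must leave $T_p$, so $r$ has a nonzero digit and names $\top$ in $\mathbb{S}$. If $T_p$ is ill-founded, $\TC_\Bai$ is forced to return $q \in [T_p]$, every prefix stays in $T_p$, and $r=0^\omega$ names $\bot$. In both cases the $\mathbb{S}$-truth value of $r$ matches the $\mathbb{S}_{\PI11}$-truth value of $p$.

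For the negative direction $\idStB \nleqW \C_\Bai$, I argue by contradiction along the route the paper hints at. Assume $\idStB \leqW \C_\Bai$. By \cite[Lemma 79]{pauly-ordinals}, $\id : \mathbb{S}_\mathcal{B} \to \mathbf{2}$ lies below $\UC_\Bai$. Since $\id : \mathbb{S}_{\SI11} \to \mathbf{2}$ factors as $(\id : \mathbb{S}_\mathcal{B} \to \mathbf{2}) \circ \idStB$, the definition of the compositional product, together with $\UC_\Bai \leqW \C_\Bai$ and the closure $\C_\Bai \star \C_\Bai \equivW \C_\Bai$ from \cite[Theorem 7.3]{paulybrattka}, yields
\[
\left(\id : \mathbb{S}_{\SI11} \to \mathbf{2}\right) \leqW \left(\id : \mathbb{S}_\mathcal{B} \to \mathbf{2}\right) \star \idStB \leqW \UC_\Bai \star \C_\Bai \equivW \C_\Bai.
\]
But the unique realizer of $\id : \mathbb{S}_{\SI11} \to \mathbf{2}$ is the characteristic function of the set of codes of ill-founded trees on $\N$, a $\Sigma^1_1$-complete set, which is not effectively Borel measurable. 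This contradicts \cite[Theorem 7.7]{paulybrattka}, characterising the single-valued functions Weihrauch-reducible to $\C_\Bai$ between effective Polish spaces as exactly the effectively Borel-measurable ones.

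The only subtle point to verify is the well-founded branch of the positive argument: $\TC_\Bai$ may output an arbitrary $q$ when $[T_p]=\emptyset$, and one has to rely purely on the combinatorial fact that a well-founded tree admits no infinite path to conclude that some $q_{\leq n}$ eventually falls out of $T_p$. No norm or ordinal analysis is required, and all remaining steps are routine bookkeeping about the standard representations of $\mathcal{A}(\Bai)$, $\mathbb{S}$, $\mathbb{S}_{\PI11}$, $\mathbb{S}_{\SI11}$, and $\mathbb{S}_\mathcal{B}$.
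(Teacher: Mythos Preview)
Your proposal is correct and follows essentially the same route as the paper. The positive direction is just a spelled-out version of the paper's one-line observation that $A=\emptyset$ iff $p\notin A$ for some (equivalently, every) $p\in\TC_\Bai(A)$; your negative direction reproduces the paper's chain $\id_{\mathbb{S}_{\SI11}\to\mathbf{2}}\leqW\UC_\Bai\star\C_\Bai\equivW\C_\Bai$ and the Borel-measurability contradiction, with only the cosmetic difference that the paper phrases the final contradiction for $\mathbb{S}_{\PI11}$ rather than $\mathbb{S}_{\SI11}$ (the two maps into $\mathbf{2}$ differ only by negating the output bit).
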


Next, we shall see that the additional computational power of $\TC_\Bai$
(even of its parallelization $\widehat{\TC_\Bai}$) over $\UC_\Bai$ concerns
only multivalued problems.

\begin{theorem}\label{thm:parallel-total-continuation}
The following are equivalent for single-valued $f: \subseteq \mathbf{X} \to
\Bai$ where $\mathbf{X}$ is a represented space:
\begin{enumerate}
\item $f \leqW \UC_\Bai$;
\item $f \leqW \widehat{\TC_\Bai}$.
\end{enumerate}
\begin{proof}
That $1$ implies $2$ is trivial. For the other direction, we first argue that it suffices to consider single-valued $f : \subseteq \Bai \to \{0,1\}$. Then we show that for
single-valued $f : \subseteq \Bai \to \{0,1\}$, $f \leqsW \widehat{\TC_\Bai}$
implies $f \leqW \DeCA11$ and invoke Theorem \ref{theo:bigucbaire}.

Let $\delta_\mathbf{X}$ be the representation of $\mathbf{X}$. For $f :
\mathbf{X} \to \Bai$, consider the map $F :\subseteq \Bai \to \{0,1\}$ where
$F(nmp) = 1$ if $f(\delta_\mathbf{X}(p)) (n) = m$ and $F(nmp) = 0$ otherwise,
provided $p \in \dom(f\delta_\mathbf{X})$. Now it holds that $F \leqW f \leqW
\widehat{F}$ (the latter reduction holds because $f$ is single-valued). As
$\UC_\Bai$ is parallelizable, $F \leqW \UC_\Bai$ is equivalent to
$\widehat{F} \leqW \UC_\Bai$ and hence $f \leqW \UC_\Bai$.

For the second claim, we can start from a strong Weihrauch reduction because
$\widehat{\TC_\Bai}$ is a cylinder. Assume that $f : \subseteq \Bai \to \{0,1\}$ and $f \leqsW
\widehat{\TC_\Bai}$ via computable $K$, $H$. The outer reduction witness $K$
essentially consists of two open sets $U^0, U^1 \in \mathcal{O}((\Bai)^\N)$,
while the inner reduction witness $H$ gives us for each $p \in \Bai$ a
sequence $(A_n(p))_{n \in \N}$ of closed sets. For $S \subseteq \N$ and $U
\in \mathcal{O}((\Bai)^\N)$, let $\pi_S(U)$ denote the projection of $U$ to
the components in $S$. Now we find that:
\[
f(p) = b \Leftrightarrow \forall S \subseteq \N \ \prod_{n \in S} A_n(p)
\subseteq \pi_S(U^b).
\]
(Notice that $\prod_{n \in \N} A_n(p) \subseteq U^b$ does not imply $f(p) =
b$ in general because some of the $A_n(p)$ could be empty.) This is a
$\PI11$-condition. Since exactly one of $f(p) = 0$ and $f(p) = 1$ holds, we
thus have a valid instance for $\DeCA11$.
\end{proof}
\end{theorem}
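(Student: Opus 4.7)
The plan is to dispatch the direction $(1)\Rightarrow(2)$ immediately from the chain $\UC_\Bai \leqW \C_\Bai \leqW \TC_\Bai \leqW \widehat{\TC_\Bai}$, and then to attack $(2)\Rightarrow(1)$ in two stages: first reduce to the case where $f$ is $\{0,1\}$-valued, and then translate a $\widehat{\TC_\Bai}$-reduction into a $\DeCA11$-reduction, invoking Theorem \ref{theo:bigucbaire}.

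For the reduction to the two-valued case, I would introduce, given single-valued $f: \subseteq \mathbf{X} \to \Bai$ with representation $\delta_\mathbf{X}$, the auxiliary bit-query problem $F: \subseteq \N \times \N \times \Bai \to \{0,1\}$ defined on names by $F(n,m,p) = 1$ iff $f(\delta_\mathbf{X}(p))(n) = m$. Single-valuedness of $f$ gives $F \leqW f \leqW \widehat{F}$, and because $\UC_\Bai$ is parallelizable, it would suffice to establish $F \leqW \UC_\Bai$. So I may assume $f: \subseteq \Bai \to \{0,1\}$.

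For the two-valued case, I use that $\widehat{\TC_\Bai}$ is a cylinder to upgrade $f \leqW \widehat{\TC_\Bai}$ to a strong reduction witnessed by computable maps $K, H$. The inner $H$ produces from $p$ a sequence $(A_n(p))_n$ of closed sets in $\Bai$, while the outer $K$ induces open sets $U^0, U^1 \subseteq \Bai \times (\Bai)^\N$ corresponding to the preimages of the two possible output values. The crucial observation I would aim for is the equivalence
\[
f(p) = b \;\Longleftrightarrow\; \forall S \subseteq \N \;\; \prod_{n \in S} A_n(p) \subseteq \pi_S(U^b),
\]
where $\pi_S$ denotes projection forgetting coordinates outside $S$. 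The direction $(\Rightarrow)$ is where one must be careful: on coordinates $n$ where $A_n(p) = \emptyset$, $\widehat{\TC_\Bai}$ may output anything, so the reduction must work for every completion, which corresponds to the instance $S = \{n : A_n(p) \neq \emptyset\}$; all other $S$ then hold either vacuously (when $S$ contains an empty coordinate) or as a consequence. The direction $(\Leftarrow)$ follows by plugging in $S = \{n : A_n(p) \neq \emptyset\}$ and using that any output of $\widehat{\TC_\Bai}$ lies in this product.

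The right-hand side of the equivalence is uniformly $\PI11$ in $p$, and because $f$ is total and single-valued, exactly one $b \in \{0,1\}$ satisfies it, giving precisely a $\DeCA11$-instance from which $f(p)$ can be read off; Theorem \ref{theo:bigucbaire} then closes the reduction as $\DeCA11 \equivW \UC_\Bai$. The main obstacle I anticipate is formulating the characterization so that it is manifestly $\PI11$ despite the arbitrariness of $\widehat{\TC_\Bai}$'s outputs on empty coordinates: quantifying universally over $S \subseteq \N$ is the device that avoids having to express the $\SI11$ predicate ``$A_n(p) \neq \emptyset$'' inside the formula, which would otherwise push the complexity up.
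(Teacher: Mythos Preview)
Your proposal is correct and follows essentially the same route as the paper: reduce to $\{0,1\}$-valued $f$ via the bit-query problem, upgrade to a strong reduction using that $\widehat{\TC_\Bai}$ is a cylinder, and establish the $\forall S$ characterization to obtain a $\DeCA11$-instance. One small slip: since the reduction is strong, $K$ sees only the $\widehat{\TC_\Bai}$-output, so $U^0, U^1$ live in $(\Bai)^\N$ rather than $\Bai \times (\Bai)^\N$; this is harmless for the argument.
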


In particular, $\widehat{\TC_\Bai}$ does not reach the level of
$\Pi^1_1\mbox{-}\mathrm{CA}_0$.

\begin{corollary}\label{corr:chisigma11}
$\chiP \nleqW \widehat{\TC_\Bai}$.
\end{corollary}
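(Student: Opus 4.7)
The plan is to obtain the corollary as an immediate consequence of Theorem \ref{thm:parallel-total-continuation}, combined with the non-reduction $\chiP \nleqW \C_\Bai$ already recorded at the beginning of Section \ref{sec:twosided}. So the proof will be by contradiction, assuming $\chiP \leqW \widehat{\TC_\Bai}$ and deriving $\chiP \leqW \C_\Bai$.

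Concretely, I first observe that $\chiP$ is single-valued, and that its codomain $\mathbf{2}$ sits computably inside $\Bai$, so $\chiP$ falls under the hypothesis of Theorem \ref{thm:parallel-total-continuation}. Applying that theorem to $f = \chiP$, the assumed reduction $\chiP \leqW \widehat{\TC_\Bai}$ upgrades to $\chiP \leqW \UC_\Bai$. Combining this with $\UC_\Bai \leW \C_\Bai$ (the corollary just after Corollary \ref{corr:nhabasic}) yields $\chiP \leqW \C_\Bai$.

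The final step is to contradict this: as noted in the paragraph introducing Observation at the top of Section \ref{sec:twosided}, every single-valued function between computable Polish spaces that is Weihrauch reducible to $\C_\Bai$ is effectively Borel measurable (by \cite[Theorem 7.7]{paulybrattka}), whereas the characteristic function of a $\Pi^1_1$-complete set is not Borel measurable. Hence $\chiP \nleqW \C_\Bai$, which is the desired contradiction.

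There is no substantial obstacle here: all the work has been absorbed into Theorem \ref{thm:parallel-total-continuation}. The only tiny point of care is the codomain subtlety (viewing $\chiP$ as a map into $\Bai$ rather than $\mathbf{2}$), which does not change its Weihrauch degree and allows the theorem to be invoked.
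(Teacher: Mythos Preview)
Your proof is correct and follows essentially the same approach as the paper: the corollary is stated immediately after Theorem \ref{thm:parallel-total-continuation} precisely because $\chiP$ is single-valued, so the theorem reduces the question to $\chiP \nleqW \UC_\Bai$, which follows from $\chiP \nleqW \C_\Bai$ as recorded at the start of Section \ref{sec:twosided}. Your remark about the codomain embedding $\mathbf{2} \hookrightarrow \Bai$ is a reasonable bit of care, though the proof of Theorem \ref{thm:parallel-total-continuation} itself already handles arbitrary represented-space domains and reduces to the $\{0,1\}$-valued case internally.
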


\section{Open questions and discussion}
\label{sec:discussion}
The results reported in Section \ref{sec:twosided} immediately lead to three interlinked questions, which unfortunately we have been unable to resolve so far:

\begin{question}
Does $\DetS \equivW \Wif{\mathbb{S}_{\SI11}}{\C_\Bai}{\UC_\Bai}$?
\end{question}

\begin{question}
Does $\PTT2 \equivW \Wif{\mathbb{S}_{\SI11}}{\C_\Bai}{\UC_\Bai}$?
\end{question}

\begin{question}
How do $\PTT2$ and $\DetS$ relate?
\end{question}

We would expect that other theorems equivalent to $\mathrm{ATR}_0$ (e.g.\
open Ramsey) exhibit similar behaviour, i.e.~a non-constructive disjunction
between cases equivalent to $\C_\Bai$ and $\UC_\Bai$ respectively. Proving
any reductions between the two-sided versions of these theorems could be very
illuminating. Until then, we might have to settle for classifications in the
Weihrauch lattice up to $^*$, and strive to understand better the degree
$\TC_\Bai^*$.

Brattka has also raised the question whether the strong two-sided versions,
which return an answer on the applicable case together with a witness, are
worthwhile studying. It seems conceivable that finding reductions here would
be easier. Up to $^*$, these problems would have the degree $\TC_\Bai^*
\times \chiP^*$. Would this be an acceptable candidate for an
$\mathrm{ATR}_0$-equivalent, or is this degree too close to
$\PI11\mbox{-}\mathrm{CA}_0$?

Given that $\TC_\Bai^*$ is not closed under composition, one could make the
case that $\TC_\Bai^\diamond$ (its closure under generalized register
machines, cf.~\cite{paulyneumann}) is the better candidate. Note that
$\TC_\Bai^\diamond \equivW \left (\TC_\Bai \times \chiP \right )^\diamond$,
so the distinction between the weak and strong two-sided versions of the
theorems would disappear here. How well justified this step would be in
particular depends on whether there exists a natural theorem equivalent to
$\mathrm{ATR}_0$ in reverse mathematics where $\mathrm{ATR}_0$ is actually
used in a sequential way, i.e.~a theorem naturally associated with a
Weihrauch degree not reducible to $\TC_\Bai^*$.

\section*{Acknowledgements}

In the earlier stages of this research Marcone collaborated with Andrea Cettolo, and some of the proofs were obtained jointly with him. Pauly began working on this project while being a visiting fellow at the Isaac Newton Institute for Mathematical Sciences in the programme `Mathematical, Foundational and Computational Aspects of the Higher Infinite'. He thanks Vasco Brattka, Jun Le Goh, Luca San Mauro and Richard Shore for inspiring conversations. The research project benefitted from discussion at the Dagstuhl seminars  15392 `Measuring the Complexity of Computational Content: Weihrauch Reducibility and Reverse Analysis' and 18361 `Measuring the Complexity of Computational Content: From Combinatorial Problems to Analysis'.

Kihara's research was partially supported by JSPS KAKENHI Grant 17H06738, 15H03634, and the JSPS Core-to-Core Program (A. Advanced Research Networks).
Marcone's research was partially supported by PRIN 2012 Grant \lq\lq Logica, Modelli e Insiemi\rq\rq\ and by the departmental PRID funding \lq\lq HiWei --- The higher levels of the Weihrauch hierarchy\rq\rq.

\noindent\begin{minipage}{0.1\textwidth}\includegraphics[width=\textwidth]{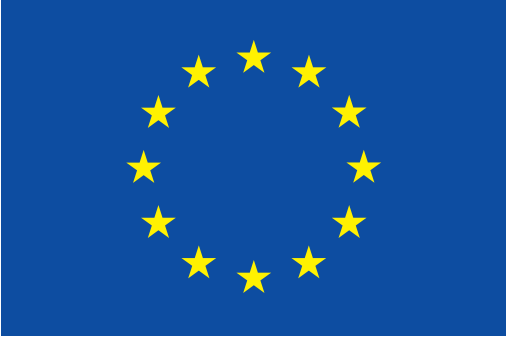}\end{minipage}
\begin{minipage}{0.9\textwidth} Pauly has received funding from the European Union’s Horizon 2020 research and innovation programme under the Marie Sklodowska-Curie grant agreement No 731143, \emph{Computing with Infinite Data}.\end{minipage}

\bibliographystyle{eptcs}
\end{document}